\documentclass[11pt]{article}

\newcommand{\magicSAT}{15\,051\ }
\newcommand{\magicNEARSAT}{843\ }
\usepackage[letterpaper,margin=1in]{geometry}

\usepackage[T1]{fontenc}
\usepackage{lmodern}
\usepackage{amsfonts}
\usepackage{graphicx}
\usepackage{epstopdf}
\usepackage{enumitem}
\usepackage{algorithmic}
\usepackage{stmaryrd}

\usepackage
{
        amssymb,
        amsmath,
        amsthm,
        times,
        nicefrac,
        xcolor
}

\usepackage{thmtools}
\usepackage{tikz}
\usepackage{scalerel}
\usepackage{bm}
\usetikzlibrary{arrows.meta}

\newcommand{\circRad}{0.90ex}   
\newcommand{\circLine}{0.08ex}  

\newcommand{\upHalf}{0.64ex}    
\newcommand{\upHead}{0.45ex}    
\newcommand{\upBase}{0.32ex}    

\DeclareRobustCommand{\oUp}{%
  \mathbin{%
    \scalerel*{%
      \tikz[baseline=-0.6ex, line width=\circLine]{
        \draw (0,0) circle (\circRad);
        \draw (0,-\upHalf) -- (0,\upHalf);
        \fill (0,\upHalf)
              -- (-\upBase,\upHalf-\upHead)
              -- (\upBase,\upHalf-\upHead)
              -- cycle;
      }%
    }{\oplus}%
  }%
}

\definecolor{darkred}{RGB}{180,0,0}
\definecolor{darkblue}{RGB}{0,0,200}

\usepackage[unicode,colorlinks=true,citecolor=black,filecolor=black,linkcolor=black,urlcolor=black, pdfpagelabels, plainpages=false]{hyperref}
\usepackage{bookmark}

\newcommand {\calI}    {{\cal{I}}}
\newcommand {\udsim}    {\sim_{ud}}
\newcommand {\Sn}    {\mathbb{S}_n}
\newcommand {\Sk}    {\mathbb{S}_k}
\newcommand{\eqdef}{\mathrel{\stackrel{\text{\tiny def}}{=}}}

\DeclareMathOperator {\Sat} {Sat}
\DeclareMathOperator {\udsign} {ud-sign}
\DeclareMathOperator{\image}{Im}

\newcommand {\brc}   [1] {\left(#1\right)}
\newcommand {\Exp}       {\mathbb{E}}

\newcommand {\Prob}  [1] {\Pr \brc{#1 }}

\newcommand{\NIdeal}{{\mathbb R}N^{\tau}}

\newcommand {\bbN}    {\mathbb{N}}

\DeclareMathOperator {\NFirst} {NFirst}
\DeclareMathOperator {\NLast}  {NLast}

\newcommand {\transpose}{\intercal}

\DeclareMathOperator {\conv}{conv}
\DeclareMathOperator {\patt}{patt}
\DeclareMathOperator {\partv}{part}

\DeclareMathOperator {\CSP}{CSP}
\DeclareMathOperator {\PCSP}{MaxPCSP}

\newtheorem{theorem}{Theorem}[section]
\newtheorem{lemma}[theorem]{Lemma}

\newtheorem{claim}[theorem]{Claim}
\newtheorem{corollary}[theorem]{Corollary}

\definecolor{nblue}{RGB}{173, 216, 230}
\theoremstyle{definition}
\declaretheorem[sibling=theorem, shaded={bgcolor=nblue!10}]{definition}
\declaretheorem[sibling=theorem, shaded={bgcolor=nblue!10}]{example}

\newtheorem{observation}[theorem]{Observation}
\newtheorem{fact}[theorem]{Fact}

\newtheorem{remark}[theorem]{Remark}

\title{Approximation algorithms for satisfiable and nearly satisfiable ordering CSPs}
\author{Yury Makarychev \\ Toyota Technological Institute at Chicago}
\date{}
\begin{document}
\maketitle

\begin{abstract}
We study approximation algorithms for satisfiable and nearly satisfiable instances of ordering constraint satisfaction problems (ordering CSPs). Ordering CSPs arise naturally in ranking and scheduling, yet their approximability remains poorly understood beyond a few isolated cases. Apart from tractable cases in which satisfiable instances can be solved exactly in polynomial time, prior nontrivial guarantees in the satisfiable regime were known only for Betweenness, while prior algorithms in the nearly satisfiable regime applied only to bounded-arity precedence CSPs -- CSPs whose constraints are conjunctions of clauses of the form $x_i < x_j$.

We introduce a general framework for designing approximation algorithms for ordering CSPs. The framework relaxes an input instance to an auxiliary ordering CSP, solves the relaxation, and then applies a randomized transformation to obtain an ordering for the original instance. This reduces the search for approximation algorithms to an optimization problem over randomized transformations.

Our main technical contribution is to show that the power of this framework is captured by a structured class of transformations, which we call strong IDU transformations: every transformation used in the framework can be replaced by a strong IDU transformation without weakening the resulting approximation guarantee. We then classify strong IDU transformations and show that optimizing over them reduces to an explicit optimization problem whose dimension depends only on the maximum predicate arity $k$ and the desired precision $\delta > 0$. As a consequence, for any finite ordering constraint language, we can compute a strong IDU transformation whose guarantee is within $\delta$ of the best guarantee achievable by the framework, in time depending only on $k$ and $\delta$.

The framework applies broadly and yields nontrivial approximation guarantees for a wide class of ordering predicates; the following arity-$4$ results illustrate its scope. Among NP-hard ordering CSPs defined by a single predicate of arity~4, we show that at least \magicSAT{} CSPs admit nontrivial approximation in the satisfiable regime. Moreover, among NP-hard and polynomial-time solvable single-predicate ordering CSPs of arity~4 -- excluding bounded-arity precedence CSPs -- at least \magicNEARSAT{} predicates admit nontrivial approximation in the nearly satisfiable regime. Specifically, given a $(1-\varepsilon)$-satisfiable instance, our algorithm produces an ordering that satisfies at least an $\alpha - O(\varepsilon \log n \log \log n)$ fraction of the constraints, where $\alpha > \alpha_{\mathrm{random}}$ depends only on the predicate.
\end{abstract}

\newpage

\tableofcontents
\newpage
\section{Introduction}
\subsection{Ordering CSPs}
In this paper, we develop a general framework for designing approximation algorithms for satisfiable and nearly satisfiable instances of ordering constraint satisfaction problems (ordering CSPs).
There is an extensive literature on the approximability of CSPs over finite domains, ranging from the classification of polynomially tractable CSPs~\cite{Bulatov,Zhuk,Schaefer}, to Raghavendra's general framework for approximating arbitrary maximization CSPs~\cite{Raghavendra}, to nontrivial approximation algorithms for NP-hard CSPs (see, e.g.,~\cite{GW,LLZ,ABZ05,CMM2009near,MM12,DBHPZ23,BNZ}), including algorithms for satisfiable and nearly satisfiable instances (see, e.g.,~\cite{GW,Z98,Trevisan,ACMM,CMM1,CMM2,bhangale2021optimal,BKM25}). In addition, there are results on approximation-resistant predicates~\cite{AM09,KTW14,Chan16}. See the survey~\cite{MMsurvey} for further details and references.  
However, there are only a few positive results on the approximability of ordering CSPs. Recall the definition of an ordering CSP.

\begin{definition}[Ordering CSP]
An \emph{ordering predicate} (or \emph{relation}) $\varphi(x_1, \dots, x_k)$ of arity $k$ is a disjunction of clauses of the form 
\[
x_{i_1} < x_{i_2} < \dots < x_{i_t}.
\]
where $i_1, \dots, i_t\in [k]$ are distinct and $t\geq 2$.
A \emph{constraint language} is a collection of such ordering predicates (different predicates in the language may have different arities).  

For a constraint language $\Pi$, we define the corresponding \emph{ordering CSP} $\CSP(\Pi)$ as follows.  
An instance of the problem consists of variables $x_1, \dots, x_n$ and a set of constraints over them, where each constraint is of the form 
\[
\varphi(x_{i_1}, \dots, x_{i_k}) \quad \text{for some } \varphi \in \Pi.
\]
A \emph{solution} to the instance is a permutation $\pi$ on $[n]$.  
A constraint $\varphi$ is said to be \emph{satisfied} by $\pi$ if the assignment $x_i = \pi(i)$ satisfies $\varphi$.  
The objective is to maximize the number of satisfied constraints.  

An instance is said to be \emph{$\alpha$-satisfiable} if there exists a solution satisfying at least an $\alpha$-fraction of the constraints.  
An instance is \emph{completely satisfiable} if it is $1$-satisfiable, that is, there exists a permutation satisfying all constraints.
\end{definition}

A key reason for the scarcity of positive results is the hardness result of Guruswami, Håstad, Manokaran, Raghavendra, and Charikar~\cite{GHMRC}, who showed that, under the Unique Games Conjecture, ordering CSPs are approximation resistant in the following sense: for every ordering CSP and every constant $\varepsilon > 0$, no polynomial-time algorithm can, given an instance that is at least $(1-\varepsilon)$-satisfiable, find a solution satisfying more than an $\alpha_{\mathrm{random}} + \varepsilon$ fraction of constraints, where $\alpha_{\mathrm{random}}$ denotes the expected fraction of constraints satisfied by a random ordering.
Thus, nontrivial approximation is ruled out for every fixed constant $\varepsilon>0$. However, this still leaves open completely satisfiable instances and, more generally, the asymptotic regime in which $\varepsilon$ tends to $0$ as $n\to \infty$. 

At the same time, a remarkable result by Bodirsky and Kára provides a classification of \emph{temporal} CSPs -- a generalization of ordering CSPs, precisely characterizing which temporal CSPs are polynomially tractable and which are NP-hard~\cite{BK10}. In particular, it yields a complete classification of polynomially tractable \emph{ordering} CSPs. Given an ordering CSP $\CSP(\Pi)$, the result of Bodirsky--Kára determines whether we can solve \emph{completely satisfiable} instances of $\CSP(\Pi)$ \emph{exactly} in polynomial time (assuming $\mathrm{P} \neq \mathrm{NP}$).

Against this background, two basic questions remain wide open:
\begin{itemize}
    \item \textbf{Nontrivial Approximation for Completely Satisfiable CSPs.} If an ordering CSP $P$ is not polynomially tractable, can we obtain a nontrivial approximation -- that is, one achieving a value strictly above $\alpha_{\mathrm{random}}$ -- for \emph{completely satisfiable} instances of $P$?
    \item \textbf{Nontrivial Approximation for Nearly Satisfiable CSPs.} For a given ordering CSP $P$ (whether or not polynomially tractable), can we obtain a nontrivial approximation for $(1 - \varepsilon)$-satisfiable instances of $P$, where $\varepsilon = 1/\mathrm{polylog}(n)$?
\end{itemize}

There are two known results addressing these questions.
First, there is a $\nicefrac{1}{2}$-approximation algorithm for completely satisfiable instances of \emph{Betweenness} (an ordering CSP of arity~3) due to Chor and Sudan~\cite{CS98}; a simpler combinatorial $\nicefrac{1}{2}$-approximation was later given in our previous work~\cite{Mak} (a random ordering achieves $\alpha_{\mathrm{random}} = \nicefrac{1}{3}$).
Second, there is an $O(\log n \log \log n)$-approximation algorithm for the \emph{Minimum Feedback Arc Set} problem: Seymour~\cite{Seymour} established an upper bound on the LP integrality gap, and Even, Naor, Schieber, and Sudan~\cite{ENSS98} used this to obtain a polynomial-time algorithm that explicitly constructs an ordering.
In CSP terms, this algorithm applies to nearly satisfiable instances of the ordering CSP with the less-than predicate $x_1 < x_2$: given a $(1-\varepsilon)$-satisfiable instance, it finds a solution satisfying at least a $1 - O(\varepsilon \log n \log \log n)$ fraction of the constraints.

\begin{definition}\label{def:precedence-CSP}
We say that a predicate is a \emph{precedence predicate} if it is a conjunction of clauses of the form $x_i < x_j$.
An ordering CSP $\CSP(\Pi)$ is a \emph{bounded-arity precedence CSP} if it has a finite constraint language and every predicate in $\Pi$ is a precedence predicate.
\end{definition}

The algorithm for the Minimum Feedback Arc Set problem extends to bounded-arity precedence CSPs: in $(1-\varepsilon)$-satisfiable instances, it satisfies at least a $1 - O(\varepsilon \log n \log \log n)$ fraction of constraints. The constant in the big-$O$ notation depends on the maximum arity of the predicates in the constraint language.

We also note that there are approximation results for ordering CSPs in several restricted settings, including bounded-occurrence ordering CSPs~\cite{GZ12,MakBounded} and approximation above the average~\cite{CMM07advantage,GIMY, MMZ15}.

\subsection{Overview of our framework}
The principal contribution of this work is a new framework for approximating ordering CSPs, together with structural results that make the framework algorithmically usable. We develop this framework to address the two questions above and \emph{conjecture that optimal approximation algorithms for both problems fall within it}. While we apply the framework here only to these questions, it can also be used in other settings, for example, to design approximation algorithms for bounded-occurrence ordering CSPs.

We apply our techniques to ordering CSPs of arity~4 defined by a single predicate. 
Among NP-hard single-predicate arity-4 ordering CSPs, we prove that at least \textit{\magicSAT{} non-isomorphic predicates} admit a nontrivial approximation for completely satisfiable instances. In the nearly satisfiable regime, at least \textit{\magicNEARSAT{} such predicates} admit nontrivial approximation (excluding bounded-arity precedence CSPs, see Definition~\ref{def:precedence-CSP}).

We now outline the main ideas underlying our framework and introduce the necessary definitions.

\begin{definition}
We say that a predicate $\varphi'$ is a \emph{relaxation} of a predicate $\varphi$ of the same arity if every assignment satisfying $\varphi$ also satisfies $\varphi'$, or, equivalently, if $\varphi(x_1, \dots, x_k)$ implies $\varphi'(x_1, \dots, x_k)$. 
Given a constraint language $\Pi$, fix a relaxation $\varphi'$ for every predicate $\varphi \in \Pi$,
and let $\Pi' = \{\varphi' : \varphi \in \Pi\}$ denote the resulting constraint language, which we call the \textit{relaxation} of $\Pi$. 

Consider the corresponding maximization promise constraint satisfaction problem $\PCSP(\Pi, \Pi')$ defined as follows: given an instance $\cal I$ of $\CSP(\Pi)$, replace every constraint $\varphi$ with its relaxation $\varphi'$ in $\Pi'$. Now, the goal is to find a solution maximizing the number of satisfied constraints in the resulting instance $\mathcal{I}'$ of $\CSP(\Pi')$; we will refer to $\mathcal{I}'$ as the relaxation of $\mathcal{I}$.
(We note that the correspondence $\varphi \mapsto \varphi'$ is implicit in our notation.)
\end{definition}

Observe that if the original instance is $\alpha$-satisfiable, then so is its relaxation. Importantly, if $\CSP(\Pi')$ is polynomially tractable (that is, completely satisfiable instances of  $\CSP(\Pi')$ can be solved exactly in polynomial time),\footnote{For ordering CSPs, there is no distinction between the polynomial-time tractability of the decision and search variants: if there is a polynomial-time algorithm for one, then there is also a polynomial-time algorithm for the other.} then a relaxation of every satisfiable instance of $\CSP(\Pi)$ can also be solved exactly in polynomial time.

We now introduce the notion of a transformation that, together with its strengthening, plays a key role in our framework. 
\begin{definition}
A \emph{weak IDU transformation} is a collection of \emph{random} permutations $\{\sigma_i\}_{i\in\bbN}$, with one random permutation $\sigma_n$ defined for every $n$. We will refer to this IDU transformation simply as $\bm{\sigma}$.
\end{definition}
This definition is only a starting point. We will soon impose additional properties and arrive at a much more structured class of transformations, called \emph{strong IDU transformations}.
We are now ready to outline our general approach.

\begin{enumerate}
    \item First, we identify an appropriate relaxation $\varphi'$ for every predicate $\varphi \in \Pi$, obtaining a relaxed constraint language $\Pi'$. While we will discuss this step in more detail later (see Sections~\ref{sec:satisfiable-csps} and~\ref{sec:nearly-satisfiable-csps}), a natural choice is to take predicates $\varphi'$ from a constraint language defining a polynomially tractable ordering CSP (as mentioned above, such languages are classified explicitly by Bodirsky and Kára).
    \item Given an instance of $\CSP(\Pi)$, consider its relaxation in $\CSP(\Pi')$. Solve this instance exactly or approximately using an appropriate algorithm for $\CSP(\Pi')$, and obtain a solution $\pi$.
    \item Apply a random IDU transformation to $\pi$ and output $\sigma_n \pi$ (the composition of permutations $\sigma_n$ and $\pi$). That is, each variable $x_i$ is assigned to position $\sigma_n(\pi(i))$, where $n$ is the number of variables in the instance.
\end{enumerate}
\begin{example}\label{example:algorithm}
We provide a simple example to illustrate this algorithm. Consider the constraint language with a single predicate $\varphi$ of arity 4 and the corresponding $\CSP(\varphi)$:
\begin{align*}
\varphi(x_1,x_2, x_3, x_4) &= 
(x_1< x_2 < x_3 < x_4) \vee (x_1 < x_3 < x_2 < x_4)\vee{}\\
&\phantom{{}={}} (x_2 < x_1 < x_4 < x_3) \vee (x_2 < x_4 < x_1 < x_3).
\end{align*}
Since $x_2 < x_4$ holds in all clauses, $\varphi'(x_1, x_2, x_3, x_4) = (x_2 < x_4)$ is a valid relaxation of $\varphi$. Note that $\varphi'$ is simply the less-than constraint with two dummy variables. Thus, solving satisfiable instances of $\CSP(\varphi')$ reduces to topological sorting. However, the obtained solution $\pi$ itself may not provide any approximation for the original instance. Therefore, we apply the following transformation: we randomly and independently assign each $x_i$ the value of either $\pi(i)$ or $\pi(i) + n$, with probability $\nicefrac{1}{2}$ for each choice. This yields an injective map $\pi':\{1,\dots,n\}\to \{1,\dots, 2n\}$ (see Figure~\ref{fig:transformation-example}). The final assignment $\pi''$ is obtained by replacing the values of $\pi'$ with their relative ranks in $\{\pi'(1),\dots,\pi'(n)\}$. Using the framework developed in this paper, it is easy to show that this algorithm satisfies an $\alpha = \nicefrac{1}{4}$ fraction of all constraints in expectation, while a random ordering satisfies only an $\alpha_{\mathrm{random}} = \nicefrac{1}{6}$ fraction of all constraints.
\end{example}
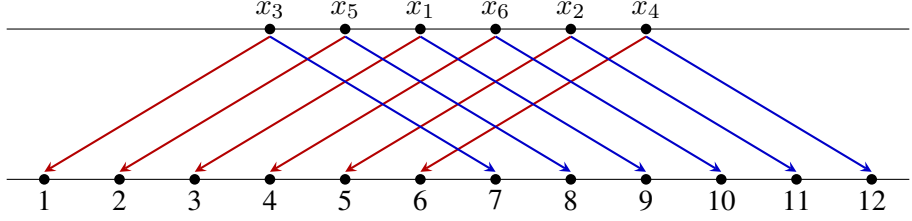
\begin{figure}
\centering
\begin{tikzpicture}[>=stealth,thick]
  \draw[thin] (-6,2) -- (6,2);
  \foreach \x/\lbl in {-2.5/$x_3$,-1.5/$x_5$,-0.5/$x_1$,0.5/$x_6$,1.5/$x_2$,2.5/$x_4$} {
    \fill (\x,2) circle (2pt);
    \node[above] at (\x,2) {\lbl};
  }

  \draw[thin] (-6,0) -- (6,0);
  \foreach \x/\lbl in {-5.5/1,-4.5/2,-3.5/3,-2.5/4,-1.5/5,-0.5/6,0.5/7,1.5/8,2.5/9,3.5/10,4.5/11,5.5/12} {
    \fill (\x,0) circle (2pt);
    \node[below] at (\x,0) {\lbl};
  }

  \draw[->,darkred]  (-2.5,1.9) -- (-5.5,0.1);
  \draw[->,darkred]  (-1.5,1.9) -- (-4.5,0.1);
  \draw[->,darkred]  (-0.5,1.9) -- (-3.5,0.1);
  \draw[->,darkred]  (0.5,1.9) -- (-2.5,0.1);
  \draw[->,darkred]  (1.5,1.9) -- (-1.5,0.1);
  \draw[->,darkred]  (2.5,1.9) -- (-0.5,0.1);

  \draw[->,darkblue] (-2.5,1.9) -- (0.5,0.1);
  \draw[->,darkblue] (-1.5,1.9) -- (1.5,0.1);
  \draw[->,darkblue] (-0.5,1.9) -- (2.5,0.1);
  \draw[->,darkblue] (0.5,1.9) -- (3.5,0.1);
  \draw[->,darkblue] (1.5,1.9) -- (4.5,0.1);
  \draw[->,darkblue] (2.5,1.9) -- (5.5,0.1);
\end{tikzpicture}
\caption{The figure shows the transformation from Example~\ref{example:algorithm}. Each $x_i$ is randomly mapped to either $\pi(i)$ or $\pi(i)+n$.}
\label{fig:transformation-example}
\end{figure}

\paragraph{From weak to strong IDU transformations}
The most nontrivial step in our framework is choosing an appropriate IDU transformation $\bm{\sigma}$. Assume that our algorithm for $\CSP(\Pi')$ satisfies an $s$-fraction of all constraints in Step~2. We first derive a formula for the fraction of constraints in the original $\CSP(\Pi)$ instance that are satisfied by our algorithm. We show that this fraction is at least $p(\Pi' \to \Pi, \bm{\sigma}) \, s$, where $p(\Pi' \to \Pi, \bm{\sigma})$ is characterized in terms of certain \emph{restricted pattern densities} of permutations in $\bm{\sigma}$.

However, the set of all weak IDU transformations is enormous. In general, weak IDU transformations do not admit finite -- let alone succinct -- descriptions, so there is no straightforward way to optimize over them in order to find an optimal or even nearly optimal transformation. To address this issue, we introduce a much more structured class of \emph{strong} IDU transformations. For this class, the formula for $p(\Pi' \to \Pi, \bm{\sigma})$ takes a simpler form, expressed in terms of standard (not restricted) \emph{pattern densities}. This naturally raises the question of whether strong IDU transformations are as powerful as general ones for our purposes. We answer this affirmatively by showing that for every weak IDU transformation $\bm{\sigma}$, there exists a strong IDU transformation $\bm{\sigma}'$ such that
\[
p(\Pi' \to \Pi, \bm{\sigma}') \ge p(\Pi' \to \Pi, \bm{\sigma})
\]
simultaneously for all $\Pi$ and $\Pi'$. Hence, for the purposes of our framework, there is no loss of generality in \emph{restricting attention to strong IDU transformations}.

We then classify all strong IDU transformations, giving a concrete description of the transformations that matter for our framework.
We show that each strong IDU transformation can be represented as a random permuton constructed from the basic permutons $I$ (identity), $D$ (decreasing), and $U$ (uniform) via a simple operation called \emph{up-combination}, which we introduce in this paper. Moreover, every strong IDU transformation can be approximated arbitrarily well by finite up-combinations of $I$, $D$, and $U$ or even by combinations of $I$ and $D$ alone. Using these notions, we develop an algebraic characterization of $p(\Pi' \to \Pi, \bm{\sigma})$, thereby reducing the problem of maximizing $p(\Pi' \to \Pi, \bm{\sigma})$ to that of maximizing certain multivariate polynomials. As a corollary, we show how to find a nearly optimal strong IDU transformation $\bm{\hat\sigma}$ such that
\[
p(\Pi' \to \Pi, \bm{\hat\sigma}) \ge p(\Pi' \to \Pi, \bm{\sigma}) - \varepsilon
\]
for every weak IDU transformation $\bm{\sigma}$, in time $f(k, 1/\varepsilon)$, where $k$ is the maximum arity of predicates in $\Pi$.

We provide a technical overview of IDU transformations in Section~\ref{sec:tech-IDU}. 
The remaining ingredient is the choice of the relaxation $\Pi'$.

\paragraph{Relaxation for completely satisfiable instances}
By the Bodirsky--Kára classification, an \textit{ordering} constraint satisfaction problem $\CSP(\Pi)$ is polynomially tractable if and only if all predicates in $\Pi$ are shuffle-closed or dual-shuffle-closed (we provide the necessary background in Section~\ref{sec:satisfiable-csps}; these details are not essential for the current discussion). Accordingly, there are two \emph{canonical tractable relaxations} of $\CSP(\Pi)$: one with $\Pi' = \Pi_{L}$, obtained by replacing each predicate $\varphi$ with its $L$-relaxation $\varphi_{L}$, namely the closure of $\varphi$ under the shuffle-operation, and the other with $\Pi' = \Pi_{R}$, obtained by replacing each predicate with its $R$-relaxation $\varphi_{R}$, namely the closure of $\varphi$ under the dual-shuffle-operation. Both $\Pi_{L}$ and $\Pi_{R}$ can be efficiently computed from $\Pi$. As the corresponding CSPs -- $\CSP(\Pi_{L})$ and $\CSP(\Pi_{R})$ -- can be solved exactly in polynomial time, the only remaining step is to choose an appropriate IDU transformation.

\paragraph{Relaxation for nearly satisfiable instances}
The only known algorithm for nearly satisfiable instances of any ordering CSP is the algorithm for the Minimum Feedback Arc Set problem~\cite{ENSS98,Seymour}. Using this algorithm, we can satisfy a $1 - O(\varepsilon \log n \log \log n)$ fraction of constraints in $(1-\varepsilon)$-satisfiable instances of bounded-arity precedence CSPs (see Definition~\ref{def:precedence-CSP}). Accordingly, given a $(1-\varepsilon)$-satisfiable instance of $\CSP(\Pi)$, we replace each predicate $\varphi \in \Pi$ with a precedence-constraint relaxation $\varphi'$; among all possible choices, we select the one with the minimal set of satisfying assignments (by inclusion). It is easy to see that such a predicate $\varphi'$ exists and that every other precedence-constraint relaxation of $\varphi$ is also a relaxation of $\varphi'$. We thus obtain an instance of $\CSP(\Pi')$, solve it using the approximation algorithm for the Minimum Feedback Arc Set problem, and obtain a solution $\pi$ satisfying at least a $1 - O(\varepsilon \log n \log \log n)$ fraction of constraints. Finally, we apply an appropriate IDU transformation to $\pi$ and output the resulting ordering.

We formally present our results for ordering CSPs, which rely on our framework, in Section~\ref{sec:results-csp}.

\paragraph{Related work} 
The most important and relevant results for ordering CSPs are the aforementioned ones: the classification of ordering CSPs by Bodirsky and Kára~\cite{BK10}, the hardness of approximation result for ordering CSPs by Guruswami, Håstad, Manokaran, Raghavendra, and Charikar~\cite{GHMRC}, and the $\nicefrac{1}{2}$-approximation algorithm by Chor and Sudan~\cite{CS98}.

In terms of the approach, our framework is an extension of the algorithm for Betweenness proposed by us in~\cite{Mak}. In~\cite{Mak}, we proposed to relax Betweenness constraints, solve the resulting CSP, and then run a postprocessing step that transforms the solution for the relaxation into one for Betweenness. Importantly, the latter transformation step in~\cite{Mak} was ad hoc and can only in retrospect be interpreted as equivalent to applying a strong IDU transformation (namely, $\frac{1}{2}I \oUp \frac{1}{2}D$).

The objective of our framework is analogous to that of Raghavendra for not-necessarily-satisfiable instances of classical CSPs~\cite{Raghavendra} and to the ongoing project by Bhangale, Khot, and Minzer for completely satisfiable instances (see~\cite{BKM25} and references therein). However, in comparison to these results, we focus on algorithms rather than hardness. Our negative results only show that certain CSPs cannot be solved using our framework and do not rule out the existence of other polynomial-time algorithms.

On the positive side, our algorithm for computing a near-optimal approximation factor within our framework is practical: for most CSPs of arity~4, its Python implementation is able to find a good rounding scheme in seconds to minutes on a modern laptop.

Finally, we note that the idea of replacing constraints with new constraints (their relaxations in our terminology) also appears in the algorithm for satisfiable instances of Max $k$-LIN in non-Abelian groups by Bhangale and Khot~\cite{bhangale2021optimal} (the authors attribute the algorithm to folklore): their algorithm replaces the $k$-LIN equations over the original non-Abelian group with corresponding constraints in the Abelian quotient group $G/[G,G]$ (where $[G,G]$ is the commutator subgroup of $G$), solves the resulting instance exactly using Gaussian elimination, and finally ``randomly lifts'' the solution to $G$.

\section{Technical overview of IDU transformations}
\label{sec:tech-IDU}
Our technical focus is on IDU transformations.  
We first discuss the properties of IDU transformations that are important for our applications.

\begin{definition}
For a predicate $\varphi$ of arity $k$, let $\Sat(\varphi)$ denote the set of all 
permutations on $[k]$ that satisfy~$\varphi$.  
\end{definition}

Note that $\varphi'$ is a relaxation of $\varphi$ if and only if $\Sat(\varphi) \subseteq \Sat(\varphi')$.

\begin{definition}\label{def:pattern}
Given a permutation $\pi$ on $[n]$ and a subset of indices $J = \{j_1,\dots, j_k\} \subseteq[n]$ with $j_1 < j_2 < \dots < j_k$,  
we define the \emph{pattern} of $\pi$ at positions $J$ as the permutation $\tau$ on $[k]$ that is order-isomorphic to the sequence $\pi(j_1), \pi(j_2), \dots, \pi(j_k)$.  
In other words, for every $a,b \in [k]$, we have $\tau(a) < \tau(b)$ if and only if $\pi(j_a) < \pi(j_b)$.  
We denote it by $\patt(\pi, J)$.  

Further, given a sequence of points \((x_i, y_i)\) with all \(x_i\) and all \(y_i\) distinct, the \emph{pattern} of the sequence is the permutation \(\tau \in \Sk\) that is order-isomorphic to the sequence of \(y\)-coordinates obtained after sorting the points by increasing \(x\)-coordinates. This definition also applies to sequences of distinct numbers: the pattern of \((a_1, \dots, a_k)\) is defined as the pattern of the points \((1, a_1), (2, a_2), \dots, (k, a_k)\).
\end{definition}

Consider a constraint $\varphi(x_{j_{\beta(1)}}, \dots, x_{j_{\beta(k)}})$,
where $j_1 < \dots < j_k$ and  $\beta$ is a permutation on $[k]$.  
Let $J = \{j_1,\dots, j_k\}$.
The constraint is satisfied by a permutation $\pi$  if and only if the pattern of the sequence $(\pi(j_{\beta(1)}), \dots, \pi(j_{\beta(k)}))$ is in $\Sat(\varphi)$. This can be written as $\patt(\pi, J)\,\beta \in \Sat(\varphi)$ or, equivalently, 
\begin{equation}\label{eq:condition-satisfied}
\patt(\pi, J) \in \Sat(\varphi) \beta^{-1} = \{\rho\beta^{-1}: \rho\in \Sat(\varphi)\}.
\end{equation}

Going back to our algorithm, assume we use a weak IDU transformation $\{\sigma_n\}$ to solve an instance $\calI$ of $\CSP(\Pi)$.  
Suppose we have found a solution $\pi$ for the relaxation $\calI'$ of $\calI$.  
Consider a constraint $\varphi(x_{j_{\beta(1)}}, \dots, x_{j_{\beta(k)}})$.  
Assume that its relaxation $\varphi'(x_{j_{\beta(1)}}, \dots, x_{j_{\beta(k)}})$ is satisfied by $\pi$.
Our goal is to analyze the probability that  $\varphi(x_{j_{\beta(1)}}, \dots, x_{j_{\beta(k)}})$ is satisfied by the permuted solution $\sigma_n \pi$.
Note that
\[
\patt(\sigma_n\pi, J) = \patt(\sigma_n, J') \, \patt(\pi, J),
\]
where $J' = \pi(J) = \{\pi(j_1), \dots, \pi(j_k)\}$. Accordingly,
\begin{align*}
    \Pr(\varphi(x_{j_{\beta(1)}}, \dots, x_{j_{\beta(k)}}) \text{ is satisfied}) 
    &= \Pr(\patt(\sigma_n\pi, J) \in \Sat(\varphi)\beta^{-1}) \\
    &= \Pr(\patt(\sigma_n, J') \cdot \patt(\pi, J) \in \Sat(\varphi)\beta^{-1})\\
    &= \Pr(\patt(\sigma_n, J') \in \Sat(\varphi) \cdot (\patt(\pi, J)\beta)^{-1})\\
    &= \Pr(\patt(\sigma_n, J') \in \Sat(\varphi) \cdot \tau_0^{-1}),
\end{align*}
where $\tau_0 = \patt(\pi, J)\beta$.
Since $\varphi'$ is satisfied by $\pi$, by~\eqref{eq:condition-satisfied}, we have $\tau_0 \in \Sat(\varphi')$. We get,
\[
\Pr(\varphi(x_{j_{\beta(1)}}, \dots, x_{j_{\beta(k)}}) \text{ is satisfied}) 
    \ge \min_{\tau\in \Sat(\varphi')} 
       \Pr\!\big(\patt(\sigma_n, J') \in \Sat(\varphi)\tau^{-1}\big).
\]

\begin{definition}    
Define the \emph{restricted pattern density} $d_J(\tau,\sigma_n)$ as the probability that $\patt(\sigma_n, J) = \tau$ (over the randomness of $\sigma_n$), and let the (standard) pattern density $d(\tau, \sigma_n)$ be its expectation over a uniformly random $k$-subset $J \subseteq [n]$ (i.e., $J$ is sampled uniformly without replacement from $\{1,\dots,n\}$). We will often refer to both quantities simply as pattern densities.
\end{definition}

Then the probability that the constraint $\varphi(x_{j_{\beta(1)}}, \dots, x_{j_{\beta(k)}})$ is satisfied is at least
\[
p(\varphi' \to \varphi, \bm{\sigma}) \eqdef 
\inf_{n\geq k}\ \min_{\substack{J\subseteq[n]\\ |J| = k}}\ 
\min_{\tau\in \Sat(\varphi')} \sum_{\rho\in \Sat(\varphi)\tau^{-1}} d_J(\rho, \sigma_n).
\]
Let $p(\Pi' \to \Pi, \bm{\sigma}) = \min_{\varphi\in \Pi} p(\varphi' \to \varphi, \bm{\sigma})$.  
Since the bound above holds for every satisfied constraint $\varphi'(x_{j_{\beta(1)}}, \dots, x_{j_{\beta(k)}})$, the expected number of satisfied constraints in the final solution $\sigma_n \pi$ is at least $p(\Pi' \to \Pi, \bm{\sigma})$ times the number of satisfied constraints in~$\pi$.  
The above argument yields the following performance guarantee for our framework.

\begin{theorem}\label{thm:idu-framework}
Consider $\CSP(\Pi)$ and a relaxation $\Pi'$ of $\Pi$.  
Assume that there is a polynomial-time algorithm that, given a $c$-satisfiable instance of $\CSP(\Pi)$, finds a solution satisfying at least an $s$ fraction of constraints in the corresponding relaxed instance of $\CSP(\Pi')$.  
Further, assume that there exists a weak IDU transformation $\{\sigma_n\}$ that can be sampled in polynomial time.  
Then there is a polynomial-time algorithm that, given a $c$-satisfiable instance of $\CSP(\Pi)$, finds a solution that satisfies, in expectation, at least a $p(\Pi' \to \Pi, \bm{\sigma}) \cdot s$ fraction of constraints.
\end{theorem}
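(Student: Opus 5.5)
The algorithm is the one described in the three-step outline: run the assumed algorithm on the relaxed instance $\mathcal I'$ to obtain $\pi$, sample $\sigma_n$ independently of $\pi$, and output $\sigma_n\pi$. The running time is polynomial because both subroutines are polynomial and composing two permutations on $[n]$ takes $O(n)$ time. What remains is to bound the expected number of constraints of $\mathcal I$ satisfied by $\sigma_n\pi$. We condition on the output $\pi$ of the first step and use that $\sigma_n$ is independent of $\pi$. If the first algorithm is itself randomized, this costs nothing: its guarantee holds for every output, or in expectation, and in either case the final bound follows by taking expectation over $\pi$ at the end.

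Fix $\pi$ and a constraint $C=\varphi(x_{j_{\beta(1)}},\dots,x_{j_{\beta(k)}})$ of $\mathcal I$ with $j_1<\dots<j_k$, and let $J=\{j_1,\dots,j_k\}$. Suppose its relaxation $C'$ is satisfied by $\pi$. I will use the identity $\patt(\sigma_n\pi,J)=\patt(\sigma_n,\pi(J))\,\patt(\pi,J)$. To verify it, sort the positions of $J$ by their $\pi$-values; the relative order of $\sigma_n(\pi(j_a))$ is then read off through $\patt(\pi,J)$. I will check the direction of composition carefully against Definition~\ref{def:pattern}, since this is the one place where a sign error (left versus right multiplication) could creep in.

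From \eqref{eq:condition-satisfied} we get $\tau_0:=\patt(\pi,J)\beta\in\Sat(\varphi')$. Moreover, $C$ is satisfied by $\sigma_n\pi$ exactly when $\patt(\sigma_n,J')\in\Sat(\varphi)\tau_0^{-1}$, where $J'=\pi(J)$ is a fixed $k$-subset once $\pi$ is fixed. Since $\sigma_n$ is independent of $\pi$, the conditional probability equals $\sum_{\rho\in\Sat(\varphi)\tau_0^{-1}} d_{J'}(\rho,\sigma_n)$. This is at least the minimum over $\tau\in\Sat(\varphi')$ and over $k$-subsets, and hence at least the infimum over $n\ge k$ in the definition, giving $p(\varphi'\to\varphi,\bm\sigma)\ge p(\Pi'\to\Pi,\bm\sigma)$. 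If $n<k$, constraints use distinct variables, so no such constraint exists.

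Summing over constraints by linearity of expectation, the expected number satisfied is at least $p(\Pi'\to\Pi,\bm\sigma)$ times the number of constraints of $\mathcal I'$ satisfied by $\pi$. Constraints whose relaxation fails contribute a nonnegative amount, so they can be dropped from the lower bound. The instances $\mathcal I$ and $\mathcal I'$ have the same number of constraints, and $\pi$ satisfies at least an $s$ fraction of $\mathcal I'$. Taking expectation over $\pi$ therefore gives the claimed $p(\Pi'\to\Pi,\bm\sigma)\cdot s$ fraction. The only real obstacle is the pattern-composition identity together with the independence of $\sigma_n$ from $\pi$; everything else is bookkeeping.
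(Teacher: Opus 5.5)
Your proposal is correct and follows essentially the same argument as the paper. Both run the relaxation algorithm and apply an independently sampled $\sigma_n$. Both use the identity $\patt(\sigma_n\pi,J)=\patt(\sigma_n,\pi(J))\,\patt(\pi,J)$, which does hold with composition read right-to-left, to reduce to $\tau_0=\patt(\pi,J)\beta\in\Sat(\varphi')$, then bound by the minimum over restricted densities and sum by linearity of expectation. Your remarks on a randomized first-stage algorithm and on the case $n<k$ are harmless details that the paper leaves implicit.
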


In our applications of this theorem, we will have $c = 1$ and $s = 1$ for completely satisfiable instances and $c = 1 - \varepsilon$ and $s = 1 - O(\varepsilon \log n \log \log n)$ for nearly satisfiable instances (as discussed above).  
Theorem~\ref{thm:idu-framework} highlights the importance of the restricted pattern densities $d_J(\rho, \sigma_n)$, which fully describe the performance of the weak IDU transformation $\bm{\sigma}$ in our framework. However, analyzing these quantities for an arbitrary weak IDU transformation appears to be quite challenging.
It is considerably easier to handle them within a more structured family of transformations -- namely, those for which the values of $d_J(\rho, \sigma_n)$ do not depend on the choice of $J$. This observation motivates the following definition.

\begin{definition}
A strong \emph{IDU transformation} is a weak IDU transformation $\{\sigma_n\}$ that satisfies the following property: for every $k$ and permutation $\rho \in \Sk$, the value $d_J(\rho, \sigma_n)$ is the same for all $n \ge k$ and all $J \subseteq [n]$ of size $k$.  
In particular, $d_J(\rho, \sigma_n) = d(\rho,\sigma_n)$.  
Since this quantity does not depend on $n$ (for $n \ge k$), we denote it simply by $d(\rho, \bm{\sigma})$.
\end{definition}

For a strong IDU transformation, we obtain a simpler expression for $p(\varphi' \to \varphi, \bm{\sigma})$:
\[
p(\varphi' \to \varphi, \bm{\sigma}) = \min_{\tau\in \Sat(\varphi')} \sum_{\rho \in \Sat(\varphi)\tau^{-1}} d(\rho, \bm{\sigma}).
\]
However, it is not immediately clear how rich the class of strong IDU transformations is, or whether they are as powerful as weak IDU transformations for our purposes. 

The main structural results about IDU transformations are as follows.
First, we connect weak IDU transformations with \emph{random permutons}, the limit objects for random permutations (see Section~\ref{sec:permutons} for the definition of a permuton). Each random permuton defines a weak IDU transformation, and although the converse is not true, we show that such transformations are as powerful for our purposes as arbitrary weak IDU transformations.

We then show that for every weak IDU transformation $\bm{\sigma}$, there exists a strong IDU transformation $\hat{\bm{\sigma}}$ such that for any predicate $\varphi$ and any relaxation $\varphi'$ of $\varphi$:
\[
p(\varphi' \to \varphi, \hat{\bm{\sigma}}) \geq p(\varphi' \to \varphi, \bm{\sigma}).
\]

We next present a concrete classification of strong IDU transformations.
We begin by noting that the three basic permutons -- $I$ (increasing), $D$ (decreasing), and $U$ (uniform or Lebesgue) -- each define a strong IDU transformation (see Figure~\ref{fig:permutons}).
Specifically, the transformation $I$ yields the identity permutation on $[n]$ (thereby leaving every ordering CSP solution unchanged).
The transformation $D$ produces the decreasing permutation on $[n]$ (thus reversing every solution).
Finally, $U$ outputs a permutation drawn uniformly at random from $\Sn$ (the transformed solution is uniformly random).

We then introduce a new composition operation for random permutons and IDU transformations called the \emph{up-combination}.\footnote{We note that strong IDU transformations also form a semigroup under the standard permutation composition operation, but this observation does not appear to be useful in our setting.}  
Loosely speaking, the up-combination of two weak IDU transformations $\alpha$ and $\beta$ with weights $p$ and $1-p$ is obtained as follows:  
\begin{enumerate}
    \item each number from $1$ to $n$ is independently assigned to group $A$ with probability $p$ and to group $B$ with probability $1-p$;  
    \item we apply $\alpha$ to order the elements in $A$ and $\beta$ to order those in $B$;  
    \item finally, we place all elements of $A$ before all elements of $B$.
\end{enumerate}
We denote the resulting transformation by $p \alpha \oUp (1-p) \beta$.  
The up-combination operation differs significantly from the standard permutation compositions $\oplus$ and $\ominus$ commonly used in the literature.%
\footnote{One notable difference is that the operations $\oplus$ and $\ominus$ apply also to permutations (not just permutons or distributions over permutations) and produce a single permutation, whereas the up-combination is defined only for permutons or distributions over permutations. While one could extend it to pairs of permutations of \emph{equal length}, the result would still be \emph{a distribution over permutations}.}
Its key property is that the up-combination of any number of strong IDU transformations is again a strong IDU transformation (see Claim~\ref{claim:up-combination-is-strong-IDU}).  
Thus, starting from the basic transformations $I$, $D$, and $U$, we can construct a wide variety of new ones.

More surprisingly, every strong IDU transformation is a finite or infinite (in a suitable sense) up-combination of $I$, $D$, and $U$ (see Theorem~\ref{thm:main-strongIDU-classification}).  
From a practical standpoint, this implies that for every fixed arity $k$ and desired precision $\delta>0$, every strong IDU transformation can be approximated within~$\delta$ by a finite up-combination of $I$, $D$, and $U$ or even just $I$ and $D$ (see Theorems~\ref{thm:approx-by-IDU} and~\ref{thm:approx-by-ID}).  
A random permutation from such a combination can be sampled in time $f(k, \delta)\,n$, that is, time linear in $n$ when $k$ and $\delta$ are fixed.  
Hence, the additional computational overhead of solving $\CSP(\Pi)$ relative to its relaxation $\CSP(\Pi')$ is linear in $n$ (for fixed $\delta>0$ and finite $\Pi$).  
Moreover, the algorithm from Theorem~\ref{thm:idu-framework} can be derandomized using the method of conditional expectations.

Relying on the classification above, we study the \emph{profiles} of strong IDU transformations, defined as the vectors $\{d(\rho, \bm{\sigma})\}_{\rho\in \Sk}$ (with one coordinate for each $\rho \in \Sk$).  
Such profiles can also be defined for arbitrary deterministic and random permutons $\mu$.  
There is active research on permutation pattern densities (that is, the profiles of deterministic and random permutons), though the full landscape of possible densities remains far from understood.  
While every strong IDU transformation yields a valid permuton profile, the converse is not true.  
This naturally leads to the question of which random permuton profiles can be achieved by strong IDU transformations.

\begin{definition}[see~\cite{Niven68}]\label{def:ud-signature}
The \emph{up--down signature} of a permutation $\rho\in \Sk$ is a sequence $s$ of $k-1$ letters $u$ (for ``up'') and $d$ (for ``down''), defined as follows:  
$s_i = u$ if $\rho(i) < \rho(i+1)$ and $s_i = d$ otherwise.  
We denote the up--down signature of $\rho$ by $\udsign(\rho)$.  
We write $\rho_1 \udsim \rho_2$ if $\rho_1$ and $\rho_2$ have the same signature (and hence the same length).
\end{definition}

The following theorem characterizes the profiles of strong IDU transformations and provides a criterion for when a weak IDU transformation is strong.
\begin{theorem}
I. Consider a strong IDU transformation $\bm{\sigma}$. For every $k$ and every two permutations $\rho_1, \rho_2\in \Sk$ with $\rho_1^{-1}\udsim \rho_2^{-1}$, we have $d(\rho_1, \bm{\sigma}) = d(\rho_2, \bm{\sigma})$.  

II. Conversely, consider a weak IDU transformation $\bm{\sigma}$.  Assume that for all $n$ and $k \leq n$, the pattern density $d(\rho, \sigma_n)$ depends only on the up--down signature of $\rho^{-1}$ for $\rho \in \Sk$. Then $\bm{\sigma}$ is a strong IDU transformation.
\end{theorem}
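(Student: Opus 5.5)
The plan is to pass to inverses throughout. The reason is that $\patt(\sigma_n,J)=\rho$ is most naturally read through $\rho^{-1}$: writing $\sigma_n^{-1}$ as a word that lists positions in increasing order of value, the event $\patt(\sigma_n,J)=\rho$ says that the subword of $\sigma_n^{-1}$ formed by the letters in $J$, once standardized, equals $\rho^{-1}$. In this language, $\udsign(\rho^{-1})$ records, for each $i$, whether the $i$-th smallest value of $\sigma_n|_J$ sits to the left or to the right of the $(i+1)$-st. I will use a standard combinatorial fact, which I would prove by a short bubble-sort argument if it is not already available. The class $\{\rho:\udsign(\rho^{-1})=s\}$ is connected under the moves $\rho\mapsto\rho\, s_a$, where $s_a$ is the adjacent transposition of positions $a,a+1$ and the move is allowed only when $|\rho(a)-\rho(a+1)|\ge 2$. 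Such a swap leaves the relative left/right order of every pair of consecutive values unchanged, so it preserves the signature of $\rho^{-1}$.

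\emph{Part I.} By this connectivity, it suffices to show that $d(\rho,\bm\sigma)=d(\rho s_a,\bm\sigma)$ whenever $\rho(a)=v$, $\rho(a+1)=w$ and $w\ge v+2$. I will use the strong property with $n=k+1$. Since $w\ge v+2$, some value $u$ with $v<u<w$ occupies a position $c\notin\{a,a+1\}$. Realize $\rho$ and $\rho s_a$ as patterns of $\sigma_{k+1}$ on two different $k$-subsets $J_1,J_2\subseteq[k+1]$: insert an extra position next to $a,a+1$ and delete one of the two. Then express $d_{J_1}(\rho,\sigma_{k+1})$ and $d_{J_2}(\rho s_a,\sigma_{k+1})$ as sums of $\Pr(\sigma_{k+1}=\tilde\tau)$ over length-$(k+1)$ permutations $\tilde\tau$, and set up a bijection between the two index sets that swaps the roles of the two inserted positions. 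This bijection will use that the value $u$ separates $v$ from $w$. Independence of $d_J$ from $J$ and $n$ then gives the claimed equality. I would first try to make this work by induction on $k$, using the strong property only on pairs $J,J'$ that differ in a single element. I expect this step to be the main obstacle. If the bijection does not close up, the fallback is to prove directly that the vector of class sums $\sum_{\udsign(\rho^{-1})=s} d_J(\rho,\sigma_n)$ is $J$-independent and that the individual terms within a class are forced to coincide.

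\emph{Part II.} Applying the hypothesis with $k=n$ gives $\Pr(\sigma_n=\tau)=g_n(\udsign(\tau^{-1}))$ for some function $g_n$ on signatures of length $n-1$. Hence
\[
d_J(\rho,\sigma_n)=\sum_{s} g_n(s)\, N_n(s,J,\rho),
\]
where $N_n(s,J,\rho)$ counts the words $w\in\Sn$ with descent signature $s$ whose restriction to the letters of $J$ standardizes to $\rho^{-1}$. The key combinatorial lemma I will aim for is that $N_n(s,J,\rho)$ does not depend on $J$. I would prove it by an explicit bijection that relabels the letters of $w$ by an order-preserving map, moving one element of $J$ past an adjacent non-$J$ letter. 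Such a move changes neither the descent pattern nor the standardization of the restricted word when the two letters are non-consecutive, and it requires a separate argument when they are consecutive. Granting this lemma, $d_J(\rho,\sigma_n)=d(\rho,\sigma_n)$ for every $J$, because $d(\rho,\sigma_n)$ is the average of $d_J(\rho,\sigma_n)$ over $J$. Independence of $n$ then follows from the hypothesis applied at every $n$, combined with the consistency obtained by summing over extensions: restricting $\sigma_{n+1}$ to the positions $[n]$ and to all other $n$-subsets gives the same law, since all the $d_J$ agree.
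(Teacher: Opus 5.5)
\textbf{Part I has a genuine gap, and the route you propose cannot be completed.} You never construct the bijection. As described, it would have to match $\Pr(\sigma_{k+1}=\tilde\tau)$ with $\Pr(\sigma_{k+1}=\phi(\tilde\tau))$ for distinct permutations $\tilde\tau\neq\phi(\tilde\tau)$, so it presupposes the statement one level up.

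More decisively, the strong property at level $n=k+1$ does not imply the conclusion. Work, as you do, with the word $\psi=\sigma_5^{-1}$, and write $s_j\psi$ for $\psi$ with the letters $j,j+1$ swapped. For $j\in[4]$ and $\psi$ in which $j,j+1$ are not adjacent, let $g_j,g_{j+1}\in\{0,1,2,3\}$ count the letters outside $\{j,j+1\}$ lying to the left of $j$, resp.\ $j+1$. Put $c_j(\psi)=1$ if $g_{j+1}\equiv g_j+1$, $c_j(\psi)=-1$ if $g_j\equiv g_{j+1}+1 \pmod{4}$, and $c_j(\psi)=0$ otherwise.

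Now prescribe $\Phi(\psi)-\Phi(s_j\psi)=c_j(\psi)$ on every edge of your swap graph. These differences close up around every square, since commuting swaps leave the relevant $g$'s unchanged. They also close up around every hexagon; the only hexagons have letters $j,j+1,j+2$ in positions $1,3,5$, and the six differences sum to $0$. Squares and hexagons generate the cycle space of each descent class, which is a convex set of chambers of the braid arrangement. So such a $\Phi$ exists, and we may normalize $\sum_\psi\Phi(\psi)=0$.

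Since $c_j$, as an antisymmetric function of $(g_j,g_{j+1})$, has zero row sums, deleting the letter $j$ or the letter $j+1$ gives the same law under $\Pr(\sigma_5^{-1}=\psi)=\frac{1}{120}+\varepsilon\Phi(\psi)$ for small $\varepsilon>0$. Hence all $4$-point (and smaller) restrictions of $\sigma_5$ share a common law $q_4$. Yet inserting the letter $5$ in its five possible places shows $q_4(\mathsf{1324})-q_4(\mathsf{3124})=3\varepsilon\neq0$. This holds although $(\mathsf{1324})^{-1}=\mathsf{1324}$ and $(\mathsf{3124})^{-1}=\mathsf{2314}$ both have signature $udu$.

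The same construction over $\mathbb{Z}/(N-1)$ works at every level $N$: the $4$-point marginal is a nonzero multiple of the same pattern, modulo signature-invariant terms. So no argument using the strong property only up to a finite level can succeed, and that includes your fallback. One needs the whole infinite family together with positivity.

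That is what the paper supplies:
\begin{enumerate}
\item A strong IDU transformation is consistent, hence a random permuton.
\item Kallenberg's representation of the contractible comparison array $(Z_{ij})$ shows that almost every sampled permuton is a core IDU permuton.
\item Core IDU permutons are approximated by ID combinations $C$, for which $d(\pi,C)=K_{\pi^{-1},N}(x,y)$.
\item This quantity depends only on $\udsign(\pi^{-1})$, because the set of $\pi^{-1}$-partition functions depends only on that signature.
\item One then passes to the limit.
\end{enumerate}

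\textbf{Part II is correct}, and somewhat more direct than the paper's argument. The adjacent case needs no separate treatment: map $w$ to itself, because then $w|_J$ and $w|_{J'}$ standardize to the same word. Combined with $w\mapsto s_qw$ in the non-adjacent case, you get an involution showing that $N_n(s,J,\rho)$ is independent of $J$ for all $J$ at once. The paper instead handles only one-point deletions, via multisets of insertions, and then descends using consistency of the associated random permuton. Independence of $n$ is immediate from the hypothesis, since $\Delta_{\bm\sigma}$ does not depend on $n$; your extra consistency remark is unnecessary.
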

This result is stated as Theorems~\ref{thm:strong-IDU-inv-sig-invariant} and~\ref{thm:ud-invariance-implies-strong} in Section~\ref{sec:up--down}.
In particular, these theorems show that for every $k$, the profile of a strong IDU transformation has only $2^{k-1}-1$ degrees of freedom.  

At this point, the problem of finding an optimal strong IDU transformation can be stated purely in the language of (standard) pattern densities $d(\cdot, R)$:
\begin{quote}
Find a random permuton $R$ that maximizes
\[
\min_{\varphi \in \Pi} p(\varphi' \to \varphi, R) =\min_{\varphi \in \Pi}\min_{\tau \in \Sat(\varphi')} \sum_{\rho \in \Sat(\varphi)} d(\rho \tau^{-1}, R).
\]
\noindent
subject to $d(\rho_1, R) = d(\rho_2, R)$ for all $\rho_1$ and $\rho_2$ with $\rho_1^{-1}\udsim \rho_2^{-1}$.
\end{quote}

We also develop two complementary tools for bounding and analyzing $p(\varphi' \to \varphi, \bm{\sigma})$.  
First, we show that profiles can be expressed using \emph{quasisymmetric polynomials} (see Section~\ref{sec:qsym} for details).  
This algebraic viewpoint provides a concise description of the feasible space of profiles and yields the following result (stated as Theorem~\ref{thm:compute-nearly-optimal-IDU} in Section~\ref{sec:computation}).

\begin{theorem}
There is an algorithm that, given a constraint language $\Pi$, its relaxation $\Pi'$, and a parameter $\delta > 0$, finds a strong IDU transformation $\bm{\sigma}$ with  
$p(\Pi' \to \Pi, \bm{\sigma}) \ge p(\Pi' \to \Pi, \bm{\sigma}^*) - \delta$,  
where $\bm{\sigma}^*$ is the optimal strong IDU transformation.  
We assume $\Pi$ is finite and has maximum arity $k$.  
The running time of the algorithm depends only on $k$ and $\delta$ (and not on $n$).
\end{theorem}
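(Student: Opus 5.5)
The plan is to reduce the search over strong IDU transformations to a finite-dimensional polynomial optimization and then solve that problem to precision $\delta$ by a grid or net search. We use three earlier results. Theorem~\ref{thm:approx-by-ID} (or \ref{thm:approx-by-IDU}) says that every strong IDU transformation is approximated within any precision, in the profile of arity-$k$ pattern densities, by a finite up-combination of $I$ and $D$. The key point is that the number of components $m$ and the approximation quality should depend only on $k$ and the target precision. Theorems~\ref{thm:strong-IDU-inv-sig-invariant} and~\ref{thm:ud-invariance-implies-strong} say that the profile $\{d(\rho,\bm\sigma)\}_{\rho\in\Sk}$ depends only on the up--down signature of $\rho^{-1}$. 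Finally, Claim~\ref{claim:up-combination-is-strong-IDU} guarantees that any finite up-combination is itself a strong IDU transformation, so whatever we output is admissible and can be sampled in linear time.

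\textbf{Step 1: an explicit formula for the profile.} For a finite up-combination $p_1\sigma_1\oUp\cdots\oUp p_m\sigma_m$ with $\sigma_i\in\{I,D\}$, we compute the pattern densities. Place $k$ points and assign each independently to block $i$ with probability $p_i$. Within block $i$ the points are ordered increasingly (for $I$) or decreasingly (for $D$), and the blocks are stacked upward. The probability that the pattern equals $\rho$ is then a polynomial of degree $k$ in $(p_1,\dots,p_m)$: it is a sum over block assignments $[k]\to[m]$ consistent with $\rho$. This is the quasisymmetric-polynomial description referred to in Section~\ref{sec:qsym}. Consequently, for each $\varphi\in\Pi$ and $\tau\in\Sat(\varphi')$, the quantity $\sum_{\rho\in\Sat(\varphi)}d(\rho\tau^{-1},\cdot)$ is an explicit polynomial $P_{\varphi,\tau}(p)$ for each of the $2^m$ choices of the types $\sigma_i$.

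\textbf{Step 2: truncation.} Let $\bm\sigma^*$ be optimal, or $\delta/3$-optimal if the supremum is not attained. By the approximation theorem there is an $I/D$ up-combination with $m=m(k,\delta)$ parts whose arity-$k$ profile is within $\delta/(3\cdot k!)$ of that of $\bm\sigma^*$ in every coordinate. Each $p(\varphi'\to\varphi,\cdot)$ is a minimum of sums of at most $k!$ densities, so the objective changes by at most $\delta/3$. One subtlety is to bound $m$ uniformly over all strong IDU transformations, not just for each one separately. I expect this to be the main obstacle. If the cited approximation theorem is only qualitative, I would prove uniformity by compactness: the space of profiles of strong IDU transformations is compact in $\mathbb{R}^{\Sk}$ (a closed subset of a simplex, as a limit of profiles), and the sets of profiles realized with at most $m$ components form an increasing family whose union is dense. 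A finite $\delta$-net of the compact profile space then yields a uniform $m$.

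\textbf{Step 3: grid search.} Each polynomial $P_{\varphi,\tau}$ has degree $k$ and coefficients bounded in terms of $m$ and $k$, so it is Lipschitz on the simplex $\Delta_m$ with constant $L(k,m)$. Enumerate a grid on $\Delta_m$ of mesh $\delta/(3L)$ together with all $2^m$ type patterns. Evaluate $\min_{\varphi}\min_\tau P_{\varphi,\tau}$ at each grid point and output the best point. Only the $2^{k-1}$ signature classes matter and $\Pi$ contributes at most $k!$ permutations per predicate, so after grouping predicates the work depends only on $k$ and $\delta$.

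Combining the three losses of $\delta/3$ gives $p(\Pi'\to\Pi,\bm\sigma)\ge p(\Pi'\to\Pi,\bm\sigma^*)-\delta$. There is a caveat: the number of distinct predicates of arity at most $k$ is bounded by a function of $k$, so the dependence on $|\Pi|$ is absorbed into $f(k,\delta)$.
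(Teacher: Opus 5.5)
There is a genuine gap in Steps~2--3: you only ever search over, and output, a \emph{single} deterministic up-combination of $I$ and $D$. Theorem~\ref{thm:approx-by-ID} does not say that a strong IDU permuton is approximated by one ID combination. It says it is approximated by a \emph{probability distribution} over ID combinations, with at most $2^{k-1}$ of them in the support. The mixture cannot be dropped. The objective $\min_{\varphi}\min_{\tau}\sum_{\rho}d(\rho\tau^{-1},R)$ is a minimum of linear functionals of the profile, so it is concave rather than linear. Moreover, the feasible region $\mathcal{R}_k$ is the closed \emph{convex hull} of $\bigcup_N \image(\mathbf{K}_N)$ (Theorem~\ref{thm:feasible-region-characterization}), which can be strictly larger than the closure of the set of single-combination profiles.

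A concrete counterexample: take $\Sat(\varphi)=\{(\mathsf{1\,2\,3})\}$ and let $\varphi'$ be Betweenness, so $\Sat(\varphi')=\{(\mathsf{1\,2\,3}),(\mathsf{3\,2\,1})\}$.
\begin{enumerate}
\item Then $p(\varphi'\to\varphi,R)=\min\bigl(d((\mathsf{1\,2\,3}),R),\,d((\mathsf{3\,2\,1}),R)\bigr)\le\tfrac12$. Equality holds for the random permuton that equals $I$ or $D$ with probability $\tfrac12$ each, which is a strong IDU permuton.
\item For a deterministic permuton $\mu$, suppose $d((\mathsf{1\,2\,3}),\mu)+d((\mathsf{3\,2\,1}),\mu)=1$. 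Write $f(a)$ for the probability that a random point forms an increasing pair with $a$. Then $\mathbb{E}[f(1-f)]=0$, so $f\in\{0,1\}$ a.s., and a cross-set argument gives $f\equiv 1$ or $f\equiv 0$. Hence $\mu\in\{I,D\}$, where the minimum is $0$.
\item By compactness and continuity (Theorem~\ref{thm:permuton-properties}), the supremum of the minimum over all deterministic permutons, and in particular over all ID combinations, is strictly below $\tfrac12$.
\end{enumerate}
So for $\delta$ below this gap your output violates the guarantee no matter how large $m$ is. The same flaw breaks your fallback compactness argument in Step~2, because single-combination profiles are not dense in $\mathcal{R}_k$. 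The paper itself notes that many arity-$4$ predicates require non-deterministic permutons.

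The repair is what the paper does. Optimize jointly over weights $\mathbf{p}\in\mathbb{R}^{2^{k-1}}_{\ge 0}$ summing to $1$ and over $2^{k-1}$ ID combinations $(\mathbf{x}^{(i)},\mathbf{y}^{(i)})$ with $N=O(k^2/\delta)$ terms each; the count $2^{k-1}$ comes from Carath\'eodory together with Corollary~\ref{cor:dim-profile-space}. The objective is $\min_{\varphi}\min_{\tau}\sum_{\rho}\sum_i p_i K_{\tau\rho^{-1},N}(\mathbf{x}^{(i)},\mathbf{y}^{(i)})$, and you discretize the weights as well as each simplex. The objective stays Lipschitz: it is linear in $\mathbf{p}$, and each $K$ has partial derivatives at most $k(2N)^k$ on the simplex.

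Two smaller points:
\begin{itemize}
\item The uniformity of $m$ that you flagged as the main obstacle is not one. The segment-removal coupling in Theorem~\ref{thm:approx-by-IDU} gives $N=O(k^2/\delta)$ for every $R$ at once. It even bounds the total variation distance between pattern distributions by $\delta$, so your extra $k!$ factor is unnecessary, though harmless.
\item Writing each combination in the alternating form $x_1 I\oUp y_1 D\oUp\cdots\oUp x_N I\oUp y_N D$, with zero coefficients allowed, removes the need to enumerate $2^m$ type patterns.
\end{itemize}
With the mixture added, your Step~1 and the Lipschitz grid search match the paper's argument.
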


Second, we explore analytic techniques for proving lower bounds on $p(\varphi' \to \varphi, \bm{\sigma})$.  
To this end, we adapt Razborov's flag algebra framework for pattern densities to the setting of strong IDU transformations.

Finally, we illustrate the framework on ordering CSPs of arity~3 and~4 defined by a single predicate.
We prove that the only ordering CSPs of arity~3 that admit nontrivial approximation for completely satisfiable instances, using our approach, are $\CSP(\mathrm{Betweenness})$ and $\CSP(\{\mathrm{Betweenness}, <\})$ (see Section~\ref{sec:arity-3-analysis}).
Using numerical optimization, we further show that at least \magicSAT{} non-isomorphic ordering CSPs of arity~4, each defined by a single predicate, admit nontrivial approximation for completely satisfiable instances (this number includes only NP-hard ordering CSPs). Moreover, at least \magicNEARSAT{} ordering CSPs defined by a single predicate admit nontrivial approximation in the nearly satisfiable regime (this number excludes bounded-arity precedence CSPs).

\section{Applying the framework to satisfiable and nearly satisfiable ordering CSPs}
\label{sec:results-csp}
\subsection{Results for single-predicate ordering CSPs of arity~3 and~4}
We first briefly describe our results for single-predicate ordering CSPs of arity~3 and~4.

Let us say that two predicates are \emph{isomorphic} if one can be obtained from the other by renaming the variables or by taking the dual (i.e., replacing $<$ with $>$ in all formulas). For example, the following predicates are isomorphic:
$\varphi_1(x_1, x_2, x_3) = (x_1 < x_2) \wedge (x_1 < x_3)$,
$\varphi_2(x_1, x_2, x_3) = (x_2 < x_1) \wedge (x_2 < x_3)$, and
$\varphi_3(x_1, x_2, x_3) = (x_3 > x_1) \wedge (x_3 > x_2)$. 
We call each of the $L$-, $R$-, and $\varepsilon$-relaxations of a predicate $\varphi$ \emph{nontrivial} if it is neither equal to $\varphi$ nor to the constant-true predicate.

Our results are shown in Figure~\ref{fig:summary-3-4-results}.
The main takeaway is that the arity~3 landscape is essentially exhausted, whereas arity~4 already contains many positive examples in both regimes.

Excluding the two constant predicates, there are \(11\) non-isomorphic predicates of arity~3. Among them, \(4\) are polynomial-time tractable, and \(3\) of these are precedence predicates (see Definition~\ref{def:precedence-CSP}). Consequently, these \(3\) predicates admit a \(1 - O(\varepsilon \log n \log \log n)\) approximation in the nearly satisfiable regime. Of the remaining \(7\) predicates, only \(3\) have a nontrivial $L$- or $R$-relaxation. Among these \(3\), only one -- Betweenness -- admits a nontrivial approximation guarantee using our framework. Excluding the \(3\) precedence predicates, none of the remaining \(8\) predicates has a nontrivial $\varepsilon$-relaxation.

Excluding the two constant predicates, there are \(355{\,}046\) non-isomorphic predicates of arity \(4\). Among them, \(29\) are polynomial-time tractable, and \(11\) of these are precedence predicates. Consequently, these \(11\) predicates admit a \(1 - O(\varepsilon \log n \log\log n)\) approximation in the nearly satisfiable regime. Among the remaining \(355{\,}017\) predicates, \(39{\,}299\) have a nontrivial $L$- or $R$-relaxation. At least \(\magicSAT\) of them admit a nontrivial approximation guarantee in the satisfiable regime.
Also, \(993\) predicates have a nontrivial $\varepsilon$-relaxation. Among them, at least \(\magicNEARSAT\) admit a nontrivial approximation guarantee in the nearly satisfiable regime. (Note that all of these \(\magicNEARSAT\) predicates also admit a nontrivial approximation in the satisfiable regime.)

\paragraph{Dataset and verification code}
A database of predicates with their approximation factors, together with descriptions of all relaxations, up-down signatures, and IDU combinations used in the paper, as well as code for verifying the correctness of these results, will be made publicly available at~\cite{ordering-csp-repo}.

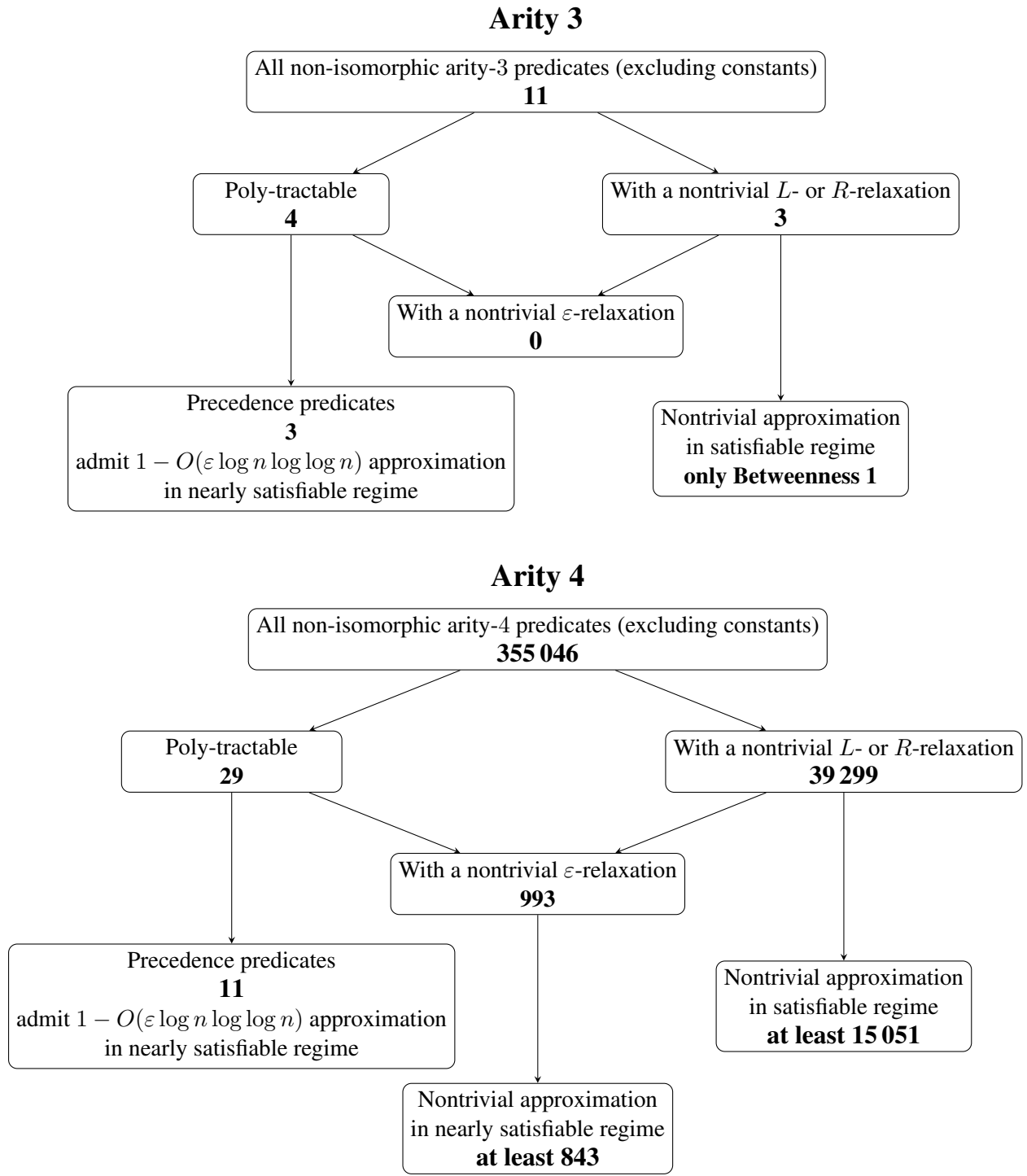
\begin{figure}[p]
\centering
\begin{tikzpicture}[
    x=1cm,y=1cm,
    box/.style={draw, rounded corners, align=center, minimum width=3.8cm, minimum height=0.9cm},
    smallbox/.style={draw, rounded corners, align=center, minimum width=3.2cm, minimum height=0.8cm},
    note/.style={align=center},
    >=stealth
]
\node[font=\bfseries] at (0,1) {\Large Arity 3};
\node[box] (tot3) at (0,0)
{All non-isomorphic arity-$3$ predicates (excluding constants)\\\textbf{\large 11}};

\node[smallbox] (pt3) at (-4,-2)
{Poly-tractable\\\textbf{\large 4}};

\node[smallbox] (sat3) at (4,-2)
{With a nontrivial $L$- or $R$-relaxation\\\textbf{\large 3}};

\node[smallbox] (near3) at (0,-4)
{With a nontrivial $\varepsilon$-relaxation\\\textbf{\large 0}};

\draw[->] (tot3) -- (pt3);
\draw[->] (tot3) -- (sat3);

\draw[<-] (near3) -- (pt3);
\draw[<-] (near3) -- (sat3);

\node[smallbox] (ineq3) at (-4,-6)
{Precedence predicates\\\textbf{3}\\
admit $1-O(\varepsilon\log n\log\log n)$ approximation\\ in nearly satisfiable regime};

\node[smallbox] (app3) at (4,-6)
{Nontrivial approximation\\in satisfiable regime\\\textbf{only Betweenness 1}};

\draw[->] (pt3) -- (ineq3);
\draw[->] (sat3) -- (app3);
\end{tikzpicture}

\vspace*{7mm}

\begin{tikzpicture}[
    x=1cm,y=1cm,
    box/.style={draw, rounded corners, align=center, minimum width=4.2cm, minimum height=0.9cm},
    smallbox/.style={draw, rounded corners, align=center, minimum width=3.6cm, minimum height=0.8cm},
    tinybox/.style={draw, rounded corners, align=center, minimum width=3.1cm, minimum height=0.8cm},
    note/.style={align=center},
    >=stealth
]

\node[box] (tot4) at (0,0)
{All non-isomorphic arity-$4$ predicates (excluding constants)\\\textbf{\large 355{\,}046}};

\node[smallbox] (pt4) at (-5,-2)
{Poly-tractable\\\textbf{29}};

\node[smallbox] (sat4) at (5,-2)
{With a nontrivial $L$- or $R$-relaxation\\\textbf{\large 39{\,}299}};

\node[smallbox] (near4) at (0,-4)
{With a nontrivial $\varepsilon$-relaxation\\\textbf{993}};

\draw[->] (tot4) -- (pt4);
\draw[->] (tot4) -- (sat4);

\draw[<-] (near4) -- (pt4);
\draw[<-] (near4) -- (sat4);

\node[tinybox] (ineq4) at (-5,-6)
{Precedence predicates\\\textbf{\large 11}\\
admit $1-O(\varepsilon\log n\log\log n)$ approximation\\ in nearly satisfiable regime};

\node[tinybox] (app4sat) at (5,-6)
{Nontrivial approximation\\in satisfiable regime\\\textbf{\large at least \magicSAT{}}};

\draw[->] (pt4) -- (ineq4);
\draw[->] (sat4) -- (app4sat);

\node[tinybox] (app4near) at (0,-8)
{Nontrivial approximation\\in nearly satisfiable regime\\\textbf{\large at least 843}};

\draw[->] (near4) -- (app4near);
\node[font=\bfseries] at (0,1) {\Large Arity 4};

\end{tikzpicture}

\caption{Summary of the classification results for single-predicate ordering CSPs of arity~3 and~4. The diagram shows the number of predicates of different types. The boxes labeled ``With a nontrivial $L$- or $R$-relaxation'' count predicates for which at least one of the canonical tractable relaxations $\varphi_L$ and $\varphi_R$ is nontrivial. The boxes labeled ``With a nontrivial $\varepsilon$-relaxation'' are defined analogously.}
\label{fig:summary-3-4-results}
\end{figure}

\subsection{Completely satisfiable instances}\label{sec:satisfiable-csps}
We briefly review the classification of polynomially tractable temporal CSPs due to Bodirsky and Kára~\cite{BK10}. The key difference between ordering and temporal CSPs is that, in temporal CSPs, distinct variables may take the same value; accordingly, each predicate is a Boolean formula involving clauses of the form $x_i < x_j$ and $x_i = x_j$. In contrast, in an ordering CSP all variables are required to take distinct values.

Bodirsky and Kára showed that a temporal CSP is polynomial-time tractable if and only if its temporal constraint language is preserved by at least one of the following operations: $ll$, $min$, $mi$, $mx$, their duals, or a constant operation~\cite[Theorem 50]{BK10}; otherwise, the CSP is NP-hard.

To apply this result to ordering CSPs, we reinterpret every ordering CSP as a temporal CSP. First, we allow different variables to take the same value. Second, we extend each predicate by declaring that $\varphi(x_1,\dots,x_k)$ is unsatisfied whenever $x_i = x_j$ for some $i \neq j$. In this way,~\cite{BK10} yields a criterion for the polynomial-time tractability of ordering CSPs, which we now make more explicit.

We define two classes of ordering CSPs, which we call Not-First Ordering CSPs and Not-Last Ordering CSPs. In the context of temporal CSPs and scheduling, these classes have been studied under the name AND/OR precedence constraints.  
\begin{definition}[Not-First and Not-Last Ordering CSPs]\label{def:not-first}
Define the predicate $\NFirst_k(x_1,\dots, x_k)$ as $x_1 > \min(x_2,\dots, x_k)$, or equivalently $(x_1 > x_2) \vee \dots \vee (x_1 > x_k)$.  
It is satisfied if and only if $x_1$ is not the first among $x_1,\dots, x_k$.  
Similarly, define the predicate $\NLast_k(x_1,\dots, x_k)$ as $x_1 < \max(x_2,\dots, x_k)$, or equivalently $(x_1 < x_2) \vee \dots \vee (x_1 < x_k)$.

We denote the set of predicates $\varphi$ that are conjunctions of $\NFirst$ predicates (possibly on different subsets of variables and of different arities) by $\Pi_{\NFirst}$:
$$\varphi = \bigwedge_{i=1}^m \NFirst_{k_i}(x_{j_{i_1}}, \dots, x_{j_{ik_i}}).$$
Similarly, we denote by $\Pi_{\NLast}$ the set of predicates that are conjunctions of $\NLast$ predicates.

We say that an ordering CSP $\CSP(\Pi)$ is a \emph{Not-First Ordering CSP} if $\Pi \subseteq \Pi_{\NFirst}$. Similarly, we say that $\CSP(\Pi)$ is a \emph{Not-Last Ordering CSP} if $\Pi \subseteq \Pi_{\NLast}$. When the arity of $\NFirst$ and $\NLast$ is clear from context, we omit the subscript $k$.
\end{definition}
Not-First and Not-Last ordering CSPs are known to be polynomially tractable.
It follows from the Bodirsky--Kára classification that $\CSP(\Pi)$ is polynomially tractable if and only if it is either a Not-First or a Not-Last ordering CSP.  
Namely, the classification of $ll$-closed temporal CSPs from~\cite{BKfast10} immediately implies that every $ll$-closed ordering CSP is a Not-First ordering CSP, and every dual-$ll$-closed ordering CSP is a Not-Last ordering CSP.

All temporal CSPs preserved by $min$, $mi$, and $mx$ are shuffle-closed. We show in Appendix~\ref{sec:shuffle-closed} that every shuffle-closed ordering CSP is a Not-First ordering CSP. Similarly, all ordering CSPs preserved by dual-$min$, dual-$mi$, and dual-$mx$ are Not-Last ordering CSPs. No ordering predicate of arity $k \geq 2$ is preserved by constant operations.

For every ordering predicate $\varphi$, we define two canonical tractable relaxations, the $L$-relaxation $\varphi_{L}$ and the $R$-relaxation $\varphi_{R}$, as follows.
$\varphi_{L}$ is a relaxation of $\varphi$ in $\Pi_{\NFirst}$ with the minimal (by inclusion) set of satisfying assignments $\Sat(\varphi_{L})$. In other words, it is the conjunction of all predicates $\NFirst_t(x_{i_1},\dots, x_{i_t})$ that are implied by $\varphi(x_1,\dots, x_k)$ (where $i_1,\dots,i_t \in [k]$).  
Similarly, $\varphi_{R}$ is a minimal relaxation of $\varphi$ in $\Pi_{\NLast}$: it is the conjunction of all predicates $\NLast_t(x_{i_1},\dots, x_{i_t})$ that are implied by $\varphi(x_1,\dots, x_k)$.

By applying Theorem~\ref{thm:idu-framework} to $\Pi' = \Pi_{L}$ and $\Pi' = \Pi_{R}$, we obtain the following result.

\begin{theorem}\label{thm:idu-framework-satisfiable}
Consider an ordering CSP $\CSP(\Pi)$. Let $\{\sigma_n\}$ be a weak IDU transformation that can be sampled in polynomial time.  
There is a polynomial-time algorithm that, given a satisfiable instance of $\CSP(\Pi)$, finds a solution satisfying at least a 
$$\max(p(\Pi_{L} \to \Pi, \bm{\sigma}),p(\Pi_{R} \to \Pi, \bm{\sigma}))$$ fraction of constraints in expectation. 
\end{theorem}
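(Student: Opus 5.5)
The plan is to apply Theorem~\ref{thm:idu-framework} twice, with $c=s=1$, once for $\Pi'=\Pi_{L}$ and once for $\Pi'=\Pi_{R}$, and then run whichever of the two resulting algorithms has the larger guarantee.

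\textbf{Step 1: both languages are valid relaxations.} By definition, each $\varphi_{L}$ is the conjunction of all $\NFirst$ predicates implied by $\varphi$. So $\Sat(\varphi)\subseteq\Sat(\varphi_{L})$, and $\Pi_{L}$ is a relaxation of $\Pi$. The same holds for $\Pi_{R}$. Hence the relaxation $\calI'$ of a satisfiable instance $\calI$ is satisfiable: the permutation that satisfies $\calI$ also satisfies every relaxed constraint.

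\textbf{Step 2: solve the relaxed instance exactly.} $\CSP(\Pi_{L})$ is a Not-First ordering CSP and $\CSP(\Pi_{R})$ is a Not-Last ordering CSP, and both classes are polynomially tractable. I will assume $\Pi$ is finite (for fixed arity $k$ there are finitely many predicates in any case).
\begin{itemize}
\item Each $\varphi_{L}$ is computable from $\varphi$ in time depending only on $k$: enumerate all tuples $(i_1,\dots,i_t)$ and keep those for which $\NFirst_t(x_{i_1},\dots,x_{i_t})$ holds on every permutation in $\Sat(\varphi)$.
\item Rewriting each relaxed constraint as a conjunction of $\NFirst$ constraints turns $\calI'$ into an instance of AND/OR precedence type.
\item That instance can be solved by the standard greedy procedure: repeatedly place first any remaining variable $x$ that is not the first variable of any still-active $\NFirst$ constraint. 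Constraints already satisfied, because some other variable in them was placed earlier, are discarded as we go. By the footnote on search versus decision, a solver for the decision version also yields one for the search version, if needed.
\end{itemize}
The output is a permutation $\pi$ that satisfies all constraints of $\calI'$, i.e.\ $s=1$.

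\textbf{Step 3: apply the framework and take the better option.} Theorem~\ref{thm:idu-framework} with $c=s=1$ gives two algorithms. Their expected fractions of satisfied constraints are at least $p(\Pi_{L}\to\Pi,\bm{\sigma})$ and $p(\Pi_{R}\to\Pi,\bm{\sigma})$ respectively. The quantities $p(\cdot)$ depend only on $\Pi$ and $\bm{\sigma}$, not on the instance, so we simply run the algorithm corresponding to the larger of the two. If $p$ is not exactly computable because of the infimum over $n$, we run both algorithms, which gives expected value $\ge\max$ for each. The derandomization remark then lets us output the better of the two solutions, or we can note that choosing either in advance suffices in expectation.

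\textbf{Main obstacle.} The only nonroutine point is Step 2: making the exact polynomial-time solvability of the Not-First CSP constructive and confirming that the relaxed instance lies in that class. Both follow from the stated Bodirsky--Kára consequences together with the greedy algorithm.
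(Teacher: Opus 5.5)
Your proposal is correct and follows the paper's route: the paper obtains this theorem by applying Theorem~\ref{thm:idu-framework} with $c=s=1$ to $\Pi'=\Pi_{L}$ and to $\Pi'=\Pi_{R}$, relying on the tractability of Not-First and Not-Last CSPs, and your greedy solver (correct, via the earliest-unplaced-variable argument) just makes that tractability step explicit. One small slip in Step~3 is that each run guarantees only its own $p$, not the maximum, so ``choosing either in advance'' does not suffice; instead either hardcode the relaxation with the larger $p$ (legitimate, since $p$ depends only on $\Pi$ and $\bm{\sigma}$), or output the better of the two realized solutions, which needs no derandomization because $\Exp[\max(X_L,X_R)]\ge\max(\Exp X_L,\Exp X_R)$.
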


Further, by Theorems~\ref{thm:main-weak-to-strong} and~\ref{thm:compute-nearly-optimal-IDU}, for every arity $k$ and $\delta > 0$, we can find a nearly optimal strong IDU transformation $R$ such that 
$$p(\Pi' \to \Pi, R) \geq p(\Pi' \to \Pi, \bm{\sigma}) - \delta.$$
This requires time depending only on $k$ and $\delta$ (and not on $n$).  
Furthermore, in this case, we can sample from this IDU transformation in time $f(k, 1/\delta)\,n$. Additionally, the algorithm from Theorem~\ref{thm:idu-framework-satisfiable} can be derandomized using the method of conditional expectations.

Using Theorem~\ref{thm:compute-nearly-optimal-IDU} together with numerical optimization (via standard optimization libraries), we obtain the following result.
All computations are verified via exact arithmetic over the rational numbers.

\begin{theorem}
There are at least \magicSAT{} non-isomorphic ordering CSPs of arity~4, each defined by a single NP-hard predicate, that admit nontrivial approximation for completely satisfiable instances. In other words, these predicates are not \emph{approximation resistant} in the completely satisfiable regime.
\end{theorem}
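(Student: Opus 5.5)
This is a computational result, so the plan is to set up the enumeration carefully and then certify each positive case with an explicit, exactly verified IDU transformation.

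First, enumerate all ordering predicates of arity~4 up to isomorphism. A predicate corresponds to a subset $\Sat(\varphi)\subseteq\mathbb{S}_4$. Isomorphisms act by renaming variables, $\Sat(\varphi)\mapsto\Sat(\varphi)\beta$, and by duality, composing on the left with the reversal of values. Taking orbit representatives gives the $355\,046$ nonconstant classes.

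Second, discard the polynomially tractable classes. By the Bodirsky--Kára criterion as restated above, these are exactly the predicates lying in $\Pi_{\NFirst}$ or $\Pi_{\NLast}$. This can be tested directly: $\varphi\in\Pi_{\NFirst}$ iff $\varphi=\varphi_L$. Here $\varphi_L$ is computed as the conjunction of all $\NFirst_t(x_{i_1},\dots,x_{i_t})$ with $t\ge2$ whose satisfying set contains $\Sat(\varphi)$, and $\varphi_R$ is computed analogously. The remaining predicates are NP-hard, and we keep those for which $\varphi_L$ or $\varphi_R$ is nontrivial.

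Third, certify each such predicate. We exhibit an explicit finite up-combination $R$ of $I$, $D$, $U$ with rational weights. By Claim~\ref{claim:up-combination-is-strong-IDU}, $R$ is a strong IDU transformation, and it is sampleable in linear time. We then check
\[
\max\bigl(p(\varphi_L\to\varphi,R),\,p(\varphi_R\to\varphi,R)\bigr)>\frac{|\Sat(\varphi)|}{24}=\alpha_{\mathrm{random}}.
\]
Since $R$ is strong,
\[
p(\varphi'\to\varphi,R)=\min_{\tau\in\Sat(\varphi')}\sum_{\rho\in\Sat(\varphi)}d(\rho\tau^{-1},R),
\]
so only the $24$ densities $d(\rho,R)$ for $\rho\in\mathbb{S}_4$ are needed. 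These are rational numbers computable exactly by recursion on the up-combination. For $p\alpha\oUp(1-p)\beta$, a $4$-point pattern arises by choosing which points go to group $A$, which is a sum over compositions weighted by binomial probabilities of products of densities of $\alpha$ and $\beta$ on sub-patterns. The base cases are $d(\cdot,I)=\mathbf 1[\mathrm{id}]$, $d(\cdot,D)=\mathbf 1[\mathrm{rev}]$, and $d(\cdot,U)=1/24$. Equivalently, densities depend only on the up--down signature of $\rho^{-1}$, which gives $8$ signature classes. Theorem~\ref{thm:idu-framework-satisfiable} with $s=1$ then yields the nontrivial approximation for each certified predicate.

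The main obstacle is finding good transformations, not verifying them. The search is a nonconvex maximin over mixture weights. I would first try one-parameter and two-parameter families such as $pI\oUp(1-p)D$ and $pI\oUp qD\oUp(1-p-q)U$, together with their duals, via grid search followed by a local optimizer. This matches the Betweenness transformation $\tfrac12I\oUp\tfrac12D$. For the remaining predicates, I would fall back on the polynomial optimization from Theorem~\ref{thm:compute-nearly-optimal-IDU}. Any floating-point solution is then rounded to rationals and re-verified exactly. The count \magicSAT{} is simply the number of predicates certified this way; hence "at least".
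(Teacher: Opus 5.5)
Your proposal is correct and follows the paper's route: enumerate arity-4 predicates up to isomorphism, discard the tractable (Not-First/Not-Last) ones, compute $\varphi_L$ and $\varphi_R$, search numerically (ultimately via Theorem~\ref{thm:compute-nearly-optimal-IDU}) for a strong IDU permuton $R$ with $\max(p(\varphi_L\to\varphi,R),p(\varphi_R\to\varphi,R))>|\Sat(\varphi)|/24$, and re-verify each certificate in exact rational arithmetic. Two small fixes are needed. First, certificates must be allowed to be mixtures of up-combinations rather than a single one; the paper notes that many predicates need genuinely random permutons, and your exact check applies verbatim because mixtures of strong IDU permutons are strong and densities are affine in the mixture weights. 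Second, in your recursion the base case is $d(\rho,U)=1/|\rho|!$ for the shorter sub-patterns, not $1/24$.
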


\subsection{Nearly satisfiable instances}\label{sec:nearly-satisfiable-csps}
Next, we apply our approach to nearly satisfiable instances of ordering CSPs. Suppose we are given a $(1-\varepsilon)$-satisfiable instance, where $\varepsilon = O(1/\operatorname{polylog}(n))$. Previously, the only known algorithm in this regime was an approximation algorithm for the \emph{Minimum Feedback Arc Set} problem: when applied to $\CSP(<)$, it finds a solution satisfying at least a $1 - O(\varepsilon \log n \log \log n)$ fraction of constraints~\cite{ENSS98,Seymour}.
The same guarantee extends to bounded-arity precedence CSPs (see Definition~\ref{def:precedence-CSP}). No other guarantees were known for nearly satisfiable ordering CSPs.

We use our framework to obtain approximation results for nearly satisfiable instances of other ordering CSPs. Specifically, we replace each predicate $\varphi$ with the conjunction of all clauses of the form $x_i < x_j$ implied by $\varphi$. This yields the $\varepsilon$-relaxation $\varphi_{\varepsilon}$ of $\varphi$, in which every constraint is a conjunction of such clauses. We then apply the approximation algorithm for Minimum Feedback Arc Set to the relaxed instance, and use an appropriate IDU transformation to obtain a solution for the original instance.

By applying Theorem~\ref{thm:idu-framework} to this setting, we obtain the following result.
\begin{theorem}\label{thm:idu-framework-nearly-satisfiable}
Consider an ordering CSP $\CSP(\Pi)$. Let $\{\sigma_n\}$ be a weak IDU transformation that can be sampled in polynomial time.  
There is a polynomial-time algorithm that, given a $(1-\varepsilon)$-satisfiable instance of $\CSP(\Pi)$, finds a solution satisfying at least a 
$$p(\Pi_{\varepsilon} \to \Pi, \bm{\sigma}) - O(\varepsilon \log n \log \log n)$$ fraction of constraints in expectation. 
\end{theorem}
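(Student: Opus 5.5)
The plan is to obtain Theorem~\ref{thm:idu-framework-nearly-satisfiable} by instantiating Theorem~\ref{thm:idu-framework} with $\Pi' = \Pi_{\varepsilon}$, $c = 1-\varepsilon$, and $s = 1 - O(\varepsilon\log n\log\log n)$. The only work beyond the general framework is to justify that $s$ is achievable in polynomial time, and then to convert the multiplicative guarantee $p\cdot s$ into the additive form stated.

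\textbf{Step 1: $\Pi_{\varepsilon}$ is a relaxation and is a bounded-arity precedence language.} By definition, $\varphi_{\varepsilon}$ is the conjunction of all clauses $x_i<x_j$ implied by $\varphi$. Hence $\Sat(\varphi)\subseteq \Sat(\varphi_\varepsilon)$, and $\varphi_\varepsilon$ is a precedence predicate of arity at most $k$. If no clause is implied, $\varphi_\varepsilon$ is the constant-true predicate, which is trivially satisfied and can be dropped.

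\textbf{Step 2: the relaxed instance can be approximated.} Let $\calI$ be $(1-\varepsilon)$-satisfiable. The optimal ordering satisfies at least a $(1-\varepsilon)$ fraction of the relaxed constraints, because any ordering satisfying $\varphi$ also satisfies $\varphi_\varepsilon$. To apply the feedback arc set algorithm, replace each relaxed constraint by the directed edges $x_i\to x_j$ for its clauses, giving a multigraph with at most $\binom{k}{2}m$ edges.
\begin{itemize}
\item The optimal ordering violates at most $\binom{k}{2}\varepsilon m$ edges, since each violated constraint contributes at most $\binom{k}{2}$ edges.
\item The algorithm of~\cite{ENSS98,Seymour} finds an ordering violating $O(\binom{k}{2}\varepsilon m\log n\log\log n)$ edges.
\item Each constraint with a violated edge is unsatisfied, and there are at most as many such constraints as violated edges.
\end{itemize}
So the relaxed instance has $s\ge 1-O(\varepsilon\log n\log\log n)$, with the constant depending on $k$. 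This is exactly the extension to bounded-arity precedence CSPs that the paper mentions.

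\textbf{Step 3: combine.} Theorem~\ref{thm:idu-framework} gives an expected satisfied fraction of at least $p\cdot s$, where $p=p(\Pi_\varepsilon\to\Pi,\bm\sigma)\in[0,1]$. Since $p\cdot s \ge p - (1-s) \ge p - O(\varepsilon\log n\log\log n)$, the stated bound follows.

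The main obstacle is Step 2. One must make sure the MFAS guarantee, which is stated for edges, transfers to conjunctive constraints with only a $\binom{k}{2}$ loss, and that the $\log n$ there refers to the number of vertices $n$. Weighted edges or multigraph edges are handled by the weighted version of~\cite{ENSS98}, so this step should go through.
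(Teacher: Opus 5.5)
Your proposal is correct and takes the same route as the paper, which obtains this theorem by applying Theorem~\ref{thm:idu-framework} with $\Pi'=\Pi_\varepsilon$ and the $1-O(\varepsilon\log n\log\log n)$ guarantee of the Minimum Feedback Arc Set algorithm extended to bounded-arity precedence CSPs. Your Steps~2 and~3 make explicit what the paper leaves implicit: the $\binom{k}{2}$ edge-counting reduction, and the bound $p\cdot s\ge p-(1-s)$, which uses $p\in[0,1]$.
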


Using numerical optimization together with verification via exact arithmetic over $\mathbb Q$, we obtain the following theorem.
\begin{theorem}
There are at least \magicNEARSAT{} non-isomorphic ordering CSPs of arity~4, each defined by a single predicate, that admit nontrivial approximation in the nearly satisfiable regime; this number excludes bounded-arity precedence CSPs. Given a $(1-\varepsilon)$-satisfiable instance, the algorithm returns a solution satisfying at least an $(\alpha - O(\varepsilon \log n \log \log n))$ fraction of constraints, where $\alpha > \alpha_{\mathrm{random}}$ is the approximation factor achieved by the algorithm.
\end{theorem}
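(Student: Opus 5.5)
The statement has two parts: the algorithmic guarantee, which follows from Theorem~\ref{thm:idu-framework-nearly-satisfiable}, and a count of at least \magicNEARSAT{} predicates, which is a certified computation. I plan to treat them separately.

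\textbf{Algorithmic part.} I would instantiate Theorem~\ref{thm:idu-framework} with $c = 1-\varepsilon$ and $\Pi' = \Pi_\varepsilon$.

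First I check that a $(1-\varepsilon)$-satisfiable instance of $\CSP(\Pi)$ has a $(1-\varepsilon)$-satisfiable relaxation in $\CSP(\Pi_\varepsilon)$. This holds because $\Sat(\varphi)\subseteq\Sat(\varphi_\varepsilon)$.

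Next I reduce $\CSP(\Pi_\varepsilon)$ to weighted Minimum Feedback Arc Set. Each constraint $\varphi_\varepsilon$ is a conjunction of at most $\binom{k}{2}$ arcs. Every violated constraint violates at least one arc, and every violated arc lies in at most one constraint. So the optimum arc loss is within a factor $\binom{k}{2}$ of the optimum constraint loss. The algorithm of~\cite{ENSS98,Seymour} then yields $s = 1 - O(\varepsilon\log n\log\log n)$, with the constant depending on $k$.

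The theorem then gives $p\cdot s \ge p - O(\varepsilon\log n\log\log n)$. For the transformation I take the nearly optimal strong IDU transformation from Theorem~\ref{thm:compute-nearly-optimal-IDU}, realized as a finite up-combination of $I$, $D$ and $U$. Such a combination is samplable in linear time, and derandomization by conditional expectations gives the bound deterministically.

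\textbf{Counting part.} This part is computational. For each of the $993$ non-isomorphic arity-4 predicates with a nontrivial $\varepsilon$-relaxation, I would compute $\Sat(\varphi)$ and $\Sat(\varphi_\varepsilon)$. I would then search, by numerical optimization, over finite up-combinations of $I$, $D$ and $U$ with rational weights. For a strong IDU transformation the profile $d(\rho,\bm\sigma)$ for $\rho\in\mathbb S_4$ is a polynomial in the weights. It depends only on $\udsign(\rho^{-1})$, so there are only $2^3$ classes, and it is computable exactly via the quasisymmetric description. Hence
\[
p(\varphi_\varepsilon\to\varphi,\bm\sigma)=\min_{\tau\in\Sat(\varphi_\varepsilon)}\sum_{\rho\in\Sat(\varphi)} d(\rho\tau^{-1},\bm\sigma)
\]
can be evaluated exactly over $\mathbb Q$. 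For every predicate where the rational witness gives $p > \alpha_{\mathrm{random}} = |\Sat(\varphi)|/24$, I record a certificate. I would also exclude precedence predicates and reduce modulo isomorphism. Renaming variables and taking duals preserve both $\varphi_\varepsilon$ and the achievable $p$: the dual corresponds to composing with $D$, which preserves strongness.

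\textbf{Main obstacle.} I expect the main obstacle to be the search itself rather than any proof step. The objective is a nonsmooth minimum over $\tau$ of polynomials, so local optimizers may stall. I would first try many random restarts over up-combinations of moderate depth (say $4$ to $6$ components), then refine the best candidates by rounding to rationals and verifying exactly. A failed search does not refute anything, which is why the theorem claims only ``at least'' \magicNEARSAT{}. A secondary concern is the correctness of the enumeration and isomorphism reduction, which I would cross-check by recomputing orbit counts under the symmetry group generated by the $24$ variable renamings and duality.
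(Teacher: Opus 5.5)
Your overall route is the paper's: apply Theorem~\ref{thm:idu-framework-nearly-satisfiable} with the $\varepsilon$-relaxation and the Minimum Feedback Arc Set algorithm, then certify each predicate by numerically optimizing a strong IDU transformation and checking $p(\varphi_\varepsilon\to\varphi,\cdot)>|\Sat(\varphi)|/24$ in exact arithmetic over $\mathbb Q$. Your reduction to weighted feedback arc set (per-constraint arc copies, loss within a factor $\binom{k}{2}$) is fine.

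\textbf{The gap: your search space is too small.} You search only over \emph{single} finite up-combinations of $I$, $D$, $U$, which are deterministic permutons. You also describe the output of Theorem~\ref{thm:compute-nearly-optimal-IDU} as such a combination. It is not: it is a convex combination of up to $2^{k-1}$ ID combinations, that is, a genuinely random permuton.

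This matters for the count. The objective $\min_{\tau}\sum_{\rho} d(\rho\tau^{-1},R)$ is a minimum of linear functionals of the profile. The feasible region $\mathcal{R}_4$ is the \emph{closed convex hull} of the images of $\mathbf{K}_N$ (Theorem~\ref{thm:feasible-region-characterization}), not those images themselves. A mixture can therefore be good on every $\tau$ at once even when each of its components is bad on some $\tau$. The paper says explicitly that, unlike its two worked examples, many arity-4 predicates require random, non-deterministic permutons. A search confined to single up-combinations can certify only a subset of the predicates, and nothing suggests that subset reaches \magicNEARSAT{}.

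\textbf{The fix} is to optimize over mixtures $\sum_{i\le 8} p_i C_i$ of IDU (or ID) combinations, as in Section~\ref{sec:computation}. Profiles are linear in the $p_i$, so exact rational verification still works. Sampling stays linear-time: first draw the component, then sample from it.

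A minor point: duality corresponds to conjugating by $D$ (rotating the permuton by $180^\circ$), not to composing with $D$ on one side. In any case you do not need isomorphism invariance. It suffices to exhibit \magicNEARSAT{} pairwise non-isomorphic certified predicates.
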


\section{Preliminaries}\label{sec:prelim}
\subsection{Permutons}\label{sec:permutons}
In this section, we review the definitions and basic properties of deterministic and random permutons.
Permutons are limit objects for permutations analogous to graphons for graphs.
Deterministic permutons were introduced by Hoppen, Kohayakawa, Moreira, R{\'a}th, and Sampaio~\cite{HKMRS}. Random permutons were later defined by Bassino, Bouvel, F{\'e}ray, Gerin, Maazoun, and Pierrot~\cite{BBFGMP}. 

\begin{definition}
A (deterministic) \emph{permuton} $\mu$ is a Borel probability measure on the unit square $[0,1]^2$ with uniform marginals.
That is, $\mu(A \times [0,1]) = \mu([0,1] \times A) = \lambda(A)$ for every Borel set $A \subseteq [0,1]$, where $\lambda$ is the Lebesgue measure on $[0,1]$.

A random permuton $R$ is a probability distribution over the space of permutons.
\end{definition}
The space of permutons is equipped with the weak topology: a sequence of permutons $\mu_i$ converges to $\mu$ if for every bounded continuous function $f:[0,1]^2 \to \mathbb{R}$, we have $\int_{[0,1]^2} f\,d\mu_i \to \int_{[0,1]^2} f\,d\mu$. This topology is induced by the box distance, defined as
\[
d_{\square}(\mu_1, \mu_2) = \sup_{[a_1,a_2],[b_1,b_2] \subseteq [0,1]} \left| \mu_1([a_1,a_2] \times [b_1, b_2]) - \mu_2([a_1,a_2] \times [b_1, b_2]) \right|.
\]
The space of random permutons is also equipped with the weak topology: a sequence of random permutons $R_n$ converges to $R$ if for every bounded continuous function $F$ on the space of permutons,
$\mathbb{E}_{\mu\sim R_n}[F(\mu)] \to \mathbb{E}_{\mu\sim R}[F(\mu)]$.

Permutons generalize permutations: every permutation $\pi \in \Sn$ has an associated permuton $\mu_{\pi}$. 
The permuton $\mu_{\pi}$ is supported on $n$ disjoint squares of size $1/n \times 1/n$, where square number $i$ is $[(i-1)/n,\, i/n] \times [(\pi(i)-1)/n,\, \pi(i)/n]$. 
The measure $\mu_{\pi}$ is uniform on its support; that is, its probability density equals $n$ on the support and $0$ elsewhere. 
Similarly, for a distribution $D_n$ of permutations of length $n$, we can define a corresponding random permuton $R_n$ in which each $\mu_{\sigma}$ occurs with the same probability as $\sigma$ in $D_n$.

\begin{figure}
\centering
\begin{tikzpicture}[x=3cm,y=3cm]
  \draw (0,0) rectangle (1,1);

  \draw[very thick] (0,0) -- (1,0.5);   
  \draw[very thick] (0,0.5) -- (1,1);   

  \filldraw[darkred] (0.22,0.11) circle (0.02);
  \filldraw[darkred] (0.74,0.37) circle (0.02);

  \filldraw[darkred] (0.40,0.7) circle (0.02);
  \filldraw[darkred] (0.48,0.74) circle (0.02);
  \filldraw[darkred] (0.86,0.93) circle (0.02);
\end{tikzpicture}
\hspace{3cm}
\begin{tikzpicture}[x=3cm,y=3cm]
  \draw (0,0) rectangle (1,1);

  \draw[step=0.2,gray!40,shift={(0.2,0)}] (0,0) -- (0,1);
  \draw[step=0.2,gray!40,shift={(0,0.2)}] (0,0) -- (1,0);
  \foreach \i in {0.2,0.4,0.6,0.8}{
    \draw[gray!40] (\i,0) -- (\i,1);
    \draw[gray!40] (0,\i) -- (1,\i);
  }

  \fill[gray!35] (0/5,2/5) rectangle (1/5,3/5);
  \fill[gray!35] (1/5,1/5) rectangle (2/5,2/5);
  \fill[gray!35] (2/5,4/5) rectangle (3/5,5/5);
  \fill[gray!35] (3/5,0/5) rectangle (4/5,1/5);
  \fill[gray!35] (4/5,3/5) rectangle (5/5,4/5);

  \foreach \xa/\ya/\xb/\yb in {0/2/1/3, 1/1/2/2, 2/4/3/5, 3/0/4/1, 4/3/5/4} {
    \draw[black!75, line width=0.6pt] (\xa/5,\ya/5) rectangle (\xb/5,\yb/5);
  }
\end{tikzpicture}
\caption{The left panel shows five points sampled from a permuton (whose measure is supported on two segments), defining the pattern $(\mathsf{1\,3\,4\,2\,5})$.  
The right panel displays the permuton \(\mu_\pi\) associated with the permutation \(\pi=(\mathsf{3\,2\,5\,1\,4})\in {\mathbb{S}}_5\), whose support consists of five disjoint squares with side length \(1/5\).}
\end{figure}
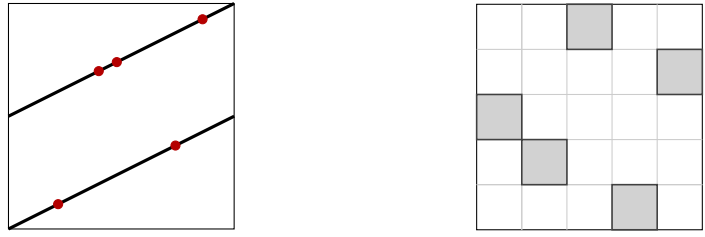

Given a deterministic permuton $\mu$ and a parameter $n$, we can sample a permutation (or pattern) $\sigma_n$ of length $n$ as follows. Choose $n$ random points $(X_1, Y_1), \dots, (X_n, Y_n)$ independently according to the measure $\mu$. Then $\sigma_n$ is the pattern defined by the points $\{(X_i, Y_i)\}_{i=1}^n$ (see Definition~\ref{def:pattern}).
Thus, every deterministic permuton $\mu$ defines a weak IDU transformation (the converse does not generally hold).

We now extend the definition of pattern densities to permutons. Consider $\rho \in \Sn$. The pattern density $d(\rho,\mu)$ equals the probability that the pattern $\sigma_n$ (sampled as above from $\mu$) equals $\rho$.
Further, given a subset $J\subset [0,1]$ of size $n$, we define the restricted pattern density $d_J(\rho, \mu)$ as follows.
\begin{itemize}
\item Let $X_1, \dots, X_n$ be the elements of $J$ in increasing order.
\item Sample each $Y_i$ independently according to $\mu$ conditioned on $X_i$.
\item Let $\sigma_n$ be the pattern of $\{(X_i, Y_i)\}_{i=1}^n$.
\end{itemize}
Then $d_J(\rho, \mu)$ is the probability that $\sigma_n = \rho$. Importantly,
\[
d(\rho,\mu)=\Exp_{J}\!\left[d_J(\rho,\mu)\right],
\]
where the expectation is taken over a random subset \(J\subseteq[0,1]\) obtained by sampling \(n\) independent points from the uniform distribution on \([0,1]\).

We state important properties of deterministic permutons established in~\cite{HKMRS} in the following theorem.
\begin{theorem}[see \cite{HKMRS}]\label{thm:permuton-properties}$\quad$
\begin{itemize}
\item The set of all permutons is a compact set with respect to the weak topology.
\item The function $\mu \mapsto d(\rho, \mu)$ is continuous with respect to the weak topology on permutons.
\end{itemize}
\end{theorem}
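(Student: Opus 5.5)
The plan is to work directly with the definitions: view permutons as a subset of the space $\mathcal{P}([0,1]^2)$ of Borel probability measures on the compact metric space $[0,1]^2$, with the weak topology. For the first bullet, I would show that permutons form a \emph{closed} subset of a compact space. Compactness of $\mathcal{P}([0,1]^2)$ follows from Prokhorov's theorem, since every family of measures on a compact space is tight. Alternatively, by the Riesz representation theorem together with Banach--Alaoglu, probability measures form a weak-$*$ closed subset of the unit ball of $C([0,1]^2)^*$.

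To prove closedness, I would express the marginal conditions through continuous test functions. A probability measure $\mu$ has uniform first marginal if and only if $\int g(x)\,d\mu(x,y) = \int_0^1 g\,d\lambda$ for every $g\in C([0,1])$, because continuous functions determine a Borel measure on $[0,1]$. The second marginal is handled the same way. Each such condition is an equality between a weakly continuous functional of $\mu$ and a constant, so it defines a closed set. The permutons are the intersection of these closed sets, hence closed, and therefore compact.

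For the second bullet, fix $\rho\in\Sk$ and write $d(\rho,\mu)=\mu^{\otimes k}(A_\rho)$. Here $A_\rho\subseteq([0,1]^2)^k$ is the set of $k$-tuples of points with pairwise distinct $x$-coordinates and pairwise distinct $y$-coordinates whose pattern is $\rho$. Tuples with ties have probability zero, as shown below, so they do not matter. I would then proceed in two steps.

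First, if $\mu_i\to\mu$ weakly, then $\mu_i^{\otimes k}\to\mu^{\otimes k}$ weakly. For product test functions $\prod_j f_j(x_j,y_j)$ this is immediate. By Stone--Weierstrass, linear combinations of such functions are uniformly dense in $C(([0,1]^2)^k)$, and since all the measures have mass $1$, convergence extends to all continuous test functions.

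Second, I would apply the portmanteau theorem. Membership in $A_\rho$ is determined by strict inequalities among coordinates, so $A_\rho$ is relatively open within the set where no ties occur. Consequently $\partial A_\rho$ is contained in the union of the hyperplanes $\{x_a=x_b\}$ and $\{y_a=y_b\}$ for $a\ne b$. Each of these hyperplanes has $\mu^{\otimes k}$-measure zero: by Fubini, the probability that two independent samples share an $x$-coordinate equals the integral over $x_b$ of $\lambda(\{x_b\})$, which is $0$ because the marginal is Lebesgue measure and hence atomless. The same argument applies to $y$-coordinates. Portmanteau then gives $\mu_i^{\otimes k}(A_\rho)\to\mu^{\otimes k}(A_\rho)$, which is the claimed continuity.

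I expect the main point requiring care to be this boundary argument, namely that $\mu^{\otimes k}(\partial A_\rho)=0$. This is exactly where the uniform-marginal property is essential: for general probability measures on $[0,1]^2$, the map $\mu\mapsto d(\rho,\mu)$ can fail to be continuous at measures with atoms in a marginal. A secondary point is to fix a convention for tuples with ties, which is harmless because that event has probability zero.
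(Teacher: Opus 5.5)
Your proof is correct. The paper gives no argument of its own here; it imports the statement from \cite{HKMRS}. So the comparison is with the cited framework, which is built on the rectangular distance $d_{\square}$ (the paper notes that this distance induces the weak topology).

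\textbf{Comparison of routes.} In the $d_{\square}$ setting, continuity comes with a quantitative bound: $|d(\rho,\mu)-d(\rho,\nu)|\le k^2\, d_{\square}(\mu,\nu)$ for $\rho\in\Sk$. To see it, telescope $\mu^{\otimes k}-\nu^{\otimes k}$ one factor at a time. With the other $k-1$ points fixed, the positions of the remaining point that produce $\rho$ form at most $k$ cells of the grid cut out by those points. Your route is the soft one: Prokhorov plus closedness of the marginal constraints for compactness, then weak convergence of product measures plus portmanteau, with the tie set as a null boundary. It needs nothing about $d_{\square}$, whereas the $d_{\square}$ route gives explicit Lipschitz control of the densities.

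Your second bullet is essentially the argument the paper itself uses in Lemma~\ref{lem:continuity-G}; with all $Q_i=[0,1]$, that lemma is exactly this statement. You additionally justify $\mu_i^{\otimes k}\to\mu^{\otimes k}$, which the paper leaves implicit.

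\textbf{Two minor precision points.} First, for $\partial A_\rho$ to lie in the tie set, you need $A_\rho$ to be relatively clopen in the tie-free set, not just relatively open. This holds because its complement there is the union of the $A_{\rho'}$ with $\rho'\ne\rho$, which are also defined by strict inequalities; equivalently, the pattern map is locally constant off the ties. Second, your sequential argument does yield continuity, because the weak topology on Borel probability measures on $[0,1]^2$ is metrizable.
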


We now generalize the definitions above to random permutons. 
To sample a permutation $\sigma_n$ from a random permuton $R$, we first sample a deterministic permuton $P$ according to $R$, and then sample $\sigma_n$ from $P$ as described above.
We define
$d(\rho, R) = \Exp_{P \sim R}[d(\rho, P)]$ and $d_J(\rho, R) = \Exp_{P \sim R}[d_J(\rho, P)]$.
As in the deterministic case, every random permuton $R$ defines a weak IDU transformation.
Following~\cite{BBFGMP}, we define consistent sequences of random permutations, or in our terminology, consistent weak IDU transformations.
\begin{definition}\label{def:consistency}  
 We say that a weak IDU transformation $\bm{\sigma}$ is \emph{consistent} if for every $k \geq 1$ and $n\geq k$, the distribution of $\sigma_k$ coincides with that of $\patt(\sigma_n, J)$, where $J$ is a random subset of $[n]$ of size $k$ (here the randomness is over both $J$ and $\sigma_n$).
\end{definition}

\begin{theorem}[see Proposition 2.9 in~\cite{BBFGMP}]\label{thm:random-permuton-consistency}
Every random permuton defines a consistent weak IDU transformation. Conversely, every consistent weak IDU transformation is defined by a random permuton.  
\end{theorem}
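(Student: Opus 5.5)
For the forward direction I would argue directly from the sampling definition. Fix a random permuton $R$ and let $\sigma_n$ be obtained by drawing $P\sim R$ and then $n$ i.i.d.\ points $(X_1,Y_1),\dots,(X_n,Y_n)$ from $P$. Condition on $P$. The points are i.i.d., so for a uniformly random $k$-subset $J\subseteq[n]$ chosen independently, the sub-collection of points whose $X$-ranks lie in $J$ is again a collection of $k$ i.i.d.\ samples from $P$. This is because choosing a uniformly random subset of ranks is the same as choosing a uniformly random subset of indices. The pattern of a sub-collection is exactly $\patt(\sigma_n,J)$, so conditionally on $P$ it has the law of $\sigma_k$. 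Averaging over $P\sim R$ gives consistency in the sense of Definition~\ref{def:consistency}.

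For the converse, take a consistent $\bm\sigma$ and let $R_n$ be the random permuton that places mass $\Pr(\sigma_n=\pi)$ on $\mu_\pi$. By Theorem~\ref{thm:permuton-properties}, the space of permutons is a compact metric space, so by Prokhorov's theorem the space of random permutons is compact in the weak topology. Hence some subsequence $R_{n_j}$ converges to a random permuton $R$. I then aim to show $d(\rho,R)=\Pr(\sigma_k=\rho)$ for every $k$ and every $\rho\in\Sk$. This suffices, because the law of the permutation sampled from $R$ at size $k$ is exactly the vector $(d(\rho,R))_{\rho\in\Sk}$.

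The key step is to compare $d(\rho,\mu_\pi)$ with the subset pattern density of $\pi\in\Sn$. Draw $k$ i.i.d.\ points from $\mu_\pi$. With probability at least $1-\binom{k}{2}/n$ they land in $k$ distinct squares, that is, in distinct columns and therefore also distinct rows. Conditioned on that event, the set of columns is a uniform $k$-subset $J$. Since the squares occupy distinct rows and columns, the pattern of the points equals $\patt(\pi,J)$ regardless of where the points sit inside their squares. Therefore $|d(\rho,\mu_\pi)-\Pr_J(\patt(\pi,J)=\rho)|\le k^2/n$. Taking expectations over $\sigma_n$ and using consistency gives $\Exp_{P\sim R_n} d(\rho,P)=\Pr(\sigma_k=\rho)+O(k^2/n)$. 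Finally, $P\mapsto d(\rho,P)$ is bounded and continuous by Theorem~\ref{thm:permuton-properties}, so weak convergence along $n_j$ yields $d(\rho,R)=\Pr(\sigma_k=\rho)$.

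I expect the only delicate point to be this collision estimate between i.i.d.\ sampling from $\mu_\pi$ and uniform subset sampling. Once that is in place, the remaining ingredients are standard: compactness via Prokhorov and continuity of pattern densities.
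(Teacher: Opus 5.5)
Your proof is correct. The exchangeability argument settles the forward direction. For the converse, you combine compactness of the space of random permutons with the $\binom{k}{2}/n$ total-variation bound between sampling $k$ points from $\mu_\pi$ and taking the pattern of $\pi$ at a uniformly random $k$-subset of positions.

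The paper gives no proof of its own here; it only cites Proposition~2.9 of~\cite{BBFGMP}. Your argument is the standard one behind that result, and it matches the compactness-plus-continuity pattern the paper itself uses in Claim~\ref{claim:IDU-to-permuton}.
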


For brevity, we will refer to random permutons that define strong IDU transformations as \emph{strong IDU permutons}.

\subsection{Pattern density sums}\label{sec:pattern-density-sums}
Our ultimate goal is to analyze $p(\varphi' \to \varphi, \bm{\sigma})$ for various IDU transformations $\bm{\sigma}$. By definition, this quantity is an infimum over $n$, $J$, and $\tau\in \Sat(\varphi')$ of the following sum:
$$
\sum_{\rho\in \Sat(\varphi)\tau^{-1}} d_J(\rho, \sigma_n).
$$
We will refer to sums of this form as (restricted) \emph{pattern density sums}.  
Formally, let $\rho_1, \dots, \rho_t \in \Sk$ and $c_1, \dots, c_t \in \mathbb{R}$; we consider density sums of the form  
$$
\sum_{i=1}^t c_i d_J(\rho_i, \sigma_n) \quad \text{and} \quad \sum_{i=1}^t c_i d_J(\rho_i, R),
$$
where $\bm{\sigma}$ is a weak IDU transformation and $R$ is a random or deterministic permuton.  
We will also consider pattern density sums without fixing a subset $J$
$$
\sum_{i=1}^t c_i d(\rho_i, \sigma_n) \quad \text{and} \quad \sum_{i=1}^t c_i d(\rho_i, R).
$$
Our focus will be on the set of \emph{pattern density inequalities} that a weak IDU transformation or a random permuton satisfies.  
We say that a weak IDU transformation $\bm{\sigma}$ satisfies a pattern density inequality  
$$
\sum_{i=1}^t c_i d_J(\rho_i, \sigma_n) \geq b,
$$
if, for every $n \geq k$ and every $J \subseteq [n]$ of size $k$, the inequality holds.  
Similarly, we say that a random permuton $R$ satisfies a pattern density inequality if the inequality  
$$
\sum_{i=1}^t c_i d_J(\rho_i, R) \geq b
$$
holds almost surely over the choice of a random subset $J \subseteq [0,1]$ of size $k$
(obtained by sampling \(k\) independent points from the uniform distribution on \([0,1]\)).

It will be convenient for us to work with homogeneous inequalities where $b=0$. 
Given an inequality with $b \neq 0$, we can always rewrite it as
$$\sum_{i=1}^t c_i d_J(\rho_i, \sigma_n) \geq b\sum_{\tau\in \Sk} d_J(\tau,\sigma_n),$$
since $\sum_{\tau\in \Sk} d_J(\tau,\sigma_n)$ is always $1$ for $n \geq k$. 
The same trick applies to permutons as well. More generally, for $\rho\in\Sk$ and $n\geq k$,
it always holds that 
$$d(\rho, R) = \sum_{\tau\in \Sn} d(\rho, \tau)\, d(\tau, R),$$
which is a representation of $d(\rho, R)$ as a sum of densities $d(\tau, R)$ 
with fixed nonnegative coefficients $d(\rho, \tau)$. We will use this formula in Section~\ref{sec:flag-algebras}, where we will present the flag algebra framework.

From the previous discussion, we make the following important observation.
\begin{observation}\label{sec:observation-as-good-as}
If a random permuton $R$ satisfies all pattern density inequalities that are satisfied by a weak IDU transformation $\bm{\sigma}$, then for every ordering predicate $\varphi$ and every relaxation $\varphi'$ of $\varphi$, we have
$$
p(\varphi' \to \varphi, R) \geq p(\varphi' \to \varphi, \bm{\sigma}).
$$
\end{observation}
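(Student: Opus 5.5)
The idea is to show that $p(\varphi' \to \varphi, \bm{\sigma})$ is the largest $b$ for which a certain family of pattern density inequalities holds for $\bm{\sigma}$, and then transfer each of those inequalities to $R$. Fix $\varphi$ of arity $k$ and a relaxation $\varphi'$. Set $b := p(\varphi'\to\varphi,\bm{\sigma})$. For each $\tau \in \Sat(\varphi')$ consider the inequality
\[
\sum_{\rho\in\Sat(\varphi)\tau^{-1}} d_J(\rho,\sigma_n) \;\ge\; b\sum_{\rho'\in\Sk} d_J(\rho',\sigma_n).
\]
The right-hand side equals $b$, because the densities over $\Sk$ sum to $1$ for $n\ge k$.

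\textbf{Step 1: these inequalities hold for $\bm{\sigma}$.} By definition, $b$ is the infimum over all $n\ge k$, all $J$ with $|J|=k$, and all $\tau$ of the left-hand side. Hence each of these homogeneous pattern density inequalities (coefficients $1$ on $\Sat(\varphi)\tau^{-1}$ and $-b$ on all of $\Sk$) is satisfied by $\bm{\sigma}$ in the sense of Section~\ref{sec:pattern-density-sums}, i.e.\ for every $n$ and every $J$.

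\textbf{Step 2: transfer to $R$.} By hypothesis, $R$ satisfies every pattern density inequality satisfied by $\bm{\sigma}$. So for each $\tau$, almost surely over a random $k$-subset $J\subseteq[0,1]$,
\[
\sum_{\rho\in\Sat(\varphi)\tau^{-1}} d_J(\rho,R)\ge b\sum_{\rho'} d_J(\rho',R)=b .
\]

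\textbf{Step 3: convert back to $p(\varphi'\to\varphi,R)$.} This step needs the most care, since $p(\cdot,R)$ is defined through the weak IDU transformation $\{\sigma_n^R\}$ induced by $R$, with discrete subsets $J\subseteq[n]$. For the induced transformation, $\sigma_n^R$ is the pattern of $n$ i.i.d.\ points. For a fixed $k$-subset $J\subseteq[n]$, the pattern $\patt(\sigma^R_n,J)$ is obtained by sampling the deterministic permuton $P\sim R$ and $n$ uniform $X$'s, and restricting to the $X$'s of ranks $j_1<\dots<j_k$. Conditioned on $P$ and on the values of those order statistics, the $Y$'s are sampled independently given their $X$'s. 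Therefore
\[
d_J(\rho,\sigma_n^R)=\Exp\bigl[d_{\{X_{(j_1)},\dots,X_{(j_k)}\}}(\rho,R)\bigr].
\]
Here the law of the $k$ selected order statistics is absolutely continuous with respect to the law of $k$ i.i.d.\ uniform points. So an inequality that holds almost surely for random $J\subseteq[0,1]$ also holds almost surely for these order statistics, and taking the expectation preserves it. This gives the inequality for every discrete $J$ and every $n$. Taking the minimum over $\tau$ and the infimum over $n$ and $J$ yields $p(\varphi'\to\varphi,R)\ge b$.

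The main obstacle is the measure-theoretic bookkeeping in Step 3. Specifically, one must justify the conditional-independence description of $\patt(\sigma^R_n,J)$ via disintegration of $P$ along its uniform $x$-marginal, and check the absolute continuity of the order-statistic law. If the paper instead intends $p(\cdot,R)$ to be defined directly through $d_J(\rho,R)$ for $J\subseteq[0,1]$ as an essential infimum, Step 3 collapses, and the result follows immediately from Steps 1 and 2.
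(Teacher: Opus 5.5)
Your proof is correct and is essentially the paper's argument. The paper justifies the observation only by noting that $p(\varphi'\to\varphi,\bm{\sigma})$ is an infimum of restricted pattern density sums, so the (homogenized) inequalities with threshold $b=p(\varphi'\to\varphi,\bm{\sigma})$ hold for $\bm{\sigma}$ and hence transfer to $R$. Your Step~3, which averages over order statistics to pass from almost-sure statements over $J\subseteq[0,1]$ to discrete $J\subseteq[n]$, spells out what the paper leaves implicit; the paper uses the same identity $d_J(\tau,\sigma_n)=\Exp[d_{\{x_{j_1},\dots,x_{j_t}\}}(\tau,R)]$ in the proof of Lemma~\ref{lem:fix-one-density}.
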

Put simply, this means that $R$ is at least as good for our purposes as $\bm{\sigma}$.
We will also use a shorthand notation:
\begin{align*}
d_J\!\left(\sum_{i=1}^t c_i \rho_i, \sigma_n\right) &\eqdef \sum_{i=1}^t c_i d_J(\rho_i, \sigma_n),\\
d\!\left(\sum_{i=1}^t c_i \rho_i, R\right) &\eqdef \sum_{i=1}^t c_i d(\rho_i, R),
\end{align*}
which will become especially useful in Section~\ref{sec:flag-algebras}, where we interpret the maps $d(\cdot, R)$ as algebra homomorphisms.
\subsection{Notation}\label{sec:notation}
We denote the set of positive integers by $\bbN$. Let $[k] = \{1,\dots, k\}$.
We denote the set of all permutations on $[k]$ by $\Sk$.
We denote the length of a permutation $\rho$ by $|\rho|$.
We say that $x_i$ has rank $r$ among $x_1, \dots, x_n$ if it occupies position $r$ in the sorted order of $x_1, \dots, x_n$.

For a vector $v\in \mathbb{Z}^{r}$ with nonnegative entries, let
\[
M(v) = \binom{\|v\|_1}{v_1, \dots, v_{r}} = \frac{\|v\|_1!}{v_1! v_2! \cdots v_{r}!}.
\]

\section{IDU transformations}
\subsection{\texorpdfstring{$I$, $D$, and $U$}{I, D, and U} permutons and their up-combinations}

\begin{figure}
\centering
\begin{tikzpicture}[line cap=round, line join=round]
  \def\s{3cm}
  \def\gap{2.5cm}

  \begin{scope}
    \draw (0,0) rectangle (\s,\s);
    \draw[thick] (0,0) -- (\s,\s);
    \node[below left] at (0,0) {$0$};
    \node[above left] at (0,\s) {$1$};
    \node[below right] at (\s,0) {$1$};
  \end{scope}

  \begin{scope}[xshift=\s+\gap]
    \draw (0,0) rectangle (\s,\s);
    \draw[thick] (0,\s) -- (\s,0);
    \node[below left] at (0,0) {$0$};
    \node[above left] at (0,\s) {$1$};
    \node[below right] at (\s,0) {$1$};
  \end{scope}

  \begin{scope}[xshift=2*(\s+\gap)]
    \fill[gray!30] (0,0) rectangle (\s,\s);
    \draw (0,0) rectangle (\s,\s);
    \node[below left] at (0,0) {$0$};
    \node[above left] at (0,\s) {$1$};
    \node[below right] at (\s,0) {$1$};
  \end{scope}
\end{tikzpicture}
\caption{The increasing $I$, decreasing $D$, and uniform $U$ permutons.}
\label{fig:permutons}
\end{figure}
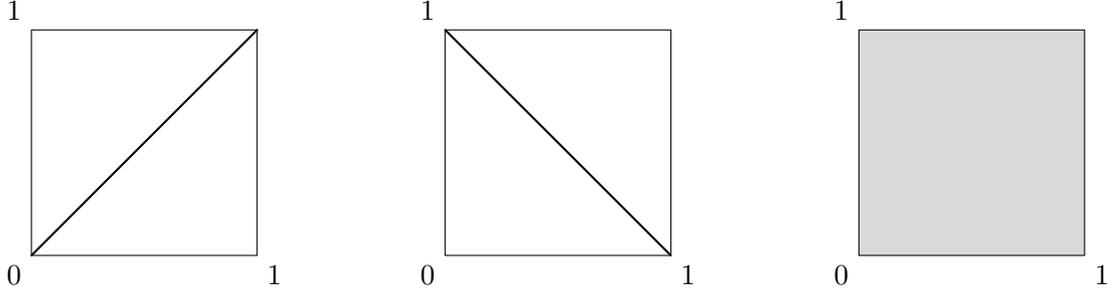

As discussed in the preliminaries, permutations of any length can be sampled from both deterministic and random permutons. Deterministic and random permutons thus define weak IDU transformations. The converse is generally not true: not every weak IDU transformation is defined by a permuton. However, every strong IDU transformation can be represented by a random permuton, as we will see in Section~\ref{sec:strongidu-consistency}.

We begin with several examples of strong IDU transformations, followed by the definition of a composition operation showing that the composition of strong IDU transformations is again strong. This yields a rich family of strong IDU transformations obtained by composing the basic $I$, $D$, and $U$ transformations.

To that end, we first introduce three fundamental permutons -- $I$ (increasing), $D$ (decreasing), and $U$ (uniform or Lebesgue) -- illustrated in Figure~\ref{fig:permutons}.
The permuton $I$ is the uniform measure on the main diagonal of $[0,1]^2$; $D$ is the uniform measure on the anti-diagonal; and $U$ is the uniform (Lebesgue) measure on the entire square $[0,1]^2$.

The increasing IDU transformation associated with $I$ corresponds to the sequence of identity permutations, $\sigma_n = \mathrm{id}_n$, while the decreasing IDU transformation corresponds to the sequence of decreasing permutations defined by $\sigma_n(i) = \mathrm{decr}_n(i) =  n + 1 - i$.
The uniform IDU transformation associated with $U$ corresponds to the sequence of permutations $\sigma_n$ sampled uniformly at random from $\Sn$.

\begin{figure}
\centering
\begin{tikzpicture}[line cap=round, line join=round]
  \def\s{3cm}       
  \def\gap{2.5cm}   

  \begin{scope}
    \draw (0,0) rectangle (\s,\s);
    \draw[gray, dotted] (0,0.5*\s) -- (\s,0.5*\s);
    \draw[thick] (0,0) -- (\s,0.5*\s);
    \draw[thick] (\s,0.5*\s) -- (0,\s);
    \node[below left] at (0,0) {$0$};
    \node[above left] at (0,\s) {$1$};
    \node[below right] at (\s,0) {$1$};
    \node[left] at (0,0.5*\s) {$\tfrac{1}{2}$};
  \end{scope}

  \begin{scope}[xshift=\s+\gap]
    \draw (0,0) rectangle (\s,\s);
    \draw[gray, dotted] (0,0.2*\s) -- (\s,0.2*\s);
    \draw[gray, dotted] (0,0.6*\s) -- (\s,0.6*\s);
    \fill[gray!30] (0,0.2*\s) rectangle (\s,0.6*\s);
    \draw[thick] (\s,0) -- (0,0.2*\s);
    \draw[thick] (0,0.6*\s) -- (\s,\s);
    \node[below left] at (0,0) {$0$};
    \node[above left] at (0,\s) {$1$};
    \node[below right] at (\s,0) {$1$};
    \node[left] at (0,0.2*\s) {$\tfrac{1}{5}$};
    \node[left] at (0,0.6*\s) {$\tfrac{3}{5}$};
  \end{scope}
\end{tikzpicture}

\caption{The figure shows up-combinations of permutons: 
\texorpdfstring{$\frac{1}{2}I \oUp \frac{1}{2} D$ (left) and $\frac{1}{5} D \oUp \frac{2}{5} U \oUp \frac{2}{5} I$ (right).}{}}
\label{fig:permuton-compositions}
\end{figure}

Observe that $I$, $D$, and $U$ are all strong IDU transformations: for every $\rho \in \Sk$ and every $J$, 
$d_J(\rho, I)$ equals $1$ if $\rho = \mathrm{id}_k$ and $0$ otherwise. 
Similarly, $d_J(\rho, D)$ equals $1$ if $\rho = \mathrm{decr}_k$ and $0$ otherwise. 
Finally, $d_J(\rho, U)$ equals $1/k!$ for every $\rho \in \Sk$. 
All these formulas are independent of the choice of $J$. 

\begin{remark}
Let us also present an example of an interesting \textit{weak} IDU transformation that is not a \textit{strong} IDU transformation. Consider a Gaussian random walk $\{X_i\}$ with $X_0 = 0$, $X_{i+1}-X_i \sim \mathcal{N}(0,1)$, and independent increments. For every $n$, define a random permutation $\sigma_n$ as the pattern of the sequence $X_1, \dots, X_n$. Then $\bm{\sigma}$ satisfies nontrivial pattern density sum equalities:
$d_J((\mathsf{1\, 2}), \sigma_n) = d_J((\mathsf{2\, 1}), \sigma_n) = \nicefrac{1}{2}$ and
$d_J((\mathsf{1\, 2\, 3}), \sigma_n) = d_J((\mathsf{3\, 2\, 1}), \sigma_n)
= d_J((\mathsf{1\, 3\, 2}), \sigma_n) + d_J((\mathsf{2\, 3\, 1}), \sigma_n)
= d_J((\mathsf{2\, 1\, 3}), \sigma_n) + d_J((\mathsf{3\, 1\, 2}), \sigma_n)
= \nicefrac{1}{4}$.
Nevertheless, one can easily see that the value of
$d_J((\mathsf{1\, 3\, 2}), \sigma_n)$ depends on $J$;
e.g.,
$d_{\{1,2,n\}}((\mathsf{1\, 3\, 2}), \sigma_n) \to 0$ as $n \to \infty$, whereas
$d_{\{1,n-1,n\}}((\mathsf{1\, 3\, 2}), \sigma_n) \to \nicefrac{1}{4}$ as $n \to \infty$
(here we write permutations using the one-line convention).

As we show in this paper (see Theorem~\ref{thm:main-weak-to-strong}), there exists a strong IDU transformation satisfying the same pattern density sum equalities. In particular, the permutons $\frac12 I \oUp \frac12 D$ and $\frac12 D \oUp \frac12 I$
satisfy the same pattern density sum equalities
(see Definition~\ref{def:up-comb} and Figure~\ref{fig:permuton-compositions}).
\end{remark}

\begin{definition}[up-combination of IDU transformations]\label{def:up-comb}
Consider weak IDU transformations $A_1,\dots, A_r$ and nonnegative coefficients 
$p_1,\dots, p_r$ that sum to $1$.
We define their composition 
$$C = p_1 A_1 \oUp p_2 A_2 \oUp \dots \oUp p_r A_r$$ as follows.  
Given $n$, we independently assign each element $i\in [n]$ to one of $r$ groups $G_1,\dots, G_r$ with $\Pr(i \in G_j) = p_j$.  
For each group $G_j$, sample a random permutation $\sigma^{(j)}$ from transformation $A_j$ with length parameter $|G_j|$.  

Define an ordering on $[n]$:  
if $a \in G_j$ and $b \in G_{j'}$ with $j < j'$, then $a$ precedes $b$ in the ordering.  
If $a, b \in G_j$, their relative order is determined by $\sigma^{(j)}$.  

$C$ outputs the random permutation of length $n$ that is order-isomorphic to the resulting ordering of $[n]$.
We refer to this operation as the \emph{up-combination} of IDU transformations.
\end{definition}

\begin{definition}[up-combination of permutons]
Consider deterministic permutons $A_1,\dots, A_r$ and nonnegative coefficients 
$p_1,\dots, p_r$ that sum to $1$.
We define the up-combination 
$C = p_1 A_1 \oUp p_2 A_2 \oUp \dots \oUp p_r A_r$ as follows.  
Split the unit square $[0,1]^2$ into rectangles $R_1, \dots, R_r$ of width $1$ 
and heights $p_1, p_2, \dots, p_r$ from bottom to top.   Let $s_j = \sum_{i=1}^{j-1}p_i$.
Define the measure on each rectangle 
$R_j = [0,1] \times [s_j, s_{j+1}]$ to be $A_j$ scaled vertically by a factor of $p_j$; formally,
\[
\mu_{R_j}(S) = p_j \cdot A_j\big(\{(x, (y - s_j)/p_j) : (x, y) \in S\}\big).
\]
\end{definition}

It is easy to see that the definitions of up-combinations of IDU transformations 
and permutons are consistent with each other.

\begin{claim}\label{claim:up-combination-is-strong-IDU}
If $A_1, \dots, A_r$ are strong IDU transformations, then their up-combination  
$C = p_1 A_1 \oUp p_2 A_2 \oUp \dots \oUp p_r A_r$ is also a strong IDU transformation.
\end{claim}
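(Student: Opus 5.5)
Fix $k$, a pattern $\rho\in\Sk$, $n\ge k$, and a $k$-subset $J=\{j_1<\dots<j_k\}\subseteq[n]$. The plan is to compute $d_J(\rho,\sigma^C_n)$, where $\sigma^C_n$ is the permutation produced by $C$ with length parameter $n$. We condition on the group labels of the elements of $J$ and show that the resulting expression involves only $k$, $\rho$, the weights $p_j$, and the profiles $d(\cdot,A_j)$. Nothing in it will depend on $n$ or $J$.

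First, the elements of $[n]$ are assigned to groups independently. So the labels $g(j_1),\dots,g(j_k)\in[r]$ are i.i.d.\ with $\Pr(g=j)=p_j$, whatever $J$ and $n$ are. Fix a labeling $g\colon[k]\to[r]$, meaning $j_a\in G_{g(a)}$. Given $g$, the pattern $\patt(\sigma^C_n,J)$ is determined as follows. If $g(a)<g(b)$ then $a$ precedes $b$, so $\rho(a)<\rho(b)$. If $g(a)=g(b)=j$, their relative order is the one given by $\sigma^{(j)}$.

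Let $J_j=\{j_a : g(a)=j\}$. The central step is to show that, conditioned on the whole group assignment (and in particular on $|G_j|=m_j$), the pattern of $\sigma^{(j)}$ at the positions of $J_j$ inside $G_j$ has distribution $d_{J'_j}(\cdot,\sigma^{(j)}_{m_j})$. Here $J'_j$ is the set of ranks of $J_j$ within $G_j$, and it has size $|J_j|$. Since $A_j$ is strong, this distribution equals $d(\cdot,A_j)$, and it does not depend on $m_j$ or $J'_j$. This is exactly where strongness of each $A_j$ is used, so that the random and $J$-dependent positions inside the groups drop out.

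We also need that the different $\sigma^{(j)}$ are independent given the group assignment, which holds by construction. Then, conditioned on $g$, the probability that the pattern equals $\rho$ is zero unless $\rho$ is compatible with $g$, meaning $g(a)<g(b)\Rightarrow\rho(a)<\rho(b)$. When $\rho$ is compatible, the probability is $\prod_{j} d(\rho|_{g^{-1}(j)},A_j)$, where $\rho|_S$ is the pattern of $\rho$ restricted to the positions $S$. The factor for an empty group is $1$, and we use the convention $d(\cdot,A_j)=1$ for patterns of length $1$. Averaging over $g$ gives
\[
d_J(\rho,\sigma^C_n)=\sum_{g\colon[k]\to[r]}\Bigl(\prod_{a=1}^k p_{g(a)}\Bigr)\,\mathbf 1[\rho\text{ compatible with }g]\prod_{j=1}^r d\bigl(\rho|_{g^{-1}(j)},A_j\bigr),
\]
and the right-hand side is independent of $n\ge k$ and of $J$.

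I expect the main obstacle to be a careful statement of the conditioning step. One has to check that the ranks $J'_j$ and the sizes $m_j$ are functions of the group assignment alone, and that $\sigma^{(j)}$ is drawn independently of that assignment. Strongness of $A_j$ must also be applied for every length $m_j\ge|J_j|$, which is fine because the definition covers all $n\ge k$. Finally, if some $p_j=0$, the corresponding terms vanish and cause no difficulty.
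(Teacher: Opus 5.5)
Your proof is correct and follows the paper's argument. Both condition on the i.i.d.\ group labels of the chosen positions. Both use strongness of each $A_j$ to make the within-group pattern distributions independent of the specific positions and group sizes, and both note that the between-group order is deterministic. Your explicit sum over labelings $g\colon[k]\to[r]$ makes concrete the paper's conclusion that the pattern distribution is the same for every index set $Q$.
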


\begin{proof}
Fix $k \geq 1$.  
In the up-combination $C$, each element of $[n]$ is independently assigned to group $G_j$ with probability $p_j$, and the elements within each group are ordered according to the permutation sampled from $A_j$. 

Consider a set of indices $Q = \{i_1,\dots, i_k\}\subseteq[n]$ with $i_1 < i_2 < \dots < i_k$, and suppose these elements get assigned to groups $G_{j_1}, G_{j_2}, \dots, G_{j_k}$, respectively.
We analyze the distribution of the pattern $\patt(\sigma_n, Q)$.

Because the indices $j_1,\dots, j_k$ are sampled independently from the same distribution, their joint distribution depends only on the probabilities $p_1,\dots,p_r$ and not on the specific choice of index set $Q$.  
Moreover, the relative order of elements in $G_j\cap Q$ is given by the induced pattern
$\patt(\sigma^{(j)}, G_j\cap Q)$, whose distribution depends only on $|G_j\cap Q|$ and not on the specific elements that form $G_j\cap Q$, since $A_j$ is a strong IDU transformation.

The relative order between groups is deterministic: all elements of group $j$ precede those of group $j'$ when  $j < j'$. So the overall distribution of the induced pattern $\rho \in \Sk$ is the same for every subset $Q$.
Hence $d_Q(\rho, C)$ is independent of $Q$, and $C$ is a strong IDU transformation.
\end{proof}

\begin{definition}
    An \textit{IDU combination} is an up-combination of the $I$, $D$, and $U$ permutons.    An \textit{ID combination} is an up-combination of the $I$ and $D$ permutons. 
\end{definition}
\begin{corollary}
Every ID and IDU combination is a strong IDU permuton.
\end{corollary}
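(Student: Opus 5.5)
The plan is an induction on the structure of the combination, with Claim~\ref{claim:up-combination-is-strong-IDU} as the inductive step and the basic permutons $I$, $D$, $U$ as the base case. We also need to check that the resulting object is a random permuton, in fact a deterministic permuton, since the corollary speaks of strong IDU \emph{permutons}.

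\textbf{Base case.} As already observed, $I$, $D$ and $U$ are strong IDU transformations. For every $J$ we have $d_J(\rho,I)=[\rho=\mathrm{id}_k]$, $d_J(\rho,D)=[\rho=\mathrm{decr}_k]$ and $d_J(\rho,U)=1/k!$, and none of these depends on $J$ or on $n$.

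\textbf{Inductive step.} An ID or IDU combination is a finite expression $p_1A_1\oUp\dots\oUp p_rA_r$ with each $A_j\in\{I,D,U\}$. If nested combinations are also allowed, we induct on the nesting depth, so that each $A_j$ is itself a combination of smaller depth. By the inductive hypothesis each $A_j$ is a strong IDU transformation, and Claim~\ref{claim:up-combination-is-strong-IDU} then shows that $C$ is one as well.

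\textbf{The permuton side.} The up-combination of deterministic permutons is again a permuton. Its vertical marginal is uniform because the rectangle $R_j$ carries mass $p_j$, spread uniformly over the height interval $[s_j,s_{j+1}]$ (this uses the uniform $y$-marginal of $A_j$ after scaling by $p_j$). Its horizontal marginal is $\sum_j p_j\lambda=\lambda$.

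\textbf{Consistency of the two definitions.} The main point to verify is the remark that the IDU-transformation definition and the permuton definition agree. Sample $n$ points from the combined permuton. Each point lands in $R_j$ independently with probability $p_j$, and its $x$-coordinate is uniform and independent of $j$, since every piece has the uniform $x$-marginal. So the group assignment of the points, sorted by $x$, is i.i.d.\ with law $(p_j)$. Points in lower rectangles have smaller $y$-coordinates, which gives exactly the ordering of groups in Definition~\ref{def:up-comb}. Conditional on the group sizes, the points within $R_j$ are i.i.d.\ samples from $A_j$ after rescaling, so their induced pattern is distributed as $\sigma^{(j)}$. Hence the permutation sampled from the permuton has the same law as the output of the transformation, and the strong IDU transformation is given by this permuton.

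I expect no real obstacle. The only delicate point is that conditioning on the group of each point must leave its $x$-coordinate uniform and independent across points, and the uniform $x$-marginal of each $A_j$ guarantees this.
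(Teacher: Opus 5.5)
Your overall route matches the paper's. The corollary follows from Claim~\ref{claim:up-combination-is-strong-IDU}, the fact that $I$, $D$, $U$ are strong, and the remark that the permuton and transformation definitions of up-combination agree (the paper only asserts this as ``easy to see''). Your base case, the induction and the marginal check are fine. The gap is in your verification of that remark.

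You show two things: the labels read in $x$-order are i.i.d.\ with law $(p_j)$, and, conditional on the group sizes, the pattern inside each group has the law of $\sigma^{(j)}$. Both are marginal statements. To conclude that the output has the law of Definition~\ref{def:up-comb}, you need the joint law. In particular, the within-group patterns must be independent of the interleaving, i.e.\ of which $x$-ranks belong to which group. Conditioning on the interleaving biases the $x$-coordinates of the group-$j$ points, and for a general permuton this changes the pattern law.

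\emph{Counterexample.} Let $A_1$ be the permuton supported on the segment from $(0,0)$ to $(\frac12,\frac12)$ and the segment from $(\frac12,1)$ to $(1,\frac12)$. Two of its points form a descent iff both have $x>\frac12$. Take $\frac12A_1\oUp\frac12 I$ with $n=3$, and condition on the $x$-leftmost point lying in the top rectangle and the other two in the bottom one. Those two points then form a descent with probability $\frac12$ under the permuton, but $\frac14$ under Definition~\ref{def:up-comb}. Hence $(\mathsf{3\,2\,1})$ has probability $\frac{5}{64}$ under the permuton versus $\frac{3}{64}$ under the definition. Your argument uses only the uniform $x$-marginals of the $A_j$, so it would ``prove'' this false agreement too. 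The delicate point you flag ($x$-uniformity) is not the real one.

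\emph{Fix.} The missing ingredient is the strongness of the components.
\begin{itemize}
\item Condition on all $x$-coordinates and all labels. The $y$-coordinates are then independent.
\item The pattern of group $j$ therefore has law $d_{J_j}(\cdot,A_j)$, where $J_j$ is the set of $x$-coordinates in group $j$, and the patterns of different groups are independent.
\item Since each $A_j$ is $I$, $D$, $U$, or an inner combination that is strong by your induction hypothesis, $d_{J_j}(\cdot,A_j)=d(\cdot,A_j)$ for (almost) every $J_j$.
\end{itemize}
This gives exactly the required independence from the interleaving, and with it agreement with Definition~\ref{def:up-comb}. The same conditioning argument, applied to a fixed $x$-set $J$, also shows directly that $d_J(\rho,C)$ does not depend on $J$. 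So you could bypass the transformation definition and Claim~\ref{claim:up-combination-is-strong-IDU} altogether.
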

\subsection{Consistency of strong IDU transformations}\label{sec:strongidu-consistency}
\begin{claim}
Every strong IDU transformation $\bm{\sigma}$ is consistent. Therefore, it is defined by a random permuton.
\end{claim}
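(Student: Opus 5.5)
The plan is to verify Definition~\ref{def:consistency} directly and then invoke Theorem~\ref{thm:random-permuton-consistency}. Fix $k \ge 1$ and $n \ge k$, and let $\rho \in \Sk$. We must show that
\[
\Pr\bigl(\patt(\sigma_n, J) = \rho\bigr) = \Pr(\sigma_k = \rho),
\]
where $J$ is a uniformly random $k$-subset of $[n]$, chosen independently of $\sigma_n$.

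First, condition on $J$. By independence of $J$ and $\sigma_n$, the conditional probability that $\patt(\sigma_n, J) = \rho$ given $J$ is exactly the restricted density $d_J(\rho, \sigma_n)$. Averaging over $J$ then gives
\[
\Pr\bigl(\patt(\sigma_n,J)=\rho\bigr) = \Exp_J\bigl[d_J(\rho,\sigma_n)\bigr] = d(\rho,\sigma_n).
\]

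Second, apply the strong IDU property. The value $d_J(\rho,\sigma_n)$ is the same for every $n\ge k$ and every $J$ of size $k$, namely $d(\rho,\bm{\sigma})$. In particular, we may take $n = k$ and $J = [k]$. Then $\patt(\sigma_k,[k]) = \sigma_k$, so
\[
d_{[k]}(\rho,\sigma_k) = \Pr(\sigma_k=\rho).
\]
Chaining these equalities yields
\[
\Pr\bigl(\patt(\sigma_n,J)=\rho\bigr) = d(\rho,\bm{\sigma}) = \Pr(\sigma_k=\rho)
\]
for every $\rho\in\Sk$. Hence the two distributions coincide and $\bm{\sigma}$ is consistent.

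Finally, by Theorem~\ref{thm:random-permuton-consistency} (Proposition 2.9 of~\cite{BBFGMP}), every consistent weak IDU transformation is defined by a random permuton, which completes the claim.

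There is no real obstacle here. The only point needing care is that the strong IDU definition allows $n = k$ and hence $J = [n]$. This is what links the fixed-length distribution of $\sigma_k$ to the patterns of the longer permutations $\sigma_n$, and it is permitted because the definition quantifies over all $n \ge k$.
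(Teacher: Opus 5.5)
Your proof is correct and follows essentially the same route as the paper: you use the strong IDU property with $n=k$ and $J=[k]$ to identify the distribution of $\patt(\sigma_n,J)$ with that of $\sigma_k$, then invoke Theorem~\ref{thm:random-permuton-consistency}. Your explicit averaging over the random $J$ only makes precise a step the paper leaves implicit; it is not a different argument.
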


\begin{proof}
Consistency of $\bm{\sigma}$ means that for every $a < b$, the random permutation $\sigma_a$ has the same distribution as $\patt(\sigma_b, J)$, where $\sigma_b$ is a random permutation and $J$ is a subset of $[b]$ of size $a$ sampled uniformly at random and independently of $\sigma_b$.

By the definition of a strong IDU transformation, the distribution of $\patt(\sigma_c, J)$ is the same for all $c \ge a$ and all $a$-element subsets $J \subseteq [c]$.  
In particular, for $c = b$, we have that $\patt(\sigma_b, J)$ is distributed identically to $\patt(\sigma_a, [a]) = \sigma_a$.  
Hence, $\bm{\sigma}$ is consistent.

Finally, by Theorem~\ref{thm:random-permuton-consistency} (see Section~\ref{sec:prelim}), every consistent weak IDU transformation is defined by a random permuton.  
Therefore, $\bm{\sigma}$ is defined by a random permuton.
\end{proof}

\section{From weak to strong IDU transformations}
In this section, we show that for every weak IDU transformation $\bm{\sigma}$, there exists a strong IDU transformation defined by a random permuton $S$ such that $S$ satisfies all the linear constraints on pattern densities that $\bm{\sigma}$ satisfies (see Section~\ref{sec:pattern-density-sums} for relevant definitions).
Our proof consists of two main steps.
\begin{itemize}
\item First, using that the space of permutons is compact, we observe that there exists a random permuton $W$ that satisfies all the linear constraints on pattern densities that $\bm{\sigma}$ satisfies.
\item Then we begin changing $W$: we let $W_0 = W$, and go over all permutations $\rho_1, \rho_2, \dots$ (enumerated arbitrarily). At iteration $i$, we modify $W_{i-1}$ to obtain $W_i$ 
such that for every $j \leq i$, $d_J(\rho_j, W_i)$  does not depend on the choice of $J$ with $|J| = |\rho_j|$ (almost surely).
Finally, we define $S$ as a limit point of the sequence $W_1, W_2, \dots$.
\end{itemize}

\subsection{From a weak IDU transformation \texorpdfstring{$\bm{\sigma}$}{sigma} to a random permuton \texorpdfstring{$W$}{W}}
Before we proceed with the first step, we need the following definition and lemma.

\begin{definition}\label{def:G-hatG}
For a permutation $\tau \in \Sk$, let $F_{\tau}(x_1,\dots,x_k,y_1,\dots,y_k)$ be the indicator function that the points $(x_1,y_1),\dots,(x_k,y_k)$ form the pattern~$\tau$ (if $x_i = x_j$ or $y_i=y_j$ for some $i\neq j$, then $F_{\tau}$ is $0$).  

For a deterministic permuton~$P$ and measurable subsets $A_1,\dots,A_k \subseteq [0,1]$, define  
\[
G_{P,\tau}(A_1,\dots,A_k)
   = \Exp\left[F_{\tau}(x_1,\dots,x_k,y_1,\dots,y_k)
     \prod_{i=1}^k \mathbf{1}_{A_i}(x_i)\right],
\]
where the expectation is taken over independent samples $(x_i,y_i)\sim P$; $\mathbf{1}_{A_i}$ is the indicator function of~$A_i$.

For a random permuton~$R$, define  
\[
\hat G_{R,\tau}(A_1,\dots,A_k)
   = \Exp_{P \sim R}\left[G_{P,\tau}(A_1,\dots,A_k)\right].
\]
The expectation is taken over the random choice of a deterministic permuton $P$ from distribution $R$.
\end{definition}

\begin{lemma}\label{lem:continuity-G}
For any fixed pattern $\tau$ and segments $Q_1,\dots, Q_k \subseteq [0,1]$,  
the quantity $G_{P, \tau}(Q_1,\dots, Q_k)$ is continuous as a function of a deterministic permuton~$P$.  
Also, for a random permuton $R$,  
$\hat G_{R, \tau}(Q_1,\dots, Q_k)$
is continuous as a function of~$R$.
\end{lemma}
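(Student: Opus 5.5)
The plan is to write $G_{P,\tau}(Q_1,\dots,Q_k)$ as the integral of a fixed function against the product measure $P^{\otimes k}$ on $([0,1]^2)^k$, and then show that this integral is continuous in $P$ despite the integrand being discontinuous. Explicitly,
\[
G_{P,\tau}(Q_1,\dots,Q_k)=\int F_\tau(x,y)\prod_{i=1}^k \mathbf{1}_{Q_i}(x_i)\,dP^{\otimes k}.
\]

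\textbf{Step 1: the product map is continuous.} If $P_m\to P$ weakly, then $P_m^{\otimes k}\to P^{\otimes k}$ weakly on $([0,1]^2)^k$. This is standard: products of bounded continuous functions of the individual coordinates are convergence-determining on a compact metric space. By Stone--Weierstrass, their span is dense in $C(([0,1]^2)^k)$.

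\textbf{Step 2: the integrand is continuous $P^{\otimes k}$-almost everywhere.} Call $h$ the integrand. Then $h$ is a product of indicators of open sets (strict inequalities $x_a<x_b$, $y_a<y_b$) and indicators of the segments $Q_i$. Its discontinuity set lies in the union of three kinds of sets:
\begin{itemize}
\item the hyperplanes $\{x_a=x_b\}$ and $\{y_a=y_b\}$ for $a\neq b$;
\item the sets $\{x_i\in\partial Q_i\}$;
\item the boundaries of the strict-inequality regions, which are again of the form $\{x_a=x_b\}$ or $\{y_a=y_b\}$.
\end{itemize}
Every one of these sets is $P^{\otimes k}$-null because $P$ has uniform marginals. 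First, $\Pr(x_a=x_b)=\int \lambda(\{x_b\})\,dx_b=0$ by independence and the atomless $x$-marginal, and similarly for $y$. Second, $\partial Q_i$ consists of at most two points, so it has Lebesgue measure zero under the $x$-marginal.

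\textbf{Step 3: conclude for deterministic permutons.} By the portmanteau theorem (the continuous mapping theorem for bounded functions continuous almost everywhere under the limit measure), $\int h\,dP_m^{\otimes k}\to\int h\,dP^{\otimes k}$. The space of permutons is metrizable (by $d_\square$), so sequential continuity suffices. I expect Step 2 to be the main point: the indicators are discontinuous, and only the uniform-marginal property rescues continuity. One must also note that the limit $P$ is itself a permuton, so its marginals are uniform; the $P_m$ need not be atomless in any further sense.

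\textbf{Step 4: random permutons.} Let $R_m\to R$ weakly as random permutons. The function $P\mapsto G_{P,\tau}(Q_1,\dots,Q_k)$ is bounded by $1$ and, by Steps 1--3, continuous on the compact space of permutons (Theorem~\ref{thm:permuton-properties}). By the definition of weak convergence of random permutons,
\[
\hat G_{R_m,\tau}=\Exp_{P\sim R_m}[G_{P,\tau}]\to \Exp_{P\sim R}[G_{P,\tau}]=\hat G_{R,\tau}.
\]
The space of random permutons is also metrizable, being the space of probability measures on a compact metric space, so sequential continuity again suffices.
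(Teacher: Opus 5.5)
Your proof is correct and follows the paper's route. The integrand is continuous $P^{\otimes k}$-almost everywhere because $P$ has uniform marginals, so the Portmanteau theorem gives continuity in $P$, and the case of random permutons then follows directly from the definition of weak convergence. The one difference is that you make explicit the step $P_m^{\otimes k}\to P^{\otimes k}$, which the paper uses without stating it.
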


\begin{proof}
First, assume that $P$ is deterministic.  
The function $F(x_1, \dots, x_k, y_1, \dots, y_k)$, which checks whether the points  
$(x_1, y_1), \dots, (x_k, y_k)$ form the pattern~$\tau$,  
is discontinuous only if some pair of points $(x_i, y_i)$ and $(x_j, y_j)$ with $i \neq j$  
share the same $x$- or $y$-coordinate.  
The set of such points has measure zero with respect to the product measure $P^{\otimes k}$,  
since the marginals of $P$ are uniform on~$[0,1]$ and hence each $x_i, x_j$ (as well as $y_i, y_j$)  
are independently and uniformly distributed when sampled from $P$.  
Similarly, the indicator functions $\mathbf{1}_{Q_i}(x_i)$ are discontinuous only on the boundaries of the intervals~$Q_i$,  
which have measure zero with respect to the uniform distribution on~$[0,1]$.  
Furthermore, $G_{P, \tau}(Q_1,\dots, Q_k) \in [0,1]$ is bounded.  

By the Portmanteau theorem, $G_{P, \tau}(Q_1,\dots, Q_k)$ is therefore continuous  
as a function of the deterministic permuton~$P$ in the weak topology.  

Now consider a random permuton~$R$.  
Since $G_{P, \tau}(Q_1,\dots, Q_k)$ is a bounded continuous functional of~$P$,  
its expectation $\hat G_{R, \tau}(Q_1,\dots, Q_k)$  
is continuous as a function of~$R$,  
again by the Portmanteau theorem.  
\end{proof}

\begin{claim}\label{claim:IDU-to-permuton}
Consider a weak IDU transformation $\bm{\sigma}$.
There exists a random permuton $W$ that satisfies every linear pattern-density constraint that $\bm{\sigma}$ satisfies.
\end{claim}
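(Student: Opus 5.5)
Here is the plan. We embed each $\sigma_n$ into $[0,1]^2$ with a small random horizontal shift, view the result as a random permuton $R_n$, and take a subsequential limit $W$ using compactness. We then check that every pattern-density inequality satisfied by $\bm{\sigma}$ survives in the limit, in the almost-sure sense over random $J\subseteq[0,1]$.

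\textbf{Construction and limit.} Let $R_n$ be the random permuton $\mu_{\sigma_n}$, where $\sigma_n$ is drawn from the weak IDU transformation. The space of deterministic permutons is compact and metrizable (Theorem~\ref{thm:permuton-properties}), so the space of Borel probability measures on it is also compact in the weak topology (Prokhorov). Hence some subsequence $R_{n_m}$ converges to a random permuton $W$.

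\textbf{Approximate inequalities for $R_n$.} Fix an inequality $\sum_i c_i d_J(\rho_i,\sigma_n)\ge 0$ with $\rho_i\in\Sk$ that holds for all $n\ge k$ and all $|J|=k$. Take pairwise disjoint segments $Q_1<Q_2<\dots<Q_k$ in $[0,1]$, each of length $\ell$. Draw $k$ points from $\mu_{\sigma_n}$ conditioned on $x_a\in Q_a$. With probability $1-O(k^2/(n\ell))$ these points land in distinct columns $\lceil n x_a\rceil$. Conditioned on that event, the columns form a set $J\subseteq[n]$ with the same left-to-right order, and the pattern of the points is exactly $\patt(\sigma_n,J)$. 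This holds because distinct columns of $\mu_{\sigma_n}$ occupy distinct rows, so the within-square positions do not affect the pattern. Averaging over $\sigma_n$ and $J$ gives
\[
\sum_i c_i \hat G_{R_n,\rho_i}(Q_1,\dots,Q_k)\ \ge\ -\Big(\sum_i|c_i|\Big)\,O\!\big(k^2\ell^{k-1}/n\big).
\]
In words, the inequality holds up to an error that vanishes as $n\to\infty$.

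\textbf{Passing to the limit.} By Lemma~\ref{lem:continuity-G}, $\hat G_{R,\tau}(Q_1,\dots,Q_k)$ is continuous in $R$. Letting $n_m\to\infty$ therefore gives $\sum_i c_i\hat G_{W,\rho_i}(Q_1,\dots,Q_k)\ge 0$ for every such tuple of disjoint ordered segments.

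\textbf{From segment averages to pointwise inequalities.} By definition,
\[
\hat G_{W,\tau}(Q_1,\dots,Q_k)=\int_{Q_1\times\dots\times Q_k} d_{\{x_1,\dots,x_k\}}(\tau,W)\,dx,
\]
using that the $x$-marginal is uniform and $d_J$ is defined through the conditional law of $Y$ given $X$. So the function
\[
f(x)=\sum_i c_i\, d_{\{x_1<\dots<x_k\}}(\rho_i,W)
\]
has nonnegative integral over every box $Q_1\times\dots\times Q_k$ with $Q_1<\dots<Q_k$. These boxes generate the Borel $\sigma$-algebra on the open region $\{x_1<\dots<x_k\}$. Applying the Lebesgue differentiation theorem, or an approximation of $\{f<0\}$ by finite unions of such boxes, shows $f\ge 0$ almost everywhere on that region. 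This is exactly the almost-sure statement over a random $J$ required in Section~\ref{sec:pattern-density-sums}.

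Finally, the set of inequalities to handle is arbitrary, but each one is handled with the same limit $W$, so a single $W$ works for all of them. I expect the main technical care to be in this last step, namely identifying $\hat G$ with an integral of restricted densities, which relies on disintegrating each $P$ along $x$ measurably. The second-hardest point is the collision estimate, and it is routine.
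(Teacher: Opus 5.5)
Your proof is correct and follows essentially the same route as the paper. You take a compactness limit $W$ of the random permutons $\mu_{\sigma_n}$, transfer the box inequalities $\sum_i c_i\hat G_{\cdot,\rho_i}(Q_1,\dots,Q_k)\ge 0$ to $W$ via the continuity in Lemma~\ref{lem:continuity-G}, and then use Lebesgue differentiation to get the almost-sure pointwise statement. The only difference is cosmetic: the paper restricts to $n=2^t$ and dyadic intervals, so that $\hat G$ for $\mu_{\sigma_{2^t}}$ is exactly an average of $d_J(\tau,\sigma_{2^t})$, whereas you allow all $n$ and arbitrary segments and absorb the boundary-column collisions into a vanishing $O(1/n)$ error.
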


\begin{proof}
Consider the sequence of random permutations $\sigma_{1}, \sigma_2, \sigma_4, \dots, \sigma_{2^t}, \dots$.  
For each $t$, let $S_t$ be the random permuton associated with the random permutation $\sigma_{2^t}$ (see Section~\ref{sec:permutons} for the definition of the random permuton corresponding to a random permutation).  
By compactness of the space of random permutons, there exists a limit point~$W$ of the sequence $S_1, S_2, S_3, \dots, S_t, \dots$.

Fix $k$ and a linear constraint on the sum of densities. As discussed in Section~\ref{sec:pattern-density-sums}, we can write it in a homogeneous form as
\[
\sum_{i=1}^m c_i\, d_J(\tau_i,\sigma_n) \geq 0,
\]
where all $\tau_i \in \Sk$.
Define
\[
H_R(A_1,\dots,A_k) = \sum_{i=1}^m c_i\, \hat G_{R,\tau_i}(A_1,\dots,A_k).
\]
Now define special sets $Q_1, \dots, Q_k$.  
Fix $\delta = 1/2^r$ for some $r$, and consider disjoint intervals $Q_i = ((q_i-1)\delta, q_i\delta)$ for some choice of distinct $q_i \in [2^r]$.
By Lemma~\ref{lem:continuity-G}, for fixed segments $Q_1,\dots, Q_k$ and pattern~$\tau$,  function
$\hat G_{R, \tau}(Q_1,\dots, Q_k)$ is continuous as a function of~$R$,  
and hence $H_{R}(Q_1,\dots, Q_k)$ is also continuous as a linear combination of continuous functions.

Next, we show that for every choice of intervals $Q_1, \dots, Q_k$, we have $H_{S_t}(Q_1,\dots, Q_k) \ge 0$ whenever $t \ge r$.  
Indeed,
\[
\hat G_{S_t, \tau}(Q_1,\dots, Q_k) = \frac{1}{2^{(t-r)k}}\sum_{J}d_J(\tau, \sigma_{2^t}),
\]
where the sum is over all subsets $J=\{j_1,\dots, j_k\}$ with  
$j_i \in \{(q_i-1) 2^{t-r} + 1,\dots, q_i 2^{t-r}\}$.  
Since $\sum_{i=1}^m c_i\, d_J(\tau_i, \sigma_{2^t}) \ge 0$ for every~$J$, it follows that  
$H_{S_t}(Q_1,\dots, Q_k) \geq  0$.

Letting $t \to \infty$ and using the continuity of $H_{R}(Q_1, \dots, Q_k)$ as a function of~$R$, we conclude that 
$H_{W}(Q_1, \dots, Q_k) \ge 0$ for every choice of intervals $Q_1, \dots, Q_k$ as above. 
Now, $H_{W}$ defines an absolutely continuous signed measure on $C = [0,1]^k$, where the measure of a rectangle $A_1 \times \dots \times A_k$ is given by $H_{W}(A_1,\dots, A_k)$. Let $C'\subseteq C$ be the set of all points with distinct coordinates.  
Since the Lebesgue $\sigma$-algebra on $C'$ is generated by sets of the form $Q_1 \times \dots \times Q_k$ with intervals $Q_i$ as above, it follows that $H_{W}$ is nonnegative on $C'$. Therefore, $H_{W}$ is nonnegative on all of $C$, since $C \setminus C'$ has measure zero with respect to the Lebesgue measure, and hence also with respect to the absolutely continuous signed measure $H_{W}$.

By the Lebesgue differentiation theorem, the Radon--Nikodym derivative of $H_W$ with respect to the Lebesgue measure is nonnegative almost surely.  
The derivative at point $(x_1,\dots, x_k)$ equals the expectation of  
$$\sum_{i=1}^m c_i\, d_J({\tau_i}, W) \text{  where }J = \{x_1,\dots, x_k\}.$$  
We conclude that $\sum_{i=1}^m c_i\, d_J({\tau_i}, W) \ge 0$ almost surely,  
and therefore $W$ satisfies all linear constraints on pattern densities that $\bm{\sigma}$ satisfies.
\end{proof}
Recalling Observation~\ref{sec:observation-as-good-as}, we conclude that for every IDU transformation $\bm{\sigma}$, there exists a random permuton $W$ that is at least as good as $\bm{\sigma}$ for our intended applications.

\subsection{From \texorpdfstring{$W$}{W} to a strong IDU transformation}
Now we show that there exists a strong IDU permuton $S$ (a random permuton which defines a strong IDU transformation) that satisfies all the linear constraints on pattern densities that $W$ does. 
Let $\rho_1, \rho_2, \rho_3,\dots$ be an enumeration of all permutations (patterns).
Let $W_0 = W$. We prove that there exists a sequence of random permutons $W_0, W_1, W_2, \dots$ such that $W_i$ satisfies all the linear constraints that $W$ satisfies, and $d_J(\rho_j, W_i) =d(\rho_j, W_i)$ for every $j \leq i$ and all $J$ with $|J|=|\rho_j|$.
Assume that we have already constructed $W_{i-1}$. We now construct $W_i$.

\begin{lemma}\label{lem:fix-one-density}
Let $R$ be a random permuton, $\rho\in\Sk$, and $\varepsilon > 0$.
There exists a random permuton $R_{\varepsilon}^*$ that satisfies all the linear constraints on pattern densities that $R$ satisfies, and further,
$$d_J(\rho, R^{*}_{\varepsilon}) \in [d(\rho, R^{*}_{\varepsilon})-\varepsilon, d(\rho, R^{*}_{\varepsilon})+\varepsilon]$$
almost surely (the randomness is over the choice of a $k$-subset $J$).
\end{lemma}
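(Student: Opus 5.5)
The plan is to build an intermediate \emph{weak} IDU transformation whose restricted densities of $\rho$ are nearly constant, and then pass back to a random permuton with Claim~\ref{claim:IDU-to-permuton}. The constancy will come from Ramsey's theorem applied to a countable generic set of $x$-coordinates.

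\textbf{Step 1: a countable generic sequence.} Fix a version of the disintegration of each $P\sim R$ along the $x$-axis. Then $d_J(\tau,R)$ is a genuine function of the finite set $J\subset[0,1]$ for every pattern $\tau$. Consider the linear pattern-density constraints satisfied by $R$ that have rational coefficients. There are countably many, and each holds for almost every $J$. Sample $z_1,z_2,\dots$ i.i.d.\ uniform on $[0,1]$. A countable union of null events is null, so almost surely every finite subset of $\{z_i\}$ has distinct coordinates and satisfies all these rational constraints. A constraint with real coefficients is a limit of rational ones, up to an arbitrarily small slack that can be absorbed using $\sum_\tau d_J(\tau,R)=1$. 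Hence every finite subset of $\{z_i\}$ satisfies every linear constraint that $R$ satisfies.

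\textbf{Step 2: Ramsey homogenization.} Partition $[0,1]$ into finitely many buckets of length $\varepsilon$. Color each $k$-subset $\{i_1,\dots,i_k\}\subset\bbN$ by the bucket containing $d_{\{z_{i_1},\dots,z_{i_k}\}}(\rho,R)$. By the infinite Ramsey theorem there is an infinite $N\subseteq\bbN$ all of whose $k$-subsets receive the same color, say the bucket $[a,a+\varepsilon]$. Pass to an infinite subset of $N$ whose $z$-values are monotone.
\begin{itemize}
\item If they are increasing, let $x_1<\dots<x_n$ be the first $n$ of them.
\item If they are decreasing, take any $n$ of them and sort them; since every $k$-subset of $N$ has the same color, this choice works equally well.
\end{itemize}
Define $\sigma_n$ as follows: sample $P\sim R$, sample $Y_i$ from $P$ conditioned on $x=x_i$ independently for each $i$, and let $\sigma_n$ be the pattern of $\{(x_i,Y_i)\}_{i=1}^n$. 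Then $d_J(\tau,\sigma_n)=d_{\{x_j:j\in J\}}(\tau,R)$ for every $J\subseteq[n]$.

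By Step~1, $\bm\sigma$ satisfies every linear constraint that $R$ satisfies. By Step~2 it also satisfies the two homogeneous linear constraints
\[
d_J(\rho,\sigma_n)-a\sum_{\tau\in\Sk}d_J(\tau,\sigma_n)\ge 0
\quad\text{and}\quad
(a+\varepsilon)\sum_{\tau\in\Sk}d_J(\tau,\sigma_n)-d_J(\rho,\sigma_n)\ge 0 .
\]

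\textbf{Step 3: back to a random permuton.} Apply Claim~\ref{claim:IDU-to-permuton} to $\bm\sigma$ and obtain a random permuton $R^*_\varepsilon$ that satisfies every linear constraint of $\bm\sigma$. In particular, $R^*_\varepsilon$ satisfies all constraints of $R$, and $d_J(\rho,R^*_\varepsilon)\in[a,a+\varepsilon]$ almost surely. Averaging over $J$ gives $d(\rho,R^*_\varepsilon)\in[a,a+\varepsilon]$. Therefore $|d_J(\rho,R^*_\varepsilon)-d(\rho,R^*_\varepsilon)|\le\varepsilon$ almost surely, as required.

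\textbf{Main obstacle.} I expect the delicate point to be Step~1. Ramsey's theorem needs a pointwise-defined coloring and exact, not almost-sure, satisfaction of uncountably many constraints on the chosen points. The intended fix is the one above: use a countable random generic sequence together with the reduction to rational coefficients. If that reduction is awkward, an alternative is to note that Claim~\ref{claim:IDU-to-permuton} only needs the constraints of $R$ up to the slack that vanishes in the limit.
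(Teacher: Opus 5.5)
Your proposal is correct and has the same structure as the paper's proof: Ramsey's theorem on $k$-subsets of $\bbN$ colored by the $\varepsilon$-bucket of the density of $\rho$, a weak IDU transformation built from the homogeneous set, and Claim~\ref{claim:IDU-to-permuton} to pass back to a random permuton that carries the two bucket constraints. The paper avoids your Step~1 by not evaluating $d_J(\cdot,R)$ at fixed points; instead it samples each $x_i$ uniformly from a disjoint interval $Q_{u_i}$ (the intervals are ordered left to right) and colors $H$ by the averaged value $c_H$, so the almost-sure constraints of $R$ transfer to $\bm{\sigma}$ directly by absolute continuity (and in your version the monotone-subsequence step is superfluous, since any $n$ sorted points of the homogeneous set already work).
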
 
\begin{proof}
Consider an infinite sequence of disjoint intervals in $(0,1)$, ordered from left to right, defined by
\[
Q_t = \bigl(1 - 2^{1-t},\, 1 - 2^{-t}\bigr), \quad t = 1, 2, \dots.
\]
In particular, $Q_1 = (0,1/2)$, $Q_2 = (1/2, 3/4)$, $Q_3 = (3/4, 7/8)$, and so on.
The specific choice of these intervals is not essential; any family of nonempty disjoint intervals ordered from left to right would suffice.
They are chosen here purely for concreteness.

For every $k$-subset $H$ of positive integers $\bbN$, define a number $c_H$ as follows. Let $H = \{i_1, \dots, i_k\}$ with $i_1< \dots < i_k$. We sample $x_{i_1},\dots, x_{i_k}$ independently; each $x_i$ is chosen uniformly at random from $Q_i$. Then $c_H$ is the expectation of $d_{\{x_{i_1},\dots, x_{i_k}\}}(\rho, R)$ over the random choice of $x_{i_1},\dots, x_{i_k}$ (sampled as described above).
Assign $H$ the color $\lfloor c_H / \varepsilon \rfloor$. Note that we use at most $1/\varepsilon + 1$ colors; if two sets $H_1$ and $H_2$ have the same color, then $|c_{H_1} - c_{H_2}| \leq \varepsilon$.

Now we use the Ramsey theorem for infinite hypergraphs (see~\cite{Ramsey}).
\begin{theorem}[Ramsey theorem for infinite hypergraphs]
Consider the complete $k$-uniform hypergraph on the set of positive integers $\bbN$. Suppose that each hyperedge is colored with one of $M$ colors. Then there exists an infinite subset of vertices $U$ such that all hyperedges within $U$ have the same color.
\end{theorem}

By the Ramsey theorem, there exists an infinite subset of positive integers $U$ such that all $k$-subsets of $U$ have the same color. Let $u_1, u_2, \dots, u_i, \dots$ be the elements of $U$ in the increasing order. We construct a new weak IDU transformation $\bm{\sigma}$. Given $n$, we sample a permutation $\sigma_n$ as follows. 
First, we sample a deterministic permuton $P$ from $R$. Then, for every $i\in [n]$, we independently sample $x_i$ from $Q_{u_i}$ and $y_i$ according to the conditional distribution of $y$ given $x$ in $P$. We let $\sigma_n\in \Sn$ be the pattern of the sequence $(x_1,y_1), \dots, (x_n,y_n)$.

Note that $d_J(\tau, \sigma_n)$ equals $\Exp_{x_1,\dots,x_n}[d_{\{x_{j_1},\dots, x_{j_t}\}}(\tau, R)]$ for every $t$, $\tau\in\mathbb{S}_t$, and $J = \{j_1,\dots, j_t\}\subseteq [n]$.
Thus, the transformation $\bm{\sigma}$ satisfies all linear constraints on pattern densities that $R$ satisfies.
By our choice of $U$, all $k$-subsets of $U$ have the same color. Therefore,
$$d_J(\rho, \sigma_n) \in [a\varepsilon, (a+1)\varepsilon)$$
for some fixed $a$. By Claim~\ref{claim:IDU-to-permuton}, there exists a random permuton $R^*_{\varepsilon}$ that satisfies all linear constraints on pattern densities satisfied by $\bm{\sigma}$. In particular, it satisfies all linear constraints satisfied by $R$ and, additionally, the two constraints
$d_J(\rho, \sigma_n) \ge a\varepsilon$ and $d_J(\rho, \sigma_n) \le (a+1)\varepsilon$
that hold for $\bm{\sigma}$.
We obtained the desired random permuton $R^{*}_{\varepsilon}$.
\end{proof}

\begin{corollary}\label{cor:basic-step-Wi}
Let $R$ be a random permuton, $\rho\in \Sk$.
There exists a random permuton $R^*$ that satisfies all the linear constraints on pattern densities that $R$ satisfies, and further,
$$d_J(\rho, R^{*}) = d(\rho, R^{*})$$
almost surely over the choice of $J$.
\end{corollary}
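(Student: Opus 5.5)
The plan is to apply Lemma~\ref{lem:fix-one-density} with $\varepsilon = 1/m$ for $m = 1, 2, \dots$. This gives random permutons $R^*_{1/m}$, each satisfying every linear constraint that $R$ satisfies, and each with
\[
|d_J(\rho, R^*_{1/m}) - d(\rho, R^*_{1/m})| \le 1/m
\]
almost surely in $J$. By compactness of the space of random permutons, some subsequence converges to a random permuton $R^*$. Two things then remain to be shown: that $R^*$ inherits every linear constraint of $R$, and that it satisfies the exact equality $d_J(\rho,R^*) = d(\rho,R^*)$ almost surely.

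For the inheritance, I will reuse the argument in the proof of Claim~\ref{claim:IDU-to-permuton}, now applied to a sequence of random permutons rather than discretized permutations.

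\begin{itemize}
\item Fix a homogeneous constraint $\sum_i c_i d_J(\tau_i,\cdot)\ge 0$ satisfied almost surely by $R$, and hence by every $R^*_{1/m}$.
\item Integrate over boxes: $H_{R^*_{1/m}}(Q_1,\dots,Q_k)\ge 0$ for all dyadic intervals $Q_i$, since $H$ is exactly the integral of the a.s.\ nonnegative function $J\mapsto \sum_i c_i d_J(\tau_i,\cdot)$ over $Q_1\times\dots\times Q_k$.
\item By Lemma~\ref{lem:continuity-G}, $H$ is continuous in the random permuton, so $H_{R^*}(Q_1,\dots,Q_k)\ge 0$.
\item The measure-theoretic step from that proof (generating the $\sigma$-algebra by such boxes, then Lebesgue differentiation) yields the constraint for $R^*$ almost surely.
\end{itemize}

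For the equality, I will express the approximate condition itself as a pair of linear constraints with slack. The condition $d_J(\rho,R^*_{1/m}) - d(\rho,R^*_{1/m}) \le 1/m$ a.s.\ is not of the allowed form, because $d(\rho,\cdot)$ is a global quantity, not a local density at $J$. Integrating over boxes handles this. For disjoint dyadic intervals $Q_1,\dots,Q_k$ with $\lambda(Q_1\times\dots\times Q_k)=v$, we get
\[
\Bigl|\hat G_{R^*_{1/m},\rho}(Q_1,\dots,Q_k) - v\, d(\rho, R^*_{1/m})\Bigr| \le v/m .
\]
Both $\hat G_{\cdot,\rho}(Q_1,\dots,Q_k)$ and $d(\rho,\cdot)$ are continuous (Lemma~\ref{lem:continuity-G} and Theorem~\ref{thm:permuton-properties}, the latter extended to random permutons by taking expectations). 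Passing to the limit gives $\hat G_{R^*,\rho}(Q_1,\dots,Q_k) = v\, d(\rho,R^*)$ for every such box. Hence the signed measure $A\mapsto \hat G_{R^*,\rho}(A) - d(\rho,R^*)\lambda(A)$ vanishes on boxes with distinct coordinates, and therefore on all of $[0,1]^k$. By Lebesgue differentiation, $\Exp[d_J(\rho,R^*)] = d(\rho,R^*)$ for almost every $J$, exactly as in Claim~\ref{claim:IDU-to-permuton}.

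The main obstacle is this last step: making sure the almost-sure statement about $d_J$ is identified correctly with the Radon--Nikodym derivative of $\hat G$. The identification uses $d_J(\rho, R) = \Exp_{P\sim R}[d_J(\rho,P)]$ together with uniform $x$-marginals, so the derivative of $\hat G_{R,\rho}$ at $(x_1,\dots,x_k)$ is $d_{\{x_1,\dots,x_k\}}(\rho,R)$, up to a $k!$ ordering factor. That factor is handled by restricting to boxes with $Q_1<\dots<Q_k$. If any subtlety arises there, a fallback is to handle the two one-sided inequalities $d_J(\rho,\cdot)\ge d(\rho,\cdot)-1/m$ and $d_J(\rho,\cdot)\le d(\rho,\cdot)+1/m$ separately by the same box-integration argument.
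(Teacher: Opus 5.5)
Your proposal is correct and follows the paper's route: apply Lemma~\ref{lem:fix-one-density} with $\varepsilon = 1/m$, take a limit point $R^*$ by compactness, and pass both the inherited constraints and the near-equality to the limit. Your box-integration step (via $\hat G_{\cdot,\rho}$, Lemma~\ref{lem:continuity-G}, continuity of $d(\rho,\cdot)$, and Lebesgue differentiation, as in Claim~\ref{claim:IDU-to-permuton}) justifies that limit more carefully than the paper's one-line pointwise $\limsup$ bound; that bound tacitly treats $d_J(\rho,\cdot)$ at a fixed $J$ as weakly continuous, which fails in general, since only its box integrals are continuous. As a minor point, there is no $k!$ factor to handle: $F_\rho$ sorts the points by $x$-coordinate, so the density of $\hat G_{R,\rho}$ at $(x_1,\dots,x_k)$ is exactly $d_{\{x_1,\dots,x_k\}}(\rho,R)$.
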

\begin{proof}
Consider the sequence of random permutons $R^*_{1/n}$. Since the space of random permutons is compact, the sequence has a limit point. Let $R^*$ be a limit point.
Almost surely we have
$$|d_J(\rho, R^*) - d(\rho, R^*)| \leq \limsup_{n\to \infty} |d_J(\rho, R^*_{1/n}) - d(\rho, R^*_{1/n})|  \leq \lim_{n\to \infty} 1/n= 0
$$ 
\end{proof}

\begin{theorem}\label{thm:main-weak-to-strong}
For every weak IDU transformation $\bm{\sigma}$, there exists a strong IDU permuton $S$ that satisfies all the linear constraints on pattern densities that $\bm{\sigma}$ satisfies.
\end{theorem}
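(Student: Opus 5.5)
We follow the two-step outline stated at the start of this section. First, Claim~\ref{claim:IDU-to-permuton} gives a random permuton $W = W_0$ that satisfies every linear pattern-density constraint satisfied by $\bm{\sigma}$. Fix an enumeration $\rho_1,\rho_2,\dots$ of all permutations of all lengths.

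\textbf{Iteration.} Inductively, we apply Corollary~\ref{cor:basic-step-Wi} to $R = W_{i-1}$ and $\rho = \rho_i$, and obtain $W_i$. By the corollary, $W_i$ satisfies every linear constraint that $W_{i-1}$ satisfies. In addition, it satisfies $d_J(\rho_i, W_i) = d(\rho_i,W_i)$ almost surely over $J$.

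These equalities persist to later stages. Write $b = d(\rho_j, W_j)$. The equality for $\rho_j$ is then the pair of linear constraints $d_J(\rho_j,\cdot)\ge b$ and $d_J(\rho_j,\cdot)\le b$, homogenized via $\sum_{\tau}d_J(\tau,\cdot)=1$. Both constraints pass to every $W_i$ with $i\ge j$. Since $d(\rho_j,\cdot)=\Exp_J d_J(\rho_j,\cdot)$, the constant $b$ is also preserved. Hence $W_i$ satisfies all constraints of $\bm{\sigma}$ and has $d_J(\rho_j,W_i)=d(\rho_j,W_i)$ a.s.\ for all $j\le i$.

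\textbf{Limit.} By compactness of the space of random permutons, the sequence $(W_i)$ has a limit point $S$. We must show that $S$ inherits every constraint satisfied by all $W_i$ with $i$ large, since closed conditions survive weak limits.

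This is the main obstacle, because the constraints are stated almost surely over $J\subseteq[0,1]$. That is not obviously a closed condition under weak convergence. We plan to handle it exactly as in the proof of Claim~\ref{claim:IDU-to-permuton}. Rewrite a homogeneous constraint $\sum_\ell c_\ell d_J(\tau_\ell,\cdot)\ge 0$ a.s.\ equivalently as $H_R(Q_1,\dots,Q_k)\ge 0$ for all disjoint dyadic intervals $Q_1,\dots,Q_k$. Here one direction is integration of the pointwise inequality. The other direction uses the measure-generation argument and the Lebesgue differentiation theorem, both given in that proof.

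By Lemma~\ref{lem:continuity-G}, each $H_R(Q_1,\dots,Q_k)$ is continuous in $R$. So for each fixed constraint and fixed intervals, the inequality passes to the limit along the subsequence converging to $S$. There are only countably many interval tuples, so the inequality holds for all of them simultaneously. Therefore $S$ satisfies every constraint of $\bm{\sigma}$, as well as $d_J(\rho_j,S)=d(\rho_j,S)$ a.s.\ for every $j$.

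\textbf{Conclusion.} It remains to convert this into strongness. For every pattern $\rho$, $d_J(\rho,S)$ is a.s.\ the constant $d(\rho,S)$. Consider the weak IDU transformation sampled from $S$. Conditional on the $x$-coordinates, its restricted density on an index set $J\subseteq[n]$ is an average of $d_{J'}(\rho,S)$ over random real subsets $J'$. Hence it equals $d(\rho,S)$ for every $J$ and every $n\ge k$. So $S$ defines a strong IDU transformation, i.e.\ $S$ is a strong IDU permuton.
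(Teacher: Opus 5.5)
Your proof is correct and follows the paper's argument. You iterate Corollary~\ref{cor:basic-step-Wi} from the permuton $W$ of Claim~\ref{claim:IDU-to-permuton}, encode each equality $d_J(\rho_j,\cdot)=d(\rho_j,W_j)$ as a pair of linear constraints inherited by all later $W_i$, and pass to a limit point $S$, using $d(\rho_j,S)=\Exp_J[d_J(\rho_j,S)]$. The only difference is that you spell out two steps the paper leaves implicit: the ``by continuity'' limit step (via $H_R(Q_1,\dots,Q_k)$, Lemma~\ref{lem:continuity-G}, and the Lebesgue differentiation argument) and the final passage from almost-surely constant restricted densities over real subsets to strongness of the sampled transformation.
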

\begin{proof}
As outlined above, we start with $W_0 = W$ from Claim~\ref{claim:IDU-to-permuton} and iteratively construct $W_1, W_2, \dots$. Each time we use Corollary~\ref{cor:basic-step-Wi} to construct $W_i$ from $W_{i-1}$, ensuring that $d_J(\rho_i, W_i) = d(\rho_i, W_i)$ for every $J$. Corollary~\ref{cor:basic-step-Wi} guarantees that $W_i$ satisfies all the linear constraints on pattern densities that $W_{i-1}$ and hence $W$ satisfies.

Finally, we define $S$ as a limit point of the sequence $W_1, W_2, \dots$. By construction, $S$ satisfies all the linear constraints on pattern densities that $W$ (and hence $\bm{\sigma}$) satisfies. Further, if some $W_j$ satisfies a linear constraint, then so do all $W_i$ with $i \geq j$ and by continuity $S$. 
In particular, for every $j$, we have $d_J(\rho_j, W_j) = d(\rho_j, W_j)$. That is, $W_j$ satisfies a linear constraint $d_J(\rho_j, W_j) = c$ with $c = d(\rho_j, W_j)$.
We get that $S$ also satisfies this constraint:
$d_J(\rho_j, S) = c$ (again, $c = d(\rho_j, W_j)$). It follows that $d(\rho_j, S) = \Exp_J[d_J(\rho_j, S)]=c$ and thus
$$d_J(\rho_j, S) = d(\rho_j, S).$$
Since this holds for all permutations $\rho_j$ and all $J$, we conclude that $S$ defines a strong IDU transformation.
\end{proof}

\section{Classification of strong IDU transformations}
\subsection{Core IDU permutons}
We now provide a classification of strong IDU transformations.
We first define core deterministic IDU permutons, which generalize IDU combinations that we considered earlier. Informally, one can think of them as infinite IDU combinations.
\begin{definition}
Consider a finite or infinite family $\cal H$ of disjoint intervals $H_i$ on $[0,1]$. For each interval $H_i$ fix a label, either $I$ or $D$.
We define a core deterministic IDU permuton $B$ as follows:
sample $y$ uniformly at random from $[0,1]$. Now if $y$ is outside of all intervals $H_i$, we sample $x$ uniformly at random from $[0,1]$ (informally, apply the uniform permuton). Otherwise, $y$ lies in some $H_j$. Suppose $H_j = (a,b)$. Represent $y$ as $y = (1-\lambda) a + \lambda b$ with $\lambda \in [0,1]$.
If $H_j$ is labeled $I$, we let $x = \lambda$. If $H_j$ is labeled $D$, we let $x = 1 - \lambda$.
\end{definition}

Observe that every (finite) up-combination is a core deterministic IDU permuton. Consider an up-combination $\alpha_1 I \oUp \beta_1 D \oUp \gamma_1 U \oUp \dots \oUp \alpha_r I \oUp \beta_r D \oUp \gamma_r U$ (where some coefficients may be equal to 0, meaning that the corresponding terms are absent).
Coefficients $\alpha_i, \beta_i, \gamma_i$ define a partition of $[0,1]$ into intervals $A_i, B_i, C_i$. The corresponding family $\mathcal{H}$ in the definition of the core deterministic permuton consists of all segments $A_i$ and $B_i$ (but not $C_i$); intervals $A_i$ are labeled with $I$ and intervals $B_i$ are labeled with $D$.

As a remark, we note that it is possible that the complement of $\bigcup H_i$ -- informally the uniform part -- has positive measure yet does not contain any intervals.
That is the reason why we have not included any segments labeled $U$ in the definition of the core deterministic IDU permuton.

It is convenient to define an alternative sampling procedure for core deterministic IDU permutons.

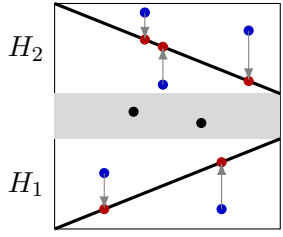
\begin{figure}
  \centering

\begin{tikzpicture}[x=3cm,y=3cm]
  \draw (0,0) rectangle (1,1);

  \fill[gray!30] (0,0.4) rectangle (1,0.6);

  \path (0,0) -- node[left]{$H_1$} (0,0.4);
  \path (0,0.6) -- node[left]{$H_2$} (0,1);

  \draw[very thick] (0,0) -- (1,0.4);

  \filldraw[darkred] (0.22,0.088) circle (0.02);
  \filldraw[darkred] (0.74,0.296) circle (0.02);

  \filldraw[darkblue] (0.22,0.248) circle (0.02);
  \draw[->, >=Latex,gray] (0.22,0.248) -- (0.22,0.088);

  \filldraw[darkblue] (0.74,0.088) circle (0.02);
  \draw[->, >=Latex,gray] (0.74,0.088) -- (0.74,0.296);

  \draw[very thick] (0,1.0) -- (1,0.6);

  \filldraw[darkred] (0.40,0.84) circle (0.02);
  \filldraw[darkred] (0.48,0.808) circle (0.02);
  \filldraw[darkred] (0.86,0.656) circle (0.02);

  \filldraw[darkblue] (0.40,0.96) circle (0.02);
  \draw[->, >=Latex,gray] (0.40,0.96) -- (0.40,0.84);

  \filldraw[darkblue] (0.48,0.64) circle (0.02);
  \draw[->, >=Latex,gray] (0.48,0.64) -- (0.48,0.808);

  \filldraw[darkblue] (0.86,0.88) circle (0.02);
  \draw[->, >=Latex,gray] (0.86,0.88) -- (0.86,0.656);

  \filldraw[black] (0.35,0.52) circle (0.02);
  \filldraw[black] (0.65,0.47) circle (0.02);
\end{tikzpicture}

\caption{The figure illustrates the procedure for obtaining $Y_i$ from $Y_i''$. Blue dots represent points $(X_i, Y_i'')$, and red dots represent points $(X_i, Y_i)$. Black dots in the gray region represent points $(X_i, Y_i) = (X_i, Y_i'')$ that lie outside all segments in $\mathcal{H}$.}
\label{fig:Yprime}
\end{figure}

\begin{claim}[Alternative Sampling Procedure]\label{claim:alt-sampling-coreIDU}
Consider a core deterministic IDU permuton $B$ and a parameter $n$. Consider the following two sampling procedures:
\begin{itemize}
\item \textbf{Standard sampling.} Sample $n$ points $(X_i, Y_i)$ independently from the measure $B$. Let $\rho$ be the pattern of $\{(X_i,Y_i)\}_{i \in [n]}$.
\item \textbf{Alternative sampling.} Sample random variables $Y_i'$ independently and uniformly from $[0,1]$. Define a permutation $\rho'\in \Sn$ as follows: $\rho'(i) < \rho'(j)$ if one of the following holds:
\begin{itemize}
\item $Y_i' < Y_j'$ and there is no interval $H \in \cal H$ containing both $Y_i'$ and $Y_j'$
\item $i < j$ and $Y_i', Y_j' \in H$ for some $H \in \cal H$ labeled $I$
\item $i > j$ and $Y_i', Y_j' \in H$ for some $H \in \cal H$ labeled $D$
\end{itemize}
\end{itemize}
Then the distributions over permutations $\rho$ and $\rho'$ produced by the two procedures are identical.
\end{claim}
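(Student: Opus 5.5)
We will prove the claim by coupling the two procedures: we build $(X_i,Y_i)$ from the $Y_i'$ (plus extra independent randomness) so that the pair has law $B^{\otimes n}$ and the pattern of the points equals $\rho'$ almost surely. The plan follows Figure~\ref{fig:Yprime}. First, sample the $Y_i'$ i.i.d.\ uniform on $[0,1]$. For each index $i$ with $Y_i'$ outside $\bigcup\mathcal H$, set $Y_i=Y_i'$ and draw $X_i$ uniformly and independently. For each interval $H=(a,b)\in\mathcal H$, let $S_H=\{i: Y_i'\in H\}$. Conditioned on $S_H$, the values $\{Y_i'\}_{i\in S_H}$ are i.i.d.\ uniform on $H$. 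Draw fresh i.i.d.\ uniform $X_i\in[0,1]$ for $i\in S_H$. Then reassign the multiset $\{Y_i'\}_{i\in S_H}$ among the indices of $S_H$ according to the $X$-order. If $H$ is labeled $I$, the index with the $r$-th smallest $X_i$ receives the $r$-th smallest value. If it is labeled $D$, it receives the $r$-th largest. Call the results $Y_i$; equivalently, set $Y_i=a+(b-a)X_i$ (or $b-(b-a)X_i$) with $X$ matched to values through order statistics.

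\textbf{Step 1 (correct law).} We must show that $(X_i,Y_i)_{i\in[n]}$ are i.i.d.\ with law $B$. The key fact is that within $H$, the set of values $\{Y_i'\}_{i\in S_H}$ is distributed as $|S_H|$ i.i.d.\ uniforms on $H$. Under $B$, conditioned on $y\in H$, the point is $(\lambda,(1-\lambda)a+\lambda b)$ with $\lambda$ uniform, i.e., a deterministic monotone function of a uniform $x$. So for fixed $S_H$, sampling i.i.d.\ from $B$ restricted to $H$ amounts to drawing i.i.d.\ uniform $x$'s and applying the map. Whether we then relabel values by the $X$-order or not, the joint law is exchangeable and matches. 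Strictly, the cleanest way to see this is to define directly $Y_i=\phi_H(X_i)$, with $\phi_H$ the increasing or decreasing affine bijection $[0,1]\to H$. The vector $(Y_i)_{i\in S_H}$ is then again i.i.d.\ uniform on $H$, so the joint law of $(X_i,Y_i)$ is exactly $B^{\otimes n}$: the group assignments $S_H$ have the correct multinomial law and are independent across $H$, and the outside points are uniform products. Thus the standard procedure is reproduced, provided we compare patterns instead of the specific coupling.

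\textbf{Step 2 (patterns agree in law).} With $Y_i=\phi_H(X_i)$, compare the pattern $\rho$ of $(X_i,Y_i)$ to the rule defining $\rho'$ applied to the i.i.d.-uniform vector $(Y_i)$. Here $\rho(i)<\rho(j)$ means that the point with the $i$-th smallest $X$ has the smaller $Y$, so we index points by $X$-rank. For points in different intervals, or outside $\bigcup\mathcal H$, the $Y$-comparison is exactly the first bullet. For two points in the same $I$-interval, $Y$ is increasing in $X$, so the one with smaller $X$-rank has smaller $Y$, matching the second bullet. The $D$ case matches the third bullet. It remains to check that after re-indexing by $X$-rank the $Y$-values are still i.i.d.\ uniform, which is what the alternative procedure uses. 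In the alternative procedure, though, only the membership pattern and the between-group order matter, and within a group the order is determined by the index. So it suffices to check that the function $\rho'$ of $(Y'_1,\dots,Y'_n)$, which ignores within-interval values, has the same law as when applied to $(Y_{\text{rank}^{-1}(1)},\dots)$. This holds because the $Y$'s are i.i.d.\ uniform and independent of the $X$-ranks once we condition on the interval memberships: the $X_i$ are i.i.d.\ and exchangeable, so the sorting permutation is uniform and independent of the multiset of $(X,Y)$-pairs. Applying a uniform independent permutation to i.i.d.\ variables preserves their law.

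\textbf{Main obstacle.} The delicate point is this exchangeability argument in Step 2. The $X$-sort permutation is not independent of the $Y$'s pointwise; it is independent only of the unordered collection. The hypothesis being used is that $\rho'$, viewed as a function of the $Y$'s indexed by $X$-rank, has the same law as on an i.i.d.\ sequence. I would make this rigorous by noting that the list of pairs $(X_i,Y_i)$ is i.i.d.\ from $B$. Sorting by $X$ gives $Y$-values in $X$-order, and this sequence is distributed as follows: pick the interval labels i.i.d., then within each interval take the ordered values $\phi_H(\text{sorted uniforms})$. Then verify directly that this produces the same distribution of $\rho'$, by computing the probability of any outcome as a product of multinomial interval assignments times the outside-point order. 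Measure-zero ties are ignored throughout.
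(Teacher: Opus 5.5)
Your proof is correct in substance and uses the same coupling as the paper, but in the opposite order. The paper draws independent uniform $X_i$ and lets $\sigma$ be their sorting permutation. It first relabels $Y''_{\sigma(i)}=Y_i'$; these are still i.i.d.\ uniform and independent of the $X$'s, because $\sigma$ is independent of $Y'$. Only then does it move each $Y_i''$ lying in some $H\in\mathcal H$ onto the segment via $X_i$. That pre-relabeling is exactly what makes $\rho=\rho'$ hold pointwise, as your opening sentence promises but your construction does not deliver.

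Setting $Y_i=\phi_H(X_i)$ directly, you only get that $\rho$ is the output of the alternative-sampling rule on the input $(Y'_{\sigma(1)},\dots,Y'_{\sigma(n)})$, so one step in law remains. That step is immediate, and your ``main obstacle'' is not real. The rule sees each input coordinate only through the interval of $\mathcal H$ containing it, or its exact value if there is none. In your construction $Y_i$ and $Y_i'$ carry the same such information. Moreover, $\sigma$ is a function of the fresh $X$'s and hence independent of $Y'$, so $(Y'_{\sigma(r)})_{r}\stackrel{d}{=}(Y'_r)_{r}$, and you are done.

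You should remove three incorrect intermediate claims:
\begin{enumerate}
\item The $Y$-values re-indexed by $X$-rank are \emph{not} i.i.d.\ uniform; inside an $I$-interval they come out increasing.
\item Independence of $\sigma$ from the unordered collection of pairs is irrelevant, since the $X$-sorted sequence of $Y$-values is a deterministic function of that collection.
\item Your first construction, which reassigns the multiset of $Y'$-values by $X$-order, does not produce $B$-distributed points. It is ``equivalent'' to $Y_i=\phi_H(X_i)$ only at the level of patterns.
\end{enumerate}
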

\begin{proof}
We couple the two distributions of permutations.
Independently and uniformly sample $Y_i'$ and $X_i$ for $i \in [n]$. Sort all $X_i$:
$$X_{\sigma(1)} < \dots < X_{\sigma(n)}$$
(with probability $1$, there are no ties). 
Let $Y''_{\sigma(i)} = Y_i'$ for $i \in [n]$.
Since $\sigma$ is uniformly distributed in $\Sn$ and independent of the random variables $Y_i'$, the variables $Y_i''$ are also uniformly distributed on $[0,1]$, and all variables $\{X_i, Y_j''\}$ are independent.

We are now ready to define $Y_i$ so that the distribution of $\{(X_i, Y_i)\}_i$ matches that produced by the standard sampling procedure (see Figure~\ref{fig:Yprime}).
\begin{itemize}
  \item If $Y_i''$ lies outside all intervals $H \in \cal H$, let $Y_i = Y_i''$. 
  \item If $Y_i'' \in [a,b] \in \mathcal{H}$ and $[a,b]$ is labeled $I$, set $Y_i = a + (b-a) X_i$.
  \item If $Y_i'' \in [a,b] \in \mathcal{H}$ and $[a,b]$ is labeled $D$, set $Y_i = b - (b-a) X_i$.
\end{itemize}

Since the points $(X_i, Y_i'')$ are independent and each $(X_i, Y_i)$ is determined by $(X_i, Y_i'')$, the points $(X_i, Y_i)$ are also independent. Moreover, each $X_i$ is uniformly distributed on $(0,1)$, which is the $x$-marginal of $B$, as required.

We now verify that the conditional distributions of $Y_i$ given $X_i$ are the same in our procedure and in the distribution $B$.

In both distributions, the conditional probability that $Y_i \in [a,b] \in \mathcal{H}$ equals $b-a$, and when this event occurs, either $Y_i = a + (b-a) X_i$ or $Y_i = b - (b-a) X_i$, depending on the label of $[a,b]$. The probability that $Y_i \in U = [0,1] \setminus \bigcup_{H \in \mathcal{H}} H$ is $\lambda(U)$, and in both cases, $Y_i$ is uniformly distributed on $U$ conditioned on $Y_i \in U$.

It remains to verify that $\rho$ and $\rho'$ are equal. Consider $i < j$. We have $\rho(i) < \rho(j)$ if and only if $Y_{\sigma(i)} < Y_{\sigma(j)}$. We consider several cases.
\begin{itemize}
  \item $Y_i' = Y''_{\sigma(i)}$ and $Y_j' = Y''_{\sigma(j)}$ do not both lie in the same interval $[a,b] \in \mathcal{H}$. Then $Y_{\sigma(i)} < Y_{\sigma(j)}$ if and only if
  $$Y_i' = Y''_{\sigma(i)} < Y''_{\sigma(j)} = Y_j'.$$
  \item $Y_i', Y_j' \in [a,b] \in \mathcal{H}$ and $[a,b]$ is labeled $I$. Then $Y_{\sigma(i)} = a + (b-a) X_{\sigma(i)}$ and $Y_{\sigma(j)} = a + (b-a) X_{\sigma(j)}$. Since the $X_{\sigma(i)}$ are sorted and $b-a > 0$, we have $Y_{\sigma(i)} < Y_{\sigma(j)}$, and thus $\rho(i) < \rho(j)$.
  \item $Y_i', Y_j' \in [a,b] \in \mathcal{H}$ and $[a,b]$ is labeled $D$. Then $Y_{\sigma(i)} = b - (b-a) X_{\sigma(i)}$ and $Y_{\sigma(j)} = b - (b-a) X_{\sigma(j)}$. Since $-(b-a) < 0$, we have $Y_{\sigma(i)} > Y_{\sigma(j)}$, and thus $\rho(i) > \rho(j)$.
\end{itemize}

We conclude that for $i < j$, we have $\rho(i) < \rho(j)$ if and only if either (i) $Y_i', Y_j'$ do not lie in the same interval of $\mathcal{H}$ and $Y_i' < Y_j'$, or (ii) both lie in an interval labeled $I$. This is precisely the condition for $\rho'(i) < \rho'(j)$.
\end{proof}

Now we state the main theorem of this section.
\begin{theorem}\label{thm:main-strongIDU-classification}
Consider a strong IDU permuton $R$. Then $R$ is a probability distribution over core deterministic IDU permutons.
\end{theorem}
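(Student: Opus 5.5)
The plan is to use the strong IDU property to study the sampled permutation through an exchangeable array, and to use de Finetti-type arguments to reconstruct a core deterministic IDU permuton from it. Sample $P\sim R$ and i.i.d.\ points $(X_i,Y_i)\sim P$. Write $V_i$ for the rank of $Y_i$, i.e.\ $V_i=Y_i$, which is uniform because the marginals are uniform.

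Index the points by their $x$-order. The strong IDU property says that the pattern on any index set $J$ has a law that depends only on $|J|$, not on which positions are chosen. Equivalently, for the infinite sequence sorted by $X$, the process $(Y_{(1)},Y_{(2)},\dots)$ is \emph{spreadable}: every subsequence has the same finite-dimensional pattern laws. To make this literal, we pass to the infinite random order on $\bbN$ given by the patterns, which is consistent by Claim~\ref{claim:up-combination-is-strong-IDU}'s companion consistency claim. By the Ryll-Nardzewski theorem, a spreadable sequence is exchangeable. Here the object is a random linear order $\prec$ on $\bbN$, where $i\prec j$ means $y$-rank$(i)<y$-rank$(j)$. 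Invariance under increasing injections of $\bbN$, applied to the structure $(\bbN,<,\prec)$, therefore yields invariance under arbitrary permutations of $\bbN$, provided we keep track of the natural order $<$ as well.

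The key step is to represent this exchangeable structure. Consider the random order $\prec$ together with the binary relation ``$i<j$''. Exchangeability of the pair (swap labels, keep relation types) lets us apply the Aldous–Hoover representation. There exist i.i.d.\ uniform labels $U_i$ and pair variables $U_{ij}$ such that the comparison between $i$ and $j$ in $\prec$ equals $f(U_i,U_j,U_{\{ij\}},\mathbf 1[i<j])$ almost surely, after conditioning on the directing random element; this conditioning is where the randomness of $R$ enters. Transitivity of $\prec$ for all triples then forces strong structure on $f$. I would argue the following in order. First, define $Y'_i$ as the ``position'' of $i$, namely the limiting frequency of $j$ with $j\prec i$, which exists by exchangeability and the law of large numbers. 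Second, whenever $Y'_i\neq Y'_j$, the order must agree with the order of $Y'$, since otherwise the frequencies would contradict each other. Third, on level sets of the $Y'$-distribution the order is decided by $\mathbf 1[i<j]$ or its reverse. Atoms of the law of $Y'$ become intervals: each atom of mass $m$ spreads to an interval of length $m$. Transitivity among three points in the same atom forces a single label per atom, $I$ or $D$, rather than a coin flip, because a random tournament on three points is not transitive with positive probability.

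Finally, I would reparametrize $Y'$ by its distribution function so that it is uniform. The atoms then become the disjoint intervals $H_i$ with labels, and the continuous part becomes the uniform region. By the alternative sampling procedure of Claim~\ref{claim:alt-sampling-coreIDU}, the resulting pattern law conditioned on the directing element is exactly that of a core deterministic IDU permuton $B$. Since both $R$ and the mixture of these $B$'s are consistent and have equal pattern laws, Theorem~\ref{thm:random-permuton-consistency} (uniqueness of the random permuton) shows that $R$ is that mixture.

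I expect the main obstacle to be the middle step: rigorously deriving, from transitivity alone, that ties in $Y'$ are resolved deterministically by the natural order within each atom, and that non-ties follow $Y'$. I would first try to avoid full Aldous–Hoover and instead work directly with the conditional limiting frequencies, proving these two facts via a second-moment argument on triples.
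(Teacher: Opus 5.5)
\textbf{There is a genuine gap in your middle step: the position $Y'_i$ you define is too coarse.} Because $\{j : j<i\}$ is finite, the limiting frequency of $\{j : j\prec i\}$ only sees $j>i$, so it is one-sided.

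\begin{itemize}
\item In an $I$-block, every later element of the block lies above $i$. So $Y'$ collapses an $I$-block $[a,b]$ to its bottom endpoint $a$.
\item In a $D$-block, every later element of the block lies below $i$. So $Y'$ collapses a $D$-block $[a,b]$ to its top endpoint $b$.
\item Hence a $D$-block sitting immediately below an $I$-block is merged with it into a single atom.
\end{itemize}

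Concretely, for $\frac12 D\oUp\frac12 I$ you get $Y'_i=\frac12$ almost surely for every $i$. A $D$-element has the later $D$-elements (density $\frac12$) below it. An $I$-element has all $D$-elements and no later $I$-element below it.

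On this single atom the order is not a function of $\mathbf 1[i<j]$. Two $D$-elements are reversed, two $I$-elements are not, and every $D$-element lies below every $I$-element. So your third claim and the ``single label per atom'' claim are false. Your construction would output $I$ or $D$ instead of $\frac12 D\oUp\frac12 I$.

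The tournament argument cannot repair this. It only rules out orientations decided by the independent pair variables $U_{ij}$. Here the orientation is decided by the labels $(T_i,T_j)$, in a perfectly transitive way.

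\textbf{What is missing is the paper's equivalence relation on labels.} Working with Kallenberg's $f(S,T_i,T_j,U_{ij})$, the paper proceeds as follows.
\begin{itemize}
\item It first shows (item~6 of Lemma~\ref{lem:equiv-relations}) that almost surely $f$ does not depend on $U_{ij}$.
\item It then declares $t_1\sim_s t_2$ when the comparison of $t_1$ and $t_2$ is decided by which of them comes first in $<$.
\item It proves that $\sim_s$ is an equivalence relation, that the remaining comparisons linearly order the classes, and that each class carries a single label $I$ or $D$.
\item Only then does it define the position $\eta_s(t)=\lambda(M_s(t))$. This counts the classes below $t$ plus the part of $t$'s own class with smaller label.
\end{itemize}

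Some two-sided construction of this kind is needed to separate adjacent blocks. For instance, average the probability that a fresh element placed before $i$ lies below $i$ with the corresponding probability for a fresh element placed after $i$; this puts each block at its midpoint. Even then, you still need the class/label argument on top of it.

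Two smaller points:
\begin{itemize}
\item Ryll-Nardzewski concerns sequences, and $(\bbN,<,\prec)$ cannot be exchangeable because it contains the deterministic order $<$. The representation you actually write down is Kallenberg's theorem for contractible arrays, which is what the paper cites.
\item Your second claim, that distinct $Y'$-values are ordered consistently with $\prec$, is fine: it follows from inclusion of down-sets.
\end{itemize}
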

\subsection{Proof of Theorem~\ref{thm:main-strongIDU-classification}}
\begin{proof}
First, we apply the same trick as in the proof of Lemma~\ref{lem:fix-one-density}.
Define an infinite sequence of disjoint intervals in $(0,1)$ as follows.
\[
Q_t = \bigl(1 - 2^{1-t},\, 1 - 2^{-t}\bigr), \quad t = 1, 2, \dots.
\]
Consider the following sampling procedure:
\begin{itemize}
\item Sample a deterministic permuton $P$ from $R$.
\item Independently sample $(X_i, Y_i)$ from $P$ conditioned on $X_i \in Q_i$ for every positive integer $i$.
\item For every $i < j$, define $Z_{ij}$ to be $1$ if $Y_i < Y_j$ and $-1$ if $Y_i > Y_j$ (and $0$ if $Y_i = Y_j$; this happens with probability~$0$ since the $y$-marginal of $P$ is uniform on $[0,1]$).
\end{itemize}
We will now show that $Z_{ij}$ is a contractible array. Recall the definition of a contractible array from the theory of exchangeable random variables and arrays (see~\cite{Kallenberg92,Kallenberg05}).
\begin{definition}[Contractible array]
A random array $(Z_{ij})_{1\leq i<j <\infty}$ taking values in $\mathbb{R}$ is called (jointly) \emph{contractible} if, for every strictly increasing map
$c : \bbN \to \bbN$,
\[
(Z_{ij})_{1\leq i<j <\infty}
\stackrel{d}{=}
(Z_{c(i),c(j)})_{1\leq i<j <\infty}.
\]
\end{definition}

We will need Kallenberg's representation theorem for contractible arrays~\cite{Kallenberg92}.

\begin{theorem}[Kallenberg's representation for contractible arrays, restated]
Let $(Z_{ij})_{1\leq i<j <\infty}$ be a contractible array.
There exists a measurable function $f : [0,1]^4 \to\mathbb{R}$ so that the following holds.
Consider independent random variables $S$, $(T_i)_{i\in \bbN}$, and $(U_{ij})_{1\leq i < j <\infty}$
uniformly distributed on $[0,1]$. Then
\[
(Z_{ij})_{1\leq i < j <\infty}\ \stackrel{d}{=}\ \bigl(f(S, T_i, T_j, U_{ij})\bigr)_{1\leq i < j <\infty}.
\]
\end{theorem}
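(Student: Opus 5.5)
Since this is Kallenberg's theorem, quoted from~\cite{Kallenberg92}, within this paper we would simply invoke it; still, here is how I would prove it. The plan is to reduce contractibility to the better-understood setting of \emph{exchangeable} arrays and then apply the Aldous--Hoover representation.

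Concretely, I would carry out three steps. First, extend the index set from $\bbN$ to the rationals $\mathbb{Q}$. Contractibility says exactly that the law of $(Z_{ij})$ is invariant under every strictly increasing reindexing. By Kolmogorov's extension theorem the finite-dimensional laws are consistent, so they define an array $(Z_{ab})_{a<b,\ a,b\in\mathbb{Q}}$ whose law is invariant under every order-preserving injection of $\mathbb{Q}$ into itself. Second, use this larger symmetry to prove a conditional-independence structure. Let $\mathcal{T}$ be the tail $\sigma$-field generated by $Z$ restricted to indices in $(r,\infty)$, intersected over $r$. Conditionally on $\mathcal{T}$, and with auxiliary $\sigma$-fields generated by the entries on each ``row'' near an index $i$, I would show that the rows are conditionally i.i.d.\ and that the entries $Z_{ij}$ are conditionally independent given the row $\sigma$-fields. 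This mimics the Ryll-Nardzewski argument, which shows that a contractible sequence is exchangeable by comparing $(\xi_1,\xi_{n+1},\xi_{n+2},\dots)$ with $(\xi_2,\xi_{n+1},\dots)$ and using a reverse martingale. Third, once the conditional independence and conditional i.i.d.\ structure are established, apply the standard coding lemma. Any random element can be written as a measurable function of a uniform variable and of whatever it is conditioned on. This produces $S$ coding $\mathcal{T}$, $T_i$ coding row $i$ given $S$, and $U_{ij}$ coding $Z_{ij}$ given $(S,T_i,T_j)$, which yields the representation $f(S,T_i,T_j,U_{ij})$ in distribution.

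An alternative route would be to first prove that the contractible array is jointly exchangeable, that is, invariant under all permutations acting on both indices, with $Z_{ji}$ defined by symmetry. One could then quote Aldous--Hoover for jointly exchangeable arrays directly. I would not rely on this, because for non-symmetric two-index arrays the ordered structure matters; for instance, $Z_{ij}=\mathrm{sign}(Y_i-Y_j)$ is not symmetric. The representation with ordered arguments $(T_i,T_j)$ is precisely what preserves this order information. So I would follow the direct route through $\mathbb{Q}$-indexing and the tail $\sigma$-fields.

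The main obstacle is the second step: proving that the entries are conditionally independent given $S$ and the row variables using only order-preserving symmetries. For sequences, a single reverse-martingale argument suffices. For arrays, one has to handle the interaction between row and column $\sigma$-fields and show that conditioning on ``far-away'' rows does not change the conditional law of $Z_{ij}$. For this I would first try Kallenberg's device: show that the conditional law of $Z_{ij}$ given a growing family of far-away entries is a reverse martingale whose limit does not depend on $j$ beyond $T_j$. Since the limit is measurable with respect to the appropriate $\sigma$-fields, equality of $L^2$ norms then forces conditional independence.
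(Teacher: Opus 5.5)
Invoking the theorem is exactly what the paper does. It restates Kallenberg's result and cites \cite{Kallenberg92} without proof, so your proposal coincides with the paper on this point.

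Your supplementary sketch follows the standard architecture for such representation theorems: extend to $\mathbb{Q}$-indices, derive conditional independence from the order-preserving symmetries, then code $\sigma$-fields by uniform variables. It remains an outline, however, because the conditional-independence step is not carried out, and you correctly identify that step as the whole difficulty. Three points in the sketch should be corrected.

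(i) The rows themselves cannot be conditionally i.i.d. Rows $i$ and $j$ share the entry $Z_{ij}$. Even the upper-triangular rows $(Z_{ik})_{k>i}$ and $(Z_{jk})_{k>j}$ are correlated through the common indices $k$. What must be conditionally i.i.d.\ given $\mathcal{T}$ are row-specific $\sigma$-fields $\mathcal{F}_i$. Each $\mathcal{F}_i$ should be generated by entries whose partner index ranges over a set of rationals reserved for $i$, disjoint for different $i$ and avoiding the other integers; this is what the $\mathbb{Q}$-extension is really for. The conditional independence of $Z_{ij}$ from everything else should then be asserted given $\mathcal{T}\vee\mathcal{F}_i\vee\mathcal{F}_j$.

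(ii) The clean tool for your last step is the contraction lemma. If $(\xi,\eta)\stackrel{d}{=}(\xi,\zeta)$ and $\sigma(\eta)\subseteq\sigma(\zeta)$, then $\xi$ and $\zeta$ are conditionally independent given $\eta$, by equality of the $L^2$ norms of the two conditional expectations. In a reverse-martingale formulation you condition on a decreasing family, not a growing one.

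(iii) Your reason for discarding the alternative route is mistaken, because the relevant extension is not the symmetric one. A contractible array on pairs $i<j$ extends to a jointly exchangeable array on all ordered pairs $(i,j)$ with $i\neq j$, in which $Z_{ji}$ is a new variable rather than a copy of $Z_{ij}$. Hoover's representation for such arrays, $Z_{ij}=f(S,T_i,T_j,U_{\{i,j\}})$, allows $f$ to be non-symmetric, and restricting it to $i<j$ gives exactly the statement. Your own example extends to all ordered pairs by the same formula $\mathrm{sign}(Y_i-Y_j)$ and is jointly exchangeable, so asymmetry is no obstruction. The genuine reason this route is not a shortcut is that, via Hoover's theorem, the existence of the extension is equivalent to the representation. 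Proving it is the content of Kallenberg's theorem.
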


We now verify that the array $(Z_{ij})$ defined above is contractible. Indeed, let $c$ be an increasing function. For every $k$, the distributions of patterns $(X_1, Y_1), \dots, (X_k, Y_k)$ and $(X_{c(1)}, Y_{c(1)}), \dots, (X_{c(k)}, Y_{c(k)})$ are the same, since $R$ defines a strong IDU transformation.
Since $(Z_{ij})_{1\leq i < j\leq k}$ is determined by the pattern of points $(X_1, Y_1), \dots, (X_k, Y_k)$, we get that 
$$(Z_{ij})_{1\leq i < j\leq k}
\ \stackrel{d}{=}\ 
(Z_{c(i), c(j)})_{1\leq i < j\leq k}.$$
Since this holds for every $k$,
$$(Z_{ij})_{i<j} \ \stackrel{d}{=}\ (Z_{c(i), c(j)})_{i<j}.$$
We conclude that $(Z_{ij})$ is contractible.
Applying Kallenberg's representation theorem, we get that there exists a measurable function $f : [0,1]^4 \to \{-1,0,1\}$ such that 
\[
Z_{ij} =f(S, T_i, T_j, U_{ij})
\]
almost surely (where all random variables are defined on an appropriate refinement of the original probability space). Further, we may assume that this equality holds for all values of $S$ almost surely (over the random choice of $T_i$, $T_j$, and $U_{ij}$) by removing a measure $0$ set of $S$ values. 
We will later show that $f$ does not depend on the last argument $U_{ij}$ with probability 1 (over the choice of $S$, $T_i$ and $T_j$).

For fixed $s$, $t_1$, and $t_2\neq t_1$, write $t_1 \leftarrow_s t_2$ if $f(s, t_1, t_2, U) = 1$ with positive probability over the random choice of $U$.
Similarly, write $t_1 \rightarrow_s t_2$ if $f(s, t_1, t_2, U) = -1$ with positive probability over the random choice of $U$.
If both $t_1 \leftarrow_s t_2$ and $t_1 \rightarrow_s t_2$, write $t_1 \leftrightarrow_s t_2$. If only $t_1 \leftarrow_s t_2$, write $t_1 \Leftarrow_s t_2$; if only $t_1 \rightarrow_s t_2$, write $t_1 \Rightarrow_s t_2$. (Importantly, the events $T_1 \leftarrow_S T_2$ and $T_2 \rightarrow_S T_1$ are generally different; likewise, the events $T_1 \Leftarrow_S T_2$ and $T_2 \Rightarrow_S T_1$ need not coincide.)

We now show that $t_1 \leftarrow_s t_2$ satisfies many properties of a linear order. Importantly, however, it may happen with positive probability that $T_1 \leftarrow_S T_2$ and $T_2 \leftarrow_S T_1$.

\begin{lemma}\label{lem:equiv-relations}
The following events hold.
\begin{enumerate}
\item For every $t_1 \neq t_2$, $t_1 \leftarrow_S t_2$ or $t_1 \rightarrow_S t_2$  (always).
\item For every $t_1 \neq t_2$, at most one of the following holds: $t_1 \Leftarrow_S t_2$ or $t_1 \Rightarrow_S t_2$  (always).
\item If $T_1 \leftarrow_S T_2$ and $T_2 \leftarrow_S T_3$, then $T_1 \Leftarrow_S T_3$ (almost surely).
\item If $T_1 \leftarrow_S T_2$ and $T_1 \rightarrow_S T_3$, then $T_2 \Rightarrow_S T_3$ (almost surely).
\item If $T_1 \rightarrow_S T_2$ and $T_1 \leftarrow_S T_3$, then $T_2 \Leftarrow_S T_3$ (almost surely).
\item The probability that $T_1 \leftrightarrow_S T_2$ is 0. Equivalently, exactly one of the conditions $T_1 \Leftarrow_S T_2$ and $T_1 \Rightarrow_S T_2$ holds with probability 1. 
\end{enumerate}
\end{lemma}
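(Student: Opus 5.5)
Items~1 and~2 follow directly from the definitions. Since $f$ takes values in $\{-1,0,1\}$ and, for $t_1\neq t_2$, the value $0$ occurs with probability zero in $U$ (we may remove a null set of $s$ and arrange this), $f(s,t_1,t_2,U)$ equals $1$ or $-1$ with total probability $1$, so at least one of $\leftarrow_s$, $\rightarrow_s$ holds. By definition, $\Leftarrow_s$ and $\Rightarrow_s$ are mutually exclusive. To make Item~1 hold ``always'' I would redefine $f$ on a null set, for instance set $f=1$ wherever it is $0$, which does not change the distributional representation.

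For Items~3--5 the plan is to use transitivity of the real order $<$ on $Y_1,Y_2,Y_3$ together with conditional independence in the Kallenberg representation. Given $(S,T_1,T_2,T_3)$, the variables $U_{12},U_{13},U_{23}$ are independent. Consider Item~3. Suppose, on a set of positive measure of $(s,t_1,t_2,t_3)$, that $t_1\leftarrow_s t_2$, $t_2\leftarrow_s t_3$, and also $t_1\rightarrow_s t_3$. By independence, the event $\{Z_{12}=1,\ Z_{23}=1,\ Z_{13}=-1\}$ then has positive probability. This event says $Y_1<Y_2<Y_3$ and $Y_1>Y_3$, which is impossible since $Z_{ij}$ is the sign of $Y_j-Y_i$. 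Hence almost surely $t_1\not\rightarrow_s t_3$, and with Item~1 this gives $T_1\Leftarrow_S T_3$.

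Items~4 and~5 go the same way. In Item~4, $Z_{12}=1$, $Z_{13}=-1$ and $Z_{23}=1$ would give $Y_1<Y_2<Y_3<Y_1$, so almost surely $T_2\not\leftarrow_S T_3$, i.e.\ $T_2\Rightarrow_S T_3$. In Item~5, $Z_{12}=-1$, $Z_{13}=1$ and $Z_{23}=-1$ would give $Y_2<Y_1<Y_3<Y_2$, so almost surely $T_2\Leftarrow_S T_3$. One technical point remains: the representation holds only in distribution, and we need joint statements about triples. I handle this by working with the representing array $f(S,T_i,T_j,U_{ij})$ directly. It has the same law as $(Z_{ij})$, so any sign pattern on indices $1,2,3$ that is impossible for $Z$ has probability zero for the representation. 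I would also check measurability of the sets $\{(s,t_1,t_2): t_1\leftarrow_s t_2\}$, which holds by Fubini.

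Item~6 is where I expect the main obstacle, since it is the genuinely non-trivial step. Here I would use many indices rather than three, and exploit the fact that $T_i$ are i.i.d.\ while contractibility orders the indices. Suppose $p=\Pr(T_1\leftrightarrow_S T_2)>0$. Take indices $1<2<3$ and condition on $T_1\leftrightarrow_S T_2$. Item~3 applied to $(1,2,3)$, using either direction of $\leftrightarrow$ for the pair $(1,2)$, forces constraints on $(1,3)$ and $(2,3)$. If $T_2\leftarrow_S T_3$, then from $T_1\leftarrow T_2$ we get $T_1\Leftarrow T_3$. From $T_1\rightarrow T_2$ together with $T_2\leftarrow T_3$, Item~5-type reasoning applied with the roles permuted gives the relation of $T_1$ and $T_3$ in the opposite direction, so $T_1\rightarrow T_3$. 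These two conclusions contradict each other. The case $T_2\rightarrow_S T_3$ is symmetric. Hence, given $T_1\leftrightarrow T_2$, almost surely every $T_3$ is in a contradictory configuration. This contradicts $p>0$, since $T_3$ is independent of $(S,T_1,T_2)$ and some relation between $T_2$ and $T_3$ always holds by Item~1.

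If this direct derivation does not close cleanly, I would fall back on the following argument. Use the additional triple-patterns (all eight sign patterns on $3$ indices, of which exactly two are cyclic and therefore forbidden) to derive, for a.e.\ $s$, that the $\leftrightarrow$ pairs form a null set. The key identity would be that $\leftrightarrow$ for a pair $(1,2)$, combined with any third index, must realize a cyclic pattern with positive probability.
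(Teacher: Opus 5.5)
Your Items~1--5 are fine and follow the paper's argument: forbidden cyclic sign patterns on three indices, plus independence of the $U_{ij}$ given $(S,T_1,T_2,T_3)$. Your redefinition of $f$ on the null set $\{f=0\}$ is even slightly more careful than the paper.

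The gap is in Item~6. Your key step, ``from $T_1\rightarrow_S T_2$ together with $T_2\leftarrow_S T_3$ we get $T_1\rightarrow_S T_3$'', is invalid. The relations $Z_{12}=-1$ and $Z_{23}=1$ say only that $Y_2<Y_1$ and $Y_2<Y_3$, which leaves the order of $Y_1$ and $Y_3$ free, so none of Items~3--5 applies. More generally, one $\leftrightarrow$ pair plus one extra index never yields a contradiction. Consider the configuration $T_1\leftrightarrow_S T_2$, $T_1\Leftarrow_S T_3$, $T_2\Leftarrow_S T_3$, i.e.\ sign patterns $(Z_{12},Z_{13},Z_{23})\in\{(1,1,1),(-1,1,1)\}$, with $Y_3$ above two randomly ordered points. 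It satisfies Items~1--5 and contains no cyclic pattern. So the claim that, given $T_1\leftrightarrow_S T_2$, almost every $T_3$ is in a contradictory configuration is false. The ``key identity'' in your fallback is false for the same reason.

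The missing idea is to use two $\leftrightarrow$ pairs sharing an endpoint, and to show that this joint event has positive probability. Suppose $T_1\leftrightarrow_S T_2$ and $T_1\leftrightarrow_S T_3$ both hold.
\begin{itemize}
\item Item~4, applied with $T_1\leftarrow_S T_2$ and $T_1\rightarrow_S T_3$, gives $T_2\Rightarrow_S T_3$.
\item Item~5, applied with $T_1\rightarrow_S T_2$ and $T_1\leftarrow_S T_3$, gives $T_2\Leftarrow_S T_3$.
\end{itemize}
These two conclusions contradict Item~2. Positivity comes from the conditional independence of $T_2$ and $T_3$ given $(S,T_1)$. Let $h_s(t)=\Pr(t\leftrightarrow_s T)$. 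Then the joint event has probability $\mathbb{E}[h_S(T_1)^2]\ge(\mathbb{E}[h_S(T_1)])^2=p^2>0$. This is exactly the squaring argument the paper uses.

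The shared endpoint is essential. For a chain $T_1\leftrightarrow_S T_2\leftrightarrow_S T_3$ the probability is $\mathbb{E}_S\int g_S(t)\,h_S(t)\,dt$, where $g_s(t)=\Pr(T\leftrightarrow_s t)$. This integral can vanish even when $p>0$, for instance if $g_s$ and $h_s$ have disjoint supports.
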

\begin{proof}
Item 1 follows directly from the definitions of $\leftarrow_s$ and $\rightarrow_s$. The probabilities that $f(s, t_1, t_2, U) = 1$ and $f(s, t_1, t_2, U) = -1$ add up to 1, and thus at least one of them is positive.

Item 2 is trivial, since the requirements for $T_1 \Leftarrow_S T_2$ and $T_1 \Rightarrow_S T_2$ are mutually exclusive.

For item 3, observe that if $Z_{12} = 1$ and $Z_{23} = 1$, then $Y_1 < Y_2$ and $Y_2< Y_3$ (see the procedure for generating the array $Z$), and thus $Y_1 < Y_3$ and  $Z_{13} = 1$. 
Therefore, if $f(S, T_1, T_2, U_{12}) = 1$ and $f(S, T_2, T_3, U_{23}) = 1$, then $f(S, T_1, T_3, U_{13}) = 1$ (almost surely). 

Almost surely in $s$, $t_1$, $t_2$, and $t_3$, the following holds. If $t_1 \leftarrow_s t_2$ and $t_2 \leftarrow_s t_3$, then with positive probability over $(U_{12}, U_{23})$, $f(s, t_1, t_2, U_{12}) = 1$ and $f(s, t_2, t_3, U_{23}) = 1$, and therefore $f(s, t_1, t_3, U_{13}) = 1$. However, $f(s, t_1, t_3, U_{13})$ does not depend on $(U_{12}, U_{23})$ (since the random variables $U_{ij}$ are independent). Thus, $f(s, t_1, t_3, U_{13}) = 1$ almost surely. We conclude that $t_1 \Leftarrow_s t_3$.

The proofs of items 4 and 5 are completely analogous to that of item 3.

Now we prove item 6.
Assume to the contrary that $T_1  \leftrightarrow_S T_2$ with positive probability.
Then for some set $A$ of positive measure and every $s\in A$,
$T_1  \leftrightarrow_s T_2$ with positive probability. Fix $s\in A$. Now, for some set $B_s$ of positive measure and every $t_1\in B_s$, $t_1  \leftrightarrow_s T_2$ with positive probability. Thus (the probability below is over $T_2, T_3, T$ sampled independently and uniformly at random from $(0,1)$), 
$$\Pr(t_1 \leftrightarrow_s T_2
\text{ and } t_1 \leftrightarrow_s T_3)
=
\Pr(t_1 \leftrightarrow_s T)^2
>0.
$$
This inequality holds for all $s\in A$ and $t_1 \in B_s$.Therefore,  the following event happens with positive probability:
$$T_1 \leftrightarrow_S T_2  \text{ and } T_1 \leftrightarrow_S T_3.$$
Observe that on one hand, $T_1 \leftarrow_S T_2$ and $T_1 \rightarrow_S T_3$
imply that $T_2 \Rightarrow_S T_3$. On the other hand, $T_1 \rightarrow_S T_2$ and $T_1 \leftarrow_S T_3$, imply that $T_2 \Leftarrow_S T_3$. 
But these two options are mutually exclusive. We get a contradiction.
\end{proof}

Lemma~\ref{lem:equiv-relations} proves that $\Leftarrow_s$ and $\Rightarrow_s$ satisfy many properties of a linear order with probability 1.
To avoid unnecessary complications dealing with events of probability 0, we use Petrov's General Removal Lemma~\cite{Petrov} to 
redefine $\leftarrow_s$ and $\rightarrow_s$ so that all items of Lemma~\ref{lem:equiv-relations} always hold for distinct $t_1$, $t_2$, $t_3$ 
(alternatively, instead of directly using the General Removal Lemma, we can use the directed graphon theory).

We say that $t_1 \sim_s t_2$ if one of the following holds:
\begin{itemize}
  \item both $t_1 \Leftarrow_s t_2$ and $t_2 \Leftarrow_s t_1$
  \item both $t_1 \Rightarrow_s t_2$ and $t_2 \Rightarrow_s t_1$ 
  \item $t_1 = t_2$.
\end{itemize}

\begin{lemma} \label{lem}
  Relation $\sim_s$ satisfies the following properties.
\begin{enumerate}
\item The relation $\sim_s$ is an equivalence relation on $[0,1]$.
\item Relation $\Leftarrow_s$ satisfies the axioms of a total order on the equivalence classes of $\sim_s$, except that it is not necessarily reflexive: it is antisymmetric, transitive, and every two distinct elements are comparable.
\item Relation $\Rightarrow_s$ is the reverse order meaning that for $t_1\nsim_s t_2$ if $t_1 \Leftarrow_s t_2$, then $t_2 \Rightarrow_s t_1$ and vice versa.
\item For every equivalence class $C$ of $\sim_s$, either $t_1 \Leftarrow_s t_2$ for every $t_1, t_2\in C$ or $t_1 \Rightarrow_s t_2$ for every $t_1, t_2\in C$.
\end{enumerate}
\end{lemma}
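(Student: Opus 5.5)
The plan is to work with a fixed $s$. After the Petrov removal step, items 1--6 of Lemma~\ref{lem:equiv-relations} hold for all distinct $t_1,t_2,t_3$. In particular, item 6 gives that for distinct $t_1,t_2$ exactly one of $t_1\Leftarrow_s t_2$ and $t_1\Rightarrow_s t_2$ holds. Write $L(t_1,t_2)$ for ``$t_1\Leftarrow_s t_2$'' and $R(t_1,t_2)$ for ``$t_1\Rightarrow_s t_2$'', so that $R=\neg L$ off the diagonal. The key reformulation uses items 3--5 with arrows upgraded to double arrows. Item 3 says $L(t_1,t_2)\wedge L(t_2,t_3)\Rightarrow L(t_1,t_3)$. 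Item 4 says $L(t_1,t_2)\wedge R(t_1,t_3)\Rightarrow R(t_2,t_3)$. Item 5 says $R(t_1,t_2)\wedge L(t_1,t_3)\Rightarrow L(t_2,t_3)$. The analogue of item 3 for $R$ I would derive by applying items 4/5 to a suitable permutation of the roles, or by noting the symmetric argument from $Z_{12}=Z_{23}=-1\Rightarrow Z_{13}=-1$. From these, all four claims follow by case checks on triples.

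\textbf{Equivalence relation.} Reflexivity is by definition. For symmetry, the definition of $t_1\sim_s t_2$ is symmetric in $t_1,t_2$. Transitivity is the core step. Suppose $t_1\sim t_2$ and $t_2\sim t_3$ with all three distinct, and split into four cases by type. For example, if $L(t_1,t_2),L(t_2,t_1),L(t_2,t_3),L(t_3,t_2)$ hold, then $L$-transitivity gives $L(t_1,t_3)$ and $L(t_3,t_1)$. If instead $L(t_1,t_2),L(t_2,t_1)$ and $R(t_2,t_3),R(t_3,t_2)$ hold, apply item 4 to the triple $(t_2,t_1,t_3)$. We have $L(t_2,t_1)$ and $R(t_2,t_3)$, giving $R(t_1,t_3)$. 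Item 4 on $(t_2,t_3,t_1)$ similarly gives $R(t_3,t_1)$. Hence $t_1\sim t_3$. The remaining cases are symmetric. The same case analysis shows the type within a class is well defined: if $t_1\sim t_2$ via $L$ and $t_2\sim t_3$, then $t_1\sim t_3$ via the type forced by the computation. Checking consistency of these types yields item 4.

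\textbf{Order on classes.} I first show $L$ is well defined on classes. If $t_1\nsim t_2$ and $t_2\sim t_2'$, then $L(t_1,t_2)$ implies $L(t_1,t_2')$. I would prove this with items 3--5, splitting on whether the class of $t_2$ is of $L$-type or $R$-type. For example, in the $R$-type case, item 5 on $(t_2,t_1,t_2')$ needs $R(t_2,t_1)$, which follows since $t_1\nsim t_2$ and $L(t_1,t_2)$ force $R(t_2,t_1)$. This last fact is exactly item 3 of the present lemma: for $t_1\nsim t_2$, if $L(t_1,t_2)$ held together with $L(t_2,t_1)$ they would be equivalent, so $R(t_2,t_1)$, and conversely. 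With well-definedness in hand, the properties on classes follow. Totality comes from item 1/6. Antisymmetry comes from the definition of $\nsim$. Transitivity holds because $L$ is transitive on distinct representatives, and when $t_1\sim t_3$ arises the relation degenerates consistently.

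\textbf{Main obstacle.} The hard part is the well-definedness step, namely how $L$ interacts with $R$-type classes. Here the order between a point and a class member must flip, and it has to be handled carefully via items 4 and 5 with the right triple orderings. I would tabulate all $2^3$ patterns of $(L/R)$ on the three ordered pairs of a triple, and check which are excluded by items 3--5. This reduces every claim to a finite verification.
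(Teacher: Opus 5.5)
Your overall strategy is sound and is close to the paper's argument. Once item~6 of Lemma~\ref{lem:equiv-relations} holds for all distinct points, item~4 of that lemma is just the contrapositive of transitivity of $L$ (so it says the same thing as item~3), and item~5 is the contrapositive of transitivity of $R=\neg L$. Everything therefore follows from the fact that $L$ and its off-diagonal complement are both transitive. However, several of the concrete steps you write are wrong, and the first of them sits at the heart of the lemma.

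\textbf{The mixed case in transitivity.} Take $L(t_1,t_2),L(t_2,t_1),R(t_2,t_3),R(t_3,t_2)$.
\begin{enumerate}
\item Item~4 on $(t_2,t_3,t_1)$ would require $L(t_2,t_3)$ and $R(t_2,t_1)$, and neither holds.
\item Item~5 on that triple instead yields $L(t_3,t_1)$. Together with your correct $R(t_1,t_3)$, this gives $t_1\nsim_s t_3$, not $t_1\sim_s t_3$.
\item Transitivity survives only because the case cannot occur: $L(t_3,t_1)\wedge L(t_1,t_2)$ forces $L(t_3,t_2)$, contradicting $R(t_3,t_2)$.
\end{enumerate}
The impossibility of such mixed chains is exactly item~4 of the statement. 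As written, your argument would place an $L$-type pair and $R$-type pairs in the same class, which contradicts the item~4 you then claim to read off.

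\textbf{The $R$-type case of well-definedness.} Item~5 on $(t_2,t_1,t_2')$ needs $L(t_2,t_2')$, which fails in that case. The correct step is: if $R(t_1,t_2')$, then $R(t_1,t_2')\wedge R(t_2',t_2)$ gives $R(t_1,t_2)$, contradicting $L(t_1,t_2)$. This uses neither $t_1\nsim_s t_2$ nor any ``flip''; $L(t_1,\cdot)$ is simply constant on a class of either type.

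\textbf{The finite table.} $\sim_s$ compares $L(a,b)$ with $L(b,a)$, so a triple has six relevant ordered pairs, not three. Your table therefore has $2^6$ rows, not $2^3$.

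\textbf{The paper's route.} The paper avoids mixed-case bookkeeping by proving a congruence property first. If $t_1\sim_s t_2$, then for every $t_3$ we have $t_1\Leftarrow_s t_3\iff t_2\Leftarrow_s t_3$ and $t_3\Leftarrow_s t_1\iff t_3\Leftarrow_s t_2$. This uses item~3 when the pair is of $\Leftarrow_s$-type and item~5, i.e.\ transitivity of $\Rightarrow_s$, when it is of $\Rightarrow_s$-type. Transitivity of $\sim_s$, well-definedness of $\Leftarrow_s$ on classes, and item~4 each follow in a line from this, with no mixed configuration to examine separately. Item~3 is immediate from item~6 and the definition of $\nsim_s$. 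If you keep your case-check route instead, the mixed configurations must be shown contradictory, not resolved into an equivalence.
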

\begin{proof}
First, it is immediate that $\sim_s$ is reflexive and symmetric. 
We now verify the following property: if $t_1 \sim_s t_2$ and $t_2 \Leftarrow_s t_3$, then $t_1 \Leftarrow_s t_3$. If $t_1 = t_2$, the claim is trivial. Otherwise, we either have $t_1 \Leftarrow_s t_2$ or $t_2 \Rightarrow_s t_1$. In both cases, by items 3 and 5 of Lemma~\ref{lem:equiv-relations}, we get that $t_1 \Leftarrow_s t_3$.
Similarly, we prove that
\begin{itemize}
\item if $t_1 \Leftarrow_s t_2$ and $t_2 \sim_s t_3$, then $t_1 \Leftarrow_s t_3$.
\item if $t_1 \sim_s t_2$ and $t_2 \Rightarrow_s t_3$, then $t_1 \Rightarrow_s t_3$.
\item if $t_1 \Rightarrow_s t_2$ and $t_2 \sim_s t_3$, then $t_1 \Rightarrow_s t_3$.
\end{itemize}

\paragraph{Now we are ready to prove item 1.} Assume that $t_1 \sim_s t_2$ and $t_2 \sim_s t_3$. If $t_2=t_3$, the claim is trivial. So we assume that $t_2\neq t_3$. Then either (a) $t_1 \sim_s t_2$, $t_2 \Leftarrow_s t_3$, and $t_3 \Leftarrow_s t_2$ or (b)  $t_1 \sim_s t_2$, $t_2 \Rightarrow_s t_3$, and $t_3 \Rightarrow_s t_2$. In case (a), we have $t_1 \Leftarrow_s t_3$ and $t_3 \Leftarrow_s t_1$, and thus $t_1 \sim_s t_3$. Similarly, in case (b), we have $t_1 \Rightarrow_s t_3$ and $t_3 \Rightarrow_s t_1$, and thus $t_1 \sim_s t_3$.

\paragraph{Now we verify item 2.} As shown above, whether $t_1 \Leftarrow_s t_2$ holds depends only on the equivalence classes of $t_1$ and $t_2$. If $t_1 \Leftarrow_s t_2$ and $t_2 \Leftarrow_s t_1$, then $t_1 \sim_s t_2$. Therefore, the relation $\Leftarrow_s$ is antisymmetric. It is transitive by Lemma~\ref{lem:equiv-relations}, item 3.

It remains to verify totality: for every $t_1 \nsim_s t_2$, either $t_1 \Leftarrow_s t_2$ or $t_2 \Leftarrow_s t_1$. Fix $t_1 \nsim_s t_2$. By Lemma~\ref{lem:equiv-relations}, item 6, either $t_1 \Leftarrow_s t_2$ or $t_1 \Rightarrow_s t_2$. In the latter case, since $t_1 \nsim_s t_2$, $t_2 \Rightarrow_s t_1$ does not hold, and thus, again by Lemma~\ref{lem:equiv-relations}, item 6, $t_2 \Leftarrow_s t_1$. We conclude that either $t_1 \Leftarrow_s t_2$ or $t_2 \Leftarrow_s t_1$, as required.

\paragraph{We prove item 3.} Assume that $t_1 \Leftarrow_s t_2$ and $t_1 \nsim_s t_2$. Then $t_2 \Leftarrow_s t_1$ does not hold (otherwise, we would have $t_1 \sim_s t_2$). By item 6 of Lemma~\ref{lem:equiv-relations}, it follows that $t_2 \Rightarrow_s t_1$, as required. The proof of the implication in the opposite direction is completely analogous.

\paragraph{Finally, we prove item 4.} Consider an equivalence class $C$ of $\sim_s$. Assume that $C$ contains at least two distinct elements $t_1$ and $t_2$.
Then, either $t_1 \Leftarrow_s t_2$ or $t_1 \Rightarrow_s t_2$.
In the former case, for every $t_3$ and $t_4$ in $C$, we have $t_3\sim_s t_1$, $t_4 \sim_s t_2$, and thus $t_3 \Leftarrow_s t_4$.
In the latter case, similarly, for every $t_3$ and $t_4$ in $C$, we have $t_3 \Rightarrow_s t_4$.
\end{proof}

Now we ``rearrange'' $[0,1]$ so that $t_1 \Leftarrow_s t_2$ agrees with the standard ordering $\leq$ on $[0,1]$ when $t_1 \nsim_s t_2$. We construct the desired rearrangement map $\eta_s$ as follows (where $\lambda$ is the standard Lebesgue measure):
\begin{align*}
M_s(t) &=\{t' : (t' \nsim_s t \text{ and } t' \Leftarrow_s t) \text{ or } (t'\sim_s t \text{ and } t' < t)\}\\
\eta_s(t) &= \lambda(M_s(t)).
\end{align*}
It is immediate that $\eta_s$ is measure preserving, since the preimage of a segment $[\eta_s(a), \eta_s(b)]$ is the set $M_s(b)\setminus M_s(a)$ of measure $\lambda(M_s(b)\setminus M_s(a)) = \eta_s(b) - \eta_s(a)$.
Let $\hat T_i = \eta_s(T_i)$. Note that random variables $\hat T_i$ are independent and uniformly distributed on $[0,1]$.

For every equivalence class $C$ of positive measure, consider its image under $\eta_s$. Clearly it is a segment. Let ${\cal H}_s$ be the set of all such segments. Further, label each segment $H\in {\cal H}_s$ with $I$ or $D$: if $t_1 \Leftarrow_s t_2$ for all $t_1, t_2\in \eta_s^{-1}(H)$, we label $H$ with $I$; otherwise, we label $H$ with $D$. 
Write $a \simeq_s b$ if $a$ and $b$ belong to the same segment in ${\cal H}_s$.

We obtain a simple characterization of when $Y_i < Y_j$ (for $i<j$): $Y_i < Y_j$ if and only if $T_i \Leftarrow_S T_j$, which holds if and only if
\begin{itemize}
\item $\hat T_i < \hat T_j$ and $T_i \not\simeq_S T_j$, or
\item $\hat T_i, \hat T_j \in H$ for some $H \in \mathcal{H}_S$ labeled $I$.
\end{itemize}

The above holds almost surely.
For every fixed $s$, consider the core deterministic IDU permuton $B_s$ defined by the family of segments ${\cal H}_s$ with the prescribed labels.
By Claim~\ref{claim:alt-sampling-coreIDU}, the distribution of patterns defined by the permuton equals that given by the sample of  $Y_i' = \hat T_i$.

We conclude that $R$ is a probability distribution over core deterministic IDU permutons $B_s$.
\end{proof}

\section{Approximating arbitrary permutons by IDU and ID combinations}
In this section, we use Theorem~\ref{thm:main-strongIDU-classification} to show that every strong IDU permuton can be approximated by a distribution of IDU or ID combinations.
\begin{theorem}\label{thm:approx-by-IDU}
For every strong IDU permuton $R$, every $\delta \in (0,1)$ and $k$, there exists a strong IDU permuton $R'$, which is a probability distribution over IDU combinations $P'$, such that for every pattern $\rho\in \mathbb{S}_{k'}$ with $k' \leq k$:
$$|d(\rho, R) - d(\rho, R')| < \delta.$$
Further, the number of terms in each IDU combination $P'$ is at most $\frac{2}{\delta}\binom{k}{2} + 1 \leq O(k^2/\delta)$; the number of different IDU combinations in the support of the distribution is at most $k!$ (we will later show that it is at most $2^{k-1}$).
\end{theorem}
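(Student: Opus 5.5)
By Theorem~\ref{thm:main-strongIDU-classification}, $R$ is a probability distribution over core deterministic IDU permutons $B$. The plan is to approximate each such $B$ by a finite IDU combination $P'(B)$ whose pattern densities for patterns of length at most $k$ differ from those of $B$ by less than $\delta$. Averaging then gives $|d(\rho,R)-d(\rho,R')|<\delta$, since $d(\rho,\cdot)$ is linear in the random permuton. Afterwards we reduce the support to at most $k!$ combinations. It suffices to treat $k'=k$, because a density for $\rho\in\mathbb{S}_{k'}$ with $k'<k$ is a nonnegative combination of densities of length-$k$ patterns, with weights summing to $1$.

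\textbf{Approximating a single core permuton.} Use the alternative sampling of Claim~\ref{claim:alt-sampling-coreIDU}. Here the pattern of $k$ points is determined by $Y'_1,\dots,Y'_k$, which are i.i.d.\ uniform on $[0,1]$, together with the information of which pairs share an interval of $\mathcal H$ and that interval's label.

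Let $\eta=\delta/\binom{k}{2}$. Keep only the intervals of $\mathcal H$ of length at least $\eta/2$; there are at most $2/\eta$ of them. Relabel every other interval, and the rest of the uniform region, as uniform. The complement of the kept intervals is a finite union of gaps, and each gap becomes one $U$ term. This gives an IDU combination with at most $2/\eta+1=\frac{2}{\delta}\binom{k}{2}+1$ terms in the ordering $\oUp$ (kept intervals become $I$/$D$ terms, gaps become $U$ terms).

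Under the natural coupling (same $Y'_i$), the two sampled patterns agree unless some pair $Y'_i,Y'_j$ lands in a common discarded interval. Within the uniform region both procedures order points by $Y'$. The one subtlety is that $U$ applied to a gap orders its points uniformly at random, which is equivalent in law to ordering by fresh uniform $Y'$ values; so we may treat the gap as ordered by $Y'$ and the coupling is exact. For one pair, the probability of landing in a common discarded interval is $\sum_{\text{discarded }H}|H|^2\le \frac{\eta}{2}\sum|H|\le\eta/2$. A union bound over the $\binom{k}{2}$ pairs gives total variation distance less than $\delta$.

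\textbf{Measurability and bounding the support.} The map $B\mapsto P'(B)$ must be measurable, so that $R'$ is a genuine random permuton. I expect this to be routine but fiddly: choose the kept intervals by a measurable rule, such as thresholding lengths. This leaves $R'$ as a possibly continuous mixture of IDU combinations, which we replace by a mixture of at most $k!$ of them.

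Consider the convex set of profiles $\{(d(\rho,P'))_{\rho\in\mathbb S_k}\}$ over combinations with at most the stated number of terms. The profile of $R'$ lies in the closed convex hull of this set, which sits in an affine space of dimension $k!-1$ because densities sum to $1$. Carathéodory's theorem then writes the profile of $R'$ as a convex combination of at most $k!$ profiles. The set of profiles is compact: coefficients range over a simplex and the labels over a finite set, and densities are polynomial in the coefficients. So the closure issue disappears and the point is an exact convex combination of $k!$ genuine combinations.

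The main obstacle I anticipate is this Carathéodory step. Matching all length-$k$ densities exactly keeps the bound $<\delta$, since it passes to length $k'<k$ by the linear identity above. Each summand is a finite IDU combination, hence a strong IDU permuton by Claim~\ref{claim:up-combination-is-strong-IDU}, so the finite mixture $R'$ is strong as well.
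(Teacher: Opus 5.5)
Your argument is correct in outline and rests on the same estimate as the paper's proof. You truncate the short intervals of a core permuton, couple the two patterns through Claim~\ref{claim:alt-sampling-coreIDU}, and bound the failure probability per pair by $\sum_{H}\lambda(H)^2\le\max_H\lambda(H)$. The difference is that you apply Carath\'eodory in the opposite order.

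The paper first writes the $k$-profile of $R$ as a convex combination of the profiles of at most $k!$ core permutons $P_1,\dots,P_{k!}$ from the support of $R$. Only then does it truncate these finitely many permutons. This needs no measurable selection, but it tacitly uses the standard fact that the mean of a random vector lies in the convex hull of its values, not merely in the closure of that hull.

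You truncate every core permuton first, then apply Carath\'eodory to the compact set $K$ of profiles of IDU combinations with a bounded number of terms. Closedness of the hull is then automatic, and the final combinations are visibly of bounded size. The price is measurability of $B\mapsto P'(B)$. Even that can be skipped: every core permuton's profile lies within a small $\ell_1$-ball around $K$, so the profile of $R$ lies in the compact set $\conv(K)$ plus that ball.

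Two steps need repair.

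\emph{The term count is off by a factor of two.} With threshold $\eta/2$ you may keep up to $m=2/\eta$ intervals. Their complement can consist of $m+1$ separate gaps, each a separate $U$ term. So you only get $2m+1\le 4/\eta+1=\frac{4}{\delta}\binom{k}{2}+1$ terms; for an example, take about $2/\eta$ kept intervals separated by tiny gaps. To get the stated bound, keep the intervals of length at least $\eta$; there are at most $1/\eta$ of them, hence at most $2/\eta+1$ terms. The discarded lengths are summable, so their maximum is attained and is $<\eta$. The total error is then at most $\binom{k}{2}\max_H\lambda(H)<\delta$. By comparison, the paper's threshold yields only $\le\delta$.

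\emph{The reduction to $k'=k$ is misjustified.} For fixed $\rho\in\mathbb S_{k'}$ the weights $d(\rho,\tau)$, $\tau\in\Sk$, sum to $k!/k'!$, not $1$. Coordinatewise errors on length-$k$ densities would therefore be amplified. What actually works is that your coupling bounds the total variation distance of the whole length-$k$ pattern. Since $d(\rho,\cdot)$ is the expectation of the $[0,1]$-valued function $\tau\mapsto d(\rho,\tau)$, that bound transfers. Alternatively, run the coupling directly with $k'$ points, as the paper does. Your final exact-matching step does transfer to shorter patterns correctly.
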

\begin{proof}
  Consider a strong IDU permuton $R$ and its $k$-profile -- the vector of pattern densities $(d(\rho, R))_{\rho\in \Sk}$. It equals the expectation of profiles $(d(\rho, P))_{\rho\in \Sk}$ where $P$ is sampled from $R$. Now all profiles lie in a hyperplane in $\mathbb{R}^{\Sk}$ of codimension 1, since the sum of all pattern densities equals 1. That is, all profiles lie in an affine space of dimension $m = k! - 1$. By Carathéodory's theorem, the $k$-profile of $R$ is a convex combination of at most $m+1$ profiles of deterministic permutons $P_1, \dots, P_{m+1}$ from the support of $R$. Denote their weights in the convex combination by $c_1,\dots, c_{m+1}$. 
  As we will see later (see Corollary~\ref{cor:dim-profile-space} in Section~\ref{sec:up--down}), all $k$-profiles actually lie in an affine space of dimension at most $2^{k-1}-1$, and thus we can reduce $m$ to $2^{k-1} -1$.

  We obtained that for every pattern $\rho\in \Sk$,
  \begin{equation}    \label{eq:caratheodory-profile}
  d(\rho, R) = \sum_{t=1}^{m+1} c_t d(\rho, P_t).
  \end{equation}
  Further, since the density $d(\rho, R)$ for $\rho\in \mathbb{S}_{k'}$ with $k' < k$ is a linear combination of densities $d(\tau, R)$ for $\tau\in \Sk$, we get that
  \eqref{eq:caratheodory-profile} holds not only for $\rho\in \Sk$ but also for all $\rho\in \mathbb{S}_{k'}$ with $k' < k$.

  Now we analyze each permuton $P_t$ separately.
  By Theorem~\ref{thm:main-strongIDU-classification}, each $P_t$ is a core deterministic IDU permuton defined by a family of disjoint intervals $\mathcal{H} = {\cal H}_t$ labeled with $I$ and $D$.

  Let $\varepsilon =\delta/\binom{k}{2}$. We remove all segments $H$ in $\cal H$ whose length is at most $\varepsilon$. We keep the remaining segments as well as their labels; let ${\cal H}'$ be the collection of the remaining segments.
${\cal H}'$ defines a core permuton $P_t'$ with at most $1/\varepsilon$ segments. As discussed, every core permuton with at most $1/\varepsilon$ segments can be represented as an IDU combination with at most $2/\varepsilon + 1$ terms.  Now it remains to show that 
  $$|d(\rho, P_t) - d(\rho, P_t')| \leq \delta$$
for every pattern $\rho\in {\mathbb S}_{k'}$.

We use Claim~\ref{claim:alt-sampling-coreIDU} to sample patterns $\rho$ and $\rho'$ from permutons $P_t$ and $P_t'$, respectively. To this end, we set $X_i = i/k'$ for $i \in [k']$ and sample $Y_1, \dots, Y_{k'}$ independently and uniformly from $[0,1]$. Now, for $i<j$, we determine whether $\rho(i) < \rho(j)$ or $\rho(i) > \rho(j)$, and whether $\rho'(i) < \rho'(j)$ or $\rho'(i) > \rho'(j)$, according to the following rules.
\begin{itemize}
\item \textbf{Rule 1.} No segment $H \in \mathcal{H}$ contains both $Y_i$ and $Y_j$. If $Y_i < Y_j$, then $\rho(i) < \rho(j)$ and $\rho'(i) < \rho'(j)$. If $Y_i > Y_j$, then $\rho(i) > \rho(j)$ and $\rho'(i) > \rho'(j)$.
\item \textbf{Rule 2.} There exists a segment $H \in \mathcal{H}'$ labeled $I$ containing both $Y_i$ and $Y_j$. Then $\rho(i) < \rho(j)$ and $\rho'(i) < \rho'(j)$.
\item \textbf{Rule 3.} There exists a segment $H \in \mathcal{H}'$ labeled $D$ containing both $Y_i$ and $Y_j$. Then $\rho(i) > \rho(j)$ and $\rho'(i) > \rho'(j)$.
\item \textbf{Rule 4.} There exists a segment $H \in \mathcal{H} \setminus \mathcal{H}'$ containing both $Y_i$ and $Y_j$. The ordering of $\rho(i)$ and $\rho(j)$ is determined by Rule 2 or Rule 3, depending on the label of $H$; however, the ordering of $\rho'(i)$ and $\rho'(j)$ is given by Rule 1.
\end{itemize}
The only case in which the ordering of $\rho(i)$ and $\rho(j)$ may differ from that of $\rho'(i)$ and $\rho'(j)$ is when Rule 4 applies. Therefore, if Rule 4 is not applied for any pair $i,j$, then $\rho = \rho'$. The probability that Rule 4 is applied for some pair $i,j$ is at most (here, $\lambda$ denotes the Lebesgue measure on $[0,1]$):
\begin{align*} \sum_{i<j}\sum_{H\in {\cal H}\setminus {\cal H}'} \Pr(Y_i, Y_j \in H) &\leq \sum_{i<j}\sum_{H\in {\cal H}\setminus {\cal H}'} \lambda(H)^2 \leq \binom{k}{2} \left(\sum_{H\in {\cal H}\setminus {\cal H}'}\lambda(H)\right) \cdot \max_{H\in {\cal H}\setminus {\cal H}'} \lambda(H)\\ 
  &{} \leq \binom{k}{2} \cdot 1 \cdot \varepsilon = \delta. 
\end{align*}
Therefore, for no pattern $\rho$ of length at most $k$ do the densities in $R$ and $R'$ differ by more than $\delta$, as required.
\end{proof}
\paragraph{ID combinations}
As noted earlier, every strong IDU transformation can be alternatively approximated with a distribution of ID combinations (that is, IDU combinations without $U$ terms). 

Consider a probability distribution over IDU combinations produced by Theorem~\ref{thm:approx-by-IDU}. We approximate  each IDU combination $C$ with an ID combination as follows.
Replace each term $c\, U$ in $C$ with an up-combination $\frac{c}{t} I \oUp \dots \oUp \frac{c}{t} I$ of  $t = \lceil 1/\varepsilon \rceil$ terms. We now observe that the obtained distribution of ID combinations $C'$ approximates $C$ within $\delta$ (as in Theorem~\ref{thm:approx-by-IDU}).
To this end, note that if we apply the segment removal step from Theorem~\ref{thm:approx-by-IDU} to $C'$, we will get back $C$. Thus, the densities of each pattern $\rho\in\mathbb{S}_{k'}$  in $C$ and in $C'$ differ by at most $\varepsilon \binom{k}{2} = \delta$.

We get the following corollary.
\begin{theorem}\label{thm:approx-by-ID}
For every strong IDU permuton $R$, every $\delta > 0$ and $k$, there exists a probability distribution $R'$ over ID combinations $P'$ such that for every pattern $\rho\in \mathbb{S}_{k'}$ with $k' \leq k$:
$$|d(\rho, R) - d(\rho, R')| \leq \delta.$$
Further, the number of terms in each ID combination $P'$ is at most $O(k^2 /\delta)$; the number of different ID combinations in the support of the distribution is at most $2^{k-1}$.
\end{theorem}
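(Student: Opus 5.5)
The plan is to derive Theorem~\ref{thm:approx-by-ID} from Theorem~\ref{thm:approx-by-IDU} by replacing every $U$ term with a fine up-combination of $I$ terms, as sketched in the paragraph preceding the statement.

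\textbf{Step 1: apply Theorem~\ref{thm:approx-by-IDU} with half the precision.} Apply Theorem~\ref{thm:approx-by-IDU} to $R$ with precision $\delta/2$ and arity $k$. This yields a distribution $R''$ over at most $k!$ IDU combinations $C_1,\dots,C_m$ with weights $c_1,\dots,c_m$. Each $C_t$ has $O(k^2/\delta)$ terms and satisfies $|d(\rho,R)-d(\rho,R'')|<\delta/2$ for all $\rho\in\mathbb{S}_{k'}$ with $k'\le k$.

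\textbf{Step 2: replace each $U$ term by $I$ terms.} Set $\varepsilon=\delta/(2\binom{k}{2})$ and $t=\lceil 1/\varepsilon\rceil$. In each $C_t$, replace every term $c\,U$ by $\frac{c}{t}I\oUp\cdots\oUp\frac{c}{t}I$ with $t$ terms, giving an ID combination $C_t'$. Let $R'$ be the same mixture over the $C_t'$.

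\textbf{Step 3: compare $C_t$ and $C_t'$ by coupling.} View $C_t'$ as a core deterministic IDU permuton with family $\mathcal{H}'$. Its segments are those of $C_t$ together with the new segments of length $c/t\le\varepsilon$ covering the former $U$-regions. Deleting all new segments of $\mathcal{H}'$ recovers exactly the core permuton of $C_t$: the uniform part is then the union of the former $U$-intervals.

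Now run the coupling from the proof of Theorem~\ref{thm:approx-by-IDU} via Claim~\ref{claim:alt-sampling-coreIDU}. Draw shared uniform $Y_1,\dots,Y_{k'}$. The two patterns can differ only if some pair $Y_i,Y_j$ lands in a common new segment. By the same union bound this probability is at most
\[
\binom{k}{2}\sum_H \lambda(H)^2\le\binom{k}{2}\,\varepsilon\sum_H\lambda(H)\le \delta/2,
\]
since the new segments are disjoint and so $\sum_H\lambda(H)\le 1$. Hence $|d(\rho,C_t)-d(\rho,C_t')|\le\delta/2$. Averaging over $t$ and combining with Step~1 by the triangle inequality gives $|d(\rho,R)-d(\rho,R')|\le\delta$.

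\textbf{Step 4: bound the sizes.} Each $C_t$ has $O(k^2/\delta)$ terms. The total weight of its $U$ terms is at most $1$, so each $U$ term contributes about $\lceil c/\varepsilon\rceil$ pieces rather than a full $t$. Choosing the number of pieces per $U$ term as $\lceil c/\varepsilon\rceil$ (pieces of length at most $\varepsilon$) keeps the total at $O(k^2/\delta)+O(1/\varepsilon)=O(k^2/\delta)$.

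\textbf{Main obstacle: the support bound.} The bound of $2^{k-1}$ on the number of ID combinations in the support does not follow from Carathéodory on all of $\mathbb{R}^{\Sk}$, which only gives $k!$. It requires that $k$-profiles of strong IDU permutons lie in an affine space of dimension $2^{k-1}-1$ (Corollary~\ref{cor:dim-profile-space}). That dimension bound follows from the invariance of densities under $\rho_1^{-1}\udsim\rho_2^{-1}$ (Theorem~\ref{thm:strong-IDU-inv-sig-invariant}). I would invoke that result, apply Carathéodory in the reduced space inside the proof of Theorem~\ref{thm:approx-by-IDU} so that at most $2^{k-1}$ combinations appear, and then note that Steps~2--3 preserve the number of combinations.
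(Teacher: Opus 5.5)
Your proof is correct and follows the paper's route: replace each $U$ term by fine $I$ terms, note that the segment-removal step of Theorem~\ref{thm:approx-by-IDU} recovers the original IDU combination, and reuse the same coupling and union bound. Your bookkeeping is in fact tighter than the paper's. Splitting $\delta$ in half accounts for the accumulated error, and using $\lceil c/\varepsilon\rceil$ pieces per $U$ term is what actually gives the $O(k^2/\delta)$ term bound; the paper's choice of $\lceil 1/\varepsilon\rceil$ pieces for every $U$ term can give on the order of $k^4/\delta^2$ terms.

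For the $2^{k-1}$ support bound, keep in mind that Theorem~\ref{thm:strong-IDU-inv-sig-invariant} is itself proved from this theorem, so invoking it inside the proof is circular as stated. You can fix this in either of two ways. One is to prove the approximation with $k!$ combinations first and return afterwards. The simpler one is to apply Carath\'eodory directly to your final finite mixture of ID combinations: their $k$-profiles lie in the $(2^{k-1}-1)$-dimensional affine space by Corollary~\ref{cor:profile-us-signature}, which is proved directly from the quasisymmetric formula without using this theorem.
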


\section{A quasisymmetric polynomial characterization of profiles}\label{sec:qsym}
In this section, we will provide a characterization of $k$-profiles of strong IDU transformations using quasisymmetric polynomials. 

\begin{definition}[Quasisymmetric polynomial of level-2; adapted from~\cite{Poirier98}]
Given $N$, consider two sets of variables $x_1, \dots, x_N$ and $y_1, \dots, y_N$. 
We work in the ring of polynomials $\mathbb{R}[x_1, \dots, x_N, y_1, \dots, y_N]$.
A polynomial $f \in \mathbb{R}[x_1,\dots, x_N, y_1,\dots, y_N]$ is called a \emph{quasisymmetric polynomial of level-2} if, for every two sequences of indices $i_1 < i_2 < \dots < i_k$ and $j_1 < j_2 < \dots < j_k$ and every sequence of nonnegative exponents $a_1, b_1, a_2, b_2, \dots, a_k, b_k$ with $a_i + b_i \geq 1$ for all $i$, the coefficients of monomials
$$\prod_{t=1}^k x_{i_t}^{a_t} y_{i_t}^{b_t} \quad \text{ and } \quad \prod_{t=1}^k x_{j_t}^{a_t} y_{j_t}^{b_t}$$
are equal. We denote the set of all quasisymmetric polynomials of level-2 in $2N$ variables by $\operatorname{QSym}^{(2)}_N$.  

We define a linear order on the variables: $x_1, y_1, x_2, y_2, \dots, x_N, y_N$. 
We write $u \prec v$ if variable $u \in \{x_i, y_i\}_i$ appears before variable $v \in \{x_i, y_i\}_i$ in this order.

For brevity, we will denote the vector of variables $x_i$ by $x$; the vector of variables $y_i$ by $y$.
\end{definition}
For a permutation $\pi$, we define the following $\pi$-partition function, which is a variant of Stembridge's enriched $P$-partition~\cite{Stembridge}.
\begin{definition}[$\pi$-partition function]\label{def:pi-partition}
Consider a permutation $\pi \in \Sn$. We say that $f:[n] \to \{x_i, y_i\}_{1\leq i\leq N}$ is a $\pi$-partition function if for every $i < j$:
\begin{itemize}
\item either $f(i) \prec f(j)$, or
\item $\pi(i) < \pi(j)$ and $f(i) = f(j) = x_t$ for some $t$, or
\item $\pi(i) > \pi(j)$ and $f(i) = f(j) = y_t$ for some $t$.
\end{itemize}
\end{definition}
Note that $f$ is a non-decreasing function with respect to the order $\prec$: $f(i) \preceq f(j)$ for every $i < j$, and every strictly increasing function $f$ is a $\pi$-partition function for all $\pi$. The permutation $\pi$ determines at which positions $f$ is allowed to stay constant.

\begin{definition}
For a permutation $\pi \in \Sn$, define a $\pi$-partition generating function $K_{\pi,N}(x,y)$ as follows
$$K_{\pi,N}(x,y) = \sum_{f} \prod_{i=1}^n f(i),$$
where the sum is over all $\pi$-partition functions $f$.
\end{definition}
We note that this definition is different from Stembridge's definition of a generating function.
It is immediate that $K_{\pi,N}(x,y)$ is a quasisymmetric polynomial of level-2.

Now we connect the $\pi$-partition generating functions to pattern densities in ID combinations.
Consider an ID combination $C$ with at most $N$ terms. We can write $C$ as 
\begin{equation}\label{eq:xy-ID-combination}
x_1 I \oUp y_1 D \oUp x_2 I \oUp y_2 D \oUp \dots \oUp x_N I \oUp y_N D,
\end{equation}
where $x_i$ and $y_i$ are nonnegative real coefficients that sum to 1. Some of the coefficients $x_i$ and $y_i$ may be 0, indicating that the corresponding terms are missing.

\begin{theorem}\label{thm:main-qsym-characterization}
Consider an ID combination $C$ as in~\eqref{eq:xy-ID-combination}. The density of a pattern $\pi \in \Sn$ in $C$ equals
\[
d(\pi, C) = K_{\pi^{-1},N}(x,y).
\]
Importantly, the density of $\pi$ equals the generating function for $\pi^{-1}$, not for $\pi$ itself.
\end{theorem}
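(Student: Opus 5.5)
We compute $d(\pi,C)$ directly from the sampling procedure for up-combinations and match the resulting sum term by term with the definition of $K_{\pi^{-1},N}$. Concretely, list the $2N$ blocks of $C$ in the order $x_1 I, y_1 D, x_2 I, \dots, y_N D$ and identify each block with the corresponding variable in the order $\prec$. Using Definition~\ref{def:up-comb} (equivalently, Claim~\ref{claim:alt-sampling-coreIDU}), sample $n$ elements and assign each element $i\in[n]$ independently to a block $g(i)\in\{x_t,y_t\}$, with probability equal to the weight of that block. The resulting pattern $\sigma$ is then determined by two rules: elements in earlier blocks receive smaller values; inside an $I$-block the values increase with position; inside a $D$-block they decrease. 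Hence
\[
d(\pi,C)=\sum_{g}\Bigl(\prod_{i=1}^n g(i)\Bigr)\,\mathbf 1[\sigma_g=\pi],
\]
where $\sigma_g$ is the deterministic pattern produced by $g$.

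The key step is to re-index this sum by values rather than positions. Set $f(v)=g(\pi^{-1}(v))$ for $v\in[n]$, i.e.\ the block of the element whose value is $v$. We claim that $\sigma_g=\pi$ holds iff $f$ is a $\pi^{-1}$-partition function. Note first that $g\mapsto f$ is a bijection and $\prod_v f(v)=\prod_i g(i)$, so the weights match.

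To check the claim, take values $v<w$. If $f(v)\neq f(w)$, the block rule forces $f(v)\prec f(w)$, since the smaller value must lie in the earlier block. If $f(v)=f(w)=x_t$, the two elements lie in an $I$-block, so the smaller value sits at the smaller position, i.e.\ $\pi^{-1}(v)<\pi^{-1}(w)$. If $f(v)=f(w)=y_t$, the $D$-block gives $\pi^{-1}(v)>\pi^{-1}(w)$. These three cases are exactly the conditions of Definition~\ref{def:pi-partition} applied to $\pi^{-1}$.

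For the converse, suppose $f$ is a $\pi^{-1}$-partition function and define $g=f\circ\pi$. Then the pattern produced by $g$ has exactly the pairwise comparisons of $\pi$: pairs in different blocks are ordered by block, and pairs in the same block are ordered by the $I$/$D$ rule, which matches $\pi$ by the conditions above. Since a permutation is determined by its pairwise comparisons, $\sigma_g=\pi$. Summing over the bijection gives $d(\pi,C)=K_{\pi^{-1},N}(x,y)$.

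The main care point is the positions-versus-values bookkeeping, which is what produces the inverse. Blocks in an up-combination are stacked vertically, so they partition values, not positions. The within-block $I$/$D$ rule, by contrast, compares positions. Once the sum is indexed by values, the condition on positions becomes a condition on $\pi^{-1}$. Zero coefficients and the implicit requirement $N\ge$ (number of terms) cause no trouble, since monomials containing a zero variable vanish.
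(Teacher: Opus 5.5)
Your proof is correct and follows the paper's own argument. The paper likewise uses the block-assignment sampling of Claim~\ref{claim:alt-sampling-coreIDU} and defines $f(i)$ as the block containing the element $\pi^{-1}(i)$ (your $f=g\circ\pi^{-1}$), checks in both directions via the same three cases that the sampled pattern equals $\pi$ iff $f$ is a $\pi^{-1}$-partition function, and sums $\Pr(f=f_0)=\prod_i f_0(i)$ over all such $f_0$; phrasing the last step as a weighted sum over assignments $g$ with the bijection $g\mapsto f$ is only a cosmetic difference.
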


\begin{proof}
We use the alternative sampling procedure from Claim~\ref{claim:alt-sampling-coreIDU}. We sample $n$ random variables $Y_i$ -- corresponding to $Y_i'$ in Claim~\ref{claim:alt-sampling-coreIDU} -- independently and uniformly at random, and then define the permutation $\rho$ according to this procedure. Our goal is to derive a formula for $\Pr(\rho = \pi) = d(\pi, C)$ for a fixed $\pi\in \Sn$.

Define a function $f$ as follows. For $i\in [n]$, let $i'= \pi^{-1}(i)$. Set \( f(i) = x_t \) if \( Y_{i'} \) lies in the segment corresponding to \( x_t I \), and \( f(i) = y_t \) if \( Y_{i'} \) lies in the segment corresponding to \( y_t D \). We will show that \( \rho = \pi\) if and only if \( f \) is a \( \pi^{-1} \)-partition function. (Note that $f$ is a random function that depends on the random variables $Y_i$.)

Assume first that $\pi = \rho$. We show that $f$ is a $\pi^{-1}$-partition function.
Consider $i < j$ and let $i' = \pi^{-1}(i)$ and $j' = \pi^{-1}(j)$. Since $\rho = \pi$, $\rho(i') = i < j = \rho(j')$. By Claim~\ref{claim:alt-sampling-coreIDU}, this happens if and only if one of the following holds:
\begin{itemize}
    \item $Y_{i'},Y_{j'}$ lie in segments corresponding to different terms in the ID combination $C$ and $Y_{i'} < Y_{j'}$. This condition is equivalent to $f(i) \neq f(j)$  and $Y_{i'} < Y_{j'}$, or, using the definition of $\prec$:
    $$f(i) \prec f(j).$$
    \item Both $Y_{i'},Y_{j'}$ lie in a segment corresponding to a term $x_t I$ in the ID combination $C$ and $i' < j'$. This is equivalent to 
    $$f(i) = f(j) = x_t  \text{ for some } t \text{ and } \pi^{-1}(i) < \pi^{-1}(j).$$
    \item Both $Y_{i'},Y_{j'}$ lie in a segment corresponding to a term $y_t D$ in the ID combination $C$ and $i' > j'$. This is equivalent to 
    $$f(i) = f(j) = y_t \text{ for some } t \text{ and } \pi^{-1}(i) > \pi^{-1}(j).$$
\end{itemize}
We see that $f$ satisfies the definition of a $\pi^{-1}$-partition function.

Conversely, assume that $f$ is a $\pi^{-1}$-partition function. We show that $\pi = \rho$.
Consider $i < j$. Again, let $i' = \pi^{-1}(i)$ and $j' = \pi^{-1}(j)$.
We consider the three possibilities from Definition~\ref{def:pi-partition}.

The first possibility is that $f(i) \prec f(j)$. Then $Y_{i'}$ lies in a segment preceding that of $Y_{j'}$, and thus $\rho(i') < \rho(j')$. Since $\pi(i') = i < j = \pi(j')$, we conclude that $\pi$ and $\rho$ agree on the ordering of $i'$ and $j'$.

Another possibility is that $i' = \pi^{-1}(i) < \pi^{-1}(j) = j'$ and $f(i) = f(j) = x_t$ for some $t$. Since $i' < j'$ and both $Y_{i'}$ and $Y_{j'}$ lie in the segment for $x_t I$, $\rho(i') < \rho(j')$. Also, $\pi(i') = i < j = \pi(j')$. So again $\pi$ and $\rho$ agree on the ordering of $i'$ and $j'$.

The remaining possibility is that $i' = \pi^{-1}(i) > \pi^{-1}(j) = j'$ and $f(i) = f(j) = y_t$ for some $t$. Since $Y_{i'}$ and $Y_{j'}$ lie in a segment for $y_t$, the order of $\rho(i')$ and $\rho(j')$ is the opposite of that of $i'$ and $j'$. Thus, $\rho(i') < \rho(j')$. Again, $\pi$ and $\rho$ agree on the ordering of $i'$ and $j'$.

We conclude that $\pi$ and $\rho$ agree on all pairs $i'$ and $j'$. Since every unordered pair $\{i', j'\}$ arises from some choice of $1 \le i < j \le n$, it follows that $\pi = \rho$, as required.

Finally, we observe that the probability that $f$ equals a fixed $\pi^{-1}$-partition function $f_0$ is 
\begin{align}\label{eq:prob-f-f0}
\Prob{f=f_0} &= \Prob{f(\pi(i')) = f_0(\pi(i')) \text{ for }i'\in [n]} \\
&= \Prob{Y_{i'} \text{ lies in  a segment of length } f_0(\pi(i')) \text{ for all }i'\in[n]}\notag\\
&=\prod_{i'=1}^n \Prob{Y_{i'} \text{ lies in  a segment of length } f_0(\pi(i'))}\notag\\
&= \prod_{i'=1}^n f_0(\pi(i')) = \prod_{i=1}^n f_0(i).\notag
\end{align}
Therefore,
$$d(\pi, C) = \Prob{\rho = \pi} = \sum_{f_0} \Prob{f = f_0} = \sum_{f_0} \prod_{i=1}^n f_0(i) = K_{\pi^{-1}, N}(x,y),$$
where both summations are over $\pi^{-1}$-partition functions $f_0$.
\end{proof}

\begin{definition}
Let $\mathcal{R}_k$ be the set of all $k$-profiles of strong IDU transformations. 
We will refer to $\mathcal{R}_k$ as the \emph{feasible region} of $k$-profiles.
\end{definition}

\begin{theorem}\label{thm:feasible-region-characterization}
Consider the vector $\mathbf{K}_N(x,y) = (K_{\pi^{-1},N}(x,y))_{\pi\in \Sk}$ with one component for each permutation~$\pi$.  
Let $\image(\mathbf{K}_N)$ denote the image of the simplex 
\[
\{(x,y) : x_i, y_i \ge 0, \; \sum_{i=1}^N (x_i + y_i) = 1\}
\]
under~$\mathbf{K}_N$.  
Then the feasible region ${\mathcal{R}}_k$ satisfies
\[
{\mathcal{R}}_k = \overline{\conv}\left(\bigcup_{N=k}^\infty \image(\mathbf{K}_N)\right),
\]
where $\overline{\conv}(X)$ denotes the closed convex hull of~$X$.
\end{theorem}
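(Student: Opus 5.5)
We prove the two inclusions separately. Write $\mathcal{K} = \overline{\conv}\bigl(\bigcup_{N\ge k}\image(\mathbf{K}_N)\bigr)$.

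\emph{Inclusion $\mathcal{K}\subseteq\mathcal{R}_k$.} By Theorem~\ref{thm:main-qsym-characterization}, every point of $\image(\mathbf{K}_N)$ is the $k$-profile of an ID combination $x_1 I \oUp y_1 D \oUp \dots \oUp x_N I \oUp y_N D$. By Claim~\ref{claim:up-combination-is-strong-IDU}, such a combination is a strong IDU transformation, so $\image(\mathbf{K}_N)\subseteq\mathcal{R}_k$. It remains to show that $\mathcal{R}_k$ is convex and closed.

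For convexity, take strong IDU permutons $R_1, R_2$ and let $R$ be the mixture that picks $R_1$ with probability $c$ and $R_2$ otherwise. Then $d_J(\rho,R)=c\,d_J(\rho,R_1)+(1-c)\,d_J(\rho,R_2)$ does not depend on $J$, so $R$ is strong IDU, and its profile is the corresponding convex combination.

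For closedness, take strong IDU permutons $R_m$ whose $k$-profiles converge. By compactness of the space of random permutons, some subsequence converges to a random permuton $R$. Continuity of densities (Theorem~\ref{thm:permuton-properties}, extended to random permutons by taking expectations) shows that the $k$-profile of $R$ is the limit. We must also check that $R$ is strong IDU. The property ``$d_J(\rho,\cdot)=c$ almost surely in $J$'' passes to limits by the argument of Claim~\ref{claim:IDU-to-permuton}: $\hat G_{R,\rho}(Q_1,\dots,Q_k)$ is continuous in $R$ (Lemma~\ref{lem:continuity-G}), and for $R_m$ it equals $d(\rho,R_m)\prod_i\lambda(Q_i)$. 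Passing to the limit and applying Lebesgue differentiation gives $d_J(\rho,R)=d(\rho,R)$ almost surely, for every $\rho$.

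\emph{Inclusion $\mathcal{R}_k\subseteq\mathcal{K}$.} Fix $R$ strong IDU and $\delta>0$. Theorem~\ref{thm:approx-by-ID} gives a finite distribution $R'$ over ID combinations whose $k$-profile is within $\delta$ of that of $R$ coordinatewise. Each ID combination in the support can be padded with zero coefficients into the form~\eqref{eq:xy-ID-combination} with a common $N\ge k$. By Theorem~\ref{thm:main-qsym-characterization}, its profile is then $\mathbf{K}_N(x,y)$ for some point $(x,y)$ of the simplex. The profile of $R'$ is therefore a convex combination of points of $\image(\mathbf{K}_N)$. Letting $\delta\to0$ puts the profile of $R$ in the closure, hence in $\mathcal{K}$.

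\emph{Main obstacle.} The only delicate step is the closedness of $\mathcal{R}_k$, specifically that a weak limit of strong IDU permutons is again strong IDU, which requires controlling restricted densities rather than just averaged ones. I will handle it exactly as in the proof of Claim~\ref{claim:IDU-to-permuton}, via the set functions $\hat G$ on products of disjoint dyadic intervals. A minor point is that up-combinations with zero weights are harmless, since groups with probability zero are empty almost surely.
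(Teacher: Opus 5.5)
Your proposal is correct and follows the paper's proof: one inclusion comes from Theorem~\ref{thm:approx-by-ID} together with Theorem~\ref{thm:main-qsym-characterization} and the limit $\delta\to 0$, and the other from the convexity and closedness of $\mathcal{R}_k$. The only difference is that you justify closedness, i.e.\ that a weak limit of strong IDU permutons is again strong IDU, using the continuity of $\hat G$ on dyadic boxes as in Claim~\ref{claim:IDU-to-permuton}; the paper simply asserts this by appeal to compactness of the set of strong IDU permutons.
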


\begin{proof}
Given any strong IDU permuton $R$ and any precision $\delta>0$, by Theorem~\ref{thm:approx-by-ID}, we can find a random permuton $R'$ supported on at most $2^{k-1}$ different ID combinations (Corollary~\ref{cor:dim-profile-space} ensures that $2^{k-1}$ suffices, though even the bound $k!$ would be enough here) such that each combination has at most $N$ terms (with $N$ depending on $k$ and $\delta$), and the $k$-profile of $R'$ is within $\delta$ (in each coordinate) of the $k$-profile of $R$. In turn, Theorem~\ref{thm:main-qsym-characterization} shows that every ID combination $C$ with at most $N$ terms lies in the image of $\mathbf{K}_N$. Hence, the $k$-profile of $R'$ lies in $\conv(\image(\mathbf{K}_N))$. Letting $\delta \to 0$, we conclude that the $k$-profile of $R$ lies in $\overline{\conv}\left(\bigcup_{N\geq k} \image(\mathbf{K}_N)\right)$, as required.

Conversely, every point in $\image(\mathbf{K}_N)$ corresponds to the $k$-profile of some ID combination, and thus belongs to ${\mathcal R}_k$. Moreover, ${\mathcal R}_k$ is closed because the set of strong IDU permutons is compact. It is also convex, since for any two strong IDU permutons $R_1$ and $R_2$, the permuton $R$ obtained by sampling from $R_1$ with probability $p$ and from $R_2$ with probability $1-p$ also defines a strong IDU transformation.
\end{proof}

Now we prove a lemma that will be useful when we discuss \emph{labeled flags} in Section~\ref{sec:flag-algebras}.
Fix $n$ and $k \le n$. Consider an ID combination $C$ given by~\eqref{eq:xy-ID-combination}:
\begin{equation}
C = x_1 I \oUp y_1 D \oUp x_2 I \oUp y_2 D \oUp \dots \oUp x_N I \oUp y_N D.
\end{equation}
Suppose we are given a set $S$ of $k$ points in the support of $C$; no two points share the same $x$- or $y$-coordinate.
Let $(X_1, Y_1), \dots, (X_k, Y_k)$ be the points from $S$ sorted by their $x$-coordinates. Independently sample the remaining $n-k$ points $\{(X_j,Y_j)\}_{j=k+1}^n$ from $C$. Define the outcome of the sampling to be the pair $(\sigma, A)$, where:
\begin{itemize}
  \item $\sigma$ is the pattern of the points $\{(X_i, Y_i)\}_{i=1}^n$;
  \item $A \subseteq [n]$ is the set of ranks of $X_1,\dots, X_k$ among all $X_1, \dots, X_n$.
\end{itemize}
List the elements of $A$ in increasing order as $A=(j_1, \dots, j_k)$ with $j_1 < \dots < j_k$. Define $\Delta j_i = j_i - j_{i-1} - 1$ for $i \in [k+1]$, where $j_0 = 0$ and $j_{k+1} = n + 1$, and set $\Delta j = (\Delta j_1, \dots, \Delta j_{k+1}) \in \mathbb{R}^{k+1}$.
The coordinates $X_1 < X_2 < \dots < X_k$ divide the interval $[0,1]$ into $k+1$ intervals; let $z_1, \dots, z_{k+1}$ be their lengths (from left to right). 

Finally, we define a sequence $g_1, \dots, g_k$. The ID combination $C$ defines a partition of $[0,1]^2$ into rectangles corresponding to the individual terms in $C$: $[0,1]\times [0,x_1]$, $[0,1]\times [x_1, x_1+y_1]$, $[0,1]\times [x_1+y_1, x_1+y_1 + x_2]$, etc.
If $(X_i, Y_i)$ lies in the rectangle for the term $x_t I$, we let $g_i = x_t$; if $(X_i, Y_i)$ lies in the rectangle for the term $y_t D$, we let $g_i = y_t$.

\begin{lemma}\label{lem:labeled-qsym}
Consider the sampling procedure as described above. Fix $\pi \in \Sn$ and $A = \{j_1, \dots, j_k\}$. 
The probability of sampling $(\pi, A)$ equals
\[
M(\Delta j)\prod_{i=1}^{k+1} z_i^{\Delta j_i} \cdot \sum_f \prod_{\pi^{-1}(i) \notin A} f(i),
\]
where the sum is over all $\pi^{-1}$-partition functions $f$ such that $f(\pi(j_i)) = g_i$ for every $i \in [k]$. See Section~\ref{sec:notation} for the definition of $M(\Delta j)$.
\end{lemma}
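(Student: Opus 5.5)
The plan is to split the sampling into two independent stages. First we choose which $x$-positions (ranks) the fixed points $X_1,\dots,X_k$ occupy. Then, given those ranks, we determine the $y$-ordering. The argument follows the proof of Theorem~\ref{thm:main-qsym-characterization}, with the $k$ fixed points treated as pinned.

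\textbf{Step 1 (ranks).} The $n-k$ new points have i.i.d.\ uniform $x$-coordinates, because the $x$-marginal of $C$ is uniform. The event that the rank set equals $A=\{j_1<\dots<j_k\}$ is exactly the event that, for each $i\in[k+1]$, exactly $\Delta j_i$ of the new points fall into the $i$-th gap of length $z_i$. This is a multinomial event with probability
\[
M(\Delta j)\prod_{i=1}^{k+1} z_i^{\Delta j_i}.
\]
Conditioned on this event and on the assignment of gaps, the new points keep independent $y$-coordinates given their $x$-coordinates.

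\textbf{Step 2 (pattern given the ranks).} We use the coupling of Claim~\ref{claim:alt-sampling-coreIDU}. Each new point independently lands in the rectangle of term $x_tI$ or $y_tD$ with probability equal to that term's weight, independently of its $x$-coordinate. Within a rectangle, the relative order is determined by the $x$-order: increasing for $I$, decreasing for $D$. Across rectangles, the order follows the order of the terms. The pinned points have their terms already fixed, namely the term with weight $g_i$.

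Given $A$, we define $f:[n]\to\{x_t,y_t\}$ by letting $f(m)$ be the term of the point whose $y$-rank is $m$. This is the same $f$ as in Theorem~\ref{thm:main-qsym-characterization}, with $f(i)$ the term of the point at $x$-position $\pi^{-1}(i)$. The same case analysis then shows that $\sigma=\pi$ if and only if $f$ is a $\pi^{-1}$-partition function. On this event, the point at $x$-position $j_i$ is the $i$-th pinned point, so its $y$-rank is $\pi(j_i)$, which forces $f(\pi(j_i))=g_i$.

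By independence, a specific such $f$ occurs with probability equal to the product of its values over the free $y$-ranks $i$, that is, those with $\pi^{-1}(i)\notin A$. The pinned points contribute no factor because their terms are deterministic, and this matches the claimed product. Summing over admissible $f$ and multiplying by the Step~1 factor gives the formula.

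\textbf{Main obstacle.} The subtle point is the within-rectangle comparison between a pinned point and a free point, and between two points in the same rectangle. The $I$/$D$ rule orders them by $x$-coordinate, and only the \emph{relative} $x$-order matters. That relative order is exactly encoded by the ranks $A$ and the positions $\pi^{-1}(\cdot)$, so the conditioning on gaps does not bias it.

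A further concern is that the pinned points must lie on the support. If pinned point $i$ lies in the rectangle of $x_tI$, its $y$-coordinate is determined by $X_i$. We must check that the support structure makes the order consistent with the rule from Claim~\ref{claim:alt-sampling-coreIDU}, and this holds because on the support the $y$-coordinate within $I$ or $D$ is a monotone function of $x$. We must also check that conditioning on $A$ is independent of the term labels of the free points, which holds because the label is independent of $x$ under $C$.
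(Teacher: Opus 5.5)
Your proposal is correct and follows essentially the same route as the paper. The paper also factors the probability as $\Pr(\mathcal{E}_A \mid \mathcal{E}_S)\cdot\Pr(\mathcal{E}_\pi \mid \mathcal{E}_S, \mathcal{E}_A)$, evaluating the first factor as the multinomial term $M(\Delta j)\prod_i z_i^{\Delta j_i}$. For the second factor it reuses the $\pi^{-1}$-partition-function characterization from Theorem~\ref{thm:main-qsym-characterization}, with the pinned points' terms forced to $g_i$ (each contributing a factor $1$) and each free point contributing its term's weight.
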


\begin{proof}
We restate the desired probability as a conditional probability. We assume that we independently sample all $\{(X_i, Y_i)\}_{i=1}^n$ from $C$. Consider three events:
\begin{itemize}
  \item $\mathcal{E}_{\pi}$ is the event that points $(X_1, Y_1), \dots, (X_n, Y_n)$ have pattern $\pi$;
  \item $\mathcal{E}_S$ is the event that $\{(X_1, Y_1), \dots, (X_k, Y_k)\} = S$ and $X_1 < \dots < X_k$;
  \item $\mathcal{E}_A$ is the event that the rank of each $X_i$ among $X_1,\dots, X_n$ is $j_i$ for $i\in[k]$.
\end{itemize}
Then the desired probability of sampling $(\pi, A)$ equals 
\[
\Prob{\mathcal{E}_{\pi} \text{ and } \mathcal{E}_A\,|\, \mathcal{E}_S }  = \Prob{\mathcal{E}_{\pi} \,|\, \mathcal{E}_{S} \text{ and }\mathcal{E}_{A}} \cdot \Prob{\mathcal{E}_{A} \,|\, \mathcal{E}_{S}}
\]
We first obtain a formula for $\Prob{\mathcal{E}_{A} \,|\, \mathcal{E}_{S}}$. Conditioned on $\mathcal{E}_S$, 
$X_{1}, \dots, X_{k}$ partition $[0,1]$ into $k+1$ segments of lengths $z_1, \dots, z_{k+1}$. All variables
$X_{k+1}, \dots, X_n$ are independent and uniformly distributed on $[0,1]$. Therefore, the event $\mathcal{E}_A$ given $\mathcal{E}_S$ occurs when among $n-k$ random variables $X_{k+1}, \dots, X_n$ exactly $\Delta j_i$ lie in segment $i$ (of length $z_i$) for every $i\in[k+1]$. The probability of this event is
\begin{equation}\label{eq:conditional-part-1}
\binom{n-k}{\Delta j_1,\dots,\Delta j_{k+1}}
\prod_{j=1}^{k+1} z_j^{\Delta j_j}
=
M(\Delta j)\prod_{j=1}^{k+1} z_j^{\Delta j_j}.
\end{equation}

To obtain a formula for $\Prob{\mathcal{E}_{\pi} \,|\, \mathcal{E}_{S} \text{ and }\mathcal{E}_{A}}$, we slightly revise the proof of Theorem~\ref{thm:main-qsym-characterization}. Specifically, we revise formula~\eqref{eq:prob-f-f0} for the probability of $f = f_0$. Given the conditioning, the values of $Y_{i'}$ for $i'\in A$ are already fixed. If $f_0(\pi(j_i)) \neq g_i$ for some $i\in [k]$, then the conditional probability that $Y_{i'}$ lies in the required segment is 0; if $f_0(\pi(j_i)) = g_i$, then this conditional probability is $1$. We obtain the following formula
\begin{equation}\label{eq:conditional-part-2}
  \Prob{f = f_0 \,|\, \mathcal{E}_{S} \text{ and }\mathcal{E}_{A}} = 
  \begin{cases}
    \prod_{\pi^{-1}(i)\notin A} f_0(i), &\text{if } f_0(\pi(j_i)) = g_i \text{ for }i\in [k]\\
    0,  &\text{otherwise.}
  \end{cases}
\end{equation}
Combining~\eqref{eq:conditional-part-1} and \eqref{eq:conditional-part-2}, we obtain the desired formula.
\end{proof}

In this section, we constructed level-2 quasisymmetric polynomials that capture finite ID combinations.
Analogously, we can define level-3 quasisymmetric polynomials to represent IDU combinations. Although we will primarily use ID combinations to prove theoretical results, IDU combinations are often more convenient when working with specific ordering CSPs. We briefly provide the necessary definitions without formal proofs, as the arguments are completely analogous to those for ID combinations.

We consider an up-combination $C$ of the form:
\begin{equation}\label{eq:xyz-IDU-combination}
C = x_1 I \oUp y_1 D \oUp z_1 U \oUp x_2 I \oUp y_2 D \oUp z_2 U \oUp \dots \oUp x_N I \oUp y_N D \oUp z_N U.
\end{equation}

A function $f:[n] \to \{x_i, y_i, z_i\}_{1 \leq i \leq N}$ is a $\pi$-partition function for the IDU combination \eqref{eq:xyz-IDU-combination} if for every $i < j$ and some $t$, either $f(i) = f(j) = z_t$, or it satisfies the same conditions as a $\pi$-partition function for an ID combination:
  $f(i) \prec f(j)$, 
  or $\pi(i) < \pi(j)$ and $f(i) = f(j) = x_t$, 
  or $\pi(i) > \pi(j)$ and $f(i) = f(j) = y_t$.
Next, we define the generating function for \eqref{eq:xyz-IDU-combination}:
\[
K_{\pi,N}(x,y,z) = \sum_{f} \frac{\prod_{i=1}^n f(i)}{\prod_{j=1}^N |f^{-1}(z_j)|!},
\]
where the sum is over all $\pi$-partition functions $f$ for the IDU combination~\eqref{eq:xyz-IDU-combination}.
Now,
\begin{equation}\label{eq:density-for-IDU}
d(\pi, C) = K_{\pi^{-1},N}(x,y,z).
\end{equation}

\section{Equivalence in the density of permutations with matching inverse up--down signatures}\label{sec:up--down}
Recall the definition of the up--down signature of a permutation given in Definition~\ref{def:ud-signature}.
In this section, we first show that if two permutations $\pi_1$ and $\pi_2$ satisfy $\pi_1^{-1} \udsim \pi_2^{-1}$, then they have the same pattern density under every strong IDU transformation.
We then show that this property characterizes strong IDU transformations: if, for a weak IDU transformation $\bm{\sigma}$, it holds that $d(\pi_1, \sigma_n) = d(\pi_2, \sigma_n)$ for all $\pi_1$ and $\pi_2$ with $\pi_1^{-1} \udsim \pi_2^{-1}$ and all $n \ge |\pi_1| = |\pi_2|$, then $\bm{\sigma}$ is a strong IDU transformation.

\begin{claim}\label{claim:ud-sign-for-combinations}
Consider two permutations $\rho_1$ and $\rho_2$ with $\rho_1 \udsim \rho_2$ (see Definition~\ref{def:ud-signature}). Then:
\begin{itemize}
    \item $f$ is a $\rho_1$-partition function if and only if it is a $\rho_2$-partition function;
    \item $K_{\rho_1,N}(x,y) = K_{\rho_2,N}(x,y)$.
\end{itemize}
\end{claim}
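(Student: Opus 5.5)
The plan is to show that the $\pi$-partition condition of Definition~\ref{def:pi-partition} depends on $\pi$ only through its up--down signature. The second bullet of the claim then follows at once, because $K_{\rho,N}(x,y)$ is a sum over the set of $\rho$-partition functions of a weight $\prod_i f(i)$ that does not involve $\rho$. So the whole task is the first bullet, and by symmetry it suffices to show that every $\rho_1$-partition function is a $\rho_2$-partition function.

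\textbf{Step 1: reduce to consecutive positions.} Fix a function $f:[n]\to\{x_t,y_t\}$. I will first show that $f$ is a $\pi$-partition function if and only if two conditions hold:
\begin{enumerate}
\item $f$ is nondecreasing with respect to $\prec$;
\item for every $i\in[n-1]$ with $f(i)=f(i+1)$, we have $\pi(i)<\pi(i+1)$ when the common value is some $x_t$, and $\pi(i)>\pi(i+1)$ when it is some $y_t$.
\end{enumerate}
Necessity is immediate. Applying the definition to pairs $i<j$ gives $f(i)\preceq f(j)$, and applying it to the consecutive pairs $(i,i+1)$ gives condition~2.

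For sufficiency, take $i<j$. If $f(i)\prec f(j)$ there is nothing to check. Otherwise, monotonicity forces $f(i)=f(i+1)=\dots=f(j)$, so the whole block shares one value. If that value is $x_t$, condition~2 gives $\pi(i)<\pi(i+1)<\dots<\pi(j)$, hence $\pi(i)<\pi(j)$ by transitivity. If the value is $y_t$, the chain is decreasing, so $\pi(i)>\pi(j)$. In both cases the definition is satisfied.

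\textbf{Step 2: conclude.} The characterization in Step~1 refers to $\pi$ only through the comparisons $\pi(i)$ versus $\pi(i+1)$, that is, through $\udsign(\pi)$. Hence $\rho_1\udsim\rho_2$ gives the same set of partition functions for $\rho_1$ and $\rho_2$, which is the first bullet. Summing the identical weights over this common set gives $K_{\rho_1,N}=K_{\rho_2,N}$.

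The only nontrivial point is the transitivity argument in Step~1: monotonicity under $\prec$ is what guarantees that equal values at positions $i<j$ force the entire intermediate block to be constant. Everything else is bookkeeping.
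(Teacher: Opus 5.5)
Your proof is correct and is essentially the paper's argument: the paper also uses that $f(i)=f(j)$ with $f$ nondecreasing under $\prec$ forces $f$ to be constant on $\{i,\dots,j\}$. The partition condition for the pair $(i,j)$ then reduces to the consecutive comparisons $\rho(\ell)$ versus $\rho(\ell+1)$, which the up--down signature fixes. The only difference is packaging: you state this as an explicit local characterization of $\pi$-partition functions, whereas the paper runs the same reasoning as a case analysis inside a proof by contradiction.
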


\begin{proof}
Assume, for contradiction, that $f$ is a $\rho_1$-partition function but not a $\rho_2$-partition function. Then there exist $i < j$ such that $f$ satisfies one of the three conditions for being a $\rho_1$-partition function but does not satisfy any of the corresponding conditions for $\rho_2$. We consider the $\rho_1$-partition condition that $f$ satisfies. There are three cases.

\noindent \textbf{Case 1.} The condition is $f(i) \prec f(j)$. This condition does not depend on the permutation; therefore, if $f$ satisfies it for $\rho_1$, it also satisfies it for $\rho_2$.

\noindent \textbf{Case 2.} The condition is $\rho(i) < \rho(j)$ and $f(i) = f(j) = x_t$ for some $t$.  
Assume that it holds for $\rho = \rho_1$ but not for $\rho = \rho_2$. Since $f(i) = f(j)$ and $f$ is non-decreasing, we have $f(\ell) = x_t$ for all $\ell \in \{i, \dots, j\}$. Hence, $f(\ell) = f(\ell+1) = x_t$ for all $\ell \in \{i, \dots, j-1\}$. 
Since $f$ is a $\rho_1$-partition function and $f(\ell) = f(\ell+1) = x_t$, we have $\rho_1(\ell) < \rho_1(\ell+1)$ for all $\ell \in \{i, \dots, j-1\}$. Thus, the up--down signature $\udsign(\rho_1)$ has a $u$ at every position $\ell \in \{i, \dots, j-1\}$. 
Since $\rho_1$ and $\rho_2$ have the same up--down signatures, $\udsign(\rho_2)$ also has $u$ at all such positions. We therefore conclude that $\rho_2(i) < \rho_2(j)$, and hence, contrary to our assumption, $f$ also satisfies the $\rho_2$-partition condition for the indices $i$ and $j$.

\noindent \textbf{Case 3.} The condition is $\rho(i) > \rho(j)$ and $f(i) = f(j) = y_t$ for some $t$. The analysis is analogous to that of Case~2.

Since the set $\mathcal F$ of $\rho$-partition functions depends only on the up--down signature of $\rho$, we obtain
\[
K_{\rho_1,N}(x,y) = \sum_{f\in \mathcal{F}} \prod_{i=1}^n f(i) =  K_{\rho_2,N}(x,y).
\]
\end{proof}
Letting $\rho_1 = \pi_1^{-1}$, $\rho_2 = \pi_2^{-1}$ and combining this claim with Theorem~\ref{thm:main-qsym-characterization}, we get the following corollary.
\begin{corollary}\label{cor:profile-us-signature}
For every two permutations $\pi_1$ and $\pi_2$ with $\pi_1^{-1} \udsim \pi_2^{-1}$,
and every ID combination $C$,
$$d(\pi_1, C) = d(\pi_2, C).$$
\end{corollary}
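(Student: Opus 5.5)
The corollary follows almost immediately from Theorem~\ref{thm:main-qsym-characterization} combined with Claim~\ref{claim:ud-sign-for-combinations}.

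First, write the given ID combination $C$ in the normal form~\eqref{eq:xy-ID-combination}, i.e.\ $C = x_1 I \oUp y_1 D \oUp \dots \oUp x_N I \oUp y_N D$, for some $N$ and nonnegative coefficients summing to $1$. Any finite up-combination of $I$ and $D$ can be put in this form. Consecutive terms with the same label are padded by inserting zero-weight terms of the other label so that the labels alternate $I, D, I, D, \dots$. A zero coefficient means the term is absent, so the permuton is unchanged.

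Let $n = |\pi_1| = |\pi_2|$. By Theorem~\ref{thm:main-qsym-characterization},
\[
d(\pi_1, C) = K_{\pi_1^{-1},N}(x,y)
\quad\text{and}\quad
d(\pi_2, C) = K_{\pi_2^{-1},N}(x,y),
\]
evaluated at the same coefficient vector $(x,y)$.

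Now apply Claim~\ref{claim:ud-sign-for-combinations} with $\rho_1 = \pi_1^{-1}$ and $\rho_2 = \pi_2^{-1}$. The hypothesis $\pi_1^{-1}\udsim\pi_2^{-1}$ is exactly what the claim requires. It gives $K_{\rho_1,N} = K_{\rho_2,N}$ as polynomials, hence also as functions at this point $(x,y)$. Chaining the equalities yields $d(\pi_1,C)=d(\pi_2,C)$.

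There is no real obstacle here. The one point that deserves a sentence is the normal-form step. It matters because Theorem~\ref{thm:main-qsym-characterization} is stated only for the alternating form, so an arbitrary ID combination must first be brought into that form with the permuton left unchanged, which the zero-padding does.
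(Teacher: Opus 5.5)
Your proposal is correct and follows the paper's own argument: the paper also obtains the corollary by applying Claim~\ref{claim:ud-sign-for-combinations} with $\rho_1 = \pi_1^{-1}$, $\rho_2 = \pi_2^{-1}$ and combining it with Theorem~\ref{thm:main-qsym-characterization}. Your zero-padding remark just makes explicit what the paper leaves implicit when it says any ID combination can be written in the alternating form~\eqref{eq:xy-ID-combination} with some coefficients equal to $0$.
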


Now we introduce the notion of \emph{inverse-signature invariance}: an IDU transformation is inverse-signature-invariant if the densities of patterns $\pi_1$ and $\pi_2$ are the same for all $\pi_1$ and $\pi_2$ with $\pi_1^{-1} \udsim \pi_2^{-1}$.

\begin{definition}[Inverse-signature invariance]\label{def:inv-sig-invariant}
A random permuton $R$ is \textit{inverse-signature-invariant} if 
$$d(\pi, R) = \Delta_R(\udsign(\pi^{-1}))$$
for some function $\Delta_R$ on up--down signatures and all permutations $\pi$. 

A weak IDU transformation $\{\sigma_n\}$ is \textit{inverse-signature-invariant} if 
$$d(\pi, \sigma_n) = \Delta_{\bm{\sigma}}(\udsign(\pi^{-1}))$$
for some function $\Delta_{\bm{\sigma}}$ on up--down signatures, all permutations $\pi$, and $n \geq |\pi|$.

We will refer to $\Delta_R$ and $\Delta_{\bm{\sigma}}$ as the \emph{inverse-signature profiles} of $R$ and $\bm{\sigma} = \{\sigma_n\}$, respectively.
\end{definition}
Now we extend Claim~\ref{claim:ud-sign-for-combinations} to all strong IDU transformations.

\begin{theorem}\label{thm:strong-IDU-inv-sig-invariant}
Every strong IDU permuton $R$ is inverse-signature-invariant.
\end{theorem}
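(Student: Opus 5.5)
The plan is to go through the approximation theorem for ID combinations, since Corollary~\ref{cor:profile-us-signature} already gives the claim for a single ID combination. Fix $k$, two permutations $\pi_1,\pi_2\in\Sk$ with $\pi_1^{-1}\udsim\pi_2^{-1}$, and $\delta>0$. By Theorem~\ref{thm:approx-by-ID}, there is a probability distribution $R'$ over ID combinations $P'$ whose pattern densities are within $\delta$ of those of $R$, simultaneously for all patterns of length at most $k$. In particular,
\[
|d(\pi_1,R)-d(\pi_1,R')|\le\delta
\quad\text{and}\quad
|d(\pi_2,R)-d(\pi_2,R')|\le\delta .
\]

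Next I compare $\pi_1$ and $\pi_2$ inside $R'$. Since $d(\cdot,R')=\Exp_{P'\sim R'}[d(\cdot,P')]$, and Corollary~\ref{cor:profile-us-signature} gives $d(\pi_1,P')=d(\pi_2,P')$ for every ID combination $P'$ in the support, we get $d(\pi_1,R')=d(\pi_2,R')$. Hence $|d(\pi_1,R)-d(\pi_2,R)|\le 2\delta$. Letting $\delta\to0$ yields $d(\pi_1,R)=d(\pi_2,R)$.

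It follows that $d(\pi,R)$ depends only on $\udsign(\pi^{-1})$. Define $\Delta_R(s)$ as the common value of $d(\pi,R)$ over all $\pi$ with $\udsign(\pi^{-1})=s$. Every signature $s$ of length $k-1$ is realized, for instance by $\pi=\rho^{-1}$ with $\rho$ any permutation of signature $s$, so $\Delta_R$ is well defined and $R$ is inverse-signature-invariant in the sense of Definition~\ref{def:inv-sig-invariant}.

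There is a possible circularity to rule out. Theorem~\ref{thm:approx-by-ID}, and Theorem~\ref{thm:approx-by-IDU} behind it, mention the bound of $2^{k-1}$ on the support size, which is justified by a forward reference to Corollary~\ref{cor:dim-profile-space}. That corollary presumably depends on the present theorem. The argument here does not need that bound: the Carathéodory step with $k!$ profiles already suffices, as the proof of Theorem~\ref{thm:feasible-region-characterization} also notes. So I will invoke the approximation only in its $k!$-support form, and the only ingredients are Theorem~\ref{thm:main-strongIDU-classification} (through the approximation) and Claim~\ref{claim:ud-sign-for-combinations}.

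An alternative route would avoid approximation altogether. One would use Theorem~\ref{thm:main-strongIDU-classification} to write $R$ as a mixture of core deterministic IDU permutons, and prove the analogue of Claim~\ref{claim:ud-sign-for-combinations} directly for each core permuton via the alternative sampling procedure. I expect that to be messier, because the uniform part of a core permuton need not contain intervals, so I would use the approximation route.
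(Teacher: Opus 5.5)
Your proposal is correct and follows the paper's proof: approximate $R$ by a mixture of ID combinations via Theorem~\ref{thm:approx-by-ID}, apply Corollary~\ref{cor:profile-us-signature} to each combination in the support, and let $\delta\to 0$. Your remark is also accurate: only the $k!$-support form of the approximation is needed, so the argument does not depend on Corollary~\ref{cor:dim-profile-space} and is not circular.
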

\begin{proof}
Consider $\pi_1$ and $\pi_2$ with $\pi_1^{-1} \udsim \pi_2^{-1}$.
Every strong IDU permuton $R$ can be approximated by a distribution of ID combinations $C$ within any desired $\delta > 0$ (by Theorem~\ref{thm:approx-by-ID}). By Corollary~\ref{cor:profile-us-signature}, for every such ID combination $C$, we have $d(\pi_1, C) = d(\pi_2, C)$. 
Considering the limit when $\delta$ goes to 0, we get that $d(\pi_1, R) = d(\pi_2, R)$.
\end{proof}

Thus all permutations are divided into equivalence classes according to the up--down signatures of their inverses. The number of such equivalence classes for permutations in $\Sk$ is $2^{k-1}$ (the number of possible up--down signatures of length $k-1$). Taking into account that the sum of densities of all $\rho\in\Sk$ is 1, we get the following corollary.
\begin{corollary} \label{cor:dim-profile-space}
All $k$-profiles of strong IDU transformations lie in an affine space of dimension $2^{k-1} - 1$.
\end{corollary}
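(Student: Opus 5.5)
The plan is to use Theorem~\ref{thm:strong-IDU-inv-sig-invariant} to collapse the $k!$ coordinates of a $k$-profile into $2^{k-1}$ values, and then remove one more degree of freedom with the normalization that all densities sum to $1$.

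Fix $k$ and a strong IDU transformation, viewed as a strong IDU permuton $R$. This is legitimate because every strong IDU transformation is consistent, and so it is defined by a random permuton. By Theorem~\ref{thm:strong-IDU-inv-sig-invariant}, $R$ is inverse-signature-invariant. Hence there is a function $\Delta_R$ on up--down words of length $k-1$ with $d(\pi,R)=\Delta_R(\udsign(\pi^{-1}))$ for every $\pi\in\Sk$.

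Define the linear map $L:\mathbb{R}^{2^{k-1}}\to\mathbb{R}^{\Sk}$ by $L(v)_\pi = v_{\udsign(\pi^{-1})}$, with coordinates of the source indexed by words $s\in\{u,d\}^{k-1}$. Then the $k$-profile of $R$ equals $L(\Delta_R)$, so every profile lies in the $2^{k-1}$-dimensional linear subspace $\image(L)$. For $k=1$ this subspace is a single point, and the claim is trivial.

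To drop one dimension, I will use the fact that pattern densities sum to one: $\sum_{\pi\in\Sk} d(\pi,R)=1$. Rewritten in terms of $\Delta_R$, this says $\sum_s c_s\,\Delta_R(s)=1$, where $c_s=|\{\pi\in\Sk:\udsign(\pi^{-1})=s\}|$. Every up--down word $s$ of length $k-1$ is realized by some permutation (for instance, build the permutation greedily from the descent set), and inversion is a bijection on $\Sk$. Therefore every $c_s\ge1$, so the linear functional $v\mapsto\sum_s c_s v_s$ is nonzero.

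The profiles thus lie in the intersection of $\image(L)$ with the affine hyperplane $\{w:\sum_\pi w_\pi=1\}$. On $\image(L)$ this hyperplane pulls back to the hyperplane $\{v:\sum_s c_s v_s=1\}$, and $L$ is injective because each word is realized. The intersection is therefore an affine space of dimension $2^{k-1}-1$, which is the claim.

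None of the steps is hard. The only point needing care is the nondegeneracy check, namely that every signature occurs, so that the normalization genuinely cuts the dimension. Even without that check, the bound holds, since a degenerate case would only make the dimension smaller.
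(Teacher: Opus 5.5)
Your proposal is correct and follows the paper's argument. Inverse-signature invariance (Theorem~\ref{thm:strong-IDU-inv-sig-invariant}) collapses the $k!$ coordinates to $2^{k-1}$ signature classes, and the normalization $\sum_{\rho\in\Sk} d(\rho,R)=1$ removes one more dimension; your injectivity and nondegeneracy checks only make explicit what the paper leaves implicit. One small slip: for $k=1$ the image of $L$ is a line, not a point (it is the intersection with the normalization hyperplane that is a point), but your general argument already covers $k=1$, so the special case is unnecessary.
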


Now we provide a characterization of strong IDU transformations using inverse-signature invariance.
\begin{theorem}\label{thm:ud-invariance-implies-strong} $\quad$ \\
1. If a random permuton $R$ is inverse-signature-invariant, then it defines a strong IDU transformation.

\noindent 2. If a weak IDU transformation $\{\sigma_n\}$ is inverse-signature-invariant, then it is a strong IDU transformation.
\end{theorem}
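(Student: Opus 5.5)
The plan is to prove part~2 first and deduce part~1 from it. For part~1, a random permuton $R$ defines a consistent weak IDU transformation $\{\sigma_n\}$ (Theorem~\ref{thm:random-permuton-consistency}). For $\pi\in\Sn$ we have $d(\pi,\sigma_n)=\Pr(\sigma_n=\pi)=d(\pi,R)$, and for shorter $\pi$ consistency gives $d(\pi,\sigma_n)=d(\pi,R)$ as well. So if $R$ is inverse-signature-invariant, so is $\{\sigma_n\}$, and part~2 applies.

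For part~2, fix $n$, $k\le n$, $\rho\in\Sk$ and a $k$-subset $J\subseteq[n]$. Since $\Pr(\sigma_n=\pi)=d(\pi,\sigma_n)$ depends only on $\udsign(\pi^{-1})$, equivalently on the descent set $\mathrm{Des}(\pi^{-1})\subseteq[n-1]$, the law of $\sigma_n$ is a linear combination
\[
\Pr(\sigma_n=\cdot)=\sum_{D\subseteq[n-1]} \Delta(D)\,\mathbf{1}[\mathrm{Des}((\cdot)^{-1})=D].
\]
By Möbius inversion on the Boolean lattice, the indicator of $\mathrm{Des}=D$ is a signed combination of the indicators of $\mathrm{Des}\subseteq D'$ over $D'\subseteq D$. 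Hence $d_J(\rho,\sigma_n)$ is a fixed linear combination, with coefficients independent of $J$, of the quantities
\[
c_{D'}(J)=\frac{\#\{\pi\in\Sn:\ \mathrm{Des}(\pi^{-1})\subseteq D',\ \patt(\pi,J)=\rho\}}{\#\{\pi\in\Sn:\ \mathrm{Des}(\pi^{-1})\subseteq D'\}} .
\]
It therefore suffices to show that each $c_{D'}(J)$ does not depend on $J$.

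To see this, describe the uniform distribution on $\{\tau=\pi^{-1}:\mathrm{Des}(\tau)\subseteq D'\}$ explicitly. The set $D'$ cuts the positions $[n]$ into consecutive blocks $B_1,\dots,B_m$ of sizes $b_1,\dots,b_m$. A permutation $\tau$ with $\mathrm{Des}(\tau)\subseteq D'$ is the same as an assignment of values to blocks with the prescribed sizes, with each block's values written in increasing order. So the uniform such $\tau$ corresponds to a uniformly random labeling $\ell:[n]\to[m]$ with $|\ell^{-1}(t)|=b_t$.

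Now $\pi(v)=\tau^{-1}(v)$ is the position of value $v$. Positions are ordered first by block label and then, within a block, by value. Hence for $v<w$ we have $\pi(v)<\pi(w)$ if and only if $\ell(v)\le\ell(w)$. Thus $\patt(\pi,J)$ is a deterministic function of the restricted labeling $(\ell(j_1),\dots,\ell(j_k))$, where $j_1<\dots<j_k$. The same function applies for every $J$, since it only uses the relative order of the $j_i$.

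The law of $\ell$ is exchangeable under permutations of $[n]$, so the law of $(\ell(j_1),\dots,\ell(j_k))$ is the same for every $k$-subset $J$. Therefore $c_{D'}(J)$ is independent of $J$, and so $d_J(\rho,\sigma_n)$ depends only on $n$. Consequently $d_J(\rho,\sigma_n)=d(\rho,\sigma_n)$, which by assumption equals $\Delta_{\bm\sigma}(\udsign(\rho^{-1}))$ for every $n\ge k$. So $d_J(\rho,\sigma_n)$ is independent of both $J$ and $n$, which is exactly the definition of a strong IDU transformation.

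The main obstacle is identifying a family of measures that both spans the inverse-signature-invariant laws and visibly satisfies the strong property. The Möbius inversion from ``$\mathrm{Des}=D$'' to ``$\mathrm{Des}\subseteq D'$'', combined with the exchangeable block-labeling description, is what makes the $J$-independence transparent. The remaining work is bookkeeping: checking the identity $\pi(v)<\pi(w)\iff \ell(v)\le\ell(w)$ for $v<w$, and handling the normalizations, which are also $J$-independent.
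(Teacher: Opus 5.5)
Your proof is correct, and it takes a genuinely different route from the paper's.

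\textbf{The paper's route.} The paper proves part~1 first and derives part~2 from it, using the fact that an inverse-signature-invariant weak IDU transformation is consistent and hence comes from a random permuton. For part~1 it proves a ``remove-one'' lemma: for $J,J'\subseteq[n+1]$ of size $n$, $d_J(\rho,\sigma_{n+1})=d_{J'}(\rho,\sigma_{n+1})$. Both sides are written as sums of $\Pr(\sigma_{n+1}=\rho_{(q,y)})$ over entry insertions. The multisets of up--down signatures of the inverses are then matched; only the two insertions adjacent to $\tau^{-1}(q)$ need special care. Consistency then gives $\patt(\sigma_{n+1},J)\overset{d}{=}\sigma_n$, and a downward induction handles smaller $J$.

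\textbf{Your route.} You work at a single level $n$. Möbius inversion reduces everything to the uniform measures on $\{\pi:\mathrm{Des}(\pi^{-1})\subseteq D'\}$. You identify these with exchangeable block labelings satisfying $\pi(v)<\pi(w)\iff \ell(v)\le\ell(w)$ for $v<w$. $J$-independence then follows at once for every subset size, with no appeal to consistency. Part~1 follows from part~2 via the standard fact that permuton samples form a consistent family.

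\textbf{Comparison.} Your argument is shorter and gives a stronger finite statement. Any law on $\Sn$ whose mass function depends only on $\udsign(\pi^{-1})$ already has $J$-independent restricted densities at that level $n$; invariance across $n$ is needed only for the final $n$-independence. Your building blocks are also natural in the paper's language. They are exactly finite-$n$ analogues of up-combinations of $I$ terms with prescribed, rather than i.i.d.\ random, group sizes. The paper's argument is more local and bijective, in the same spirit as the swap arguments it later uses for the flag-algebra ideal. However, it controls only co-size-one subsets directly and must rely on consistency to reach all $J$.
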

\begin{proof}
1. Consider an inverse-signature-invariant random permuton $R$. $R$ defines a consistent weak IDU transformation $\{\sigma_n\}$ (see Definition~\ref{def:consistency}).
We will need the following auxiliary definition.
\begin{definition}
Consider a permutation $\beta \in \Sn$. We define the operation of inserting a new entry $(x,y)$ into $\beta$ with $x,y\in [n+1]$. To provide more intuition, we offer three equivalent definitions.
Write $\beta$ in one-line notation as $(\beta(1), \dots, \beta(n))$. Insert the value $y$ between positions $x-1$ and $x$ (after the change, $y$ will be at position $x$). After that, increment all previously present values $\beta(i) \geq y$ by 1 (to avoid repetitions). We obtain the one-line representation of $\beta_{(x,y)}$.

Now we provide a more explicit description of the same procedure. 
Consider the graph of $\beta$: $\{(i, \beta(i)) : i \in [n]\}$ (see Figure~\ref{fig:entry-insertion}). 
Modify the graph as follows: move all points to the right of $x$ one unit to the right; 
move all points above $y$ one unit up; then add the point $(x,y)$. We obtain a set of $n+1$ points:
$$
\{(\operatorname{shift}_x(i), \operatorname{shift}_y(\beta(i))): i\in [n]\} \cup \{(x,y)\} \quad \text{where } \operatorname{shift}_t(j) = 
\begin{cases}
  j,& \text{if } j < t\\
  j+ 1, & \text{if } j \geq t.
\end{cases}
$$
Let $\beta_{(x,y)}$ be the permutation on $[n+1]$ with this graph.

Finally, note that $\beta_{(x,y)}$ is the pattern of the set of points $\{(i, \beta(i)): i\in [n]\} \cup \{(x-\nicefrac12, y-\nicefrac12)\}$.
\end{definition}

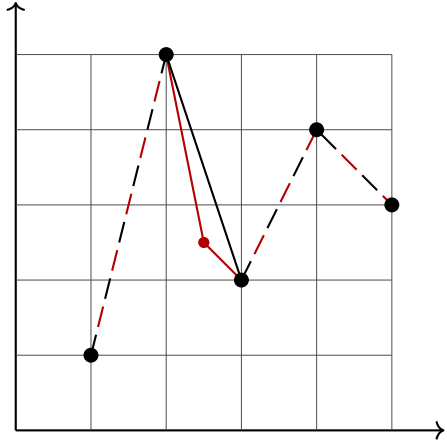
\begin{figure}
  \centering
\begin{tikzpicture}[scale=1]
  \tikzset{
    axis/.style={black, thick, ->},
    gridblue/.style={black!70, very thin},
    dotblack/.style={black, fill=black, circle, inner sep=2pt},
    dotsmall/.style={black, fill=black, circle, inner sep=1pt},
    dotred/.style={darkred, fill=darkred, circle, inner sep=1.5pt},
    dotblue/.style={darkblue, fill=darkblue, circle, inner sep=1.5pt}
  }
  \tikzset{
    alt_r_b dashed/.style={
      dash pattern=on 8pt off 14pt,
      draw=black,
      postaction={
        draw,
        darkred,
        dash pattern=on 8pt off 14pt,
        dash phase=11pt
      }
    }
  }

  \draw[axis] (0,0) -- (5.7,0);
  \draw[axis] (0,0) -- (0,5.7);

  \foreach \x in {1,2,3,4,5} {
    \draw[gridblue] (\x,0) -- (\x,5);
  }
  \foreach \y in {1,2,3,4,5} {
    \draw[gridblue] (0,\y) -- (5,\y);
  }

  \draw[darkred, thick]   (2,5) -- (2.5,2.5)--(3,2); 
  \draw[black, thick]  (2,5) -- (3,2);
  \draw[alt_r_b dashed, thick] (1,1) -- (2,5);
  \draw[alt_r_b dashed, thick] (3,2) -- (4, 4) -- (5,3);
  \node[dotred] at (2.5, 2.5) {};

  \node[dotblack] at (1,1) {};
  \node[dotblack] at (2,5) {};
  \node[dotblack] at (3,2) {};
  \node[dotblack] at (4,4) {};
  \node[dotblack] at (5,3) {};

\end{tikzpicture}

\caption{The figure shows the plot of the permutation \(\beta = (\mathsf{1\ 5\ 2\ 4\ 3})\) (black dots connected by black lines) and the permutation \(\beta_{(3,3)} = (\mathsf{1\ 6\ 3\ 2\ 5\ 4})\) (black dots together with one red dot, connected by red lines). Segments common to both plots are indicated by alternating black--red dashed lines.}
\label{fig:entry-insertion}
\end{figure}

\begin{observation}\label{obs:insertions}
Note that for $\beta\in \Sn$ and $i\in [n]$, 
$$\beta_{(i,\beta(i))} = \beta_{(i+1, \beta(i)+1)} \qquad \text{and}\qquad 
\beta_{(i,\beta(i)+1)} = \beta_{(i+1, \beta(i))}.$$
Both permutations in the first identity are obtained by replacing $(i,\beta(i))$ with a two-point pattern $(\mathsf{1\ 2})$; both permutations in the second identity are obtained by replacing $(i,\beta(i))$ with a two-point pattern $(\mathsf{2\ 1)}$.
\end{observation}

Now we prove the key lemma, from which the theorem will follow.
\begin{lemma}\label{lem:remove-one}
Let $\rho \in \Sn$. Let $J$ and $J'$ be subsets of $[n + 1]$ of size $n$ (that is, each is obtained by removing a single element from $[n+1]$). Then,
$$d_J(\rho, \sigma_{n+1}) = d_{J'}(\rho, \sigma_{n+1}).$$
\end{lemma}
\begin{proof}
We first prove the statement for $J = [n+1] \setminus \{q\}$ and $J' = [n+1] \setminus \{q+1\}$ for some $q\in [n]$.

Observe that $\patt(\sigma_{n+1}, J) = \rho$ if and only if $\sigma_{n+1} = \rho_{(q,y)}$ for some $y\in [n+1]$.
Therefore,
$$d_J(\rho, \sigma_{n+1}) = \sum_{y=1}^{n+1} \Pr(\sigma_{n+1} = \rho_{(q,y)}).$$
Similarly, 
$$d_{J'}(\rho, \sigma_{n+1}) = \sum_{y=1}^{n+1} \Pr(\sigma_{n+1} = \rho_{(q+1,y)}).$$
By our assumption the IDU transformation $\{\sigma_n\}$ is inverse-signature-invariant, and thus 
$\Prob{\sigma_{n+1} = \tau}$ depends only on the up--down signature of $\tau^{-1}$.
Therefore, to prove that $d_J(\rho, \sigma_{n+1}) = d_{J'}(\rho, \sigma_{n+1})$, it suffices to show that for every $y\in [n+1]$, the multisets of up--down signatures 
$$\{\udsign((\rho_{(q,y)})^{-1})\}_{y\in[n+1]}\quad\text{and}\quad \{\udsign((\rho_{(q+1,y)})^{-1})\}_{y\in[n+1]}$$ are equal (including multiplicities).

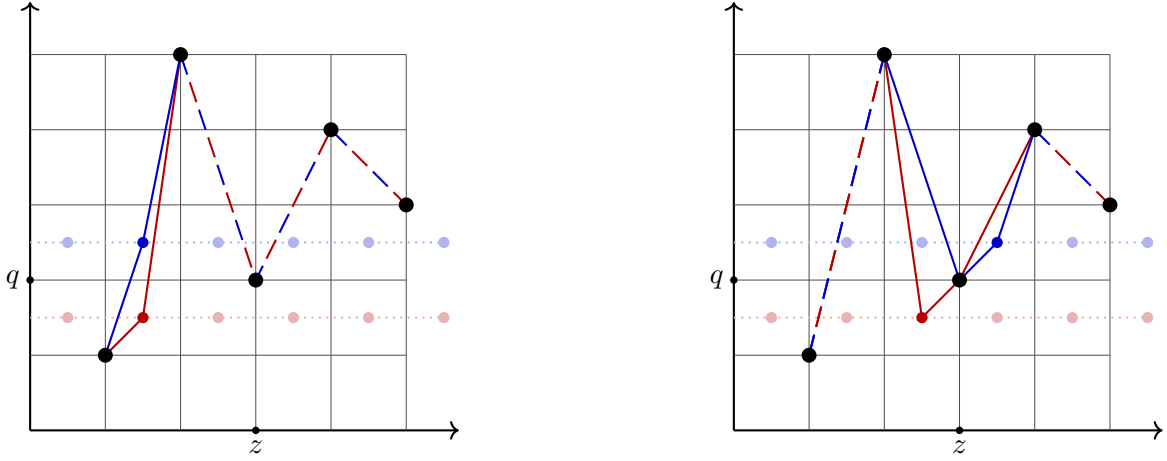
\begin{figure}
  \centering
\begin{tikzpicture}[scale=1]
  \tikzset{
    axis/.style={black, thick, ->},
    gridblue/.style={black!70, very thin},
    dotblack/.style={black, fill=black, circle, inner sep=2pt},
    dotsmall/.style={black, fill=black, circle, inner sep=1pt},
    dotred/.style={darkred, fill=darkred, circle, inner sep=1.5pt},
    dotredinactive/.style={darkred, fill=darkred!30, circle, inner sep=1.5pt},
    dotblue/.style={darkblue, fill=darkblue, circle, inner sep=1.5pt},
    dotblueinactive/.style={darkblue, fill=darkblue!30, circle, inner sep=1.5pt}
  }
  \tikzset{
    alt_r_b dashed/.style={
      dash pattern=on 8pt off 14pt,
      draw=darkblue,
      postaction={
        draw,
        darkred,
        dash pattern=on 8pt off 14pt,
        dash phase=11pt
      }
    }
  }

  \draw[axis] (0,0) -- (5.7,0);
  \draw[axis] (0,0) -- (0,5.7);

  \foreach \x in {1,2,3,4,5} {
    \draw[gridblue] (\x,0) -- (\x,5);
  }
  \foreach \y in {1,2,3,4,5} {
    \draw[gridblue] (0,\y) -- (5,\y);
  }

  \draw[darkred!30, dotted, thick] (0,1.5) -- (5.5,1.5);
  \draw[darkblue!30, dotted, thick] (0,2.5) -- (5.5,2.5);

  \draw[darkred, thick]  (1,1) -- (1.5,1.5) -- (2,5); 
  \draw[darkblue, thick]  (1,1) -- (1.5,2.5) -- (2,5); 
  \draw[alt_r_b dashed, thick] (2,5)--(3,2)--(4,4)--(5,3);

  \foreach \x in {0.5,2.5,3.5,4.5,5.5} {
    \node[dotredinactive] at (\x, 1.5) {};
  }
  \node[dotred] at (1.5, 1.5) {};

  \foreach \x in {0.5,2.5,3.5,4.5,5.5} {
    \node[dotblueinactive] at (\x, 2.5) {};
  }
  \node[dotblue] at (1.5, 2.5) {};

  \node[dotblack] at (1,1) {};
  \node[dotblack] at (2,5) {};
  \node[dotblack] at (3,2) {};
  \node[dotblack] at (4,4) {};
  \node[dotblack] at (5,3) {};

  \node[dotsmall] at (3, 0) {};
  \node[dotsmall] at (0, 2) {};
  \node[below] at (3,0) {$z$};
  \node[left]  at (0,2) {$q$};
\end{tikzpicture}
\hspace{3cm}
\begin{tikzpicture}[scale=1]
  \tikzset{
    axis/.style={black, thick, ->},
    gridblue/.style={black!70, very thin},
    dotblack/.style={black, fill=black, circle, inner sep=2pt},
    dotsmall/.style={black, fill=black, circle, inner sep=1pt},
    dotred/.style={darkred, fill=darkred, circle, inner sep=1.5pt},
    dotredinactive/.style={darkred, fill=darkred!30, circle, inner sep=1.5pt},
    dotblue/.style={darkblue, fill=darkblue, circle, inner sep=1.5pt},
    dotblueinactive/.style={darkblue, fill=darkblue!30, circle, inner sep=1.5pt}
  }
  \tikzset{
    alt_r_b dashed/.style={
      dash pattern=on 8pt off 14pt,
      draw=darkblue,
      postaction={
        draw,
        darkred,
        dash pattern=on 8pt off 14pt,
        dash phase=11pt
      }
    }
  }
  \draw[axis] (0,0) -- (5.7,0);
  \draw[axis] (0,0) -- (0,5.7);

  \foreach \x in {1,2,3,4,5} {
    \draw[gridblue] (\x,0) -- (\x,5);
  }
  \foreach \y in {1,2,3,4,5} {
    \draw[gridblue] (0,\y) -- (5,\y);
  }

  \draw[darkred!30, dotted, thick] (0,1.5) -- (5.5,1.5);
  \draw[darkblue!30, dotted, thick] (0,2.5) -- (5.5,2.5);

  \draw[darkred, thick]
    (2,5) -- (2.5,1.5) -- (3,2) -- (4,4);
  \draw[darkblue, thick]
    (2,5) -- (3,2) -- (3.5, 2.5) -- (4,4);    
  \draw[alt_r_b dashed, thick] (1,1) -- (2,5);
  \draw[alt_r_b dashed, thick] (1,1) -- (2,5);
  \draw[alt_r_b dashed, thick] (4,4) -- (5,3);

  \foreach \x in {0.5,1.5,3.5,4.5,5.5} {
  \node[dotredinactive] at (\x, 1.5) {};
  }
  \node[dotred] at (2.5, 1.5) {};

  \foreach \x in {0.5,1.5,2.5,4.5,5.5} {
    \node[dotblueinactive] at (\x, 2.5) {};
  }
  \node[dotblue] at (3.5, 2.5) {};

  \node[dotblack] at (1,1) {};
  \node[dotblack] at (2,5) {};
  \node[dotblack] at (3,2) {};
  \node[dotblack] at (4,4) {};
  \node[dotblack] at (5,3) {};

  \node[dotsmall] at (3, 0) {};
  \node[dotsmall] at (0, 2) {};
  \node[below] at (3,0) {$z$};
  \node[left]  at (0,2) {$q$};
\end{tikzpicture}

\caption{The figure shows the plot of the permutation $\tau = (\mathsf{1\ 5\ 2\ 4\ 3})$ (black dots). Recall that $\tau = \rho^{-1}$.
The plot of $\tau_{(y,q)}$ is obtained by inserting one red dot into the plot (and shifting the other dots accordingly), 
while the plot of $\tau_{(y,q+1)}$ is obtained by inserting one blue dot. 
In this example, $q = 2$. 
The left panel illustrates that the up--down signatures of $\tau_{(y,q)}$ and $\tau_{(y,q+1)}$ coincide for $y = 2$; 
the ud-signatures at positions $1$ and $2$, where the permutations differ, are $uu$ in both cases. 
The right panel shows that the signatures of $\tau_{(y,q)}$ and $\tau_{(y+1,q+1)}$ coincide for $y = y_0 = 3$.}
\label{fig:perm-plot}
\end{figure}
At this point, it will be convenient to switch from permutations $\rho$ and $\rho_{(x,y)}$ to their inverses (see Figure~\ref{fig:perm-plot}). Let $\tau = \rho^{-1}$. Note that $(\rho_{(x,y)})^{-1} = \tau_{(y,x)}$. Now, we need to show 
$$\{\udsign(\tau_{(y,q)})\}_{y\in[n+1]} = \{\udsign(\tau_{(y,q+1)})\}_{y\in[n+1]}$$
(where both sides are multisets).
For a fixed $y$, permutations $\tau_{(y,q)}$ and $\tau_{(y,q+1)}$ differ only at two positions:
\begin{itemize}
\item $\tau_{(y,q)}: y \mapsto q$\quad and \quad $\tau_{(y,q+1)}: y \mapsto q+1$.
\item for $z = \operatorname{shift}_y(\tau^{-1}(q))$:\quad $\tau_{(y,q)}: z \mapsto q+1$ \quad and \quad $\tau_{(y,q+1)}: z \mapsto q$.
\end{itemize}
Further, the difference between values at these positions is exactly 1. 
Therefore, the only pair of indices $\{a,b\}$ such that $\tau_{(y,q)}(a) < \tau_{(y,q)}(b)$ 
but $\tau_{(y,q+1)}(a) > \tau_{(y,q+1)}(b)$ or the other way around is $\{y, z\}$.

If $y$ and $z$ are not next to each other (that is, $|y-z| > 1$), then both permutations $\tau_{(y,q)}$ and $\tau_{(y,q+1)}$ have the same up--down signatures (even though $\tau_{(y,q)}$ and $\tau_{(y,q+1)}$ are not equal).

The only cases when the signatures may differ are when $z = y-1$ or $z = y+1$. That is, when $y= y_0\eqdef \tau^{-1}(q)$ or $y = y_0+1$. Since $\tau(y_0) = q$, by Observation~\ref{obs:insertions}, we have
$$
\tau_{(y_0,q)} = \tau_{(y_0 + 1,q+1)} \quad \text{and} \quad \tau_{(y_0+1,q)} = \tau_{(y_0,q+1)}
$$
and thus
$$
\udsign(\tau_{(y_0,q)}) = \udsign(\tau_{(y_0 + 1,q+1)}) \quad \text{and} \quad \udsign(\tau_{(y_0+1,q)}) = \udsign(\tau_{(y_0,q+1)}).
$$
We showed that $d_J(\rho, \sigma_{n+1}) = d_{J'}(\rho, \sigma_{n+1})$ for $J = [n+1] \setminus \{q\}$ and $J' = [n+1] \setminus \{q+1\}$ for all $q$. 
This implies that this equality actually holds for all $J, J' \subset [n+1]$ of size $n$.
\end{proof}
We have shown that the distribution of $\patt(\sigma_{n+1}, J)$ does not depend on the particular choice of $J$ of size~$n$. Further, let $\hat J$ be a random $n$-element subset of $[n+1]$. Since $\bm{\sigma}$ is a consistent IDU, 
we get for every subset $J$ of size $n$
$$\patt(\sigma_{n+1}, J) \overset{d}{=} \patt(\sigma_{n+1}, \hat J) \overset{d}{=} \sigma_n.$$

Now consider a subset $J_k$ of size $k \leq n$. For $i \in \{k+1, \dots, n+1\}$, let $J_i$ be the set obtained by adding the $i-k$ smallest elements from $[n+1]\setminus J_k$ to $J_k$. Then
\[
J_k \subset J_{k+1} \subset \dots \subset J_n \subset J_{n+1} = [n+1]
\]
with $|J_i| = i$ for all $i$.

We prove by induction on $i$, going from $n$ down to $k$, that
$$\patt(\sigma_{n+1}, J_i) \overset{d}{=} \sigma_i.$$
We have already established the base case $i=n$. Let $a$ be the single element of $J_{i+1} \setminus J_i$; note that $a$ has rank $a$ in $J_{i+1}$. Then
$$\patt(\sigma_{n+1}, J_i) =
\patt(\patt(\sigma_{n+1}, J_{i+1}), [i+1]\setminus \{a\}]) 
\overset{d}{=}
\patt(\sigma_{i+1}, [i+1]\setminus \{a\}) 
\overset{d}{=} \sigma_i.
$$

Hence, for every $J \subseteq [n+1]$, the distribution of $\patt(\sigma_{n+1}, J)$ depends only on the size of~$J$. Replacing $n+1$ with $n$, we get
\[
d_J(\rho, \sigma_n)
= d(\rho, \sigma_n)
= d(\rho, R),
\]
as required. We conclude that $R$ is a strong IDU transformation.

2. Now consider an inverse-signature-invariant weak IDU transformation $\bm{\sigma}$. We verify that $\bm{\sigma}$ satisfies the condition of consistency and thus is defined by a random permuton $R$. Accordingly, item 1 applies.

Indeed, the consistency condition requires that for every $\rho\in \Sk$, every $n \geq k$,
$$d(\rho, \sigma_n) = d(\rho, \sigma_k).$$
In our case, both are equal to $\Delta_{\bm{\sigma}}(\udsign(\rho^{-1}))$.
\end{proof}

In this section, we identified a set of linear dependencies for pattern densities of strong IDU permutons $R$. Namely, 
\begin{equation}\label{eq:dependencies}
d(\rho_1, R) - d(\rho_2, R) = 0 \qquad\text{for }
\rho_1^{-1} \udsim \rho_2^{-1}.
\end{equation}
Also, pattern densities satisfy arbitrary linear combinations of \eqref{eq:dependencies}.
To complete the picture, we show that there are no other nontrivial linear dependencies among permutation densities $\{d(\sigma, R)\}$ that hold for all strong IDU permutons $R$.
\begin{theorem}\label{thm:no-other-relations}
Let $\rho_1, \dots, \rho_t \in \Sn$ with $\rho_i^{-1} \not\udsim \rho_j^{-1}$ for all $i\neq j$, and let $c_1, \dots, c_t \in \mathbb R$. Assume that for every strong IDU permuton $R$, we have
$$
\sum_{i=1}^t c_i d(\rho_i , R) = 0
$$
Then all $c_i = 0$.
\end{theorem}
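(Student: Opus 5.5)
The plan is to reduce the statement to linear independence of the generating functions $K_{\rho_i^{-1},N}(x,y)$ and then to test them on suitable ID combinations. By Theorem~\ref{thm:main-qsym-characterization}, every ID combination $C$ of the form~\eqref{eq:xy-ID-combination} is a strong IDU permuton with $d(\rho_i,C)=K_{\rho_i^{-1},N}(x,y)$. The hypothesis therefore gives
\[
P(x,y)\eqdef\sum_{i=1}^t c_i K_{\rho_i^{-1},N}(x,y)=0
\]
on the simplex $\{x_j,y_j\ge 0,\ \sum_j(x_j+y_j)=1\}$. Write $\sigma_i=\rho_i^{-1}$. By Claim~\ref{claim:ud-sign-for-combinations}, $K_{\sigma_i,N}$ depends only on $\udsign(\sigma_i)$, and by assumption these signatures are pairwise distinct.

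Each $K_{\sigma,N}$ is homogeneous of degree $n$, since every $\pi$-partition function contributes a monomial of degree $n$. It follows that $P$ is homogeneous of degree $n$. A homogeneous polynomial vanishing on the simplex vanishes on the whole nonnegative orthant, by scaling. It therefore vanishes on an open set, and so it is the zero polynomial. This reduces the problem to showing that the polynomials $K_{\sigma,N}$, for distinct signatures $s$, are linearly independent for some fixed $N$; we take $N\ge n$.

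For independence we look for a monomial that isolates each signature. Given a signature $s\in\{u,d\}^{n-1}$, split $[n]$ into maximal runs of equal letters. Concretely, we choose a nondecreasing word $f_s:[n]\to\{x_t,y_t\}$ with $f_s(\ell)=f_s(\ell+1)=x_t$ exactly when $s_\ell=u$ at a position we glue, and $=y_t$ when $s_\ell=d$, and with $f_s$ strictly increasing elsewhere. A cleaner choice is the following. Consider the monomials that use only distinct indices $t$ for consecutive blocks, where each block is either a single $x_t$-run or a single $y_t$-run. A function $f$ that is constant ($=x_t$) on the block $\{\ell,\dots,m\}$ is a $\sigma$-partition function iff $\sigma$ increases on that block. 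Similarly, a function constant $=y_t$ on the block is a $\sigma$-partition function iff $\sigma$ decreases on that block.

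Now take the monomial $m_s=\prod_\ell f_s(\ell)$, where $f_s$ assigns to position $\ell$ and $\ell+1$ the same variable whenever possible, according to $s$. Concretely, group the maximal $u$-runs of $s$ into $x$-blocks and the $d$-runs into $y$-blocks, breaking ties at block boundaries by starting a new index $t$. The coefficient of a monomial of the form $\prod_t x_t^{a_t}y_t^{b_t}$ with at most one nonzero exponent per index, read off in $\prec$ order, counts the $\sigma$-partition functions realizing it, which is at most one. Such a function is valid iff $\sigma$ is increasing on each $x$-block and decreasing on each $y$-block.

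The main obstacle is to organize these monomials so that the coefficient matrix is triangular. Order the signatures, and for each $s$ use the coarsest block decomposition compatible with $s$. The coefficient of $m_s$ in $K_{\sigma}$ is $1$ iff $\udsign(\sigma)$ agrees with $s$ on all glued positions. For the monomial $m_s$ in which every position is glued into some block (adjacent positions $\ell,\ell+1$ lie in one block whenever $s_\ell$ matches the block type), the gluing forces agreement at every position of $s$ lying inside a block. The only positions left free are block boundaries, where a $u$-run meets a $d$-run.

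To fix those boundary positions as well, we use the fact that a boundary between an $x_t$-block and the next $y_{t'}$-block imposes nothing on $\sigma$. We therefore refine the test: we apply inclusion–exclusion over monomials (equivalently, a Möbius inversion on the lattice of block coarsenings, as in the fundamental-versus-monomial quasisymmetric basis change). This extracts from $K_\sigma$ a coefficient functional equal to $\mathbf 1[\udsign(\sigma)=s]$. For example, at a boundary we compare the monomial where positions $\ell,\ell+1$ are in blocks of opposite types with the monomial where they share an $x$-block or a $y$-block; exactly one of these two merged monomials is realizable, which determines $s_\ell$.

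Having such functionals $\lambda_s$ with $\lambda_s(K_\sigma)=\mathbf 1[\udsign(\sigma)=s]$, we apply $\lambda_{\udsign(\sigma_i)}$ to $P=0$, which yields $c_i=0$ for each $i$.
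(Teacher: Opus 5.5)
Your reduction matches the paper: ID combinations give $P=\sum_i c_iK_{\rho_i^{-1},N}=0$ on the simplex, homogeneity upgrades this to $P\equiv 0$, and by Claim~\ref{claim:ud-sign-for-combinations} it remains to prove that the $K_{\sigma,N}$ for distinct signatures are linearly independent. Your candidate monomial, which glues each run of $s$ into a single $x$- or $y$-block, is also essentially the paper's.

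\textbf{The gap is the independence step itself, which is the whole content of the theorem.} You correctly note that $m_s$ leaves the boundary positions unconstrained, so $m_s$ occurs in $K_\sigma$ for several signatures. You then only assert that an inclusion--exclusion or M\"obius inversion on block coarsenings yields functionals with $\lambda_s(K_\sigma)=\mathbf{1}[\udsign(\sigma)=s]$. No poset, order, or inversion formula is given. The fundamental-versus-monomial analogy does not transfer directly, for two reasons. First, the available coefficient tests correspond only to partial signatures that are constant between consecutive boundaries. Second, there are more such tests than signatures. So you must exhibit an invertible square subsystem, which is exactly the triangularity you flag as ``the main obstacle'' and never resolve.

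Your illustrative justification is also false. Shifting a boundary position into the neighbouring block removes the constraint at the far end of that block, so the two modified monomials are realizable for at most one value of the boundary letter, but both can fail. For example, both fail if $\udsign(\sigma)$ has $d$ at position $\ell$ and $u$ at position $\ell+1$ while the $y$-block has at least two elements. So they do not determine the letter.

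\textbf{The missing idea is an explicit order that makes the system triangular.} Use the left-greedy decomposition. If $s$ has runs of lengths $b_1$ ($u$'s, possibly $b_1=0$), $b_2$ ($d$'s), $b_3,\dots$, set $a(s)=(b_1+1,b_2,b_3,\dots)$ and $m_s=x_1^{a_1}y_1^{a_2}x_2^{a_3}\cdots$. Suppose $m_s$ occurs in $K_\sigma$. The letters of $\udsign(\sigma)$ inside each block are then forced to agree with $s$. So the first disagreement, if any, is at a boundary position, and there a letter different from $s$ can only lengthen the current run. Hence $a(\udsign(\sigma))\ge_{\mathrm{lex}}a(s)$, with equality iff $\udsign(\sigma)=s$.

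Now let $s^*$ be the lexicographically largest of the signatures $\udsign(\rho_i^{-1})$ with $c_i\neq 0$. The coefficient of $m_{s^*}$ in $P$ is exactly $c_{i^*}\neq 0$, contradicting $P\equiv 0$. This is the paper's argument. It needs no dual functionals, although the same triangularity would produce your $\lambda_s$ by downward induction in this order.
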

\begin{proof}
Without loss of generality, we may assume that all $c_i$ are nonzero; if some are zero, we exclude the corresponding terms. If $t=0$, then we are done. We now get a contradiction with the assumption that $t\geq 1$.

Let $\pi_j = \rho_j^{-1}$. For each permutation $\pi_j$, consider its up--down signature $\udsign(\pi_j)$ and partition it into maximal blocks of consecutive $u$’s and $d$’s, with the first block consisting of $u$’s (possibly empty).
Define the sequence $(a_1, a_2, \dots)$ as follows: 
\begin{itemize}
\item The length of the sequence equals the number of blocks. 
\item For $i \ge 2$, $a_i$ is the length of block~$i$.
\item $a_1$ is the length of block~1 plus~1. 
\end{itemize}
Note that $\sum a_i = n$. The sequence $(a_1, a_2, \dots)$ uniquely determines the up--down signature of~$\pi_j$, and thus the sequences for distinct $\pi_j$ are different.
Among all sequences for different $\pi_j$, take the largest one in lexicographic order and denote it by~$a^*$. Let~$N$ be its length. 
Denote the sequence for $\pi_{j^*}$ by $a^*$. 

Consider an ID combination $C$ as in~(\ref{eq:xy-ID-combination}):
\begin{equation} 
C = x_1 I \oUp y_1 D \oUp x_2 I \oUp y_2 D \oUp \dots \oUp x_N I \oUp y_N D.
\end{equation}
For now, the values of $x_i$ and $y_i$ are not fixed. 

Define a $\pi_{j^*}$-partition function $f$ that maps the first $a_1$ natural numbers to~$x_1$, the next $a_2$ numbers to~$y_1$, the next $a_3$ numbers to~$x_2$, and so on. 
It is immediate that $f$ is indeed a $\pi_{j^*}$-partition function. 
Further, observe that it is not a $\pi_j$-partition function for any $j\neq j^*$: if it were, then the corresponding sequence~$a'$ for $\pi_j$ would be larger than $a^*$ in lexicographic order. 

Note that $\sum_{i=1}^t c_i d(\rho_i, C)$ is a multivariate polynomial in $\{x_i, y_i\}_i$:
$$F(x,y) \eqdef \sum_{i=1}^t c_i d(\rho_i, C) = \sum_{i=1}^t c_i K_{\pi_i, N}(x,y).$$
By our assumption, $\sum_{i=1}^t c_i d(\rho_i, R) = 0$ for all strong IDU permutons $R$. In particular, $F(x,y) = 0$ as long as 
$\sum_i (x_i + y_i) = 1$ and all $x_i, y_i \geq 0$.
However, since $F$ is homogeneous, $F(\lambda x, \lambda y) = \lambda^{n} F(x, y)$, so $F$ remains~0 for all $x$ and $y$ with $x_i, y_i \geq 0$ even if $\sum (x_i + y_i) \ne 1$. Therefore, $F$ is identically zero as a polynomial; that is, all its coefficients are~0. 

The coefficient of $\prod_{i=1}^n f(i)$ in $K_{\pi_{j^*}, N}(x,y)$ equals $1$, and $\prod_{i=1}^n f(i)$ appears only in $K_{\pi_{j^*}, N}(x,y)$ (as $f$ is not a $\pi_j$-partition function for any other $\pi_j$). Hence, its coefficient in $F(x,y)$ equals $c_{j^*} \neq 0$. We obtain a contradiction.
\end{proof}

\section{Computing a nearly optimal strong IDU transformation}\label{sec:computation}
In this section, we discuss how to compute a nearly optimal strong IDU transformation for a given ordering CSP with predicates of arity at most~$k$.
By Theorem~\ref{thm:idu-framework}, our goal is to find a weak IDU transformation that maximizes $p(\Pi' \to \Pi, \bm{\sigma})$.  
As shown in Theorem~\ref{thm:main-weak-to-strong}, it is sufficient to consider only strong IDU transformations.

For a strong IDU transformation~$R$,
\[
p(\varphi' \to \varphi, R)
   = \min_{\tau \in \Sat(\varphi')} \sum_{\rho \in \Sat(\varphi)} d(\rho \tau^{-1}, R).
\]
By Theorem~\ref{thm:approx-by-ID}, every strong IDU transformation~$R$ can be approximated within any desired precision~$\delta > 0$ by a distribution over ID combinations with at most~$N$ terms, where $N = O(k^2 / \delta)$.

Each of the $k$-profile vectors corresponding to these ID combinations lies in the image of~$\mathbf{K}_N$ (see Theorem~\ref{thm:feasible-region-characterization}).  
Thus, the problem of finding an approximately optimal strong IDU transformation 
for $\CSP(\Pi)$ reduces to the following optimization problem:
\begin{quotation}
  \noindent Find $2^{k-1}$ pairs of vectors $(\mathbf{x}^{(i)}, \mathbf{y}^{(i)}) \in \mathbb{R}^{2N}$ and a vector of weights $\mathbf{p} = (p_1, \dots, p_{2^{k-1}})$
  such that $\|\mathbf{x}^{(i)}\|_1 + \|\mathbf{y}^{(i)}\|_1 = 1$ for all~$i$, $\|\mathbf{p}\|_1 = 1$, and all variables are nonnegative,
  in order to maximize
  \[
  \min_{\varphi\in \Pi}\min_{\tau \in \Sat(\varphi')} \sum_{\rho \in \Sat(\varphi)} \sum_{i=1}^{2^{k-1}} p_i K_{\tau \rho^{-1}, N}(\mathbf{x}^{(i)}, \mathbf{y}^{(i)}).
  \]
\end{quotation}

Note that $K_{\psi, N}$ is a homogeneous polynomial of degree~$k$ with positive terms for every $\psi \in \Sk$.  
Let $\mathbf{1}_N$ denote the vector of all ones. Then
\[
\frac{\partial K_{\psi, N}(\mathbf{x}, \mathbf{y})}{\partial x_i}
\leq 
\frac{\partial K_{\psi, N}(\mathbf{1}_N, \mathbf{1}_N)}{\partial x_i}
\leq k \cdot K_{\psi, N}(\mathbf{1}_N, \mathbf{1}_N)
= k \cdot (2N)^k \cdot K_{\psi, N}\!\left(\frac{\mathbf{1}_N}{2N}, \frac{\mathbf{1}_N}{2N}\right).
\]
Thus,
\[
\frac{\partial}{\partial x_i} \sum_{\rho \in \Sat(\varphi)} K_{\tau \rho^{-1}, N}
\leq k \cdot (2N)^k
\sum_{\rho \in \Sat(\varphi)} K_{\tau \rho^{-1}, N}\!\left(\frac{\mathbf{1}_N}{2N}, \frac{\mathbf{1}_N}{2N}\right)
\leq k \cdot (2N)^k,
\]
since $K_{\tau \rho^{-1}, N}(\mathbf{1}_N/(2N), \mathbf{1}_N/(2N))$ corresponds to a valid permuton $\frac{1}{2N} I \oUp \frac{1}{2N} D \oUp \dots\oUp \frac{1}{2N} I \oUp \frac{1}{2N} D$, and the sum of the densities of permutations $\tau \rho^{-1}$ in it is at most~$1$.
The same bound holds for the partial derivative with respect to~$y_i$.
It follows that, if two vectors $(\mathbf{x}, \mathbf{y})$ and $(\mathbf{x}', \mathbf{y}')$ differ by at most 
$\varepsilon = \delta / (2k\cdot N\cdot (2N)^k)$ in each coordinate, then the difference between the values of the objective function at these points is at most~$\delta$.  
Since $N=O(k^2/\delta)$,
$$\varepsilon = \delta / (2k\cdot N\cdot (2N)^k) \geq (\delta / k)^{O(k)}.$$

This optimization problem can be solved to any desired accuracy $O(\delta)$ by brute force, by discretizing both the simplex for each $(\mathbf{x}^{(i)}, \mathbf{y}^{(i)})$ with step~$\varepsilon$ and the simplex of weights~$\mathbf{p}$ with step~$\delta/2^{k-1}$.  
The total number of points is $(1 / \varepsilon)^{O(2^k N)} \cdot (2^{k-1}/\delta)^{2^{k-1}}= (k / \delta)^{O(2^k k^3 / \delta)}$.

The running time of the algorithm depends only on~$k$ and the desired precision~$\delta$.  
The resulting distribution over ID combinations has bit complexity depending only on~$k$ and~$\delta$.  
For every~$n$, we can sample a permutation of length~$n$ from this distribution in time $O(f(k, \delta)\, n)$, that is, in time linear in~$n$.

\begin{theorem}\label{thm:compute-nearly-optimal-IDU}
There exists an algorithm that, given a constraint language~$\Pi$ and its relaxation~$\Pi'$, as well as a desired precision~$\delta > 0$, finds a convex combination~$R'$ of ID combinations such that 
\[
p(\Pi' \to \Pi, R') \ge p(\Pi' \to \Pi, R^*) - \delta,
\]
where~$R^*$ is the optimal strong IDU transformation for~$\Pi$ and~$\Pi'$.  
The running time of the algorithm and the bit complexity of~$R'$ depend only on~$k$ and~$\delta$.

Given~$n$, we can sample a permutation~$\sigma_n$ of length~$n$ from~$R'$ in time $O(g(k, \delta)\, n)$, that is, in time linear in~$n$.
\end{theorem}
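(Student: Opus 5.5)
The argument assembles results already established above; the discretization reasoning just before the statement supplies the quantitative details. The algorithm first reduces to ID combinations. Let $R^*$ be an optimal strong IDU transformation. Such an optimum exists because, by Theorem~\ref{thm:feasible-region-characterization}, the feasible region $\mathcal{R}_k$ is compact. The objective $p(\Pi'\to\Pi,R)$ is a minimum of finitely many linear functions of the $k$-profile, so it is continuous, and the maximum is attained. Alternatively, one can work with a supremum and lose an extra $\delta$. Apply Theorem~\ref{thm:approx-by-ID} with precision $\delta' = \delta/(4\,k!)$. This gives a distribution over at most $2^{k-1}$ ID combinations, each with at most $N=O(k^2/\delta')$ terms, whose $k$-profile is coordinatewise within $\delta'$ of that of $R^*$. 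Each sum $\sum_{\rho\in\Sat(\varphi)} d(\rho\tau^{-1},\cdot)$ has at most $k!$ terms, so the objective changes by at most $\delta/4$. Densities of shorter patterns are linear combinations of the $k$-densities, so predicates of arity below $k$ are also covered. Hence the optimum of the finite-dimensional program displayed before the theorem is at least $p(\Pi'\to\Pi,R^*)-\delta/4$. In that program, $d(\rho\tau^{-1},C)=K_{\tau\rho^{-1},N}$ by Theorem~\ref{thm:main-qsym-characterization}.

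Next, the algorithm solves the program by grid search. It enumerates an $\varepsilon$-grid of each simplex for $(\mathbf{x}^{(i)},\mathbf{y}^{(i)})$, with $\varepsilon=(\delta/k)^{O(k)}$ chosen as in the derivative bound above. It also enumerates a grid of step $\delta/2^{k+1}$ on the weight simplex for $\mathbf{p}$. For each grid point it evaluates the objective exactly: the polynomials $K_{\psi,N}$ are computed by summing over $\psi$-partition functions, or by a dynamic program over positions. It returns the best grid point. The Lipschitz bound $\partial K_{\psi,N}/\partial x_i\le k(2N)^k$ shows that rounding every coordinate to the grid costs at most $\delta/4$. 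The rounding must keep each vector on its simplex, for instance by adjusting a single coordinate. Rounding the weights costs at most another $\delta/4$, since every objective term lies in $[0,1]$. The returned point is therefore a convex combination $R'$ of ID combinations with $p(\Pi'\to\Pi,R')\ge p(\Pi'\to\Pi,R^*)-\delta$. By Claim~\ref{claim:up-combination-is-strong-IDU}, $R'$ is strong IDU, and mixtures of strong IDU permutons are strong IDU. The number of grid points, the cost of each evaluation, and the bit complexity of the grid coordinates depend only on $k$ and $\delta$.

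Sampling proceeds directly from Definition~\ref{def:up-comb}. Pick an ID combination according to $\mathbf{p}$. Assign each $i\in[n]$ independently to one of the $2N$ groups. Within each group, order the elements increasingly for an $I$ term and decreasingly for a $D$ term, then concatenate the groups. Grouping can be done by bucketing, so the total time is $O(g(k,\delta)\,n)$.

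The step I expect to need the most care is the error bookkeeping. First, the approximation from Theorem~\ref{thm:approx-by-ID} must be propagated through the minimum over $\varphi$ and $\tau$ with the $k!$ factor. Second, one must confirm that the derivative bound gives a Lipschitz constant uniform over the simplex, so that the grid step $\varepsilon$ depends only on $k$ and $\delta$. Both points follow from statements already proved.
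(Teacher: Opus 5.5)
Your proposal is correct and follows essentially the same route as the paper. You reduce to strong IDU transformations, approximate the optimum by a mixture of at most $2^{k-1}$ ID combinations via Theorem~\ref{thm:approx-by-ID}, run a brute-force grid search justified by the derivative bound $k(2N)^k$, and sample directly from Definition~\ref{def:up-comb}; your error bookkeeping (the $k!$ factor, existence of $R^*$, keeping grid points on the simplex) is more explicit than the paper's.

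One small quantitative slip: because you invoke Theorem~\ref{thm:approx-by-ID} with precision $\delta/(4\,k!)$, your $N$ is $O(k^2\,k!/\delta)$, so the grid step is no longer $(\delta/k)^{O(k)}$. It is still a function of $k$ and $\delta$ only, which is all the theorem needs. The $k!$ loss can also be avoided entirely, since the coupling arguments behind Theorems~\ref{thm:approx-by-IDU} and~\ref{thm:approx-by-ID} bound the total variation distance between pattern distributions, not just individual densities.
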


\section{Simple sufficient condition for nontrivial approximation}
In this section, we present a simple sufficient condition for 
a CSP problem $\CSP(\varphi)$ to have a nontrivial approximation or, in other words, to not be approximation resistant.
Consider a predicate $\varphi(x_1,\dots, x_k)$ and a relaxation $\varphi'$ of $\varphi$ (which may be $\varphi_{L}$, $\varphi_{R}$, or $\varphi_{\varepsilon}$). We are interested in finding a strong IDU permuton $R$ such that 
$$p(\varphi' \to \varphi, R) > \frac{|\Sat(\varphi)|}{k!}.$$
We begin with the following observation:
the uniform permuton $U$ gives a $\nicefrac{|\Sat(\varphi)|}{k!}$ approximation and thus our goal is to find a strong IDU permuton $R$ that beats $U$. This motivates the following question:
\begin{quote}
For a given $\varphi$, is there a ``perturbation'' $R$ of $U$ with $p(\varphi' \to \varphi, R) > p(\varphi' \to \varphi, U)$ (note the strict inequality)?
\end{quote}
Now, if for some $R$ we get $p(\varphi' \to \varphi, R) > p(\varphi' \to \varphi, U)$ then the inequality also holds for $R'_{\varepsilon} = \varepsilon R \oUp (1-\varepsilon) U$ and $R'_\varepsilon \to U$ as $\varepsilon \to 0$ (w.r.t.\ the standard weak-topology on the space of random permutons). However, we are \emph{not interested} in perturbations like this and instead will require that $R_{\varepsilon}$ is a mixture of deterministic permutons $P_{\varepsilon}$, and each $P_{\varepsilon}$ in the mixture converges to $U$ as $\varepsilon \to 0$.

We note that focusing only on perturbations of $U$ significantly restricts the set of possible options, and thus we can only hope to get a sufficient condition -- but not a criterion -- for when $\varphi$ admits a nontrivial approximation. 
Furthermore, to obtain a simple condition, we will only consider first-order perturbations, as described below.

Using the formula for $p(\varphi'\to \varphi, R)$, we see that $R$ yields a nontrivial approximation for $\varphi$ if and only if for every $\tau\in \Sat(\varphi')$,
$$\sum_{\rho\in\Sat(\varphi) \tau^{-1}} d(\rho, R) > \sum_{\rho\in\Sat(\varphi) \tau^{-1}} d(\rho, U),$$
or equivalently
$$\sum_{\rho\in\Sat(\varphi) \tau^{-1}} (d(\rho, R) - d(\rho,U)) > 0.$$
Previously, we defined the up--down signature of permutations as sequences of letters $u$ and $d$. In this section, it will be convenient to use $+1$ and $-1$ instead of $u$ and $d$, respectively. We will denote a vector signature with $\pm 1$ entries  by $\udsign_{\pm}(\tau)$. Further, for given $\varphi$ and $\tau\in \Sk$, let 
\begin{equation}\label{eq:vectors-v-varphi-tau}
  v(\varphi, \tau) = \sum_{\rho \in \Sat(\varphi)\tau^{-1}} \udsign_{\pm}(\rho^{-1}) \in \mathbb{Z}^{k-1}.
\end{equation}

Now, we define IDU combinations $A(x, \varepsilon)$ and $B(x,\varepsilon)$ as follows:
\begin{align*}
A(x,\varepsilon) &= \frac{x}{1+\varepsilon} U \oUp \frac{\varepsilon}{1+\varepsilon} I \oUp \frac{1-x}{1+\varepsilon} U\\
B(x,\varepsilon) &= \frac{x}{1+\varepsilon} U \oUp \frac{\varepsilon}{1+\varepsilon} D \oUp \frac{1-x}{1+\varepsilon} U.
\end{align*}
We think of both $A(x,\varepsilon)$ and $B(x,\varepsilon)$ as perturbations of $U$, since $A(x,\varepsilon) \to U$ and $B(x,\varepsilon) \to U$ as $\varepsilon \to 0$ for every fixed $x\in (0,1)$. We note that $A(\cdot)$ and $B(\cdot)$ are as powerful as arbitrary perturbations of $U$ for the purpose of the first-order analysis. 
For the sake of analysis, we also define:
$$C(x,\varepsilon) = \frac{x}{1+\varepsilon} U \oUp \frac{\varepsilon}{1+\varepsilon} U \oUp \frac{1-x}{1+\varepsilon} U =U.$$

Consider a permutation $\rho \in \Sk$. Let $s = \udsign_{\pm}(\rho^{-1})$.
Using formula~\eqref{eq:density-for-IDU}, we write approximations for $d(\rho, A(x,\varepsilon))$, $d(\rho, B(x,\varepsilon))$, and $d(\rho, C(x,\varepsilon))$ up to terms of order $O(\varepsilon^3)$. In~\eqref{eq:density-for-IDU}, we consider $\rho^{-1}$-partition functions $f$. Since the coefficients in the formulas for $A$, $B$, and $C$ are not variables, it is more convenient to assume that $f$ maps $[k]$ to the three terms in the up-combination formulas for $A$, $B$, and $C$, rather than to their coefficients. Accordingly, we let $f$ map the first $a$ numbers to the first term in the IDU combination, the next $0$, $1$, or $2$ numbers to the second term (partition functions that map more than two numbers to the second term contribute only $O(\varepsilon^3)$ terms), and the remaining ones to the third term. Since the definition of a partition function allows any consecutive block of numbers to be mapped to the same term of the form $c U$, the only restriction is the following: $f$ can map both $a+1$ and $a+2$ to $\frac{\varepsilon}{1+\varepsilon} I$ (resp.\ $\frac{\varepsilon}{1+\varepsilon} D$) only if $s_{a+1} = 1$ (resp.\ $s_{a+1} = -1$).

 Denote $\gamma = \frac{1}{(1+\varepsilon)^k}$. Below, ``'$\approx$'' will denote equality up to $O(\varepsilon^3)$. We have,
\begin{align*}
d(\rho, A(x,\varepsilon)) &\approx \gamma\left(\sum_{a=0}^k \frac{x^a(1-x)^{k-a}}{a!(k-a)!} + \sum_{a=0}^{k-1} \frac{x^a(1-x)^{k-1 - a}}{a!(k-1 - a)!}\varepsilon + 
\sum_{a:s_{a+1} = 1} \frac{x^a(1-x)^{k-2 - a}}{a!(k-2-a)!}\varepsilon^2
\right)
\\
d(\rho, B(x,\varepsilon)) &\approx \gamma\left(\sum_{a=0}^k \frac{x^a(1-x)^{k-a}}{a!(k-a)!} + \sum_{a=0}^{k-1} \frac{x^a(1-x)^{k-1 - a}}{a!(k-1 - a)!}\varepsilon + 
\sum_{a:s_{a+1} = -1} \frac{x^a(1-x)^{k-2 - a}}{a!(k-2-a)!}\varepsilon^2
\right)\\
d(\rho, C(x,\varepsilon)) &\approx \gamma\left(\sum_{a=0}^k \frac{x^a(1-x)^{k-a}}{a!(k-a)!} + \sum_{a=0}^{k-1} \frac{x^a(1-x)^{k-1 - a}}{a!(k-1 - a)!}\varepsilon + 
\sum_{a=0}^{k-2} \frac{x^a(1-x)^{k-2 - a}}{a!(k-2-a)!}\frac{\varepsilon^2}{2}
\right)
\end{align*}

We obtain simpler expressions for the following differences:
$$
d(\rho, A(x,\varepsilon)) - d(\rho, C(x,\varepsilon))
\approx
-(d(\rho, B(x,\varepsilon)) - d(\rho, C(x,\varepsilon)))
\approx 
\gamma
\sum_{a=0}^{k-2} s_{a+1} \frac{x^a(1-x)^{k-2 - a}}{a!(k-2-a)!}\frac{\varepsilon^2}{2}.
$$
Consider a random permuton $R_\varepsilon$ that is a mixture of permutons $A(x_i,\varepsilon)$ and $B(x_i,\varepsilon)$ with some nonnegative weights $\alpha_i$ and $\beta_i$ that add up to 1, where  $x_i$ are distinct reals from $(0,1)$ and $i\in [N]$ for some $N$. Using the fact that the mixture of $C(x_i, \varepsilon)$ with weights $\alpha_i + \beta_i$ equals $U$, we have
$$
d(\rho, R_\varepsilon) - d(\rho, U) \approx  
\frac{\gamma \varepsilon^2}{2} \sum_{i=1}^N\sum_{a=0}^{k-2}s_{a+1} 
\frac{x_i^a(1-x_i)^{k-2 - a} (\alpha_i -\beta_i)}{a!(k-2-a)!}.
$$
Therefore, 
$$\sum_{\rho \in \Sat(\varphi)\tau^{-1}} \left(d(\rho, R_\varepsilon) - d(\rho, U) \right)
\approx  
\frac{\gamma \varepsilon^2}{2} \sum_{a=0}^{k-2}v_{a+1}(\varphi, \tau)\sum_{i=1}^N
\frac{x_i^a(1-x_i)^{k-2 - a}(\alpha_i -\beta_i)}{a!(k-2-a)!}.
$$
Define a $(k-1) \times N$ matrix $M$ by $M_{bi} = \frac{x_i^{b-1} (1-x_i)^{k-1-b}}{(b-1)!(k-1-b)!}$. We have $M= D_1 V D_2$, where $V$ is a rectangular Vandermonde matrix with entries $\left(\nicefrac{x_i}{(1-x_i)}\right)^{b-1}$, $D_1$ and $D_2$ are diagonal matrices with diagonal entries $\nicefrac{1}{(b-1)!(k-1-b)!}$ and $(1-x_i)^{k-2}$, respectively.
As such, it has rank $\min(N, k-1)$. 
Let $\alpha$ be the vector of $\alpha_i$ and $\beta$ be the vector of $\beta_i$. 
$$\sum_{\rho \in \Sat(\varphi)\tau^{-1}} \left(d(\rho, R_\varepsilon) - d(\rho, U) \right)
\approx  
\frac{\gamma \varepsilon^2}{2} \cdot v(\varphi, \tau)^\transpose M\,(\alpha-\beta).
$$

Our goal now is to choose $N$, distinct $x_i\in (0,1)$, and $\alpha$ and $\beta$ so that
$v(\varphi, \tau)^\transpose\, M\,(\alpha-\beta) > 0$ for every $\tau \in \Sat(\varphi')$.
We observe that, for every vector $y\in \mathbb{R}^{k-1}$, we can choose $N$, $x_i$, $\alpha$, and $\beta$ so that $M\,(\alpha-\beta) = \delta y$ for some $\delta > 0$. Indeed, choose $N = k -1$ and pick arbitrary distinct $x_i \in (0,1)$. Now, let $z = M^{-1} y$. Define 
$\tilde \alpha_i = z_i$ if $z_i > 0$ and $0$, otherwise; 
$\tilde \beta_i = - z_i$ if $z_i < 0$ and $0$ otherwise.
Note that $z = \tilde \alpha - \tilde \beta$. Finally, normalize $\tilde \alpha$ and $\tilde \beta$ by setting $\delta = 1/(\|\tilde \alpha\|_1 + \|\tilde \beta\|_1)$, $\alpha = \delta \tilde \alpha$ and $\beta =\delta \tilde \beta$. Then 
$$M\,(\alpha-\beta) = \delta M\,(\tilde\alpha-\tilde\beta) = \delta M z = 
\delta y.$$

We obtain the following sufficient condition for the existence of the first-order perturbation $R_\varepsilon$ for $\varphi$: there exists a vector $y\in\mathbb{R}^{k-1}$ such that $\langle y, v(\varphi, \tau)\rangle > 0$ for all $\tau\in \Sat(\varphi')$.\footnote{The condition is also necessary as long as we are restricted to the first-order perturbation analysis.} Using LP duality, this condition can be stated as follows.

\begin{theorem}\label{thm:simple-sufficient-admits-approx}
Consider a predicate $\varphi$ of arity $k$ and a relaxation $\varphi'$ of $\varphi$.
Assume that $0$ is not in the convex hull of the vectors $\{v(\varphi, \tau)\}_{\tau\in \Sat(\varphi')}$ 
defined by \eqref{eq:vectors-v-varphi-tau} or, equivalently, for some $y\in\mathbb{R}^{k-1}$,
$$\langle y, v(\varphi, \tau)\rangle > 0
\text{ for all } \tau\in \Sat(\varphi').
$$
Then $p(\varphi'\to \varphi,R) > \nicefrac{|\Sat(\varphi)|}{k!}$
for some strong IDU permuton $R$. If $\varphi' = \varphi_{L}$ or $\varphi' = \varphi_{R}$, then $\varphi$ admits a nontrivial approximation in the satisfiable regime. If $\varphi' = \varphi_{\varepsilon}$, then $\varphi$ admits a nontrivial approximation in both the satisfiable and nearly-satisfiable regimes.
\end{theorem}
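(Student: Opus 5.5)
The plan is to assemble the first-order perturbation computation carried out just before the statement into a rigorous argument. There are four steps: the duality equivalence, the choice of perturbation, control of the error, and the transfer to algorithms.

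\emph{Step 1 (duality).} The equivalence of the two hypotheses is Gordan's theorem, the LP-duality alternative for a finite set of vectors. Either $0\in\conv\{v(\varphi,\tau)\}_{\tau\in\Sat(\varphi')}$, or some $y$ has $\langle y,v(\varphi,\tau)\rangle>0$ for all $\tau\in\Sat(\varphi')$, and not both. We fix such a $y$ and set
\[
\eta=\min_{\tau\in\Sat(\varphi')}\langle y,v(\varphi,\tau)\rangle>0 .
\]

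\emph{Step 2 (the perturbation).} Take $N=k-1$ and distinct $x_1,\dots,x_{k-1}\in(0,1)$. The matrix $M=D_1VD_2$ is invertible, since $V$ is a square Vandermonde matrix with distinct nodes $x_i/(1-x_i)$ and the diagonals of $D_1,D_2$ are nonzero. As shown above, this yields weights $\alpha,\beta\ge 0$ with $\|\alpha\|_1+\|\beta\|_1=1$ and $M(\alpha-\beta)=\delta y$ for some $\delta>0$. Let $R_\varepsilon$ be the mixture of the IDU combinations $A(x_i,\varepsilon)$ and $B(x_i,\varepsilon)$ with weights $\alpha_i$ and $\beta_i$. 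Each $A$, $B$ is a strong IDU permuton by Claim~\ref{claim:up-combination-is-strong-IDU}. A mixture of strong IDU permutons is again strong, since $d_J$ is linear in the mixture and each component's $d_J$ is independent of $J$. Hence
\[
p(\varphi'\to\varphi,R_\varepsilon)=\min_{\tau\in\Sat(\varphi')}\sum_{\rho\in\Sat(\varphi)\tau^{-1}}d(\rho,R_\varepsilon).
\]
Each such sum for $U$ equals $|\Sat(\varphi)\tau^{-1}|/k!=|\Sat(\varphi)|/k!$.

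\emph{Step 3 (the error term; main obstacle).} This is the step that needs real care. By formula~\eqref{eq:density-for-IDU}, each $d(\rho,A(x,\varepsilon))$ and $d(\rho,B(x,\varepsilon))$ is an explicit function of $\varepsilon$: a finite sum over partition functions of products of the coefficients, divided by factorials. Partition functions placing $m$ points on the middle term contribute $O(\varepsilon^m)$, with constants depending only on $k$ and the fixed $x_i$. For $m\le 1$ the $A$, $B$ and $C=U$ contributions coincide exactly, because a single point in the middle block imposes no constraint. So the difference $d(\rho,R_\varepsilon)-d(\rho,U)$ equals the displayed $\varepsilon^2$ term plus a remainder bounded by $c_k\varepsilon^3$, with $c_k$ uniform over the finitely many $\rho$. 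Summing over $\rho\in\Sat(\varphi)\tau^{-1}$ gives
\[
\sum_{\rho\in\Sat(\varphi)\tau^{-1}}\bigl(d(\rho,R_\varepsilon)-d(\rho,U)\bigr)\ \ge\ \tfrac{\gamma\varepsilon^2}{2}\,\delta\,\eta\;-\;k!\,c_k\,\varepsilon^3 .
\]
Here $\gamma=(1+\varepsilon)^{-k}\ge 2^{-k}$ for $\varepsilon\le 1$. Choosing $\varepsilon$ small enough makes the right-hand side positive for every $\tau\in\Sat(\varphi')$, since there are finitely many. This gives $p(\varphi'\to\varphi,R_\varepsilon)>|\Sat(\varphi)|/k!$. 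The point to verify carefully is the $\varepsilon^2$ coefficient: exactly two points in the middle block, with $C$ contributing $\varepsilon^2/2$ and $A$ or $B$ contributing $\varepsilon^2$ or $0$ according to $s_{a+1}$. This has already been computed above, so it remains only to bound the tail uniformly.

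\emph{Step 4 (algorithmic consequences).} $R_\varepsilon$ is a finite mixture of finite IDU combinations, so it can be sampled in polynomial time. Apply Theorem~\ref{thm:idu-framework-satisfiable} when $\varphi'\in\{\varphi_L,\varphi_R\}$, where the relaxation is solved exactly and $s=1$. The result is a fraction $p(\varphi'\to\varphi,R_\varepsilon)>\alpha_{\mathrm{random}}=|\Sat(\varphi)|/k!$ in expectation, i.e.\ a nontrivial approximation. When $\varphi'=\varphi_\varepsilon$, Theorem~\ref{thm:idu-framework-nearly-satisfiable} gives the nearly satisfiable guarantee $p-O(\varepsilon\log n\log\log n)$. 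For satisfiable instances, the relaxed precedence instance is satisfiable and can be solved exactly by topological sorting, so Theorem~\ref{thm:idu-framework} with $s=1$ gives the satisfiable-regime claim as well.
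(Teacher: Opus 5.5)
Your proposal is correct and follows the paper's own argument: the same perturbations $A(x,\varepsilon)$ and $B(x,\varepsilon)$ of $U$, the same Vandermonde factorization $M=D_1VD_2$ inverted to obtain $M(\alpha-\beta)=\delta y$, and the same appeal to the framework theorems for the algorithmic conclusions. Your explicit uniform $O(\varepsilon^3)$ bound on the remainder and your use of Gordan's alternative only make rigorous what the paper leaves as ``$\approx$ up to $O(\varepsilon^3)$'' and ``by LP duality''.
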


\section{Flag algebras for strong IDU transformations}\label{sec:flag-algebras}
\subsection{Overview of flag algebras for permutations}

We begin with an overview of the flag algebra method for permutations.
Flag algebras were introduced by Razborov~\cite{Razborov} and provide a powerful framework for obtaining density inequalities for graphs, permutations, and other combinatorial structures. Following the pioneering work of Balogh, Hu, Lidick{\`y}, Pikhurko, Udvari, and Volec~\cite{BHLPUV15}, which applied flag algebras to permutations, the technique has been widely used to study permutation pattern densities. Consequently, it provides a strong tool for proving density inequalities -- particularly upper bounds -- for IDU transformations.

We refer the reader to~\cite{Razborov, BHLPUV15, crudele2024six} for a detailed introduction to flag algebras.
Here, we provide only the necessary background.
We write permutations in one-line notation; for example, $(\mathsf{1,2,3})$ denotes the identity permutation of length~3, and $(\mathsf{2,1,3})$ is the transposition that swaps the first two elements.

A flag algebra generalizes the linear space of density sums introduced in Section~\ref{sec:pattern-density-sums}.
It extends simple density sums in two important ways.
First, we identify two density sums if they are equal in all permutons.
For instance, since $d((\mathsf{1\,2}), P) + d((\mathsf{2\,1}), P) = 1$ for every permuton~$P$, we write $(\mathsf{1\,2}) + (\mathsf{2\,1}) = 1$.
Second, in addition to summation, we define a multiplication operation on density sums such that for density sums $\Sigma_1$ and $\Sigma_2$:
\[
d(\Sigma_1 \cdot \Sigma_2, P) = 
d(\Sigma_1, P)\cdot d(\Sigma_2, P)  
\]
for every \textit{deterministic} permuton~$P$. Note that this multiplication operation is unrelated to permutation multiplication (composition).

Most importantly, we consider \textit{typed} density sums.
Recall that given a permuton~$P$, we can sample a permutation $\sigma_n$ of length~$n$ by choosing $n$ points $(x_1, y_1), \dots, (x_n, y_n)$ according to~$P$ and taking the pattern defined by these points. 
Now suppose we have fixed a subset~$S$ of~$k$ points in the support of~$P$ with no two points in $S$ sharing the same $x$- or $y$-coordinate.
We call these points \emph{fixed} or \emph{labeled points} and refer to $(P, S)$ as a \emph{tagged permuton}.
To sample a permutation~$\sigma_n$ from $(P, S)$, we sample the remaining $n-k$ points and define $\sigma_n$ as the pattern of all~$n$ points.
We say that an index~$i$ is \emph{labeled} or \emph{tagged} if its corresponding point is one of the $k$ fixed points.
Let~$J$ be the set of all labeled indices.
The pair $(\sigma_n, J)$ is then a \emph{flag}.
Thus, we have described a procedure for sampling flags from a tagged permuton $(P, S)$.

Note that the pattern $\tau = \patt(\sigma_n, J)$ coincides with the one defined by~$S$ and therefore does not depend on the specific random permutation sampled from~$(P, S)$.
We refer to~$\tau$ as the \emph{type} of both $(P, S)$ and $(\sigma_n, J)$.
We define the density $d((\rho, J), (P, S))$ of a flag $(\rho, J)$ in $(P, S)$ as the probability that the sampled flag equals $(\rho, J)$.

As is standard in the literature, when we look at specific examples, we will underline the indices that belong to~$J$.
For example, $(\mathsf{\underline{1}\,3\,\underline{4}\,2})$ denotes the flag $((\mathsf{1\,3\,4\,2}), \{1,3\})$ of type $(\mathsf{1\,2})$.
As we will see shortly, we can define addition and multiplication of flags of the same type.
We now provide the formal definitions.

\begin{definition}\label{def:flag}
Let $\tau \in \Sk$.
A \textit{flag} of type~$\tau$ is a pair $(\pi, J)$, where $\pi \in \mathbb{S}_n$ for some $n \ge k$ and $J \subseteq [n]$ is a set of size~$k$ such that $\patt(\pi, J) = \tau$.
We denote by~$\mathcal{F}^\tau$ the set of all flags of type~$\tau$.

Further, let~$\mathbb{R}\mathcal{F}^\tau$ denote the set of all formal linear combinations of flags of type~$\tau$ with real coefficients.
We refer to elements of~$\mathbb{R}\mathcal{F}^\tau$ as \textit{typed density sums}.
Addition of typed density sums is defined in the standard way, and the density function~$d(\cdot, (P, S))$ extends to them by linearity.

We identify $(\tau, [k])$ with~1 in~$\mathbb{R}\mathcal{F}^\tau$, since $d((\tau, [k]), (P, S))$ is always equal to~1.

In this definition, we allow $k=0$. In this case, $J = \varnothing$ and $S = \varnothing$; all flags have the trivial type $\varnothing$.
\end{definition}

We now define multiplication of typed density sums.

\begin{definition}\label{def:flag-multiplication}
Fix a type $\tau \in \Sk$.
Consider two flags 
 $(\pi_1, J_1)$ and $(\pi_2, J_2)$ of type~$\tau$, where $\pi_1 \in \mathbb{S}_{n_1}$ and $\pi_2 \in \mathbb{S}_{n_2}$. 
Let $n = n_1 + n_2 -k$.
We define their product $(\pi_1, J_1) \cdot (\pi_2, J_2)$ as
\[
(\pi_1, J_1) \cdot (\pi_2, J_2) =
\sum_{\substack{\pi \in \Sn\\ J:|J| = k}} c_{\pi,J}\, (\pi, J),
\]
where coefficients $c_{\pi,J}$ are defined as follows.
Sample a random subset~$A$ of size~$n_1 - k$ from $[n] \setminus J$, and let $B = [n] \setminus (J \cup A)$.
Then $c_{\pi,J}$ is the probability of the following event:
\begin{itemize}
  \item $\patt(\pi, A \cup J) = \pi_1$,
  \item the positions of indices from~$J$ in the sorted list $A \cup J$ equal~$J_1$,
  \item $\patt(\pi, B \cup J) = \pi_2$, and
  \item the positions of indices from~$J$ in the sorted list $B \cup J$ equal~$J_2$.
\end{itemize}
The definition of multiplication is extended to typed density sums by linearity.
\end{definition}

Importantly, for any tagged permuton~$(P, S)$ of type~$\tau$ we have the identity
\begin{equation}\label{eq:flag-multiplication-key-property}
d((\pi_1, J_1) \cdot (\pi_2, J_2), (P, S))
 = d((\pi_1, J_1), (P, S)) \cdot d((\pi_2, J_2), (P, S)).
\end{equation}

Next, we identify typed density sums that take the same value in all tagged permutons of type~$\tau$. To this end, we first describe sums that are identically zero.
While this set is often specified explicitly, we instead define it as the principal ideal generated by a specific element $\beta -1$. 

\begin{definition}
Consider a type $\tau \in \Sk$.
Define
\[
\beta = \sum_{\substack{(\sigma,J)\text{ is of type }\tau\\ \sigma\in \mathbb{S}_{k+1}}} (\sigma,J).
\]
Let $K^{\tau} = \langle \beta - 1 \rangle$ be the principal ideal in~$\mathbb{R}\mathcal{F}^\tau$ generated by~$\beta - 1$.
We define the \textit{flag algebra}~$\mathcal{A}^\tau$ as the quotient algebra
\[
\mathcal{A}^\tau = \mathbb{R}\mathcal{F}^\tau / K^{\tau}.
\]
\end{definition}
We choose $\beta$ as above since $d(\beta, (P, S)) = 1$ for every tagged permuton $(P, S)$ of type $\tau$, and consequently $d(g, (P, S)) = 0$ for every $g \in K^{\tau}$.

The elements of~$\mathcal{A}^\tau$ are equivalence classes of typed density sums.
We note that the function $g \mapsto d(g, (P, S))$ is 
a homomorphism from~$\mathcal{A}^\tau$ to~$\mathbb{R}$.
Moreover, for every flag $f \in \mathcal{F}^\tau$, the density $d(f, (P, S))$ is positive. 
Hence, we refer to this homomorphism as a \emph{positive homomorphism}.
The converse also holds: every positive homomorphism from~$\mathcal{A}^\tau$ to~$\mathbb{R}$ is of the form $d(\cdot, (P, S))$ for some tagged permuton~$(P, S)$ of type~$\tau$.

\subsection{Flag algebras and semidefinite programming}

From a practical standpoint, the most important property of flag algebras is that for every $g \in \mathcal{A}^\tau$, the density of~$g^2 \eqdef g \cdot g$ in any tagged permuton $(P, S)$ of type~$\tau$ is nonnegative:
\[
d(g^2, (P, S)) = d(g, (P, S))^2 \ge 0.
\]

As our ultimate goal is to prove inequalities for untagged density sums, we consider the following operator corresponding to averaging over all random choices of fixed points in a tagged permuton.

\begin{definition}
The operation $\llbracket \cdot \rrbracket : \mathcal{A}^\tau \to \mathcal{A}^{\varnothing}$ is defined by
\[
\llbracket (\pi, J) \rrbracket = \frac{1}{\binom{n}{k}} \pi.
\]
(As is standard, we identify $\pi$ and $(\pi, \varnothing)$ in~$\mathcal{A}^{\varnothing}$.)
\end{definition}

Importantly, $\llbracket \cdot \rrbracket$ is a linear operator, though \emph{not} a homomorphism.
This discussion leads to a semidefinite programming framework for establishing density inequalities.
\begin{fact}
For every $g \in \mathcal{A}^\tau$ and every permuton~$P$,
\[
d(\llbracket g^2 \rrbracket, P) \ge 0.
\]
\end{fact}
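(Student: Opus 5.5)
The plan is to average the pointwise squares. The ``Fact'' asserts that for $g\in\mathcal{A}^\tau$ with $\tau\in\Sk$ and every (deterministic) permuton $P$, $d(\llbracket g^2\rrbracket,P)\ge 0$. I will show that $d(\llbracket h\rrbracket,P)$ equals the expectation of $d(h,(P,S))$ over a random choice of the tagged set $S$. Specifically, let $S$ consist of $k$ independent points drawn from $P$, conditioned on their pattern being $\tau$. Since $d(g^2,(P,S))=d(g,(P,S))^2\ge 0$ for every $S$ by \eqref{eq:flag-multiplication-key-property}, the expectation is nonnegative and the Fact follows.

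The key step is the averaging identity. By linearity, it suffices to check it on a single flag $(\pi,J)$ with $\pi\in\mathbb{S}_n$. Sample $n$ independent points from $P$ and let $\sigma$ be their pattern. Then $d(\pi,P)=\Pr(\sigma=\pi)$. Choose a uniformly random $k$-subset $J'\subseteq[n]$ independently. This gives
\[
\Pr\bigl(\sigma=\pi,\ J'=J\bigr)=d(\pi,P)\big/\tbinom{n}{k}=d(\llbracket(\pi,J)\rrbracket,P).
\]
On the other hand, conditioned on the $k$ points indexed by $J'$ (after sorting by $x$), the other $n-k$ points are independent samples from $P$. So the same probability equals $\Exp_S\bigl[d((\pi,J),(P,S))\bigr]$, where $S$ is the set of $k$ i.i.d.\ points from $P$. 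Here $d((\pi,J),(P,S))$ is set to $0$ whenever the pattern of $S$ is not $\tau$, which is consistent because then $\patt(\pi,J)=\tau$ cannot occur. Ties among coordinates have probability zero because the marginals are uniform. Hence
\[
d(\llbracket h\rrbracket,P)=\Exp_S\bigl[\mathbf{1}[\patt(S)=\tau]\,d(h,(P,S))\bigr]
\]
for all $h\in\mathbb{R}\mathcal{F}^\tau$.

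The main obstacle is well-definedness on the quotient $\mathcal{A}^\tau$, that is, passing from $\mathbb{R}\mathcal{F}^\tau$ to $\mathcal{A}^\tau=\mathbb{R}\mathcal{F}^\tau/K^\tau$. The right-hand side vanishes on $K^\tau$, since $d(\cdot,(P,S))$ kills $\beta-1$ and is multiplicative, hence kills the whole ideal. So the identity descends to $\mathcal{A}^\tau$, and evaluating both sides at $d(\cdot,P)$ is legitimate. Applying it to $h=g^2$ and using multiplicativity of $d(\cdot,(P,S))$ finishes the proof. The same argument extends to random permutons by taking a further expectation over $P\sim R$.
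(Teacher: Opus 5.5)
Your proof is correct and takes the same approach the paper relies on. The paper states this as a standard Fact without proof, motivated by viewing $\llbracket\cdot\rrbracket$ as averaging over a random choice of the tagged points and by the identity $d(g^2,(P,S))=d(g,(P,S))^2\ge 0$. Your identity $d(\llbracket h\rrbracket,P)=\Exp_S\bigl[\mathbf{1}[\patt(S)=\tau]\,d(h,(P,S))\bigr]$ makes that averaging explicit. The one step you state tersely is the conditional independence of the untagged points given $S$; it holds because choosing a uniform $k$-subset of sorted positions is the same as choosing a uniform $k$-subset of the i.i.d.\ sample labels.
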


Consider a type~$\tau \in \Sk$ and a finite set of flags $f_1, \dots, f_n$ of type~$\tau$.  
Let $F$ be the matrix with entries $F_{ij} = \llbracket f_i \cdot f_j \rrbracket$.  
Then, for every positive semidefinite matrix $Q \in \mathbb{R}^{n \times n}$, we have
\[
Q \circ F = \sum_{i,j} Q_{ij} \llbracket f_i f_j \rrbracket
   = \left\llbracket \sum_{i,j} Q_{ij} f_i f_j \right\rrbracket .
\]

Note that
\[
\sum_{i,j} Q_{ij} f_i f_j
\]
is a sum of squares of elements of $\mathcal{A}^{\tau}$ (to see this, write $Q$ as a sum of rank-one positive semidefinite matrices). Therefore, $Q \circ F \ge 0$ constitutes a valid density inequality.

This yields a practical approach to proving density inequalities: choose an appropriate type~$\tau$ and flags $f_1, \dots, f_n$, construct the matrix~$F$, and find a matrix~$Q$ that produces the desired inequality via semidefinite programming.

In this section, we have discussed pattern and flag densities only for \emph{deterministic} permutons.  
For example, identity~\eqref{eq:flag-multiplication-key-property} holds for every deterministic tagged permuton but does not generally hold for random permutons.  
However, \emph{any linear inequality} on pattern densities that holds for deterministic permutons also holds for random permutons, since the densities of a random permuton are convex combinations of those of deterministic ones.

\subsection{Flag algebra for strong IDU transformations}
We now adapt the flag algebra framework to analyze strong IDU transformations.  
As we have seen, strong IDU transformations are inverse-signature-invariant and we want to incorporate this property into the flag algebra framework.
Since the above condition involves inverses of permutations, it is more convenient to work with transposed permutons for strong IDU transformations, as defined below.

For a deterministic permuton~$P$, define its \emph{transposed} permuton~$P^{\transpose}$ by
\[
P^{\transpose}(A) = P(\{(y,x) : (x,y) \in A\}).
\]
We say that a deterministic permuton~$P$ is a \emph{$T$-permuton} if $P^{\transpose}$ defines a strong IDU transformation.  
Note that for every pattern~$\pi$,
\[
d(\pi, P^{\transpose}) = d(\pi^{-1}, P).
\]
Thus, results for $T$-permutons immediately translate into those for strong IDU transformations.

First, consider the unlabeled flag algebra for $T$-permutons.  
Since permutations with the same up--down signature have the same density in $T$-permutons, we identify such permutations in the flag algebra.  
Formally, let
\[
N = \{\pi_1 - \pi_2 \in \mathcal{A}^{\varnothing} : \udsign(\pi_1) = \udsign(\pi_2)\}.
\]

Now consider the typed flag algebra~$\mathcal{A}^\tau$ for a type~$\tau \in \Sk$.  
In addition to the up--down signature, we consider $\pi|_J$, the restriction of~$\pi$ to~$J$  
(this is not the pattern~$\patt(\pi, J)$ but simply the sequence of values of~$\pi$ at positions in~$J$).  
The elements of $\pi(J)$ partition the set~$[n]$ into $k+1$ intervals (some possibly empty), which we refer to as \emph{blocks}.  
It will be convenient to describe $\pi(J)$ in terms of the sizes of these blocks.

\begin{definition}[Partition vector]\label{def:part-vector}
The \emph{partition vector} $\partv(\pi, J)$ of a flag~$(\pi, J)$ is defined as
\[
 \partv(\pi, J) = (i_1 - 1,\, i_2 - i_1 - 1,\, \dots,\, n - i_k),
\]
where $i_1, \dots, i_k$ are the elements of~$\pi(J)$ in increasing order.
\end{definition}

Note that~$J$, $\partv(\pi, J)$, and~$\tau$ together uniquely determine~$\pi|_J$ (and vice versa).
Consider the product of two flags $(\pi_1, J_1)$ and $(\pi_2, J_2)$ of type~$\tau$.  
Let $c\,(\pi', J')$ be a nonzero term in their product $(\pi_1, J_1) \cdot (\pi_2, J_2)$ in~$\mathbb{R}\mathcal{F}^{\tau}$.  
Then it follows immediately that
\[
\partv(\pi', J') = \partv(\pi_1, J_1) + \partv(\pi_2, J_2).
\]
Finally, define
\[
N^{\tau} = \{ (\pi_1, J) - (\pi_2, J) \in \mathcal{A}^{\tau} :
\udsign(\pi_1) = \udsign(\pi_2)\ \text{and}\ 
\partv(\pi_1,J) = \partv(\pi_2,J) \}.
\]
Let~$\NIdeal$ be the linear span of~$N^{\tau}$.  
We write $a \equiv b \pmod{\NIdeal}$ if $a - b \in \NIdeal$.

\begin{theorem}\label{thm:main-flag-ideal-property} $\quad$\\
\noindent 1. For every $g \in \NIdeal$ and every tagged $T$-permuton $(P, S)$ of type~$\tau$,
$$
d(g, (P, S)) = 0$$ equivalently, for $g_1, g_2 \in \mathcal{A}^\tau$, if $g_1 \equiv g_2 \pmod{\NIdeal}$, then
$$d(g_1, (P, S)) = d(g_2, (P, S)).$$

\noindent 2. If $g\in \NIdeal$, then $\llbracket g \rrbracket = 0 \pmod{\mathbb{R}N}$.

\noindent 3. $\NIdeal$ is an ideal in~$\mathcal{A}^\tau$.
\end{theorem}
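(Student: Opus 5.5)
The plan is to handle the three items in the order 2, 1, 3: item~2 is immediate, item~1 rests on the labeled sampling formula of Lemma~\ref{lem:labeled-qsym}, and item~3 I will derive from a generating-function description of flag densities rather than by manipulating flags directly.

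\emph{Item~2.} For a generator $(\pi_1,J)-(\pi_2,J)\in N^\tau$ with $\pi_1,\pi_2\in\Sn$, the definition of $\llbracket\cdot\rrbracket$ gives
\[
\llbracket (\pi_1,J)-(\pi_2,J)\rrbracket=\binom{n}{k}^{-1}(\pi_1-\pi_2).
\]
Here $\udsign(\pi_1)=\udsign(\pi_2)$, so the right-hand side lies in $\mathbb{R}N$. Linearity of $\llbracket\cdot\rrbracket$ then gives the claim for all of $\NIdeal$.

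\emph{Item~1.} Let $(P,S)$ be a tagged $T$-permuton. Then $P^{\transpose}$ is deterministic and defines a strong IDU transformation. By Theorem~\ref{thm:main-strongIDU-classification}, $P^{\transpose}$ is a core deterministic IDU permuton: the Dirac random permuton at $P^{\transpose}$ is a mixture of core permutons, so it is core almost surely. Transposing turns a flag $(\pi,J)$ in $(P,S)$ into the flag $(\pi^{-1},\pi(J))$ in $(P^{\transpose},S^{\transpose})$. The labeled ranks $A=\pi(J)$ and the gap vector $\Delta j$ are then exactly recorded by $\partv(\pi,J)$.

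First I treat the case where $P^{\transpose}$ is a finite ID combination, where Lemma~\ref{lem:labeled-qsym} applies directly. The flag density factors as
\[
M(\Delta j)\prod_i z_i^{\Delta j_i}\cdot\Sigma,
\]
where $\Sigma$ is a sum over $\pi$-partition functions whose values on the labeled positions are prescribed by $S$. The first factor depends only on $\partv(\pi,J)$ and $S$. For $\Sigma$, the labeled positions are $\pi^{-1}(J)$-type data determined, together with $\tau$, by $\partv(\pi,J)$. By the argument of Claim~\ref{claim:ud-sign-for-combinations}, the set of $\pi$-partition functions depends only on $\udsign(\pi)$. Hence the density depends only on $\udsign(\pi)$ and $\partv(\pi,J)$.

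For a general core permuton I will rerun the proof of Lemma~\ref{lem:labeled-qsym} with the alternative sampling of Claim~\ref{claim:alt-sampling-coreIDU} in place of the ID-combination formula. Condition on the $Y'$-values of the labeled points; each unlabeled point independently falls in an $I$-interval, a $D$-interval, or the uniform part. The event that the pattern is $\pi$ again reduces to a partition-function condition that depends only on consecutive comparisons, hence only on $\udsign(\pi)$. Uniform-part points impose no order constraint beyond their $Y'$ order, which is consistent with any signature. I expect this step to be the main technical obstacle: the uniform part need not be a union of intervals, and the labeled points are fixed rather than random. If the direct argument gets messy, the fallback is to approximate $P^{\transpose}$ by ID combinations as in Theorems~\ref{thm:approx-by-IDU} and~\ref{thm:approx-by-ID}, while keeping the intervals containing the labeled points, and pass to the limit.

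\emph{Item~3.} I would not try to match product terms flag by flag, because the up--down signature of a product term is not determined by the signatures of its factors. Instead I will use the polynomials from item~1. For a finite ID combination $C$ with labeled points, define $\Phi(\pi,J)$ as the polynomial in $x,y,z$ given by Lemma~\ref{lem:labeled-qsym}, with the $g_i$ and $z_i$ treated as formal variables, and extend $\Phi$ linearly. Then $\Phi$ is multiplicative on $\mathbb{R}\mathcal{F}^\tau$, since by~\eqref{eq:flag-multiplication-key-property} densities multiply for every deterministic tagged permuton, and both sides are polynomials agreeing on an open set of parameters. By item~1, $\Phi$ vanishes on $N^\tau$. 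The key remaining step is the converse, that $\ker\Phi=\NIdeal$. I will prove it by a leading-monomial argument as in Theorem~\ref{thm:no-other-relations}: the lexicographically largest block sequence within a fixed partition vector produces a monomial that appears in only one $(\udsign,\partv)$ class. Once this is established, for $g\in\NIdeal$ and any $h$ we get $\Phi(gh)=\Phi(g)\Phi(h)=0$, so $gh\in\NIdeal$. Passing to the quotient by $K^\tau$ needs a small compatibility check, which holds because $\beta-1$ maps to $0$ under $\Phi$.
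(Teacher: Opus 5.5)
Items~1 and~2 of your proposal match the paper: item~2 is the same one-line observation, and item~1 uses the labeled formula of Lemma~\ref{lem:labeled-qsym} for transposed ID combinations followed by a density argument. The problem is item~3.

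\textbf{The gap in item~3.} Your route needs the claim that the common kernel of the tagged density maps is exactly $\NIdeal$. That is a typed analogue of Theorem~\ref{thm:no-other-relations}, strictly stronger than the ideal property, and your one-line leading-monomial sketch does not prove it.

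Congruence classes modulo $\NIdeal$ are indexed by $(J,\udsign(\pi),\partv(\pi,J))$, not by $(\udsign,\partv)$. The constraint $f(j_i)=g_i$ also lets a labeled position slide inside a constant block of $f$. Concretely, take $\tau=(\mathsf{1})$, $A=(\mathsf{\underline{2}\,3\,4\,1})$ and $B=(\mathsf{1\,\underline{2}\,4\,3})$.
\begin{itemize}
\item Both have signature $uud$ and partition vector $(1,2)$, but $J=\{1\}$ versus $J=\{2\}$.
\item They are not congruent. With the labeled point in the term $x_1$, the polynomial for $A$ contains $y_1x_2y_2$, while every monomial for $B$ is divisible by $x_1$.
\item The block function of $uud$ is $f^*=(x_1,x_1,x_1,y_1)$. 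It satisfies $f^*(j_1)=x_1$ for both choices of $J$ and gives the same unlabeled monomial $x_1^2y_1$.
\end{itemize}
So if $uud$ is the lexicographically largest signature in a putative relation, this monomial only yields $c_A+c_B=0$.

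Separating different $J$ needs a genuinely new argument. Monomials avoiding the labeled terms do pin down $J$. However, they are blind to the signature letters adjacent to labeled positions, and those must then be recovered from monomials that reuse the labeled terms, where $J$ can again slide. You have not supplied this argument.

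Two further issues sit in the same step.
\begin{itemize}
\item The $g_i$ cannot be formal variables. Each $g_i$ is one of the $x_t,y_t$, and which one it is determines the values $f$ may take next to $j_i$. So $\Phi$ must be a family indexed by the placement of the labeled points.
\item As polynomials, $\Phi(\beta)=(\sum_i z_i)(\sum_t(x_t+y_t))\neq 1$. Thus $\Phi$ kills $\beta-1$ only as a function on the product of simplices, and the parameters never range over an open set. Both multiplicativity and the kernel computation have to be run on single-size representatives via homogeneity.
\end{itemize}

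\textbf{How the paper avoids this.} It never needs $\NIdeal$ to be the full annihilator. By Lemma~\ref{lem:valid-swaps} it reduces to $\pi'$ obtained from $\pi$ by a single valid swap. Theorem~\ref{thm:flag-multiplication-simplified} then writes both products, modulo $\NIdeal$, through $y$-shuffles $(X,Y)$ and $(X',Y)$ with the same $Y$. Finally it matches the multisets of signatures of $x$-shuffles region by region: $R_1$ is identical, and the two windows $R_2,R_3$ around the swapped positions are handled by Corollary~\ref{cor:general-shuffles}. Your objection that signatures of product terms are not determined by those of the factors is answered there by comparing multisets of signatures rather than individual flags. This direct route is what you would need unless you actually prove the typed independence statement.
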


We defer the proof of Theorem~\ref{thm:main-flag-ideal-property} to Section~\ref{sec:proofs-flag-algebras}. Now, we discuss the implications of this theorem to the flag algebra framework for $T$-permutons. Recall that our general approach is to choose a type $\tau$ and $n$ flags $f_1, \dots, f_n \in \mathcal{F}^\tau$, construct the matrix $F$ with entries $F_{ij} = \llbracket f_i \cdot f_j \rrbracket$, and find a positive semidefinite matrix~$Q$ such that $Q \circ F$ yields the desired density inequality. 
The theorem shows that we can choose the flags $f_1, \dots, f_n$ from the quotient algebra $\mathcal{A}^\tau / \NIdeal$ or, in other words, choose at most one representative from each equivalence class modulo~$\NIdeal$.
Further, we will provide a simplified multiplication rule for typed density sums modulo~$\NIdeal$ in~Theorem~\ref{thm:flag-multiplication-simplified}.

\section{Analysis of ordering CSPs of arity 3 and 4}
In this section, we apply our framework to ordering CSPs defined by a single predicate of arity 3 or 4.
We will write most derivations modulo $N^{\varnothing}$. Thus, $(\mathsf{udu})$ will denote a flag algebra element $\pi \in \mathcal{A}^{\varnothing}$ with $\udsign(\pi) = udu$. Since all such elements are equivalent modulo $N^{\varnothing}$, the choice of a specific element will not matter. Further, we will write inequalities for pattern densities as 
$$(\mathsf{uu}) + (\mathsf{dd})-\frac{1}{3} \geq 0,$$
meaning that for every $T$-permuton $R$, 
$$d((\mathsf{uu}) + (\mathsf{dd}), R) \geq \frac{1}{3}.$$
Similarly, 
$$\min\big((\mathsf{uu}) + (\mathsf{dd}),\,2(\mathsf{ud})\big) \leq \nicefrac{1}{2}$$
is a shortcut for 
\[
\min\left(d((\mathsf{uu}) + (\mathsf{dd}), R),  d(2(\mathsf{ud}), R)\right) \leq \nicefrac{1}{2}
\]
for all $T$-permutons $R$.

\subsection{Ordering CSPs of arity 3}\label{sec:arity-3-analysis}
In this section, we analyze ordering CSPs of arity~3 using our approach.  
There are 11 non-isomorphic, nontrivial ordering predicates of arity~3, excluding the two trivial ones, ``always true'' and ``always false'' (see also Figure~\ref{fig:summary-3-4-results}).  
This includes the single nontrivial predicate $x_1 < x_2$ of arity~2 (viewed as a predicate of arity~3 by adding a dummy variable $x_3$); it appears in the list below as $\varphi_3$.
Completely satisfiable instances of four of them are polynomially tractable; here are the predicates described by their satisfying permutations:
\[
\begin{aligned}
\varphi_1&=\{(\mathsf{1\,2\,3})\},\\
\varphi_2&=\{(\mathsf{1\,2\,3}),\,(\mathsf{1\,3\,2})\},\\
\varphi_3&=\{(\mathsf{1\,2\,3}),\,(\mathsf{1\,3\,2}),\,(\mathsf{2\,3\,1})\},\\
\varphi_4&=\{(\mathsf{1\,2\,3}),\,(\mathsf{1\,3\,2}),\,(\mathsf{2\,1\,3}),\,(\mathsf{2\,3\,1})\}.
\end{aligned}
\]
Further, $\varphi_1$, $\varphi_2$, and $\varphi_3$ are precedence predicates (see Definition~\ref{def:precedence-CSP}); therefore, nearly satisfiable instances of these predicates admit an approximation ratio of $1 - O(\log n \log\log n)\,\varepsilon$, as discussed in Section~\ref{sec:nearly-satisfiable-csps}.

Among the remaining predicates, four have trivial $\varphi_{L}$ and $\varphi_{R}$ relaxations (the relaxations are equal to the ``always true'' predicate);
therefore, our framework cannot provide any nontrivial approximation guarantees for these four predicates.
All predicates have trivial $\varphi_{\varepsilon}$ relaxations.
However, there are three predicates with nontrivial $\varphi_{L}$ and $\varphi_{R}$ relaxations:
\[
\begin{aligned}
\varphi_5 &= \{(\mathsf{1\,2\,3}),\,(\mathsf{3\,2\,1})\},\\
\varphi_6 &= \{(\mathsf{1\,2\,3}),\,(\mathsf{2\,3\,1})\},\\
\varphi_7 &= \{(\mathsf{1\,2\,3}),\,(\mathsf{1\,3\,2}),\,(\mathsf{3\,1\,2})\}.
\end{aligned}
\]
Note that predicate $\varphi_5$ is \emph{Betweenness}, studied by Chor and Sudan~\cite{CS98}.
We list the $L$- and $R$-relaxations of $\varphi_5$, $\varphi_6$, and $\varphi_7$ below. Note that Betweenness is self-dual and its $L$- and $R$-relaxations are dual to each other. Therefore, both of these relaxations yield the same approximation factor for Betweenness.
\[
\begin{aligned}
\varphi_{5L} &= \{(\mathsf{1\,2\,3}),\,(\mathsf{1\,3\,2}),\,(\mathsf{2\,3\,1}),\,(\mathsf{3\,2\,1})\},\\
\varphi_{5R} &= \{(\mathsf{1\,2\,3}),\,(\mathsf{2\,1\,3}),\,(\mathsf{3\,1\,2}),\,(\mathsf{3\,2\,1})\},\\
\varphi_{6L} &= \varphi_{6R} = \{(\mathsf{1\,2\,3}),\,(\mathsf{1\,3\,2}),\,(\mathsf{2\,3\,1})\},\\
\varphi_{7L} &= \{(\mathsf{1\,2\,3}),\,(\mathsf{1\,3\,2}),\,(\mathsf{2\,1\,3}),\,(\mathsf{3\,1\,2})\}.
\end{aligned}
\]

For each of these relaxations $\varphi'$ (except for $\varphi_{5R}$, which is equivalent to $\varphi_{5L}$), we write the expression for $p(\varphi' \to \varphi, \cdot)$ modulo $\mathbb{R}N$ (that is, we only keep track of the up--down signatures).
\begin{align*}
\varphi_5: &\min\big((\mathsf{uu}) + (\mathsf{dd}),\,(\mathsf{uu}) + (\mathsf{dd}),\,2(\mathsf{ud}),\,2(\mathsf{ud})\big)
   = \min\big((\mathsf{uu}) + (\mathsf{dd}),\,2(\mathsf{ud})\big),\\[4pt]
\varphi_6: &\min\big((\mathsf{ud}) + (\mathsf{du}),\,(\mathsf{uu}) + (\mathsf{du}),\,(\mathsf{uu}) + (\mathsf{ud})\big),\\[4pt]
\varphi_7: &\min\big((\mathsf{uu}) + (\mathsf{du}) + (\mathsf{dd}),\,(\mathsf{uu}) + (\mathsf{ud}) + (\mathsf{dd}),\,2(\mathsf{ud}) + (\mathsf{du}),\,(\mathsf{uu}) + 2(\mathsf{ud})\big).
\end{align*}

We analyze these one by one. For Betweenness, we need to find a $T$-permuton~$P$ that maximizes $\min\big((\mathsf{uu}) + (\mathsf{dd}),\,2(\mathsf{ud})\big)$. We use a $T$-permuton (see Figure~\ref{fig:permuton-compositions})
\[
P = \big(\tfrac{1}{2} I \oUp \tfrac{1}{2} D\big)^\transpose.
\]
(In retrospect, this is equivalent to the transformation used in~\cite{Mak}.)
Using Theorem~\ref{thm:main-qsym-characterization}, it can be immediately verified that
\[
d((\mathsf{uu}) + (\mathsf{dd}), P) =  d(2(\mathsf{ud}), P) = \nicefrac{1}{2}.
\]
Now let us prove an upper bound on the approximability of $\varphi_5$ in our framework. Note that
\begin{equation}
d\!\left(\sum_{\pi\in \mathbb{S}_3} \pi, P\right) = 1.
\label{eq:all-S3-add-to-1}
\end{equation}
Since there is one permutation in $\mathbb{S}_3$ with signature $uu$ (the identity permutation),
one with signature $dd$ (the decreasing permutation), and two each with signatures $ud$ and $du$, we get
\[
(\mathsf{uu}) + (\mathsf{dd}) + 2(\mathsf{ud}) + 2(\mathsf{du}) = 1.
\]
Since all densities are nonnegative, we have
\[
(\mathsf{uu}) + (\mathsf{dd}) + 2(\mathsf{ud}) \le 1.
\]
Therefore,
\[
\min\big((\mathsf{uu}) + (\mathsf{dd}),\,2(\mathsf{ud})\big) \leq \nicefrac{1}{2}.
\]
Thus, the best approximation factor for Betweenness achievable using our approach is $\nicefrac{1}{2}$.  
We can also consider a slightly more general ordering CSP with the constraint language consisting of Betweenness and the less-than predicate $x_1 < x_2$. As the less-than predicate equals its own $L$- and $R$-relaxations, the probability $p(`<\textrm{'} \to `<\textrm{'}, P)$ (where `$<$' is the less-than predicate) is equal to the density of $(\mathsf{1\,2})$ in $P$. Now, $d((\mathsf{1\,2}), P) = \nicefrac{1}{2}$. That is, we can get a $\nicefrac{1}{2}$-approximation for this more general ordering CSP as well.

Now consider the predicate $\varphi_6$. Before we show that it does not admit a nontrivial approximation using our framework, we need to prove that $(\mathsf{uu}) + (\mathsf{dd}) \ge \nicefrac{1}{3}$ (this lower bound is achieved by the uniform permuton). We perform computations in $\mathcal{A}^{(\mathsf{1})}/\mathbb{R}N^{(\mathsf{1})}$. A dot in a signature identifies the position of a labeled index. Here, all flags written as $(\mathsf{xx.xx})$ have partition vector $(2,2)$. For example, $(\mathsf{dd.dd})$ denotes the equivalence class of $(\mathsf{1\ 2\ \underline{3}\ 4\ 5})$; $(\mathsf{ud.du})$ denotes the equivalence class of $(\mathsf{1\ 4\ \underline{3}\ 2\ 5})$.

\begin{align*}
\left\llbracket\left((\mathsf{u.u}) - (\mathsf{d.d})\right)^2\right\rrbracket &= 
\frac{2}{3} \bigl\llbracket(\mathsf{dd.dd}) + (\mathsf{dd.du}) - 2(\mathsf{du.du}) - (\mathsf{du.ud}) \\
&\qquad + (\mathsf{du.uu}) + (\mathsf{ud.dd}) - (\mathsf{ud.du}) - 2(\mathsf{ud.ud}) + (\mathsf{uu.ud}) + (\mathsf{uu.uu})\bigr\rrbracket\\
&= \frac{2}{15} \bigl((\mathsf{dddd}) + (\mathsf{dddu}) - 2(\mathsf{dudu}) - (\mathsf{duud}) \\
&\qquad + (\mathsf{duuu}) + (\mathsf{uddd}) - (\mathsf{uddu}) - 2(\mathsf{udud}) + (\mathsf{uuud}) + (\mathsf{uuuu})\bigr)\\
& = \frac{1}{5}\Bigl((\mathsf{uu}) + (\mathsf{dd})-\frac{1}{3}\Bigr). 
\end{align*}
We conclude that $(\mathsf{uu}) + (\mathsf{dd})-\frac{1}{3} \geq 0$.
We have,
$$(\mathsf{ud}) + (\mathsf{du}) = \frac{1 -((\mathsf{uu}) + (\mathsf{dd}))}{2} \leq \frac{1-1/3}{2} = \frac{1}{3}.$$
We conclude that $\varphi_6$ does not admit an approximation better than $1/3$, which is what random ordering gives.

\begin{figure}
  \centering
\begin{tikzpicture}[x=3.5cm,y=3.5cm,>=stealth]
\draw[->] (-0.1,0) -- (1.1,0) node[below right=2pt] {$(\mathsf{uu})$};
\draw[->] (0,-0.1) -- (0,1.1) node[above left=2pt] {$(\mathsf{dd})$};

  \foreach \k in {1,...,10}{
    \pgfmathsetmacro{\x}{1/\k}
    \pgfmathsetmacro{\y}{1/3 + 2/3*(1/(\k*\k))}
    \coordinate (A\k) at ({0.5*(\x+\y)}, {0.5*(\y-\x)});
    \coordinate (B\k) at ({0.5*(-\x+\y)}, {0.5*(\y+\x)});
  }
  \draw[gray, domain=-1:1, samples=200]
    plot ({0.5*(\x + (1/3 + 2/3*(\x*\x)))},
          {0.5*((1/3 + 2/3*(\x*\x)) - \x)});

  \path[fill=blue!15,draw=blue,line width=0.5pt]
    (A1) -- (A2) -- (A3) -- (A4) -- (A5) -- (A6) -- (A7) -- (A8) -- (A9) -- (A10)
    -- (0.5*1/3,0.5*1/3) 
    -- (B10) -- (B9) -- (B8) -- (B7) -- (B6) -- (B5) -- (B4) -- (B3) -- (B2) -- (B1)
    -- cycle;

  \foreach \k in {1,...,5}{
    \fill[black] (A\k) circle (0.008);
    \fill[black] (B\k) circle (0.008);
  }
  \draw[dashed] (1/3,0) -- (0,1/3);
  \fill[black] (1/6,1/6) circle (0.01);
  \node[above left]  at (0,1) {\small 1};
  \node[below right] at (1,0) {\small 1};
  \node[below left] at (0,0) {\small 1};
  \node[below] at (1/3,0) {\small $\nicefrac{1}{3}$};
  \node[left] at (0,1/3) {\small $\nicefrac{1}{3}$};

\end{tikzpicture}

\caption{The plot depicts the convex hull of achievable densities $(\mathsf{uu})$ and $(\mathsf{dd})$ in $T$-permutons. The boundary forms a polygonal curve composed of infinitely many segments. All feasible points lie above and to the right of the dashed line $(\mathsf{uu}) + (\mathsf{dd}) = \nicefrac{1}{3}$. The point $(\nicefrac{1}{6}, \nicefrac{1}{6})$ represents the uniform permuton. The lower envelope is approximated by the parabola $(\mathsf{uu}) + (\mathsf{dd}) = \nicefrac{1}{3} + \nicefrac{2}{3}((\mathsf{uu}) - (\mathsf{dd}))^2$ (shown in gray).}
\end{figure}
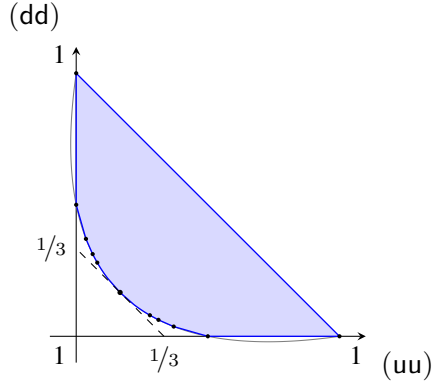
Finally, we consider $\varphi_7$. We prove that it also does not admit any nontrivial approximation using our framework. We have,
$$\left((\mathsf{uu}) + (\mathsf{du}) + (\mathsf{dd})\right) + \left(2(\mathsf{ud}) + (\mathsf{du})\right) = (\mathsf{uu}) + 2(\mathsf{ud}) + 2(\mathsf{du}) + (\mathsf{dd}) = 1.$$
Therefore, either $(\mathsf{uu}) + (\mathsf{du}) + (\mathsf{dd}) \leq \nicefrac{1}{2}$ or $2(\mathsf{ud}) + (\mathsf{du}) \leq \nicefrac{1}{2}$.
We conclude that $\varphi_7$ does not admit an approximation better than the trivial one of $1/2$ using our framework.

\subsection{Two examples of ordering CSPs of arity 4}\label{sec:arity-4-analysis}
In this section, we provide two concrete examples of predicates of arity~4 that admit a nontrivial approximation using our framework. Among numerous examples, we chose two for which the computations are quite simple. In particular, in these examples we use IDU transformations defined by deterministic permutons, while many others require random, non-deterministic ones.

Consider the following ordering CSP: 
\[\varphi(x_1,x_2, x_3, x_4) = (x_1 < x_2 < x_3 < x_4) \vee (x_4 < x_1 < x_2 < x_3)\]
with $\Sat(\varphi) = \{(\mathsf{1\,2\,3\,4}),(\mathsf{2\,3\,4\,1})\}$. It is easy to verify that the $L$, $R$, and $\varepsilon$-relaxations of $\varphi$ are equal:
$$\varphi_{L} = \varphi_{R} = \varphi_{\varepsilon} = (x_1 < x_2 < x_3),$$
and $\Sat(\varphi_{\varepsilon}) = \{(\mathsf{1\,2\,3\,4}),(\mathsf{1\,2\,4\,3}),(\mathsf{1\,3\,4\,2}),(\mathsf{2\,3\,4\,1})\}$.
We compute $p(\varphi_{\varepsilon} \to \varphi, R)$ for a strong random permuton $R$:
\begin{align*}
p(\varphi_{\varepsilon} \to \varphi, R) &= \min(\\
&\qquad d((\mathsf{1\,2\,3\,4})\cdot (\mathsf{1\,2\,3\,4})^{-1} + (\mathsf{2\,3\,4\,1}) \cdot (\mathsf{1\,2\,3\,4})^{-1},  R),\\
&\qquad d((\mathsf{1\,2\,3\,4})\cdot (\mathsf{1\,2\,4\,3})^{-1} + (\mathsf{2\,3\,4\,1}) \cdot (\mathsf{1\,2\,4\,3})^{-1},  R),\\
&\qquad d((\mathsf{1\,2\,3\,4})\cdot (\mathsf{1\,3\,4\,2})^{-1} + (\mathsf{2\,3\,4\,1}) \cdot (\mathsf{1\,3\,4\,2})^{-1},  R),\\
&\qquad d((\mathsf{1\,2\,3\,4})\cdot (\mathsf{2\,3\,4\,1})^{-1} + (\mathsf{2\,3\,4\,1}) \cdot (\mathsf{2\,3\,4\,1})^{-1},  R))\\
&= \min(\\
&\qquad d((\mathsf{1\,2\,3\,4}) + (\mathsf{2\,3\,4\,1}),R),\\
&\qquad d((\mathsf{1\,2\,4\,3}) + (\mathsf{2\,3\,1\,4}),R),\\
&\qquad d((\mathsf{1\,4\,2\,3}) + (\mathsf{2\,1\,3\,4}),R),\\
&\qquad d((\mathsf{4\,1\,2\,3}) + (\mathsf{1\,2\,3\,4}),R))\\
&=\min(\\
&\qquad d((\mathsf{1\,2\,3\,4}) + (\mathsf{4\,1\,2\,3}),R^\transpose),\\
&\qquad d((\mathsf{1\,2\,4\,3}) + (\mathsf{3\,1\,2\,4}),R^\transpose),\\
&\qquad d((\mathsf{1\,3\,4\,2}) + (\mathsf{2\,1\,3\,4}),R^\transpose),\\
&\qquad d((\mathsf{2\,3\,4\,1}) + (\mathsf{1\,2\,3\,4}),R^\transpose))
).
\end{align*}
Write up--down signatures of these permutations:
\begin{align*}
p(\varphi_{\varepsilon} \to \varphi, R)&=
\min(
 d((\mathsf{uuu}) + (\mathsf{duu}),R^\transpose),
 d((\mathsf{uud}) + (\mathsf{duu}),R^\transpose),\\
&\phantom{{}=\min(}
 d((\mathsf{uud}) + (\mathsf{duu}),R^\transpose),
 d((\mathsf{uud}) + (\mathsf{uuu}),R^\transpose))
\\
&=
\min(
 d((\mathsf{uuu}) + (\mathsf{duu}),R^\transpose),
  d((\mathsf{uuu}) + (\mathsf{uud}),R^\transpose),
 d((\mathsf{uud}) + (\mathsf{duu}),R^\transpose)).
\end{align*}

As a warm-up, we note that Theorem~\ref{thm:simple-sufficient-admits-approx} implies that  
\( p(\varphi_{\varepsilon} \to \varphi, R) > \nicefrac{1}{12} \) (the trivial approximation factor) for some IDU permuton $R$. From the signatures above, we deduce that the vectors \( v(\varphi, \tau) \), as in Theorem~\ref{thm:simple-sufficient-admits-approx},  are
\[
(0,2,2) = (1,1,1) + (-1,1,1), \quad
(2,2,0) = (1,1,1) + (1,1,-1), \quad
(0,2,0) = (1,1,-1) + (-1,1,1).
\]
We see that these vectors have strictly positive inner products with \( y = (1,1,1) \), and thus Theorem~\ref{thm:simple-sufficient-admits-approx} applies. However, the theorem does not yield an explicit approximation factor.

Now let $R$ be the deterministic permuton $y D \oUp x I \oUp yD$, where $x+2y = 1$ and $x,y\geq 0$.
We compute the densities of patterns with the following signatures w.r.t.\ $R^\transpose$:
\begin{align*}
(\mathsf{uuu}):&\quad x^4 + 2x^3y + x^2y^2\\
(\mathsf{uud}), (\mathsf{duu}):&\quad x^3y + 2x^2y^2 + xy^3
\end{align*}
Plugging these expressions into the formula for $p(\varphi_{\varepsilon}\to \varphi, R)$ and simplifying the obtained algebraic expressions, we get
$$p(\varphi_{\varepsilon}\to \varphi, R)
= \min\left(\frac{x(1+x)^3}{8},\frac{x+x^2-x^3-x^4}{4}\right).
$$
We now find $x$ that maximizes $f(x) = \frac{x+x^2-x^3-x^4}{4}$. Write 
$$f'(x)= \frac{1 +2x -3x^2  - 4x^3}{4} = \frac{(1+x)(1 +x -4x^2)}{4}.$$
Polynomial $1 +x -4x^2$ has roots $\frac{1\pm \sqrt{17}}{8}$. Thus, the only root of $f'(x)$ on $(0,1)$ is $\frac{1 + \sqrt{17}}{8}$ (the other two roots, $-1$ and $(1-\sqrt{17})/8$ are negative). For this value of $x$, we get  
\begin{align*}
\min\left(\frac{x(1+x)^3}{8},\frac{x+x^2-x^3-x^4}{4}\right) &= 
\min\left(\frac{701+181\sqrt{17}}{4096}, \frac{107 + 51\sqrt{17}}{2048}\right)\\
&= \frac{107 + 51\sqrt{17}}{2048}=0.1549\dots
\end{align*}
\begin{corollary}
There is a polynomial-time approximation algorithm for predicate  $ \varphi(x_1,x_2, x_3, x_4) = (x_1 < x_2 < x_3 < x_4) \vee (x_4 < x_1 < x_2 < x_3)$ that, given a $1 - \varepsilon$ satisfiable instance (with $\varepsilon \geq 0$), returns a solution that satisfies at least an $\alpha - O(\varepsilon\log n \log\log n)$ fraction of the constraints, where
$$\alpha =   \frac{107 + 51\sqrt{17}}{2048}=0.1549\dots$$
\end{corollary}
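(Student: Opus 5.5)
The corollary follows by combining Theorem~\ref{thm:idu-framework-nearly-satisfiable} with the computation just carried out. The relaxation $\varphi_{\varepsilon} = (x_1 < x_2 < x_3)$ is a precedence predicate, so $\CSP(\varphi_\varepsilon)$ is a bounded-arity precedence CSP. Given a $(1-\varepsilon)$-satisfiable instance of $\CSP(\varphi)$, its relaxation is also $(1-\varepsilon)$-satisfiable. The Minimum Feedback Arc Set based algorithm~\cite{ENSS98,Seymour} then returns an ordering $\pi$ satisfying at least a $1 - O(\varepsilon\log n\log\log n)$ fraction of the relaxed constraints. For this step we first replace each constraint $x_{i_1}<x_{i_2}<x_{i_3}$ by its two clauses $x_{i_1}<x_{i_2}$ and $x_{i_2}<x_{i_3}$. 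A violated constraint violates at least one of these clauses, so the loss is at most a constant factor.

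Next, we apply the strong IDU transformation given by the deterministic permuton $R = yD \oUp xI \oUp yD$ with $x = \frac{1+\sqrt{17}}{8}$ and $y = (1-x)/2$. By Claim~\ref{claim:up-combination-is-strong-IDU}, $R$ is a strong IDU transformation. It can be sampled in linear time: assign each index independently to one of three groups and order each group increasingly or decreasingly according to its label. By Theorem~\ref{thm:idu-framework}, the expected fraction of satisfied constraints is at least
\[
p(\varphi_\varepsilon\to\varphi,R)\cdot\bigl(1-O(\varepsilon\log n\log\log n)\bigr)\ \ge\ p(\varphi_\varepsilon\to\varphi,R) - O(\varepsilon\log n\log\log n).
\]
Substituting the value computed above, $p(\varphi_\varepsilon\to\varphi,R)=\frac{107+51\sqrt{17}}{2048}$, gives the claimed $\alpha$. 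For $\varepsilon = 0$ this is the satisfiable regime; there one can alternatively use exact topological sorting in place of the feedback arc set algorithm.

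The only step that needs real care is the density computation for the signatures $(\mathsf{uuu})$, $(\mathsf{uud})$, $(\mathsf{duu})$ in $R^\transpose$. We evaluate these via Theorem~\ref{thm:main-qsym-characterization} by enumerating the $\pi$-partition functions into the three blocks $y,x,y$, with a $D$ block allowed to repeat only along descents and an $I$ block only along ascents. We then verify the simplified minimum: $\frac{x(1+x)^3}{8} \ge \frac{x+x^2-x^3-x^4}{4}$ at the chosen $x$, so the second branch is binding, and that branch is maximized at $x=\frac{1+\sqrt{17}}{8}$. This is routine algebra that the paper has already essentially done.

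A minor point is that the bound holds in expectation. A guaranteed bound follows by derandomizing with conditional expectations, as noted after Theorem~\ref{thm:idu-framework}, or by repeating the sampling.
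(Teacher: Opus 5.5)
Your proposal is correct and follows the paper's argument. The paper computes the $(\mathsf{uuu})$, $(\mathsf{uud})$, $(\mathsf{duu})$ densities in $R^\transpose$ for $R = yD \oUp xI \oUp yD$, finds that the second branch $\frac{x+x^2-x^3-x^4}{4}$ of the minimum binds and is maximized at $x=\frac{1+\sqrt{17}}{8}$, and then invokes Theorem~\ref{thm:idu-framework-nearly-satisfiable}; your extra details (splitting each relaxed constraint into arcs for feedback arc set, using $p\cdot s \ge p-(1-s)$, linear-time sampling, and derandomization) fill in routine steps the paper leaves implicit.
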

Note that the random ordering algorithm satisfies a
$$\frac{|\Sat(\varphi)|}{4!} = \frac{2}{24} = \frac{1}{12} = 0.08(3)$$
fraction of the constraints.

Now we briefly describe another example of an ordering CSP of arity 4.
Consider
$$\varphi(x_1, x_2, x_3, x_4) = (x_1 < x_2 < x_3 < x_4) \vee (x_1 < x_4 < x_3 < x_2).$$
For this predicate, the three relaxations $\varphi_{\varepsilon}$, $\varphi_{L}$, and $\varphi_{R}$ are distinct:
\begin{align*}
\varphi_{\varepsilon}(x_1, x_2, x_3, x_4) &= (x_1 < x_2) \wedge (x_1 < x_3) \wedge (x_1 < x_4)\\
\varphi_{L}(x_1, x_2, x_3, x_4) &= \varphi_{\varepsilon}(x_1, x_2, x_3, x_4) \wedge \NFirst(x_3, x_2,x_4)\\
\varphi_{R}(x_1, x_2, x_3, x_4) &= \varphi_{\varepsilon}(x_1, x_2, x_3, x_4) \wedge \NLast(x_3, x_2,x_4)
\end{align*}
We get the following formulas for $p(\cdot \to \varphi, R)$. Recall that although we write $p(\varphi' \to \varphi, R)$ for a strong permuton $R$, when we express this quantity using up--down signatures 
we write the corresponding densities with respect to $R^\transpose$.
\begin{align*}
p(\varphi_{\varepsilon} \to \varphi, R) &= \min(d(2(\mathsf{uud}), R^\transpose), d(2(\mathsf{udu}), R^\transpose), d((\mathsf{uuu}) + (\mathsf{udd}), R^\transpose)) \\
p(\varphi_{L} \to \varphi, R) &= \min(d(2(\mathsf{uud}), R^\transpose), d((\mathsf{uuu}) + (\mathsf{udd}), R^\transpose)) \\
p(\varphi_{R} \to \varphi, R) &= \min(d(2(\mathsf{udu}), R^\transpose), d((\mathsf{uuu}) + (\mathsf{udd}), R^\transpose))
\end{align*}
We now choose permutons $R$ to maximize 
$p(\varphi_{\varepsilon} \to \varphi, R)$ and $p(\varphi_{L} \to \varphi, R)$;
we do not choose $R$ for $\varphi_{R}$ as it yields a worse approximation than that 
for $\varphi_{L}$.

First consider $\varphi_{L}$. 
As in the previous example, Theorem~\ref{thm:simple-sufficient-admits-approx} implies that 
\( p(\varphi_{L} \to \varphi, R) > \nicefrac{1}{12} \) for some strong IDU permuton, since in this case we have the following vectors \( v(\varphi, \tau) \):
\[
(2,2,-2) \quad \text{and} \quad (2,0,0) = (1,1,1) + (1,-1,-1)
\]
and both vectors have strictly positive inner products with \( y = (1,0,0) \).

Let $R = x I \oUp y D$ where $x$ and $y$ are nonnegative and add up to 1.
We get 
\begin{align*}
d((\mathsf{uuu}), R^\transpose) &= x^4 + x^3 y \\
d((\mathsf{uud}), R^\transpose) &= x^3 y + x^2 y^2 \\
d((\mathsf{udd}), R^\transpose) &= x^2 y^2 + x y^3\\
p(\varphi_{L} \to \varphi, R) &= \min(2(x^3 y + x^2 y^2), (x^4 + x^3 y) + (x^2 y^2 + x y^3)).
\end{align*}
Maximizing this expression over all choices of $x,y \geq 0$ with $x+y = 1$, 
we get that the maximum is $8/27$ when $x=2/3$ and $y=1/3$.

Now consider $\varphi_{\varepsilon}$. It is easy to see that Theorem~\ref{thm:simple-sufficient-admits-approx} again guarantees that \( p(\varphi_{\varepsilon} \to \varphi, R) \) is nontrivial for some \( R \), but we omit the straightforward computation. Let $R = x I \oUp y I \oUp z I$ where $x+y+z = 1$ and all of them are nonnegative.
We have,
\begin{align*}
d((\mathsf{uuu}), R^\transpose) &= x^{4} + x^{3} y + x^{3} z + x^{2} y^{2} + x^{2} y z \\
&\phantom{{}={}} + x^{2} z^{2} + x y^{3} + x y^{2} z + x y z^{2} + x z^{3} + y^{4} + y^{3} z + y^{2} z^{2} + y z^{3} + z^{4}\\
d((\mathsf{uud}), R^\transpose) &= x^{3} y + x^{3} z + x^{2} y z + x y^{2} z + y^{3} z \\
d((\mathsf{udu}), R^\transpose) &= x^{2} y^{2} + x^{2} y z + x^{2} z^{2} + x y z^{2} + y^{2} z^{2} \\
d((\mathsf{udd}), R^\transpose) &= x^{2} y z. 
\end{align*}
We plug these expressions into the formula for $p(\varphi_{\varepsilon} \to \varphi, R)$ and solve for the optimal choice of $(x,y,z)$ numerically
(as it appears there is no closed-form formula for them): 
$$(x,y,z) \approx (0.41292217261909, 0.19926838465167, 0.38780944272924),$$
and $p(\varphi_{\varepsilon} \to \varphi, R) \geq 0.1278735827$.

\begin{corollary}
There are polynomial-time approximation algorithms for satisfiable and nearly satisfiable instances of $\CSP(\varphi)$ for the predicate  
$\varphi(x_1, x_2, x_3, x_4) = (x_1 < x_2 < x_3 < x_4) \vee (x_1 < x_4 < x_3 < x_2)$.
Given a satisfiable instance, the first algorithm satisfies at least a fraction $\alpha = 8/27 = 0.2\overline{962}$; given a $(1-\varepsilon)$-satisfiable instance the second algorithm finds a solution satisfying at least a $\beta - O(\varepsilon\log n\log\log n)$ fraction of the constraints, where $\beta = 0.1278735827$. In contrast, the trivial approximation algorithm gives a $1/12 = 0.08\overline{3}$ approximation.
The first approximation algorithm beats the trivial approximation by a factor $3\,\nicefrac{5}{9} = 3.\overline{5}$.
\end{corollary}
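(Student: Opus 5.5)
This corollary follows by combining Theorem~\ref{thm:idu-framework} with the two explicit permutons computed above. The steps are: check that each transformation is a legitimate, efficiently samplable strong IDU transformation; plug the computed values of $p(\varphi'\to\varphi,R)$ into the framework; and supply the relaxed-instance solver in each regime.

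For the satisfiable regime I take $\varphi' = \varphi_L$ and $R = \tfrac23 I \oUp \tfrac13 D$. By Claim~\ref{claim:up-combination-is-strong-IDU} this is a strong IDU transformation. It is sampled in linear time: assign each position to group $A$ with probability $2/3$ and to group $B$ otherwise, order $A$ increasingly, order $B$ decreasingly, and place $A$ before $B$.

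Since $\varphi_L \in \Pi_{\NFirst}$, $\CSP(\varphi_L)$ is a Not-First ordering CSP and hence polynomially tractable. Satisfiable instances are therefore solved exactly, with $s=1$. Theorem~\ref{thm:idu-framework-satisfiable} then gives an expected guarantee of $p(\varphi_L\to\varphi,R)$.

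The computation above shows this value is $\min\bigl(2(x^3y+x^2y^2),\,x^4+x^3y+x^2y^2+xy^3\bigr)$ evaluated at $x=2/3$, $y=1/3$. Both entries are easy to confirm: the first is $2\cdot\tfrac{4}{27}\bigl(\tfrac23\cdot\tfrac13+\tfrac19\bigr)=\tfrac{8}{27}$, and the second is $x(x+y)(x^2+y^2)=\tfrac23\cdot\tfrac59=\tfrac{10}{27}$. The minimum is $8/27$. Only the claimed lower bound is needed, not optimality. The ratio $(8/27)/(1/12)=32/9=3\tfrac59$ is then immediate.

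For the nearly satisfiable regime I take $\varphi'=\varphi_\varepsilon$, which is a precedence predicate, and $R=xI\oUp yI\oUp zI$ with the numerical $(x,y,z)$ above. Given a $(1-\varepsilon)$-satisfiable instance, the Feedback Arc Set based algorithm~\cite{ENSS98,Seymour} yields $s=1-O(\varepsilon\log n\log\log n)$ on the relaxation. Theorem~\ref{thm:idu-framework-nearly-satisfiable} then gives $\beta - O(\varepsilon\log n\log\log n)$ with $\beta=p(\varphi_\varepsilon\to\varphi,R)$.

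The main obstacle is certifying $\beta \ge 0.1278735827$ rigorously, since the optimizer is only numerical. I would fix rational values close to the numerical optimum, e.g.\ truncated decimals renormalized so that $x+y+z=1$. At these values I would evaluate the three polynomials $2(\mathsf{uud})$, $2(\mathsf{udu})$, and $(\mathsf{uuu})+(\mathsf{udd})$ in exact rational arithmetic and take their minimum. These polynomials come from the $\pi$-partition expansion of Theorem~\ref{thm:main-qsym-characterization}, applied to $R^\transpose$ via $d(\pi,R^\transpose)=d(\pi^{-1},R)$. If the exact value at the truncated point falls slightly below the stated constant, I would either adjust the rational point or state the bound with fewer digits. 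The guarantee is valid for any fixed $(x,y,z)$, so perturbing the point affects only the constant, never correctness.
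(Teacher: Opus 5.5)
Your proposal is correct and follows the paper's own argument: the permuton $\tfrac23 I \oUp \tfrac13 D$ with $\varphi_L$ goes into Theorem~\ref{thm:idu-framework-satisfiable}, and $xI\oUp yI\oUp zI$ with $\varphi_\varepsilon$ goes into Theorem~\ref{thm:idu-framework-nearly-satisfiable}, with the numerical constant certified by exact rational evaluation, as the paper does. One arithmetic slip: your displayed product $2\cdot\tfrac{4}{27}\bigl(\tfrac23\cdot\tfrac13+\tfrac19\bigr)$ evaluates to $\tfrac{8}{81}$; the leading factor should be $x^2=\tfrac49$ (equivalently, $2(x^3y+x^2y^2)=2x^2y(x+y)=2x^2y=\tfrac{8}{27}$), so your conclusion $8/27$ still stands.
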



\section{Proofs of the main flag algebra results for strong IDU transformations}
\label{sec:proofs-flag-algebras}

\subsection{Multiplication of flags for \texorpdfstring{$T$}{T}-permutons}
We now give an alternative (but equivalent) definition of multiplication of standard permutation flags that is arguably more constructive than Definition~\ref{def:flag-multiplication}.
Although simple, we have not seen this formulation in the literature before.

\begin{definition}
Consider two flags $(\pi_1, J_1)$ and $(\pi_2, J_2)$ of type~$\tau$.
Assume that $\pi_1 \in \mathbb{S}_{n_1}$ and $\pi_2 \in \mathbb{S}_{n_2}$.
We say that a pair $(X, Y)$ is a \emph{$y$-shuffle} of $(\pi_1, J_1)$ and $(\pi_2, J_2)$ if:
\begin{itemize}
  \item $X$ is a sequence of distinct elements from $[n_1 + n_2 - k]$ of length~$n_1$;
  \item $Y$ is a sequence of distinct elements from $[n_1 + n_2 - k]$ of length~$n_2$;
  \item $X$ is order-isomorphic to~$\pi_1$, and $Y$ is order-isomorphic to~$\pi_2$;
  \item $X$ restricted to indices in~$J_1$ is equal to $Y$ restricted to indices in~$J_2$ (as sequences);
  \item $X_i \neq Y_j$ unless $i \in J_1$ and $j \in J_2$.
\end{itemize}
\end{definition}

\begin{example}\label{example:yshuffle}
Consider two flags $(\mathsf{1\,3\,\underline{2}\,5\,\underline{4}})$ and $(\mathsf{1\,\underline{2}\,\underline{3}})$ in $\mathcal{F}^{(\mathsf{1 \ 2})}$. 
They have two $y$-shuffles:
$$\bigl((\mathsf{1\,4\,\underline{3}\,6\,\underline{5}}) , (\mathsf{2\,\underline{3}\,\underline{5}})\bigr)\quad \text{and}\quad
\bigl((\mathsf{2\,4\,\underline{3}\,6\,\underline{5}}) , (\mathsf{1\,\underline{3}\,\underline{5}})\bigr).$$
\end{example}
\begin{definition}\label{def:x-shuffle}
Consider a pair of sequences $(X, Y)$. Assume that all elements in $X$ are distinct and all elements in $Y$ are distinct. Further, assume that the common elements of $X$ and $Y$ appear in the same order. 
Let $J_1$ be the indices of the common elements in~$X$, and let $J_2$ be the indices of the common elements in~$Y$.

Consider a sequence $Z$ formed from the elements of~$X$ and~$Y$, in which the common elements -- those with indices in~$J_1$ in~$X$ and in~$J_2$ in~$Y$ -- appear only once, while the relative orders of elements from both~$X$ and~$Y$ are preserved.  
Let~$J$ denote the set of indices of the common elements in~$Z$.  
We call $(Z, J)$ an \emph{$x$-shuffle} of $(X, Y)$.

If $(X, Y)$ is a $y$-shuffle, then the sequence $Z$ defined in this way is a permutation, and $(Z, J)$ is a flag.
\end{definition}

\begin{example}
Consider a $y$-shuffle from Example~\ref{example:yshuffle},
$\bigl((\mathsf{2\,4\,\underline{3}\,6\,\underline{5}}), (\mathsf{1\,\underline{3}\,\underline{5}})\bigr)$.
It has the following $x$-shuffles:
$(\mathsf{1\,2\,4\,\underline{3}\,6\,\underline{5}})$,
$(\mathsf{2\,1\,4\,\underline{3}\,6\,\underline{5}})$, and
$(\mathsf{2\,4\,1\,\underline{3}\,6\,\underline{5}})$.
\end{example}

\begin{definition}
Let $(\pi_1, J_1)$ and $(\pi_2, J_2)$ be two flags of type~$\tau$.
Define their \emph{simple product} as
\[
(\pi_1, J_1) \star (\pi_2, J_2)
  = \sum_{(X,Y)} \sum_{(Z,J)} (Z,J),
\]
where the first summation is over all $y$-shuffles $(X,Y)$ of $(\pi_1, J_1)$ and $(\pi_2, J_2)$, and the second summation is over all $x$-shuffles $(Z,J)$ of $(X, Y)$.
\end{definition}

\begin{observation}\label{obs:multiplication}
For two flags $(\pi_1, J_1)$ and $(\pi_2, J_2)$ of type~$\tau \in \Sk$, with $\pi_1 \in \mathbb{S}_{n_1}$ and $\pi_2 \in \mathbb{S}_{n_2}$, we have
\[
(\pi_1, J_1) \cdot (\pi_2, J_2)
= \frac{1}{\binom{n_1+n_2 - 2k}{\,n_1 - k\,}}\,(\pi_1, J_1) \star (\pi_2, J_2).
\]
\end{observation}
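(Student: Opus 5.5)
The plan is to check that both sides assign the same coefficient to each flag $(\pi,J)$ with $\pi\in\Sn$, $n=n_1+n_2-k$. On the left, by Definition~\ref{def:flag-multiplication}, the coefficient is $c_{\pi,J}$. This is the probability, over a uniformly random $(n_1-k)$-subset $A\subseteq[n]\setminus J$ with $B=[n]\setminus(J\cup A)$, that four conditions hold: $\patt(\pi,A\cup J)=\pi_1$ with $J$ sitting at positions $J_1$ inside $A\cup J$, and $\patt(\pi,B\cup J)=\pi_2$ with $J$ sitting at positions $J_2$ inside $B\cup J$. There are $\binom{n-k}{n_1-k}=\binom{n_1+n_2-2k}{n_1-k}$ choices of $A$. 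Hence it suffices to show that the coefficient of $(\pi,J)$ in $(\pi_1,J_1)\star(\pi_2,J_2)$ equals the number of \emph{good} subsets $A$, meaning those satisfying all four conditions. For this I will build a bijection between good subsets $A$ and pairs $\bigl((X,Y),(Z,J)\bigr)$ in which $(X,Y)$ is a $y$-shuffle and $(Z,J)=(\pi,J)$ is one of its $x$-shuffles.

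\textbf{Forward map.} Given a good $A$, set $X=\pi|_{A\cup J}$ and $Y=\pi|_{B\cup J}$, each read in increasing order of positions. These are sequences of distinct values in $[n]$.
\begin{itemize}
\item By goodness, $X$ is order-isomorphic to $\pi_1$ and $Y$ to $\pi_2$.
\item The $J$-entries occupy positions $J_1$ in $X$ and $J_2$ in $Y$. They carry the same values $\pi|_J$ in the same order, so $X|_{J_1}=Y|_{J_2}$.
\item Any other value common to $X$ and $Y$ would be $\pi(a)=\pi(b)$ with $a\in A$, $b\in B$, which is impossible because $\pi$ is injective.
\end{itemize}
So $(X,Y)$ is a $y$-shuffle. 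Interleaving $X$ and $Y$ by their positions in $\pi$ recovers $\pi$, with each common element appearing once and the relative orders inside $X$ and inside $Y$ preserved. Thus $(\pi,J)$ is an $x$-shuffle of $(X,Y)$, with $J$ exactly the set of positions of the common elements.

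\textbf{Inverse map.} Given a $y$-shuffle $(X,Y)$ and an $x$-shuffle $(Z,J')$ of it with $(Z,J')=(\pi,J)$, let $A$ be the set of positions in $Z$ occupied by non-common elements of $X$. This $A$ has size $n_1-k$. The restrictions of $\pi$ to $A\cup J$ and to $B\cup J$ are then exactly $X$ and $Y$, so the four conditions hold and $A$ is good.

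\textbf{Injectivity.} The two maps are mutually inverse: $A$ determines $X$ and $Y$ directly from $\pi$, and conversely. One subtle point needs care. A single $Z$ could in principle arise as an $x$-shuffle of two different $y$-shuffles, and the $\star$-sum counts it once for each. This is exactly what we want, because distinct $(X,Y)$ correspond to distinct sets $A$. The forward and inverse maps above pin down $A$ from the pair, so the correspondence is bijective. It remains to check that the $x$-shuffle is determined by $Z$ together with $(X,Y)$; this holds because the values are distinct, so membership of each entry in $X$, in $Y$, or in both is read off from its value.

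Given the bijection, the coefficient of $(\pi,J)$ in the $\star$-product is the number of good $A$, which equals $\binom{n_1+n_2-2k}{n_1-k}\,c_{\pi,J}$. This proves the claimed identity.

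The main obstacle I expect is bookkeeping in the definitions rather than any idea. In particular, I need to confirm two things: that the ``common elements'' of Definition~\ref{def:x-shuffle} are exactly the $J$-entries, which uses the last $y$-shuffle condition, and that the value set of $X\cup Y$ is all of $[n]$. The latter follows from the count $n_1+n_2-k=n$ of distinct values, which is what makes $Z$ a permutation.
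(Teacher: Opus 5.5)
Your proof is correct. The paper states this as an observation without proof, and your bijection between good subsets $A$ and pairs (a $y$-shuffle $(X,Y)$, an $x$-shuffle of it equal to $(\pi,J)$), together with the normalization $\binom{n-k}{n_1-k}=\binom{n_1+n_2-2k}{n_1-k}$ and the check that the common elements are exactly the $J$-entries, is precisely the unpacking of the definitions that the paper leaves implicit.
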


\begin{claim}
Consider two flags $(\pi_1, J_1),(\pi_2, J_2) \in {\cal F}^{\tau}$. Let $(X,Y)$ and $(X',Y')$ be $y$-shuffles of these flags.
Then
\[
\sum_{Z} (Z, J) \equiv \sum_{Z'} (Z', J) \pmod{\NIdeal},
\]
where the first summation is over all $x$-shuffles $(Z, J)$ of $(X, J_1)$ and $(Y, J_2)$,
and the second summation is over all $x$-shuffles $(Z',J)$ of $(X', J_1)$ and $(Y', J_2)$.
\end{claim}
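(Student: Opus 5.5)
The plan is to work purely combinatorially: reduce to pairs of $y$-shuffles that differ by one elementary swap, then give an explicit signature-preserving bijection between their $x$-shuffles. Recall that two flags are identified modulo $\NIdeal$ when they have the same tagged set $J$, the same up--down signature and the same partition vector. So it suffices to find, for the two $y$-shuffles, a bijection between their $x$-shuffles preserving these three data.

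\textbf{Step 1: invariance of the partition vector.} For any $y$-shuffle $(X,Y)$ and any $x$-shuffle $(Z,J)$ of it, the tagged values $Z(J)$ are exactly the common values $X(J_1)=Y(J_2)$. The untagged values falling into the $i$-th block between consecutive tagged values are the untagged values of $X$ in that block together with those of $Y$ in that block. Hence
\[
\partv(Z,J)=\partv(\pi_1,J_1)+\partv(\pi_2,J_2)
\]
for every $x$-shuffle of every $y$-shuffle. The partition vector therefore never obstructs the equivalence, and only the signature and $J$ need attention.

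\textbf{Step 2: connectivity of $y$-shuffles.} A $y$-shuffle is determined by how, inside each block, the untagged values coming from $X$ interleave with those coming from $Y$. The relative orders within $X$ and within $Y$ are fixed by $\pi_1,\pi_2$, and the tagged values separate the blocks. Any two such interleavings are connected by a sequence of elementary moves. Each move exchanges two consecutive values $v,v+1$ in the same block, where $v$ is an untagged entry of $X$ (say at position $a$) and $v+1$ is an untagged entry of $Y$ (at position $b$), or vice versa. By transitivity of $\equiv$, it suffices to treat $(X,Y)$ and $(X',Y')$ differing by one such move. In that case $X'$ agrees with $X$ except $X'_a=v+1$, and $Y'$ agrees with $Y$ except $Y'_b=v$.

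\textbf{Step 3: the bijection.} An $x$-shuffle of either pair is specified by an interleaving of the positions of $X$ and $Y$, with the common (tagged) entries merged. The admissible interleavings are the same for both pairs, since they depend only on $J_1,J_2$. Define a map on interleavings as follows.
\begin{itemize}
\item If entries $a$ (from $X$) and $b$ (from $Y$) are not adjacent in the interleaving, keep the interleaving.
\item If they are adjacent, swap their relative order. This is legal because they come from different sequences and neither is tagged, and it is an involution.
\end{itemize}

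In the non-adjacent case, $Z$ and $Z'$ coincide except that the values $v$ and $v+1$ at the two positions are exchanged. Since these values are consecutive, the only comparison that flips is the one between these two positions, and they are not adjacent, so $\udsign(Z)=\udsign(Z')$. In the adjacent case the resulting sequences are literally identical: $v$ followed by $v+1$ becomes $v$ (now from $Y'$) followed by $v+1$ (now from $X'$), in the same spot. This is the same phenomenon as in Observation~\ref{obs:insertions}. In both cases no tagged entry moves, so $J$ is unchanged, and Step 1 gives equal partition vectors. Hence each $x$-shuffle of $(X,Y)$ is matched with an $x$-shuffle of $(X',Y')$ congruent modulo $\NIdeal$, and summing gives the claim.

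\textbf{Main obstacle.} I expect the main care to be in Step 2: checking that the $y$-shuffles are exactly the within-block interleavings and that adjacent-value swaps between untagged $X$- and $Y$-entries connect them all. The bijection in Step 3 should then be routine.
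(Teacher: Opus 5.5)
Your proposal is correct and follows the paper's proof essentially step for step. The paper likewise uses a bubble-sort argument to reduce to two $y$-shuffles that differ by exchanging unlabeled entries $X_i,Y_j$ with $|X_i-Y_j|=1$, and observes that every $x$-shuffle has partition vector $\partv(\pi_1,J_1)+\partv(\pi_2,J_2)$. It then matches $x$-shuffles exactly as you do: it keeps the interleaving when $X_i$ and $Y_j$ are non-adjacent, and reverses their order when they are adjacent, which produces literally identical sequences.
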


\begin{proof}
Consider two flags $f_1 = (\pi_1, J_1)$ and $f_2 = (\pi_2, J_2)$. We define a swap operation on the set of $y$-shuffles of $f_1$ and $f_2$.
Given a $y$-shuffle $(X,Y)$ and two elements $X_i$ and $Y_j$ that differ by~1 and are both unlabeled,
$$|X_i - Y_j| = 1\qquad \text{and}\ i \notin J_1,\ j \notin J_2,$$
we define another $y$-shuffle $(X', Y')$ by setting $X'_i = Y_j$ and $Y'_j = X_i$, while keeping all other values unchanged; that is, $X'_t = X_t$ for $t\neq i$ and $Y'_t = Y_t$ for $t\neq j$.

Note that every $y$-shuffle of $f_1$ and $f_2$ can be transformed into any other $y$-shuffle of $f_1$ and $f_2$ by a sequence of swaps as defined above (essentially by applying a bubble-sort procedure).

Therefore, it suffices to prove the statement for $(X,Y)$ and $(X', Y')$ obtained from $(X, Y)$ by such a swap: $X'_i = Y_j$ and $Y'_j = X_i$.
We construct a matching between $x$-shuffles of $(X,Y)$ and of $(X', Y')$ so that matched $x$-shuffles are equivalent modulo $\NIdeal$. In fact, for every $x$-shuffle $(Z,J)$ of $(X, Y)$ and every $x$-shuffle $(Z',J)$ of $(X', Y')$, we have
$$\partv(Z, J) = \partv(Z', J) = \partv(\pi_1, J_1) + \partv(\pi_2, J_2).$$
Thus, it suffices to ensure that the up--down signatures of matched $x$-shuffles $(Z,J)$ and $(Z',J)$ are equal.   

First, we match $x$-shuffles $(Z, J)$ in which the elements $X_i$ and $Y_j$ do not appear next to each other (for the specific $i$ and $j$ as above) with $x$-shuffles $(Z', J)$ in which the elements $X_i'$ and $Y_j'$ also do not appear next to each other. The matching is straightforward -- we consider $x$-shuffles $(Z, J)$ and $(Z', J)$ formed using the same interleaving of $X$ with $Y$ and of $X'$ with $Y'$; that is, $Z_s$ comes from $X$ if and only if $Z'_s$ comes from $X'$, and $Z_s$ comes from $Y$ if and only if $Z'_s$ comes from $Y'$.
We now show that $Z$ and $Z'$ have equal up--down signatures. Consider two adjacent elements $Z_s$ and $Z_{s+1}$. If neither $Z_s$ nor $Z_{s+1}$ equals $X_i$ or $Y_j$, then $Z_s = Z'_s$ and $Z_{s+1} = Z'_{s+1}$, and accordingly $Z$ and $Z'$ have the same letter at position $s$ in their up--down signatures. Now consider the case where one of $Z_s$ or $Z_{s+1}$ equals $X_i$ or $Y_j$
(note that it is impossible that one equals $X_i$ and the other $Y_j$, as we have assumed that $X_i$ and $Y_j$ are not adjacent in $Z$).

Let $Z_t$ be an element adjacent to $X_i$ in $Z$.  
Then $X_i > Z_t$ if and only if $X_i' > Z_t$, since
$|X_i - X_i'| = 1$ and thus $Z_t = Z_t'$ cannot lie between $X_i$ and $X_i'$.  
The same reasoning applies to $Y_j$ and $Y_j'$.  
We conclude that $Z$ and $Z'$ have the same up--down signature.  

Now consider the case where $X_i$ and $Y_j$ appear next to each other in $Z$.  
We match each $x$-shuffle of $(X, Y)$ in which $X_i$ is immediately followed by $Y_j$ with the $x$-shuffle of $(X', Y')$ in which $Y'_j$ is immediately followed by $X'_i$, with all other elements occupying the same positions. These two shuffles are identical and thus have the same up--down signatures.  
Similarly, we match each $x$-shuffle of $(X, Y)$ in which $Y_j$ is immediately followed by $X_i$ with the $x$-shuffle of $(X', Y')$ in which $X'_i$ is immediately followed by $Y'_j$, again with all other elements in the same positions. Again, these two $x$-shuffles are identical and thus have the same up--down signatures.
\end{proof}

We will also need to apply this claim in cases where $(X, Y)$ and $(X', Y')$ are not necessarily $y$-shuffles.  
Assume that in both pairs the set of common elements is the same set $J$, and that the patterns of $X$ and $X'$ are equal, as are the patterns of $Y$ and $Y'$.  
By removing all gaps in $X \cup Y$ and in $X'\cup Y'$ (that is, replacing each $X_i$ and $Y_j$ with its rank in $X\cup Y$, and each $X'_i$ and $Y'_j$ with its rank in $X'\cup Y'$), we reduce the general case of arbitrary $(X, Y)$ and $(X', Y')$ to the case of $y$-shuffles.  
This leads to the following corollary.

\begin{corollary}\label{cor:general-shuffles}
Consider two pairs of sequences $(X, Y)$ and $(X', Y')$ as in Definition~\ref{def:x-shuffle}. Assume that  
the patterns of $X$ and $X'$ are equal, the patterns of $Y$ and $Y'$ are equal,  
and the set of common elements in both pairs is the same $J$.  
Then the multisets of up--down signatures of all $x$-shuffles of $(X, Y)$ and of $(X', Y')$  
are identical, including multiplicities.
\end{corollary}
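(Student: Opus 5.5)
The plan is to reduce the statement to the claim just proved, the one handling two $y$-shuffles, by relabeling values. Take the pair $(X,Y)$ as in Definition~\ref{def:x-shuffle}. Replace every entry of $X$ and of $Y$ by its rank in the set $X\cup Y$ of all values occurring in either sequence; call the result $(\hat X,\hat Y)$, and define $(\hat X',\hat Y')$ from $(X',Y')$ in the same way.

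First, relabeling by a strictly increasing map $X\cup Y\to[|X\cup Y|]$ changes none of the relevant data. Patterns of $X$ and $Y$ are preserved, as is the set of common elements, which remain at the same index sets $J_1$ in $X$ and $J_2$ in $Y$. The $x$-shuffles are also preserved: every $x$-shuffle $Z$ of $(X,Y)$ maps entrywise to an $x$-shuffle $\hat Z$ of $(\hat X,\hat Y)$, this correspondence is a bijection, and it keeps the labeled index set $J$. Since the map is increasing, $\udsign(Z)=\udsign(\hat Z)$. Note that $|X\cup Y|=|X|+|Y|-|J|$, where $|J|$ counts the common elements.

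Second, we check that $(\hat X,\hat Y)$ and $(\hat X',\hat Y')$ are $y$-shuffles of the same pair of flags. Set $\pi_1=\patt(X)=\patt(X')$ and $\pi_2=\patt(Y)=\patt(Y')$, and let $\tau$ be the common type $\patt(\pi_1,J_1)=\patt(\pi_2,J_2)$. The flags are then $(\pi_1,J_1)$ and $(\pi_2,J_2)$. The hypotheses give exactly what is needed:
\begin{itemize}
\item $\hat X$ and $\hat Y$ consist of distinct values in $[n_1+n_2-k]$;
\item $\hat X$ and $\hat Y$ are order-isomorphic to $\pi_1$ and $\pi_2$;
\item their restrictions to $J_1$ and $J_2$ coincide, since the common elements appear in the same order;
\item no other values coincide.
\end{itemize}
We must also confirm that the index sets of common elements $J_1,J_2$ agree between the two pairs. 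The statement says ``the same $J$'', which we interpret as matching label positions; if needed, we read this into the hypothesis.

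Third, apply the preceding claim to the two $y$-shuffles. The claim's proof actually builds a bijection between their $x$-shuffles that preserves up--down signatures. It is not merely a congruence mod $\NIdeal$: the partition vectors match automatically, and the matching was signature-preserving swap by swap. Composing these bijections along the chain of swaps connecting the two $y$-shuffles gives a bijection between the $x$-shuffles of $(\hat X,\hat Y)$ and those of $(\hat X',\hat Y')$ that preserves signatures. Hence the multisets of signatures agree, and by the first step the same holds for $(X,Y)$ and $(X',Y')$.

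The main obstacle is to extract a multiset equality rather than an equivalence modulo $\NIdeal$. The claim is stated only up to $\NIdeal$, so we must appeal to the explicit matching built in its proof. We also need the fact, asserted there via bubble sort, that any two $y$-shuffles of the same flags are connected by unlabeled swaps of values differing by $1$.
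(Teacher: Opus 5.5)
Your proposal is correct and follows the paper's own argument. It relabels every entry by its rank in $X\cup Y$ (``removing gaps''), which turns both pairs into $y$-shuffles of the same two flags, and then invokes the preceding claim. You also make explicit two points the paper leaves implicit: the signature-preserving matching built in that claim's proof must be used, not just its statement modulo $\NIdeal$, and ``the same $J$'' must be read as the same index sets $J_1$ and $J_2$.
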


\begin{theorem}\label{thm:flag-multiplication-simplified}
For any two flags $(\pi_1, J_1), (\pi_2, J_2) \in {\cal F}^{\tau}$
and any $y$-shuffle $(X, Y)$ of $(\pi_1, J_1)$ and $(\pi_2, J_2)$, we have
\[
(\pi_1, J_1) \cdot (\pi_2, J_2) \equiv
c \cdot
\sum_{(Z, J)} (Z, J) \pmod{\NIdeal}
\]
and
\[
\left\llbracket (\pi_1, J_1) \cdot(\pi_2, J_2) \right\rrbracket =
\left\llbracket
c \cdot
\sum_{(Z, J)} (Z, J)
\right\rrbracket,
\]
where
\[
c = \frac{M(\partv(\pi_1, J_1))\, M(\partv(\pi_2, J_2))}{M(\partv(\pi_1, J_1) + \partv(\pi_2, J_2))},
\]
and the summation is taken over all $x$-shuffles $(Z, J)$ of $(X, J_1)$ and $(Y, J_2)$.
\end{theorem}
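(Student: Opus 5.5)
Start from the definition of flag multiplication (Definition~\ref{def:flag-multiplication}) via Observation~\ref{obs:multiplication}: $(\pi_1,J_1)\cdot(\pi_2,J_2)=\binom{n_1+n_2-2k}{n_1-k}^{-1}\sum_{(X',Y')}\sum_{(Z,J)}(Z,J)$, the outer sum ranging over all $y$-shuffles. Then collapse this double sum modulo $\NIdeal$ using the preceding claim and Corollary~\ref{cor:general-shuffles}. For any $y$-shuffle $(X',Y')$, the multiset of $x$-shuffles has the same multiset of up--down signatures as that of the fixed $(X,Y)$. Every $x$-shuffle of any $y$-shuffle has the same $J$ and partition vector $\partv(\pi_1,J_1)+\partv(\pi_2,J_2)$. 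Hence the inner sums agree termwise modulo $\NIdeal$ after a signature-preserving bijection. So the whole product is congruent to
\[
\frac{\#\{y\text{-shuffles}\}}{\binom{n_1+n_2-2k}{n_1-k}}\sum_{(Z,J)}(Z,J).
\]

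The remaining step is to count $y$-shuffles. A $y$-shuffle is determined by how the unlabeled values of $X$ and $Y$ interleave inside each of the $k+1$ blocks cut out by the common labeled values. The shared labeled values are forced by $\tau$ together with the block sizes. Block $i$ contains $a_i$ values from $X$ and $b_i$ from $Y$, where $a=\partv(\pi_1,J_1)$ and $b=\partv(\pi_2,J_2)$. Within each block the relative orders of the $X$-values and of the $Y$-values are fixed by $\pi_1,\pi_2$, so the only freedom is the interleaving. This gives $\prod_i\binom{a_i+b_i}{a_i}$ $y$-shuffles. Since $\binom{n_1+n_2-2k}{n_1-k}=\binom{\|a+b\|_1}{\|a\|_1}$, the ratio equals
\[
\frac{\prod_i\binom{a_i+b_i}{a_i}}{\binom{\|a+b\|_1}{\|a\|_1}}=\frac{M(a)M(b)}{M(a+b)},
\]
which is the claimed $c$. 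One verifies this by expanding factorials: both sides equal $\frac{\|a\|_1!\,\|b\|_1!}{\|a+b\|_1!}\prod_i\frac{(a_i+b_i)!}{a_i!\,b_i!}$.

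The second identity, with $\llbracket\cdot\rrbracket$, should follow from the first together with Theorem~\ref{thm:main-flag-ideal-property}(2): the difference lies in $\NIdeal$, so its image under $\llbracket\cdot\rrbracket$ vanishes modulo $\mathbb{R}N$. As stated, the equality is presumably meant modulo $\mathbb{R}N$, and I would interpret it that way.

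The main obstacle is the first step's bookkeeping. The claim gives congruence of sums for two $y$-shuffles of the same pair of flags, and the transfer from ``equal multisets of signatures with equal $J$ and partition vector'' to membership of the difference in the span of $N^\tau$ needs care. It holds because each matched pair $(Z,J),(Z',J)$ with equal signature and $\partv$ is by definition a generator of $N^\tau$. I would also verify carefully that the $y$-shuffle count is exactly the product of blockwise binomials, in particular that labeled values impose no further constraint beyond the block sizes.
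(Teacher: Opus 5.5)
Your proposal is correct and follows the paper's proof: Observation~\ref{obs:multiplication} plus the preceding claim collapse all $y$-shuffles to the given one modulo $\NIdeal$. The blockwise count $\prod_i\binom{a_i+b_i}{a_i}$, divided by $\binom{\|a+b\|_1}{\|a\|_1}$, then gives $c$, and you supply the factorial check that the paper leaves as ``after simplification.'' Your reading of the second identity as an equality modulo $\mathbb{R}N$, obtained from the first via Theorem~\ref{thm:main-flag-ideal-property}(2), is the right one: the paper does not argue it separately, and it cannot hold literally in general. For the empty type, $\llbracket\cdot\rrbracket$ is the identity, and the two sides are already different elements of $\mathcal{A}^{\varnothing}$ for $(\mathsf{1\,2})\cdot(\mathsf{1})$.
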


\begin{proof}
We use the alternative definition of flag multiplication (see Observation~\ref{obs:multiplication}). As we showed, the $x$-shuffle of a $y$-shuffle of a pair of flags does not depend on the specific choice of the $y$-shuffle modulo $\NIdeal$. Therefore, we may use the given shuffle $(X,Y)$. It remains to count the number of $y$-shuffles. Inside each block~$i$ (here we refer to the blocks from Definition~\ref{def:part-vector}), we can choose positions for elements from $X$ and for elements from $Y$ in 
$$\binom{\partv(\pi_1, J_1)_i + \partv(\pi_2, J_2)_i}{\partv(\pi_1, J_1)_i}$$
ways. The total number of $y$-shuffles is the product of these binomial coefficients over all blocks~$i$:
\[
\prod_{i=1}^{k+1} \binom{\partv(\pi_1, J_1)_i + \partv(\pi_2, J_2)_i}{\partv(\pi_1, J_1)_i}.
\]
After simplification, we get the desired value of~$c$.
\end{proof}
\subsection{Proof of Theorem~\ref{thm:main-flag-ideal-property}}

Let us now define another swap operation that we will use in the proof of Theorem~\ref{thm:main-flag-ideal-property}.
\begin{definition}
Consider a flag $(\pi, J) \in \mathcal{F}^\tau$. Choose non-adjacent $i_1, i_2 \notin J$,
and swap values at positions $i_1$ and $i_2$: $\pi'(i_1) = \pi(i_2)$, $\pi'(i_2) = \pi(i_1)$, and $\pi'(i) = \pi(i)$ for all $i \notin \{i_1, i_2\}$.
We refer to this swap as an \emph{$(i_1, i_2)$-swap}.
We say that a swap $(\pi, J) \to (\pi', J)$ is \emph{valid} if
$\udsign(\pi) = \udsign(\pi')$.
\end{definition}

Note that for any swap $(\pi, J) \to (\pi', J)$, we have $\partv(\pi, J) = \partv(\pi', J)$, and for every \emph{valid} swap, $\udsign(\pi) = \udsign(\pi')$.In other words, valid swaps preserve both the up--down signature and the partition vector.

Before proceeding with the proof, we need the following lemma.

\begin{lemma}\label{lem:valid-swaps}
Assume that $(\pi_1, J) - (\pi_2, J) \in N^{\tau}$ (that is,
$\udsign(\pi_1) = \udsign(\pi_2)$ and $\partv(\pi_1, J) = \partv(\pi_2,J)$).
Then $\pi_2$ can be obtained from $\pi_1$ by a sequence of valid swaps.
\end{lemma}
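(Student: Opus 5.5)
The plan is induction on the smallest value at which $\pi_1$ and $\pi_2$ disagree. First note what the hypotheses give. The type $\tau$, the set $J$ and the common partition vector determine $\pi|_J$, so $\pi_1$ and $\pi_2$ agree on every position of $J$. Hence any value at which they disagree sits at unlabeled positions in both. Every swap used below preserves the labeled entries, the partition vector and the signature, so it suffices to show that a suitable valid swap strictly increases the first value at which $\pi_1$ and $\pi_2$ disagree. Iterating this at most $n$ times turns $\pi_1$ into $\pi_2$.

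\textbf{The basic swap.} Let $v$ be the smallest value with $q=\pi_1^{-1}(v)\neq p=\pi_2^{-1}(v)$, and let $w=\pi_1(p)$. All values below $v$ occupy the same positions in $\pi_1$ and $\pi_2$, and $v$ is not at $p$ in $\pi_1$, so $w>v$; moreover $p,q\notin J$. The natural move is the $(p,q)$-swap, which places $v$ at $p$ and $w$ at $q$. After it, the values $1,\dots,v$ are all in their $\pi_2$-positions. It remains to check the signature.
\begin{itemize}
\item Around $p$: a neighbour of $p$ holding a value $<v$ holds it in both $\pi_1$ and $\pi_2$, and it is below both $w$ and $v$. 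Any other neighbour holds a value $>v$ in both $\pi_1$ and $\pi_2$. In $\pi_2$ this forces the signature letter at that boundary to read ``neighbour larger''. Since $\udsign(\pi_1)=\udsign(\pi_2)$, the value $w$ in $\pi_1$ is also below that neighbour. So replacing $w$ by $v$ changes no letter at $p$.
\item Around $q$: replacing $v$ by the larger value $w$ changes no comparison with a neighbour holding a value $<v$. The danger is a neighbour $r$ of $q$ with $v<\pi_1(r)<w$.
\end{itemize}

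\textbf{Choosing the partner, and the obstacle.} To remove that danger, I will not swap $v$ directly with $w$. Instead I move $v$ toward $p$ along a chain of consecutive-value swaps. For any $u\ge v$, swapping the values $u$ and $u+1$ at non-adjacent positions changes only the comparison between those two positions, so it is automatically valid. The plan is to raise $v$ at $q$ to larger values step by step and lower the value at $p$ step by step. I first try to use only values whose positions are unlabeled and not adjacent to the current partner. If that fails, I fall back to the direct swap with partner value $w'=\min\{\pi_1(r): r\notin J,\ \pi_1(r)>v\}$; with this choice no intermediate value can sit beside $q$.

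I expect this step to be the main obstacle, and it has two parts.
\begin{itemize}
\item Labeled values may lie strictly between $v$ and the partner value. I would handle this by arguing inside a single block of the partition vector: both positions must lie in the same block, since $\partv$ is preserved and $v$'s block is fixed by the labeled values below and above it.
\item The degenerate case $|p-q|=1$. Here the signature letter between $p$ and $q$ forces $\pi_1(p)>v$ to read ``down'' from $p$ to $q$ in $\pi_1$ but ``up'' in $\pi_2$ (or the reverse), unless some other value intervenes. I would show this yields a contradiction with equal signatures, or else a third unlabeled position usable as an intermediate for two swaps.
\end{itemize}
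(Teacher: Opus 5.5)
Your preliminary observations are correct. The permutations $\pi_1,\pi_2$ agree on $J$, $p,q\notin J$, and $w>v$. Replacing $w$ by $v$ at $p$ never changes a letter, and swapping two consecutive values at non-adjacent unlabeled positions is automatically valid. The case $|p-q|=1$ is simply impossible: in $\pi_1$ the entry at $q$ is the smaller one ($v<w$), while in $\pi_2$ the entry at $p$ is ($v<\pi_2(q)$). Nothing can ``intervene'' between adjacent positions.

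The gap is exactly the step you flag as the obstacle, and it is the whole content of the lemma. Your reduction asks for one valid swap that puts $v$ at $p$. The only such swap is the $(p,q)$-swap, and it can be invalid. Your replacement plan (chains of consecutive-value swaps, then a fallback swap of $v$ with the nearest unlabeled value $w'$) comes with no proof that it always succeeds, and every specific move you name can fail.

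Take $J=\{2\}$, $\pi_1=(3,4,6,1,2,7,5)$ and $\pi_2=(2,4,7,3,5,6,1)$. Both have signature $uuduud$ and labeled value $4$, so $v=1$, $q=4$, $p=7$, $w=5$.
\begin{itemize}
\item The $(4,7)$-swap yields $(3,4,6,5,2,7,1)$, which breaks the letter between positions $4$ and $5$.
\item Lowering the value at $p$ to $4$ is impossible, since $4$ is labeled.
\item Raising the value at $q$ to $2$ is impossible, since $2$ sits at the adjacent position $5$.
\item Your fallback partner $w'=2$ is that same adjacent position. In general the fallback also sends $v$ to $\pi_1^{-1}(w')\neq p$, so it does not advance your potential anyway.
\item The ``same block'' idea does not help. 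Any swap of unlabeled positions preserves $\pi(J)$ and hence $\partv$ automatically, and here $v=1$ and $w=5$ lie in different blocks.
\end{itemize}
Nevertheless a valid route exists: swap the values $5$ and $2$ (positions $7$ and $5$), then $2$ and $1$ (positions $7$ and $4$). The first swap is between non-consecutive values, and only an existence argument finds it.

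What you need is a guarantee that such a progress-making swap always exists. The paper keeps the target position $j=p$ fixed and greedily lowers $\pi(j)$. Whenever some unlabeled $j'$ with $v\le \pi(j')<\pi(j)$ gives a valid $(j,j')$-swap, it performs that swap. Smaller values are untouched, and $\pi(j)$ strictly decreases, so this terminates.

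Suppose it stops with $\pi(j)>v$, and let $C=\pi^{-1}(\{v,\dots,\pi(j)-1\})$.
\begin{itemize}
\item Comparing with $\pi_2$ (where $\pi_2(j)=v$) shows that no $j'\in C$ is adjacent to $j$, and that no neighbour of $j$ can block a swap.
\item Hence every unlabeled $j'\in C$ has a neighbour in $C$ with a larger $\pi$-value.
\item Following such increasing chains reaches a labeled position of $C$.
\item Equality of signatures along the chain then gives $v<\pi_2(j')<\pi(j)$ for every $j'\in C$.
\end{itemize}
Thus $\pi_2$ maps $|C|=\pi(j)-v$ positions into a set of size $\pi(j)-v-1$, a contradiction. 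This counting argument, or an equivalent one, is the missing idea in your proposal.
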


\begin{proof}
We start with $\pi = \pi_1$ and perform at most $n-1$ stages to obtain $\pi_2$. Each stage consists of a sequence of valid swaps. We ensure that after stage~$i$, $\pi$ and $\pi_2$ agree on the positions in $\pi_2^{-1}([i])$. As we only apply valid swaps, we will always have $\udsign(\pi) = \udsign(\pi_1) = \udsign(\pi_2)$ and $\partv(\pi, J) = \partv(\pi_1, J) = \partv(\pi_2, J)$. 

Assume that we have completed $i-1$ stages and that $\pi$ and $\pi_2$ agree on the positions in $F = \pi_2^{-1}([i-1])$. Consider stage $i$. Let $j = \pi_2^{-1}(i)$. Our goal is to ensure that after this stage, $\pi$ and $\pi_2$ still agree on $F$ and that $\pi(j) = \pi_2(j) = i$.

If $j \in J$, then $\pi(j) = \pi_1(j) = \pi_2(j)$, since $\pi|_J = \pi_2|_J$ (both are determined by their partition vectors and $\tau$). Thus, no action is required in this stage. Similarly, if $\pi(j) = i = \pi_2(j)$, then no action is required.

Now assume that $j \notin J$ and $\pi(j) \neq i$. We iteratively find a valid $(j, j')$-swap with $\pi(j') < \pi(j)$ and $j' \notin F$, thereby decreasing the value of $\pi(j)$. We finish this stage when no valid swaps of this form remain.  
Since we never choose $j' \in F$, we do not change the values of $\pi$ on $F$, and thus $\pi$ and $\pi_2$ still agree on $F$, as required. It remains to show that $\pi(j) = i = \pi_2(j)$.  
Assume to the contrary that $\pi(j) > i$, but no swaps of the type described above are available.

Let $C = \pi^{-1}(\{i, \dots, \pi(j) - 1\})$; note that $|C| = \pi(j) - i > 0$.  
For every $j' \in C$, the pair $(j, j')$ is not a valid swap (otherwise, we would have performed it and decreased $\pi(j)$).  
There are four possible reasons for this. In each case below, we assume that none of the previous cases apply.

\begin{itemize}
\item $j' \in J$. 
\item $j'$ is adjacent to $j$. We show that this is not possible. In this case, 
$\pi(j) > \pi(j')$. Since $\pi$ and $\pi_2$ have the same up--down signature, we must have $\pi_2(j) > \pi_2(j')$. However, $\pi_2(j) = i$, hence $\pi_2(j') < i$, and thus $j' \in F$.  
Since $\pi$ and $\pi_2$ agree on $F$, we have $\pi(j') = \pi_2(j') < i$, which contradicts the assumption that $j' \in C$.
\item 
Either $j'-1$ or $j'+1$ \emph{blocks} assigning $\pi(j')$ the value $\pi(j)$, as doing so would alter the up--down signature.
Since the assignment $\pi(j') \leftarrow \pi(j)$ is blocked, there exists $j'' \in \{j'-1, j'+1\}$ such that $\pi(j'')$ lies between $\pi(j')$ and $\pi(j)$.
To summarize, in this case there exists $j'' \in C$ adjacent to $j'$ with $\pi(j'') > \pi(j')$.
\item The assignment $\pi(j) \leftarrow \pi(j')$ is blocked by one of the neighbors $j''$ of $j$, as this assignment would change the up--down signature. We show that this is not possible. In this case, $\pi(j') < \pi(j'') < \pi(j)$.  
Since $\pi$ and $\pi_2$ have the same up--down signature, we conclude that $\pi_2(j'') < \pi_2(j) = i$. Therefore, $j'' \in F$.  
Because $\pi$ and $\pi_2$ agree on $F$, we have $\pi(j'') = \pi_2(j'') < i$, which contradicts $\pi(j'') > \pi(j') \ge i$.
\end{itemize}

We have shown that there are two possibilities: either $j' \in J$, or there exists $j'' \in C$ adjacent to $j'$ with $\pi(j'') > \pi(j')$.  
Define a directed acyclic graph on $C$ by adding an edge from $j'$ to an adjacent $j''$ whenever $\pi(j'') > \pi(j')$.  
Our case analysis implies that all sinks in this graph are in $J$.  

For a given $j'$, consider a path $j_1 \to j_2 \to \dots \to j_t$ in this graph that starts at $j_1= j'$ and terminates at a sink $j_t\in J$.  
We have $\pi(j')=\pi(j_1) < \pi(j_2) < \dots < \pi(j_t) < \pi(j)$.  
Since $\pi$ and $\pi_2$ have the same up--down signatures, it follows that $\pi_2(j') = \pi_2(j_1) < \pi_2(j_2) < \dots < \pi_2(j_t)$.  
Moreover, since $j_t \in J\cap C$, we have 
$$\pi_2(j') < \pi_2(j_t) = \pi(j_t) < \pi(j).$$  
Therefore, $\pi_2(j') < \pi(j)$ for all $j' \in C$.  
Further, if $j' \in C$, then $j' \notin F$ and $j' \neq j$, thus $\pi_2(j') > i$.  
We conclude that for every $j' \in C$, $\pi_2(j') \in \{i+1, \dots, \pi(j) - 1\}$.  
But this is impossible, because 
$$|C| = \pi(j) - i > |\{i+1, \dots, \pi(j) - 1\}|.$$  
We get a contradiction. This concludes the proof.
\end{proof}

Now we are ready to prove Theorem~\ref{thm:main-flag-ideal-property}
\begin{proof}[Proof of Theorem~\ref{thm:main-flag-ideal-property}]
1. Assume first that $P = C$, where $C$ is the transpose of an up--down combination:
\begin{equation}\label{eq:xy-transpose}
C = (x_1 I \oUp y_1 D \oUp x_2 I \oUp y_2 D \oUp \dots \oUp x_N I \oUp y_N D)^\transpose.
\end{equation}
Let $S = (s_1, \dots, s_k)$, where the points are ordered by their $x$-coordinates.
The projections of the points in~$S$ onto the $y$-axis partition the interval~$[0,1]$ into $k+1$ segments of lengths $z_1, \dots, z_{k+1}$.
Let $g_i$ be $x_t$ if the point $s_i$ lies in the region of $[0,1]^2$ corresponding to the term $x_t I$ in $C$, and $y_t$ if $s_i$ lies in the region corresponding to the term $y_t D$.

For $(\pi, J) \in \mathcal{F}^{\tau}$, we derive a formula for $d((\pi, J), (C,S))$.
Write $J=\{j_1, \dots, j_k\}$ with $j_1< \dots < j_k$.
Applying Theorem~\ref{thm:main-qsym-characterization} and Lemma~\ref{lem:labeled-qsym} (and swapping the $x$- and $y$-axes), we obtain that for every $(\pi, J) \in \mathcal{F}^\tau$,
\begin{equation}\label{eq:density-flags}
d((\pi, J), (C, S)) = 
M(\partv(\pi, J)) \prod_{i=1}^{k+1} z_i^{\partv(\pi, J)_i} \cdot \sum_f \prod_{i \notin J} f(i),
\end{equation}
where the sum is over all $\pi$-partition functions~$f$ with $f(j_i) = g_i$.

Now consider two flags $(\pi_1, J)$ and $(\pi_2, J)$ equivalent modulo $\NIdeal$. They have the same signature and the same partition vector, and thus the expressions in~\eqref{eq:density-flags} are identical for both flags. Therefore, their densities are the same in every tagged $T$-permuton~$(C, S)$ with $C$ as in~\eqref{eq:xy-transpose}.
Since ID combinations are dense in the set of all strong IDU permutons, we conclude that the densities of $(\pi_1, J)$ and $(\pi_2, J)$ are equal in every tagged $T$-permuton~$(P, S)$.

\medskip

2. This is immediate:  if $(\pi_1, J) - (\pi_2, J) \in N^{\tau}$, then $\udsign(\pi_1) = \udsign(\pi_2)$, thus $\pi_1 - \pi_2 \in N^{\varnothing}$.

\medskip

3. To verify that ${\mathbb R}N^{\tau}$ is an ideal, we must show that for every
$(\pi, J)$ and $(\pi', J)$ in ${\cal A}^{\tau}$ satisfying
$\udsign(\pi) = \udsign(\pi')$ and $\partv(\pi, J) = \partv(\pi', J)$,
and for every $(\rho, J')$, we have
\begin{equation}\label{eq:ideal-requirement}
(\pi, J) \cdot (\rho, J') \equiv (\pi', J) \cdot (\rho, J') \pmod{\NIdeal}.
\end{equation}
By Lemma~\ref{lem:valid-swaps}, it is sufficient to consider the case where $\pi'$ is obtained from $\pi$ by a valid swap $(i_1, i_2)$. Assume that $i_1 < i_2$.

We express the result of flag multiplication on both sides of \eqref{eq:ideal-requirement} using Theorem~\ref{thm:flag-multiplication-simplified}. Choose a $y$-shuffle $(X, Y)$ for $\pi$ and $\rho$. Then there exists a $y$-shuffle $(X', Y)$ for $\pi'$ and $\rho$ with the same sequence $Y$ for the elements of $\rho$. Namely, $X'$ is obtained from $X$ by swapping $X_{i_1}$ and $X_{i_2}$.

In every $x$-shuffle of $(X, Y)$, the set $J^*$ of positions of common elements is determined by $J$ and $J'$ (namely, the element ranked $i$ in $J^*$ is equal to the sum of the elements ranked $i$ in $J$ and $J'$ minus 1). To specify an $x$-shuffle, we must decide, for each position not occupied by a common element, whether it is occupied by some $X_i$ or some $Y_i$. The same applies to $(X', Y)$.
We proceed as follows. We first fix the positions of $X_{i_1-1}$, $X_{i_1+1}$, $X_{i_2-1}$, and $X_{i_2+1}$ within $Z$ (some of these points may be absent if $i_1 = 1$ or $i_2 = |\pi|$), and then make the choices independently for all positions in three regions $R_1$, $R_2$, and $R_3$ defined below (see Figure~\ref{fig:regions}):

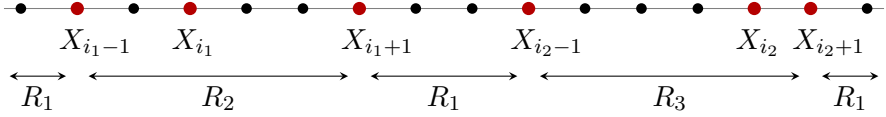
\begin{figure}
\centering
\begin{tikzpicture}[x=0.75cm,y=1cm,>=stealth]
  \draw[thin,gray] (0.7,0)--(16.3,0);
  \foreach \x in {1,...,16}{
    \fill (\x,0) circle (2pt);
  }

  \fill[darkred] (2,0) circle (2.5pt);
  \fill[darkred] (4,0) circle (2.5pt);
  \fill[darkred] (7,0) circle (2.5pt);
  \fill[darkred] (10,0) circle (2.5pt);
  \fill[darkred] (14,0) circle (2.5pt);
  \fill[darkred] (15,0) circle (2.5pt);

  \node[below=4pt] at (2,0) {\ \quad $X_{i_1-1}$};
  \node[below=4pt] at (4,0) {\ $X_{i_1}$};
  \node[below=4pt] at (7,0) {\ \quad $X_{i_1+1}$};
  \node[below=4pt] at (10,0) {\ \quad $X_{i_2-1}$};
  \node[below=4pt] at (14,0) {\ $X_{i_2}$};
  \node[below=4pt] at (15,0) {\ \quad $X_{i_2+1}$};

  \draw[<->] (0.8,-0.9) -- (1.8,-0.9);
  \node[below] at (1.3,-0.9) {$R_1$};

  \draw[<->] (7.2,-0.9) -- (9.8,-0.9);
  \node[below] at (8.5,-0.9) {$R_1$};

  \draw[<->] (15.2,-0.9) -- (16.2,-0.9);
  \node[below] at (15.7,-0.9) {$R_1$};

  \draw[<->] (2.2,-0.9) -- (6.8,-0.9);
  \node[below] at (4.5,-0.9) {$R_2$};

  \draw[<->] (10.2,-0.9) -- (14.8,-0.9);
  \node[below] at (12.5,-0.9) {$R_3$};
\end{tikzpicture}
\caption{The figure shows regions $R_1$, $R_2$, and $R_3$.}
\label{fig:regions}
\end{figure}

\begin{enumerate}
\item The first region, $R_1$, consists of three intervals: positions to the left of $X_{i_1-1}$ (this interval is empty if $i_1 = 1$), positions between $X_{i_1+1}$ and $X_{i_2-1}$, and positions to the right of $X_{i_2+1}$ in $Z$ (this interval is empty if $i_2 = |\pi|$). 
\item The second region, $R_2$, consists of positions between $X_{i_1-1}$ and $X_{i_1+1}$ in $Z$.
\item The third region, $R_3$, consists of positions between $X_{i_2-1}$ and $X_{i_2+1}$ in $Z$.
\end{enumerate}

Importantly, $R_2$ and $R_3$ do not overlap, since $i_2 - i_1 > 1$. Since the positions of $X_{i_1 \pm 1}$ and $X_{i_2 \pm 1}$ are fixed, the sets of elements $X_i$ and $Y_j$ assigned to each interval of $R_1$ and to each of the regions $R_2$ and $R_3$ are determined. However, the interleaving of elements of $X$ and $Y$ within each interval and region is not yet fixed.

We first decide which positions in $R_1$ are occupied by elements of $X$ and which by elements of $Y$. All elements of $J^*$ within $R_1$ in $Z$ (resp.\ $Z'$) are occupied by the common elements of $X$ and $Y$ (resp.\ $X'$ and $Y$). Aside from $J^*$, we choose the same interleaving of elements of $X$ and $Y$ within $R_1$ in $Z$ and of elements of $X'$ and $Y$ within $R_1$ in $Z'$. This completely determines the $x$-shuffles $Z$ and $Z'$ within $R_1$, and therefore their up--down signatures within $R_1$. Since $X$ and $X'$ coincide in $R_1$, and $X_{i_1 \pm 1} = X'_{i_1 \pm 1}$ and $X_{i_2 \pm 1} = X'_{i_2 \pm 1}$ (with the signs chosen consistently; some of these points may be absent if $i_1 = 1$ or $i_2 = |\pi|$), the up--down signatures within $R_1$ are identical.

It remains to prove that the multiset of possible up--down signatures for the region $R_2$ is the same for $(X, Y)$ and $(X', Y)$, and that an analogous statement holds for $R_3$. 

The proofs for $R_2$ and $R_3$ are identical, so we provide only the argument for $R_2$.
We consider auxiliary sequences $A$ and $B$:
\begin{itemize}
\item $A$ consists of $A_1 = X_{i_1-1}$, $A_2 = X_{i_1}$, and  $A_3  = X_{i_1+1}$; element $X_{i_1 - 1}$ is absent if $i_1 = 1$.
\item $B$ consists of $X_{i_1-1}$ (absent if $i_1=1$), followed by all elements of $Y$ assigned to $R_2$, and finally $X_{i_1+1}$. We do not include $X_{i_1-1}$ or $X_{i_1+1}$ twice if they also appear in $Y$.
\end{itemize}
Note that the elements $X_{i_1-1}$ (if it exists) and $X_{i_1+1}$ are the only common elements between sequences $A$ and $B$.

All possible signatures for $R_2$ are precisely the signatures of all possible $x$-shuffles of $A$ and $B$. We similarly define $A'$ and $B' = B$ for $X'$ and $Y$.  
Since $X$ and $X'$ have the same up--down signature, $X_{i_1-1} = X'_{i_1-1}$, and $X_{i_1+1} = X'_{i_1+1}$, the patterns defined by $A$ and $A'$ are equal (the relative order of $A_1 = A'_1$ and $A_3 = A'_3$ is the same in both $A$ and $A'$, and the relative order of $A_3$ (resp.~$A'_3$) with the remaining elements of $A$ (resp.~$A'$) is determined by the up--down signature).  
By Corollary~\ref{cor:general-shuffles}, the signatures of $x$-shuffles of $(A, B)$ and of $(A', B)$ are equal as multisets. This concludes the proof.
\end{proof}

\section*{Acknowledgements}

This project was supported by NSF awards CCF-1955173 and ECCS-2216899. We are grateful to Google for providing access to a \textit{c4-highcpu-8} Google Cloud instance, which we used to compute approximation factors for single-predicate CSPs of arity~4.
We thank the organizers and participants of Dagstuhl Seminar 25211, \emph{The Constraint Satisfaction Problem: Complexity and Approximability}, for providing a stimulating environment in which this project began.
In his lectures at that seminar, Amey Bhangale raised the question of the approximability of completely satisfiable ordering CSPs and rekindled our interest in the topic.
We also thank the anonymous referees for their helpful comments.

\bibliographystyle{abbrv}
\bibliography{bibliography}

@article{Niven68,
  title={A combinatorial problem of finite sequences},
  author={Niven, Ivan},
  journal={Nieuw Arch. Wisk},
  volume={16},
  number={3},
  pages={116--123},
  year={1968}
}

@article{Petrov,
  author       = {Fedor Petrov},
  title        = {Correcting continuous hypergraphs},
  journal      = {Algebra i Analiz},
  year         = {2016},
  volume       = {28},
  number       = {6},
  pages        = {84--90},
  note         = {English transl.: \emph{St. Petersburg Math. J.}, 28(6):783--787, 2017. 
                  An alternative preprint version is available at \url{https://arxiv.org/abs/1309.3795}.},
}

@article{Poirier98,
author = {Stéphane Poirier},
title = {Cycle type and descent set in wreath products},
journal = {Discrete Mathematics},
volume = {180},
number = {1},
pages = {315--343},
year = {1998},
note = {Proceedings of the 7th Conference on Formal Power Series and Algebraic Combinatorics},
}

@article{Ramsey,
  title={On a Problem of Formal Logic},
  author={Ramsey, FP},
  journal={Proceedings of the London Mathematical Society},
  volume={30},
  pages={264--285},
  year={1930}
}

@article{GHMRC,
  title={Beating the random ordering is hard: Every ordering {CSP} is approximation resistant},
  author={Guruswami, Venkatesan and H{\aa}stad, Johan and Manokaran, Rajsekar and Raghavendra, Prasad and Charikar, Moses},
  journal={SIAM Journal on Computing},
  volume={40},
  number={3},
  pages={878--914},
  year={2011},
}

@article{Mak,
title = {Simple linear time approximation algorithm for betweenness},
author = {Yury Makarychev},
journal = {Operations Research Letters},
volume = {40},
number = {6},
pages = {450--452},
year = {2012},
}

@InCollection{MMsurvey,
  author =	{Makarychev, Konstantin and Makarychev, Yury},
  title =	{Approximation Algorithms for {CSPs}},
  booktitle =	{The Constraint Satisfaction Problem: Complexity and Approximability},
  pages =	{287--325},
  series =	{Dagstuhl Follow-Ups},
  year =	{2017},
  volume =	{7},
  editor =	{Krokhin, Andrei and Zivny, Stanislav},
  publisher =	{Schloss Dagstuhl -- Leibniz-Zentrum f{\"u}r Informatik},
 }

@book{Kallenberg05,
  title={Probabilistic symmetries and invariance principles},
  author={Kallenberg, Olav},
  year={2005},
  publisher={Springer}
}

@article{Kallenberg92,
  title={Symmetries on random arrays and set-indexed processes},
  author={Kallenberg, Olav},
  journal={Journal of Theoretical Probability},
  volume={5},
  number={4},
  pages={727--765},
  year={1992},
}

@inproceedings{MMZ15,
  title={Satisfiability of ordering {CSPs} above average is fixed-parameter tractable},
  author={Makarychev, Konstantin and Makarychev, Yury and Zhou, Yuan},
  booktitle={Proceedings of the Symposium on Foundations of Computer Science},
  pages={975--993},
  year={2015},
}

@article{GIMY,
  title={Every ternary permutation constraint satisfaction problem parameterized above average has a kernel with a quadratic number of variables},
  author={Gutin, Gregory and Van Iersel, Leo and Mnich, Matthias and Yeo, Anders},
  journal={Journal of Computer and System Sciences},
  volume={78},
  number={1},
  pages={151--163},
  year={2012},
}

@inproceedings{MakBounded,
  title={Local Search is Better than Random Assignment for Bounded Occurrence Ordering {$k$-CSPs}},
  author={Makarychev, Konstantin},
  booktitle={Proceedings of the International Symposium on Theoretical Aspects of Computer Science},
  pages={139},
  year={2013}
}

@article{BK10,
  title={The complexity of temporal constraint satisfaction problems},
  author={Bodirsky, Manuel and K{\'a}ra, Jan},
  journal={Journal of the ACM (JACM)},
  volume={57},
  number={2},
  pages={1--41},
  year={2010},
}

@article{CS98,
  title={A geometric approach to betweenness},
  author={Chor, Benny and Sudan, Madhu},
  journal={SIAM Journal on Discrete Mathematics},
  volume={11},
  number={4},
  pages={511--523},
  year={1998},
}

@inproceedings{GZ12,
  title={Approximating bounded occurrence ordering {CSPs}},
  author={Guruswami, Venkatesan and Zhou, Yuan},
  booktitle={Proceedings of the International Workshop on Approximation Algorithms for Combinatorial Optimization},
  pages={158--169},
  year={2012},
}

@article{Seymour,
  title={Packing directed circuits fractionally},
  author={Seymour, Paul D.},
  journal={Combinatorica},
  volume={15},
  number={2},
  pages={281--288},
  year={1995},
}

@article{Razborov,
  title={Flag algebras},
  author={Razborov, Alexander A},
  journal={The Journal of Symbolic Logic},
  volume={72},
  number={4},
  pages={1239--1282},
  year={2007},
}

@article{BHLPUV15,
  title={Minimum number of monotone subsequences of length 4 in permutations},
  author={Balogh, J{\'o}zsef and Hu, Ping and Lidick{\`y}, Bernard and Pikhurko, Oleg and Udvari, Bal{\'a}zs and Volec, Jan},
  journal={Combinatorics, Probability and Computing},
  volume={24},
  number={4},
  pages={658--679},
  year={2015},
}

@article{crudele2024six,
  title={Six Permutation Patterns Force Quasirandomness},
  author={Crudele, Gabriel and Dukes, Peter and Noel, Jonathan A},
  journal={Discrete Analysis},
  year={2024}
}

@article{BKfast10,
  title={A fast algorithm and Datalog inexpressibility for temporal reasoning},
  author={Bodirsky, Manuel and K{\'a}ra, Jan},
  journal={ACM Transactions on Computational Logic (TOCL)},
  volume={11},
  number={3},
  pages={1--21},
  year={2010},
}

@inproceedings{Raghavendra,
  title={Optimal algorithms and inapproximability results for every {CSP}?},
  author={Raghavendra, Prasad},
  booktitle={Proceedings of the Symposium on Theory of Computing},
  pages={245--254},
  year={2008}
}

@inproceedings{Bulatov,
  title={A dichotomy theorem for nonuniform {CSPs}},
  author={Bulatov, Andrei A},
  booktitle={Proceedings of the Symposium on Foundations of Computer Science},
  pages={319--330},
  year={2017},  
}

@article{Zhuk,
  title={A proof of the CSP dichotomy conjecture},
  author={Zhuk, Dmitriy},
  journal={Journal of the ACM (JACM)},
  volume={67},
  number={5},
  pages={1--78},
  year={2020},
}

@inproceedings{Schaefer,
  title={The complexity of satisfiability problems},
  author={Schaefer, Thomas J},
  booktitle={Proceedings of the ACM Symposium on Theory of Computing},
  pages={216--226},
  year={1978}
}

@inproceedings{DBHPZ23,
  author       = {Joshua Brakensiek and
                  Neng Huang and
                  Aaron Potechin and
                  Uri Zwick},
  title        = {Separating {Max 2-AND}, {Max} {DI-CUT} and {Max} {CUT}},
  booktitle    = {Proceedings of the Symposium on Foundations of Computer Science},
  pages        = {234--252},
  year         = {2023},}

@inproceedings{BNZ,
  title={Tight approximability of {Max 2-SAT} and relatives, under {UGC}},
  author={Brakensiek, Joshua and Huang, Neng and Zwick, Uri},
  booktitle={Proceedings of the Symposium on Discrete Algorithms},
  pages={1328--1344},
  year={2024},
}

@inproceedings{LLZ,
  title={Improved rounding techniques for the {Max 2-SAT} and {Max DI-CUT} problems},
  author={Lewin, Michael and Livnat, Dror and Zwick, Uri},
  booktitle={Proceedings of the International Conference on Integer Programming and Combinatorial Optimization},
  pages={67--82},
  year={2002},
}

@inproceedings{MM12,
  title={Approximation Algorithm for Non-{B}oolean {Max $k$-CSP}},
  author={Makarychev, Konstantin and Makarychev, Yury},
  booktitle={Proceedings of the International Workshop on Approximation Algorithms for Combinatorial Optimization},
  pages={254--265},
  year={2012},
  organization={Springer}
}

@inproceedings{ACMM,
  title={{$O(\sqrt{\log n})$} approximation algorithms for {Min UnCut, Min 2CNF Deletion,} and directed cut problems},
  author={Agarwal, Amit and Charikar, Moses and Makarychev, Konstantin and Makarychev, Yury},
  booktitle={Proceedings of the Symposium on Theory of Computing},
  pages={573--581},
  year={2005}
}

@inproceedings{CMM1,
  title={Near-optimal algorithms for unique games},
  author={Charikar, Moses and Makarychev, Konstantin and Makarychev, Yury},
  booktitle={Proceedings of the Symposium on Theory of Computing},
  pages={205--214},
  year={2006}
}

@inproceedings{CMM2,
  author={Chlamtac, Eden and Makarychev, Konstantin and Makarychev, Yury},
  booktitle={Proceedings of the  Symposium on Foundations of Computer Science}, 
  title={How to Play Unique Games Using Embeddings}, 
  year={2006},
  pages={687--696},
}

@article{GW,
  title={Improved approximation algorithms for maximum cut and satisfiability problems using semidefinite programming},
  author={Goemans, Michel X and Williamson, David P},
  journal={Journal of the ACM (JACM)},
  volume={42},
  number={6},
  pages={1115--1145},
  year={1995},
}

@inproceedings{ABZ05,
  title={Improved approximation algorithms for {Max NAE-SAT} and {Max SAT}},
  author={Avidor, Adi and Berkovitch, Ido and Zwick, Uri},
  booktitle={Proceedings of the International Workshop on Approximation and Online Algorithms},
  pages={27--40},
  year={2005},
  organization={Springer}
}

@article{HKMRS,
  title={Limits of permutation sequences},
  author={Hoppen, Carlos and Kohayakawa, Yoshiharu and Moreira, Carlos Gustavo and R{\'a}th, Bal{\'a}zs and Sampaio, Rudini Menezes},
  journal={Journal of Combinatorial Theory, Series B},
  volume={103},
  number={1},
  pages={93--113},
  year={2013},
}

@article{BBFGMP,
  title={Universal limits of substitution-closed permutation classes},
  author={Bassino, Fr{\'e}d{\'e}rique and Bouvel, Mathilde and F{\'e}ray, Valentin and Gerin, Lucas and Maazoun, Micka{\"e}l and Pierrot, Adeline},
  journal={Journal of the European Mathematical Society},
  volume={22},
  number={11},
  pages={3565--3639},
  year={2020}
}

@article{Stembridge,
  title={Enriched $P$-partitions},
  author={John R. Stembridge},
  journal={Transactions of the American Mathematical Society},
  volume={349},
  number={2},
  pages={763--788},
  year={1997}
}

@inproceedings{BKM25,
  title={On Approximability of Satisfiable $k$-{CSPs}: {V}},
  author={Bhangale, Amey and Khot, Subhash and Minzer, Dor},
  booktitle={Proceedings of the Symposium on Theory of Computing},
  pages={62--71},
  year={2025}
}

@article{Trevisan,
  title={Approximating satisfiable satisfiability problems},
  author={Trevisan, Luca},
  journal={Algorithmica},
  volume={28},
  number={1},
  pages={145--172},
  year={2000},
}

@inproceedings{KTW14,
  title={A characterization of strong approximation resistance},
  author={Khot, Subhash and Tulsiani, Madhur and Worah, Pratik},
  booktitle={Proceedings of the Symposium on Theory of Computing},
  pages={634--643},
  year={2014}
}

@article{AM09,
  title={Approximation resistant predicates from pairwise independence},
  author={Austrin, Per and Mossel, Elchanan},
  journal={Computational Complexity},
  volume={18},
  number={2},
  pages={249--271},
  year={2009},
}

@article{Chan16,
  title={Approximation resistance from pairwise-independent subgroups},
  author={Chan, Siu On},
  journal={Journal of the ACM (JACM)},
  volume={63},
  number={3},
  pages={1--32},
  year={2016},
}

@inproceedings{Z98,
  title={Finding almost-satisfying assignments},
  author={Zwick, Uri},
  booktitle={Proceedings of the Symposium on Theory of Computing},
  pages={551--560},
  year={1998}
}

@inproceedings{bhangale2021optimal,
  title={Optimal inapproximability of satisfiable $k$-{LIN} over non-abelian groups},
  author={Bhangale, Amey and Khot, Subhash},
  booktitle={Proceedings of the Symposium on Theory of Computing},
  pages={1615--1628},
  year={2021}
}

@article{CMM2009near,
  title={Near-optimal algorithms for maximum constraint satisfaction problems},
  author={Charikar, Moses and Makarychev, Konstantin and Makarychev, Yury},
  journal={ACM Transactions on Algorithms (TALG)},
  volume={5},
  number={3},
  pages={1--14},
  year={2009},
}

@inproceedings{CMM07advantage,
  title={On the advantage over random for maximum acyclic subgraph},
  author={Charikar, Moses and Makarychev, Konstantin and Makarychev, Yury},
  booktitle={Proceedings of the Symposium on Foundations of Computer Science},
  pages={625--633},
  year={2007},
}

@misc{ordering-csp-repo,
  author = {Yury Makarychev},
  title = {Ordering Constraint Satisfaction Problems (arity 4)},
  howpublished = {\url{https://github.com/ymakarychev/ordering-csp}},
  note = {GitHub repository},
  year = {2026}
}

@article{ENSS98,
  title={Approximating minimum feedback sets and multicuts in directed graphs},
  author={Even, Guy and Naor, Joseph and Schieber, Baruch and Sudan, Madhu},
  journal={Algorithmica},
  volume={20},
  number={2},
  pages={151--174},
  year={1998},
}

\appendix
\section{Shuffle-closed predicates}\label{sec:shuffle-closed}
In this section, we show that every shuffle-closed predicate is a Not-First predicate (see Definition~\ref{def:not-first}).
We now state the definition of a shuffle-closed predicate for ordering CSPs.
\begin{definition}\label{def:shuffle-operation}
An ordering predicate $\varphi$ of arity $k$ is \emph{shuffle-closed} if for every two permutations $\alpha, \beta \in \Sat(\varphi)$ and every $t \in [k-1]$, the permutation $\sigma$ defined as follows also belongs to $\Sat(\varphi)$:
\begin{itemize}
    \item $\sigma(i) = \alpha(i)$ if $\alpha(i) \in [t]$;
    \item $\sigma(i) > t$ if $\alpha(i) \notin [t]$;
    \item the relative order of values $\sigma(i)$ for $i \notin \alpha^{-1}([t])$ is the same as in~$\beta$.
\end{itemize}
\end{definition}
Although we do not explicitly require that the property above holds for $t = 0$ and $t = k$, it trivially does, since for $t = 0$ we get $\sigma = \beta$, and for $t = k$ we get $\sigma = \alpha$.

\begin{theorem}
Every shuffle-closed ordering predicate is a Not-First predicate.
\end{theorem}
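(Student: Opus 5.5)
Write $\psi$ for the conjunction of all predicates $\NFirst_t(x_{i_1},\dots,x_{i_t})$ (with $i_1,\dots,i_t\in[k]$ distinct, $t\ge 2$) implied by $\varphi$. We have $\Sat(\varphi)\subseteq\Sat(\psi)$ by construction, so the plan is to prove the reverse inclusion. Interpret a permutation $\sigma\in\Sk$ as the ordering of variables in which $x_i$ receives value $\sigma(i)$. For such a $\sigma$ and $t\in[k]$, let $S_t(\sigma)=\sigma^{-1}([t])$ be the set of the first $t$ variables, and let $f_t(\sigma)=\sigma^{-1}(t)$ be the variable placed at position $t$.

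The key observation is a prefix characterization. A permutation $\sigma$ satisfies $\NFirst(x_{i_1},\dots,x_{i_t})$ exactly when, at the first moment some variable of $\{i_1,\dots,i_t\}$ appears, that variable is not $i_1$. I will prove by induction on $t$ the following claim. If $\sigma\in\Sat(\psi)$, then for every $t$ there exists $\alpha\in\Sat(\varphi)$ with $S_t(\alpha)=S_t(\sigma)$ and $\alpha$ agreeing with $\sigma$ on positions $1,\dots,t$. Taking $t=k$ gives $\sigma=\alpha\in\Sat(\varphi)$.

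For the induction step, assume $\alpha\in\Sat(\varphi)$ agrees with $\sigma$ on the first $t-1$ positions, so both share the prefix set $A=S_{t-1}(\sigma)$. Let $j=f_t(\sigma)\notin A$. Suppose no satisfying permutation with prefix $\sigma|_{[t-1]}$ places $j$ at position $t$. Shuffle-closedness then shows the choice of next element does not depend on how the prefix was ordered. Given any $\beta\in\Sat(\varphi)$ with $S_{t-1}(\beta)=A$ and $f_t(\beta)=j$, shuffling $\alpha$ (parameter $t-1$) with $\beta$ keeps $\alpha$'s prefix and orders the rest as in $\beta$, so $j$ lands at position $t$, a contradiction. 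Hence no $\beta\in\Sat(\varphi)$ with $S_{t-1}(\beta)=A$ has $f_t(\beta)=j$.

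I then want to derive an implied $\NFirst$ clause violated by $\sigma$. The natural candidate is $\NFirst(x_j, x_{m_1},\dots,x_{m_r})$ with $\{m_1,\dots,m_r\}=[k]\setminus(A\cup\{j\})$. Since $\sigma$ puts $j$ first among $[k]\setminus A$, it violates this clause. The difficulty is that $\varphi$ need not imply this clause: a satisfying $\beta$ whose prefix set differs from $A$ could still have $j$ as the first element of $[k]\setminus A$. The fix is to enlarge the variable set to $\{j\}\cup([k]\setminus B)$ for a suitable $B\subseteq A$. Call $B\subseteq A$ \emph{bad} if some $\beta\in\Sat(\varphi)$ has $S_{|B|}(\beta)=B$ and $f_{|B|+1}(\beta)=j$.

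I would show that some $B\subseteq A$ is bad, and choose $B$ maximal. Suppose $B$ is bad, witnessed by $\beta$, and the first variable of $[k]\setminus B$ under $\beta$ is $j$. If $B\subsetneq A$, I combine $\beta$ with $\alpha$ via shuffles. Using shuffle-closedness at parameter $|B|$, with $\alpha$'s prefix of elements of $A$ and then $\beta$ applied afterwards, I aim to show some strictly larger subset $B'$, with $B\subsetneq B'\subseteq A$, is bad. Iterating reaches $B=A$, which is impossible by the previous step. It follows that no $B$ is bad, which means $\varphi$ implies $\NFirst(x_j,\dots)$ over $\{j\}\cup([k]\setminus\emptyset)$ minus an appropriate set. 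More precisely, I take the clause over $\{j\}\cup([k]\setminus A)$ but quantify over all prefix sets $B\subseteq A$. Either this clause is implied, or a bad set exists, and the latter propagates upward to $A$.

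I expect this propagation step to be the main obstacle, because the shuffle operation only keeps $\alpha$'s prefix values and may not align prefix sets nicely. If the direct approach fails, I would instead argue through the Bodirsky--Kára characterization: show the shuffle operation preserves $\varphi$ as a temporal relation, conclude that $\varphi$ is $ll$-closed, and invoke \cite{BKfast10} as cited in Section~\ref{sec:satisfiable-csps}.
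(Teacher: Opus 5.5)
There is a genuine gap, and it comes from a misreading of the shuffle operation. In your induction step you only shuffle $\alpha$ with witnesses $\beta$ satisfying $S_{t-1}(\beta)=A$. You then treat as an obstacle the possibility that some $\beta\in\Sat(\varphi)$ with a \emph{different} prefix set places $j$ first among $[k]\setminus A$. But the shuffle of $\alpha$ and $\beta$ with parameter $t-1$ keeps $\alpha$ on $\alpha^{-1}([t-1])=A$. It assigns the values $t,\dots,k$ to the variables of $[k]\setminus A$ in their relative order in $\beta$. Where these variables sit in $\beta$, and what $\beta$'s own prefix set is, play no role. So such a $\beta$ is not an obstacle; it is exactly the witness you need.

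The ``bad set'' detour you build instead is not carried out. The propagation from $B$ to a larger $B'$ is only stated as an aim. Moreover, shuffling at parameter $|B|$ retains $\alpha$'s first $|B|$ variables, which need not form $B$. The write-up also contradicts itself: you set out to show that some $B\subseteq A$ is bad, then conclude that no $B$ is bad. The fallback via Bodirsky--K\'{a}ra is not an argument either. You give no reason why shuffle-closure implies $ll$-closure, and in the paper this very theorem is what places the $min$/$mi$/$mx$ cases of the classification inside the Not-First class.

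The repair is short and gives the paper's proof. Let $A=S_{t-1}(\sigma)$, $j=f_t(\sigma)$, and $\{m_1,\dots,m_r\}=[k]\setminus(A\cup\{j\})$; if $r=0$ there is nothing to prove. Then $\sigma$ violates $\NFirst(x_j,x_{m_1},\dots,x_{m_r})$. Since $\sigma\in\Sat(\psi)$, this clause is not implied by $\varphi$. Hence there is $\beta\in\Sat(\varphi)$ with $\beta(j)<\beta(m_s)$ for all $s$, with no condition on its prefix set. The shuffle of $\alpha$ and $\beta$ with parameter $t-1$ lies in $\Sat(\varphi)$ and agrees with $\sigma$ on positions $1,\dots,t-1$. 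It fills positions $t,\dots,k$ with $[k]\setminus A$ in $\beta$'s relative order, so $j$ lands at position $t$, which completes the induction. No maximality argument over subsets of $A$ is needed.
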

\begin{proof}
Consider the relaxation $\varphi_{L}$ of $\varphi$ (see Section~\ref{sec:satisfiable-csps}). We will show that $\varphi = \varphi_L$ and thus $\varphi$ is a Not-First predicate, as required.

Fix a permutation $\pi \in \Sat(\varphi_{L})$. We will show that $\pi \in \Sat(\varphi)$.  To this end, we prove by induction on $i$ that there exists a permutation $\pi_i \in \Sat(\varphi)$ that agrees with $\pi$ on $\pi^{-1}([i])$. The statement trivially holds for $i = 0$.  

Assume it holds for some $i < k$; we prove it for $i + 1$.  
Let $j = \pi^{-1}(i+1)$.  
Consider the constraint $\psi = \NFirst_{k-i}(x_j, x_{a_1}, \dots, x_{a_{k-i-1}})$, where $a_1, \dots, a_{k-i-1}$ are the elements of $[k] \setminus \pi^{-1}([i+1])$.  
Note that $\pi$ does not satisfy this constraint, since $\pi(j) = i+1$ is smaller than the elements of $\{\pi(a_r)\}_{r=1}^{k-i-1} = [k] \setminus [i+1]$.  
However, since $\pi\in \Sat(\varphi_L)$, $\pi$ satisfies all Not-First constraints implied by $\varphi$. We conclude that the constraint $\psi$ is not implied by $\varphi$ or, in other words, not a relaxation of $\varphi$.  

Therefore, there exists a permutation $\beta \in \Sat(\varphi)$ that does not satisfy $\psi$. 
We apply the shuffle operation to $\alpha = \pi_i$ and $\beta$ with $t = i$. Denote the obtained permutation by $\pi_{i+1}$.  Since $\pi_i, \beta\in \Sat(\varphi)$ and $\varphi$ is shuffle-closed, 
$\pi_{i+1} \in \Sat(\varphi)$.

By the definition of the shuffle operation, $\pi_{i+1}$ agrees with $\pi_i$, and thus with $\pi$, on $\pi^{-1}([i])$.  
Since $\pi(j),\pi(a_1),\dots, \pi(a_{k-i-1})\notin [i]$, by item 3 of Definition~\ref{def:shuffle-operation}, the relative orderings of 
$$\pi_{i+1}(j),\pi_{i+1}(a_1),\dots, \pi_{i+1}(a_{k-i-1}) \quad\text{and of}\quad  
\beta(j),\beta(a_1),\dots, \beta(a_{k-i-1})$$
are the same. Note that $\beta(j)$ is smaller than all $\beta(a_r)$, since $\beta$ does not satisfy $\psi$. Thus, $\pi_{i+1}(j)$ is also smaller than all $\pi_{i+1}(a_r)$. Now, there are $k -i$ distinct values $\pi_{i+1}(j), \pi_{i+1}(a_1), \dots, \pi_{i+1}(a_{k-i-1})$ and all of them lie in $[k] \setminus [i]$,
since $\pi_{i+1}(\pi_i^{-1}([i])) = [i]$. Therefore, 
$$\{\pi_{i+1}(j), \pi_{i+1}(a_1), \dots, \pi_{i+1}(a_{k-i-1})\} = [k] \setminus [i].$$
Further, $\pi_{i+1}(j)$ is the smallest in the set. We conclude that $\pi_{i+1}(j) = i+1 = \pi(j)$. Thus, $\pi_{i+1}$ agrees with $\pi$ on $\pi^{-1}([i+1])$, as required.

We proved that $\pi_i\in\Sat(\varphi)$ agrees with $\pi$ on $\pi^{-1}([i])$ for every $i$. In particular, $\pi_k$ agrees with $\pi$ everywhere; that is, $\pi_k = \pi$.
We get $\pi = \pi_k\in \Sat(\varphi)$.  
We showed that every $\pi\in\Sat(\varphi_L)$ is in $\Sat(\varphi)$ and thus $\Sat(\varphi) \supseteq \Sat(\varphi_{L})$. Since $\varphi_L$ is a relaxation of $\varphi$, $\Sat(\varphi) \subseteq \Sat(\varphi_{L})$. Thus, $\Sat(\varphi)= \Sat(\varphi_{L})$ and $\varphi = \varphi_{L}$.
\end{proof}

\end{document}